\documentclass[12pt]{article}
\usepackage[british]{babel}

\usepackage{microtype}
 
\usepackage{fullpage}
\usepackage{amsmath}
\usepackage{amssymb}
\usepackage{amsthm}
\usepackage{verbatim} 
\usepackage{stmaryrd}
\usepackage{url}
\usepackage{float}
\floatstyle{ruled}
\restylefloat{table}
\restylefloat{figure}
\allowdisplaybreaks
\usepackage{basic}
\usepackage{ambients}
\usepackage{appendix}
\usepackage{paralist}
\usepackage{graphicx}
\usepackage{proof}
\newif\ifdraft\drafttrue


\newcommand{\ntrans}[1]{\mathrel{{\trans{#1}}\makebox[0em][r]{$\not$\hspace{2ex}}}{\!}}
\newcommand{\ttrans}[1]{\stackrel{\, {#1} \,}{\Longrightarrow}}
\newcommand{\ttranst}[1]{\Longrightarrow\trans{#1}\Longrightarrow}
\newcommand{\redtime}{\red_{\sigma}}

\newcommand{\nat}{\mathbb{N}} 

\newcommand{\rel}{\mathcal R}

\newcommand{\nodep}[4]{#1 {\boldsymbol{[}#2\boldsymbol{]}}^{#3}_{#4}}

\newcommand{\mob}{\mathtt{m}}
\newcommand{\stat}{\mathtt{s}}

\newcommand{\bool}[1]{\llbracket #1 \rrbracket}
\newcommand{\send}[3]{\overline{#1}#2 @ #3}
\newcommand{\rec}[3]{#1 #2@ #3}
\newcommand{\sendobs}[3]{\overline{#1}#2 \triangleright #3}
\newcommand{\recobs}[3]{#1 #2 \triangleright #3}
\newcommand{\conf}[2]{#1 {\Join} #2}

\newcommand{\rsens}[2]{#2?(#1)}
\newcommand{\rsensa}[2]{#2?#1}
\newcommand{\wsensa}[2]{#2!#1}
\newcommand{\wact}[2]{#2!#1}

\newcommand{\mobi}{\mob}

\newtheorem{theorem}{Theorem}
\newtheorem{proposition}{Proposition}
\newtheorem{definition}{Definition}
\newtheorem{example}{Example}
\newtheorem{remark}{Remark}
\newtheorem{lemma}{Lemma}
\newtheorem{corollary}{Corollary}

\usepackage{authblk}

\title{A Semantic Theory of the Internet of Things}

\author[1]{Valentina Castiglioni\thanks{The contribution of this author is limited to an early writing of some of the proofs in the Appendix.}}
\author[1]{Ruggero Lanotte}
\author[2]{Massimo Merro}
\affil[1]{\small Dipartimento di Scienza e Alta Tecnologia, Universit\`a 
dell'Insubria, Como, Italy}
\affil[2]{\small Dipartimento di Informatica, Universit\`a degli Studi di Verona, Italy}
   

\date{}
\begin{document} 
\maketitle

\begin{abstract}
We propose a process calculus for modelling systems 
in the \emph{Internet of Things} paradigm. Our systems  interact both with the physical environment, via \emph{sensors} and \emph{actuators\/}, and with 
\emph{smart devices}, via short-range and Internet channels. 
The calculus is equipped with a standard notion of \emph{bisimilarity\/} 
which is a  fully abstract characterisation of a well-known contextual 
equivalence. We use our semantic proof-methods  to prove run-time properties
as well as system equalities of non-trivial IoT systems. 
\end{abstract}

\section{Introduction}

In the \emph{Internet of Things\/} (IoT) paradigm,
\emph{smart devices\/}, such as smartphones, 
 automatically collect information from shared resources
(e.g.\ Internet access or physical devices) 
and aggregate them to provide new services to end users~\cite{GBMP13}. 
The ``things'' commonly deployed in  IoT systems are:
\emph{RFID tags\/}, for unique identification,
 \emph{sensors\/}, to detect
 physical changes in the environment, and \emph{actuators\/}, to pass
 information to the environment.

The research on IoT is currently  
focusing on practical applications 
such as the development of enabling technologies~\cite{RK09},  
ad hoc architectures~\cite{PH07},  
semantic web technologies~\cite{DEBM12}, %
and cloud 
computing~\cite{GBMP13}.  
However, as  pointed out by Lanese et al.~\cite{LBdF13}, 
there is a lack of research in formal methodologies to model the 
interactions among system components, 
and to verify the correctness of the network deployment before 
its implementation. 
Lanese et al.\ proposed 
  the first process calculus for IoT systems, called \emph{{IoT}-calculus}~\cite{LBdF13}. 
The IoT-calculus captures  the partial topology of communications and the interaction between \emph{sensors\/}, \emph{actuators} and \emph{computing processes} to provide useful services to the \emph{end user\/}.
A behavioural equivalence is then defined to compare IoT systems.

 Devising a calculus for modelling a new paradigm requires understanding
 and distilling in a clean algebraic setting the main features of the paradigm. 
The main goal of this paper is to propose a new process calculus
that integrates a number of crucial features not addressed in the IoT-calculus, 
and equipped with a clearly-defined semantic theory 
 for specifying and reasoning  on IoT applications. 
Let us try to figure out what are the main ingredients of IoT, by means of an example.

\begin{figure}[t]
\label{fig:smarthome}
\vspace*{7mm}
\begin{center}
\includegraphics[scale=.80]{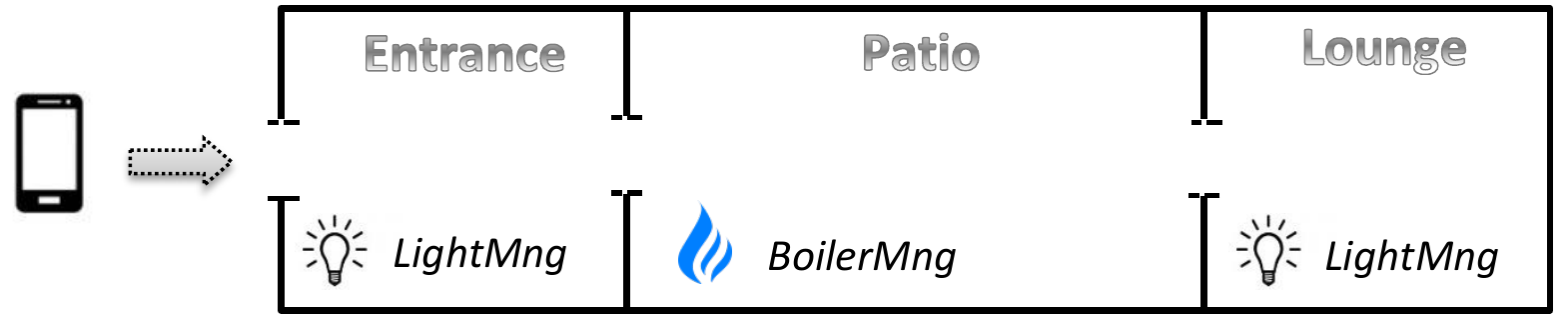}
\end{center}
\vspace*{2mm}
\caption{A simple smart home}
\end{figure}

Suppose a simple \emph{smart home\/}  in which  the user can
use her smartphone to 
 remotely control the heating boiler of her house, 
and automatically turn on  lights when entering  a room. 
In  Fig.~\ref{fig:smarthome}, we draw a small smart home with
 an entrance and a lounge, separated by a patio. 
Entrance and  lounge have their own lights  (actuators) which are 
governed by different light manager processes, $LightMng$. The boiler 
is in the patio and  is governed by a boiler manager
process, $BoilerMng$. This process senses the local temperature
(via a sensor)
and decides whether the boiler should be turned on/off, by
setting a proper actuator to signal  the state of the boiler. 

The smartphone executes two processes: $BoilerCtrl$ and 
$LightCtr$. The first one reads user's commands, 
 submitted via the  
phone touchscreen  (a sensor), 
and forward them to  the process $BoilerMng$, 
 via an Internet channel. Whereas, the process $LightCtrl$ 
interacts with the processes $LightMng$, via short-range wireless 
channels (e.g.\ Bluetooth), to automatically turn on lights when the smartphone 
physically enters either the entrance or the lounge. 

The whole system is given by the parallel composition of the 
smartphone (a mobile device) and the smart home (a stationary entity). 

Now, on such kind of systems 
one may wish to prove \emph{run-time properties\/}. 
Think of a \emph{fairness} property saying 
that the boiler will be eventually turned on/off whenever some conditions are satisfied. Or consistency properties, saying the smartphone will never be in two rooms at the same time. 
Even more, one may be interested in understanding 
whether our system has the same \emph{observable behaviour\/} of another
 system. Think of  a variant of our smart home,
 where lights functionality depends on GPS coordinates of the smartphone (localisation is a  common feature of today smartphones). Intuitively,
 the smartphone might send its GPS position to a centralised light manager, 
${CL}ightMng$ (possibly placed in the patio), via an Internet channel. 
The process ${CL}ightMng$ will then interact with the two
processes $LightMng$, via short-range channels,
  to switch on/off lights, depending on the position of the
 smartphone. Here comes an interesting question: 
Can these two implementations, based on different 
light management mechanisms, 
be actually distinguished by an end
user?

In the paper at hand we 
develop a fully abstract semantic theory for
 a  process calculus  of IoT systems, 
called \cname.  
We  provide a formal notion of when  two systems
in \cname\  
 are indistinguishable, 
in all possible contexts, from the point of view of the end user.
 Formally, we use the
 approach of~\cite{HY95,Sangiorgi-book}, often called \emph{reduction (closed) barbed congruence\/}, which relies 
 on two crucial concepts: a 
\emph{reduction semantics\/}, to describe system computations, and the 
\emph{basic observables\/},  to represent what the environment can directly observe of a system.
In \cname, there here are at least two possible observables:  
 the ability 
to transmit along  
channels, \emph{logical observation}, and 
the  capability to diffuse messages via actuators, \emph{physical observation}.
We have adopted the second form   as
our contextual equality remains invariant 
when adding logical observation. 
However, the right definition of physical observation is 
far from obvious as it involves some technical challenges 
in the definition of the reduction semantics (see the discussion 
in Sec.~\ref{sec:extensions}).

Our calculus is equipped with two  
\emph{labelled transition semantics\/} (LTS) in the SOS style of Plotkin~\cite{Plo04}: an \emph{intensional} semantics and an \emph{extensional} semantics.
The adjective intensional is used to stress the fact that the actions here correspond to activities which can be performed by a system in isolation, without any interaction with the external environment. While, the extensional semantics focuses on those activities which require a contribution of the environment.
Our  extensional LTS builds on the intensional one, by%
introducing specific 
transitions for modelling all interactions which a system can have with its environment. Here, we would like to point out that since 
our basic observation on systems does not involve the recording of the passage of time, this has to be taken into account extensionally in order to gain a proper extensional account of systems.

As a first result we prove that the reduction semantics 
 coincides with the intensional semantics (Harmony Theorem),
 and that they satisfy some desirable time 
properties such as \emph{time determinism\/}, \emph{patience\/}, 
\emph{maximal progress\/} and \emph{well-timedness}~\cite{HR95}.

However, the main result of the paper is that weak  \emph{bisimilarity\/} in the extensional LTS is \emph{sound} and \emph{complete} with respect to our contextual equivalence, reduction barbed congruence: two systems are related by 
some bisimulation in the extensional LTS if and only if
 they are contextually equivalent.
 This required a non-standard
proof of the congruence theorem (Thm.~\ref{thm:congruence}). 
Finally, in order to show the effectiveness of our
bisimulation proof method,  we prove a number of non-trivial system equalities. 

\paragraph*{Outline} Sec.~\ref{the-calculus} 
contains the calculus together with the reduction 
semantics, the contextual equivalence, and a  discussion on design choices. Sec.~\ref{case_study} gives the details of  our smart home example, 
and provides desirable run-time properties for it. 
Sec.~\ref{lab-sem} defines both  intensional 
and  extensional LTSs. In Sec.~\ref{full-abstraction} we define bisimilarity 
for IoT-systems,
and prove the full abstraction result  together with a number of non-trivial system equalities. 
Sec.~\ref{conc} discusses related  work, and concludes.


\section{The calculus}
\label{the-calculus}

In Tab.~\ref{tab:syntax} we give the syntax of our \emph{Calculus for the Internet of Things\/}, shortly \cname, in a two-level structure: a lower one for \emph{processes\/} and an upper one for \emph{networks\/} of smart devices.
We use letters $n,m$ to denote \emph{nodes/devices\/}, $c,g$ for \emph{channels\/}, $h,k$
for (physical) \emph{locations\/}, $s,s'$ for \emph{sensors\/}, $a,a'$ for \emph{actuators\/} and  $x,y,z$ for \emph{variables\/}.
Our \emph{values\/}, ranged over by $v$ and $w$, are constituted by 
basic values, such as booleans and  integers, sensor and actuator values, and 
coordinates of physical locations. 

\begin{table}[t]
 \(
      \begin{array}{@{\hspace*{2mm}}rcl@{\hspace*{10mm}}l@{\hspace*{4mm}}rcl@{\hspace*{2mm}}l}
\multicolumn{4}{l}{\textit{Network:}}\\
M,N 
& \Bdf & \zero & \mbox{empty network} \\
& \Bor &  \nodep   n {\conf \I P}{\mu}{h}  & \mbox{node/device}\\
& \Bor & M | N & \mbox{network composition} \\
& \Bor & \res c M  & \mbox{channel restriction}\\[1pt]
\multicolumn{4}{l}{\textit{Process:}}\\
P,Q
 & \Bdf &   \nil                             & \mbox{termination}\\
 & \Bor & \rho.P                            & \mbox{prefixing} \\
 & \Bor & P \newpar Q                & \mbox{parallel composition} \\
 & \Bor &\timeout {\pi.P} {Q}  & \mbox{communication with timeout}\\
 & \Bor & \match b P Q               & \mbox{conditional}\\
 & \Bor & X                                 & \mbox{process variable}\\
 & \Bor & \fix X P                        & \mbox{recursion}
\end{array}
\)
\caption{Syntax}
\label{tab:syntax}
\end{table}

A network $M$ is a pool of \emph{distinct nodes\/} 
running in parallel. 
We write 
$\zero$ to denote the empty network, while  $M | N$
represents  parallel composition. 
In $\res c M$ channel $c$ is private to the nodes of $M$. 
 Each node is a term of the form
\(
\nodep {n} {\conf {\I}{P}}{\mu}{l}
\), 
where $n$ is the  device ID;
 $\I$ is the physical interface of $n$, 
a partial mapping from sensor and actuator names to physical values; 
$P$ is the process modelling the logics of $n$; $l$ is the physical 
location of the device; $\mu \in \{ \stat, \mob \}$ is a tag 
to distinguish between stationary  
and mobile nodes.

For security reasons, sensors in $\I$ can be read  only by its 
\emph{controller\/} 
 process $P$. Similarly, actuators
in $\I$ can be modified only by $P$. No other  devices can access
the physical interface of $n$. 
 Nodes live in a physical world which can be divided in an enumerable set of physical locations. 
We assume a discrete notion of  \emph{distance} between two 
 locations $h$ and $k$, i.e.\ $\dist h k \in \mathbb{N}$.

In a node $\nodep n{\conf \I P}{\mu}{h}$,  $P$ denotes a timed 
concurrent process which manages both the interaction with the  physical interface $\I$ and channel communication. The communication paradigm 
is \emph{point-to-point} via channels that may have different transmission ranges. 
We assume a global function $\operatorname{rng}()$ that given a channel 
 $c$ returns
an element of $\mathbb{N} \cup \{ -1, \infty\}$. 
Thus, a channel $c$ can be used for: i)  
\emph{intra-node communications\/}, if $\rng c = -1$; ii) 
\emph{short-range inter-node communications\/}
 (such as Bluetooth, infrared, etc) if $0 \leq \rng c < \infty $; iii) 
\emph{Internet communications\/}, if $\rng c =\infty$. 

Our processes build on CCS with discrete time~\cite{HR95}. 
We write $\rho.P$, with $ \rho \in \{ \sigma,  @(x),  \rsens x  s ,$ $ \wact v a\}$, to denote intra-node actions. 
The process $\sigma.P$ sleeps for one time unit. 
The process $@(x).P$ gets the current  location of the enclosing node. 
Process $ \rsens x  s.P$ reads a value $v$  from sensor $s$. 
Process $ \wact v  a.P$ writes the   value $v$ on the actuator $a$. 
 We write $\timeout {\pi.P} Q$, with $\pi\in\{\OUT{c}{v},\LIN{c}{x}\}$, to denote channel communication with timeout. 
This process 
can communicate in the current time interval
 and then continues as P; otherwise, 
after one time unit, it evolves into $Q$. 
We write $\match b P Q$ for conditional (guard $\bool{b}$ is always 
\emph{decidable\/}). 
In processes of the form $\sigma.Q$ and $\timeout {\pi.P} Q$ the occurrence of $Q$ is said to be \emph{time-guarded}.
The process $\fix X P$ denotes \emph{time-guarded recursion\/}, as all occurrences of the process variable $X$ may only occur time-guarded in $P$.
In processes $\timeout{\LIN c x. P}Q$, $\rsens x s. P$ and ${@(x).P}$ the variable $x$ is said to be bound.
Similarly, in process $\fix X P$ the process variable $X$ is bound.
 In the term $\res c M$ the channel $c$ is bound. 
This gives rise to the standard notions of 
\emph{free/bound (process) variables}, \emph{free/bound channels}, and 
\emph{$\alpha$-conversion}.
A term is said to be \emph{closed} if it does not contain free (process) variables, although it may contain free channels. 
We always work with closed networks: the absence of free variables is preserved at run-time.
We write $T{\subst v x}$ for the substitution of the variable $x$ with the value $v$ in any expression $T$ of our language.
Similarly, $T{\subst P X}$ is the substitution of the process variable $X$ with the process $P$ in $T$.

Actuator names are metavariables for actuators like  
$display@n$ or $alarm@n$. As node names are unique so are actuator names: different nodes have different actuators. 
The sensors embedded in a node can be of two  kinds: \emph{location-dependent\/} 
and  \emph{node-dependent\/}. The first ones sense data at 
 the current location of the node, 
  whereas the second ones sense data within the 
node, 
independently on the node's location. 
Thus, node-dependent sensor names are metavariables for sensors like $touchscreen@n$ or $button@n$; whereas a sensor $temp@h$, for external temperature, is a
typical example of location-dependent sensor. Like actuators, 
node-dependent sensor names are unique. This is not the case of 
location-dependent
sensor names  which may appear in different nodes. For 
simplicity, we use the same metavariables for both kind of sensors. 
When  necessary  we will specify the
 type of sensor in use. 

 The syntax given in Tab.~\ref{tab:syntax} is a bit too permissive 
with respect to our intentions.  We could rule out ill-formed networks with a simple 
type system. For the sake of simplicity, we prefer to provide the following definition. 
\begin{definition}
\label{well_form}
A network $M$ is said to be \emph{well-formed\/} \emph{if}
\begin{itemize}
\item it does not contain two nodes with the same name
\item different nodes have different actuators 
\item different nodes have different node-dependent sensors
\item for each $\nodep n{\conf \I P}{\mu}{h}$ in $M$,  with a prefix $\rsens x s $ (resp.\ $\wact v a$)  in $P$,  $\I(s)$ (resp.\ $\I(a)$) is defined
\item 
for each $\nodep n{\conf \I P}{\mu}{h}$ in $M$  with  $\I(s)$ defined for some location-dependent sensor $s$, it holds that  $\mu=\stat$. 
\end{itemize}
\end{definition}
Last condition implies that location-dependent sensors may be used only in stationary nodes. This restriction will be commented 
in Sec.~\ref{sec:extensions}.
Hereafter, we will always work with \emph{well-formed networks\/}. 
 It is 
easy to show that well-formedness is preserved at runtime. 

To end this section, we report some notational \emph{conventions\/}.
$\prod_{i \in I}M_i$ denotes the parallel composition of all $M_i$, for $i \in I$.
We identify $\prod_{i \in I}M_i = \zero$ and  $\prod_{i \in I}P_i = \nil$, if $I =\emptyset$.
Sometimes we write $\prod_{i}M_i$ when the index set
$I$ is not relevant. 
We write $\timeout{\pi}{\nil}$ instead of $\timeout{\pi.\nil}{\nil}$.
We  write $\pi.P$ as an abbreviation for $\fix{X}\timeout{\pi.P}X$.
For $z\geq 0$, we write $\sigma^{z}.P$ as a shorthand for $\sigma.\sigma. \ldots \sigma.P$, where prefix $\sigma$ appears $z$ times. Finally, 
we write $\res {\tilde c} M$ as an abbreviation for $(\nu {c_1})\ldots (\nu {c_k})M$, 
for $\tilde{c}=c_1, \ldots , c_k$.

\subsection{Reduction semantics}
\label{red_sem}

\begin{table}[t]
\(
 \begin{array}{@{\hspace*{2mm}}ll@{\hspace*{2mm}}l@
{\hspace*{1mm}}l@{\hspace*{0mm}}ll@{\hspace*{0mm}}l@{\hspace*{0mm}}l}
\multicolumn{4}{l}
{\textit{Processes:}} \\[2pt]
& P \newpar \nil \equiv P &                                                                                              
\\
& P \newpar Q \equiv Q \newpar P &                                                                               
\\
& (P \newpar Q) \newpar R \equiv P \newpar (Q \newpar R) &                                        
\\
& [b]P;Q \equiv P  \textrm{ if } \bool{b}=\true  &                                                           
\\
& [b]P;Q \equiv Q  \textrm{ if } \bool{b}=\false    &                                                       
\\
& \fix{X}P \equiv P\subst{\fix{X}P}{X} &                                                                        
\\[3pt] 
\multicolumn{4}{l}{\textit{Networks:}} \\[2pt]
& P\equiv Q \textrm{ implies } \nodep{n}{\conf \I P}{\mu}{h}\equiv\nodep{n}{\conf \I Q}{\mu}{h} & 
\\
& M | \zero \equiv M &                                                                                                                               
\\
& M | N \equiv N | M &                                                                                                                            
\\
& (M | N) | O \equiv M | (N | O) &                                                                                                         
\\
&
 \res{c}\zero \equiv \zero
  &&
\\
& \res{c}\res{d}M \equiv \res{d}\res{c}M
    \\
  & \res{c}(M|N) \equiv M | \res{c}N \Q \textrm{   if } c 
\textrm{ not in }M
    &&
\end{array}
\)
\caption{Structural congruence}
\label{struc}
\end{table}


\begin{table}[t]
\small
\( 
\begin{array}{l@{\hspace*{1cm}}l}
\Txiom{pos}
{-}
{\nodep{n}{\conf \I {@(x).P}}{\mu}{h} \redtau \nodep{n}{\conf \I {P \subst h x }}{\mu}{h}}
& 
\Txiom{sensread}
{\I(s)=v}
{\nodep n {\conf \I {\rsens x s.P}}{\mu}{h}  \redtau \nodep n {\conf \I {P\subst v x }}{\mu}{h}}
\\[15pt]
\Txiom{actunchg}
{\I(a)=v }
{\nodep n {\conf \I {\wact v a.P}}{\mu}{h}  \redtau \nodep n {\conf {\I} {P}}{\mu}{h}}
&
\Txiom{actchg}
{\I(a)\neq v \Q \I':=\I[a \mapsto v] }
{\nodep n {\conf \I {\wact v a.P}}{\mu}{h}  \red_{a} \nodep n {\conf {\I'} {P}}{\mu}{h}}
\\[15pt]
\multicolumn{2}{c}
{
\Txiom{loccom}{\rng c = -1}
{\nodep{n}{\conf \I 
{\timeout{\OUT c v .P}{R}
\newpar 
{\timeout{\LIN c x .{Q}}{S}} }}
{\mu}{h}
\;
\redtau
\;
\nodep{n}
{\conf \I 
{P \newpar {Q \subst v x}}}
{\mu}{h}
}
}
\\[15pt]
\multicolumn{2}{c}
{
\Txiom{timestat}
{
\nodep{n}{\conf {\I}
{\prod_{i}\timeout{\pi_i.P_i}Q_i \: \newpar \:
\prod_{j}\sigma.R_j}}
{\stat}{h} \: 
\not \redtau
}
{\nodep{n}{\conf {\I}{
\prod_{i}\timeout{\pi_i.P_i}Q_i \: \newpar \: 
\prod_{j}\sigma.R_j}}{\stat}{h} 
\redtime 
\nodep{n}{\conf \I{
\prod_{i} Q_i\, \newpar \, 
 \prod_{j}R_j}}{\stat}{h}}
}
\\[15pt]
\multicolumn{2}{c}
{
\Txiom{timemob}
{
\nodep{n}{\conf {\I}
{\prod_{i}\timeout{\pi_i.P_i}Q_i \: \newpar \:
\prod_{j}\sigma.R_j}}
{\mob}{h} \: 
\not \redtau \Q\; \dist h k \leq \delta
}
{\nodep{n}{\conf {\I}{
\prod_{i}\timeout{\pi_i.P_i}Q_i \: \newpar \: 
\prod_{j}\sigma.R_j}}{\mob}{h} 
\redtime 
\nodep{n}{\conf \I{
\prod_{i} Q_i\, \newpar \, 
 \prod_{j}R_j}}{\mob}{k}}
}
\\[15pt]
\multicolumn{2}{c}
{
\Txiom{glbcom}
{\dist h k \leq \rng c}
{\nodep{n}{\conf \I {\timeout{\OUT c v .P}{R}}}{\mu_1}{h} \; | \;  \nodep{m}{\conf \I {\timeout{\LIN c x .{Q}}{S}}}{\mu_2}{k}  \redtau
\nodep{n}{\conf \I P}{\mu_1}{h} \; | \; \nodep{m}{\conf \I {Q \subst v x}}{\mu_2}{k}}
}\\[15pt]
\multicolumn{2}{l}{
\Txiom{parp}
{\prod_{i}\nodep {n_i} {\conf {\I_i} {P_i}}{\mu_i}{h_i} \red_{\omega} 
\prod_{i}\nodep {n_i} {\conf {\I'_i} {P'_i}}{\mu'_i}{h'_i} \Q 
\omega \in \{ \tau, a \}}
{\prod_{i}\nodep {n_i} {\conf {\I_i} {P_i | Q_i}}{\mu_i}{h_i} \red_{\omega} 
\prod_{i}\nodep {n_i} {\conf {\I'_i} {P'_i| Q_i}}{\mu'_i}{h'_i}}
\Q
\Txiom{parn}
{M \red_{\omega} M' \Q \omega \in \{ \tau, a\}}
{M | N \red_{\omega} M' | N}
}
\\[15pt]
\multicolumn{2}{l}{
\Txiom{timepar}
{M\redtime M' \Q N\redtime N' \Q  M | N \not\!\redtau }
{M | N \redtime M' | N'}
\Q\Q\Q\Q\q
\Txiom{timezero}
{-}
{\zero \redtime \zero}
}
\\[15pt]
\multicolumn{2}{l}{
{
\Txiom{res}{M \red_{\omega} N \Q \omega \in \{\tau , a , \sigma \}}
{\res c M \red_{\omega} \res c N }
}
\Q\Q\Q\Q
\Txiom{struct}
{M \equiv N \q N\red_{\omega} N' \q  \mbox{\small $\omega \in \{\tau , a , \sigma \}$}
\q N'\equiv M'}
{M \red_{\omega} M'}
}
\end{array}
\)
\caption{Reduction semantics}
\label{reduction} 
\end{table}

The dynamics of the calculus is specified in terms of  \emph{reduction 
relations\/} over networks  described in Tab.~\ref{reduction}. 
As usual in process calculi, a reduction semantics~\cite{Mil91}
 relies on an auxiliary standard  relation,  $\equiv$,
called \emph{structural congruence\/}, which  brings the participants of a potential interaction into contiguous positions.
Formally, structural congruence is defined as the  
 congruence induced by the axioms of Tab.~\ref{struc}.

As \cname\ is a timed calculus, with a discrete notion of time, it will be necessary to distinguish between 
instantaneous reductions, $M\redi N$, and timed reductions, $M\redtime N$.
Relation $\redi$ denotes activities which take place within one time interval, whereas $\redtime$ 
represents the passage of one time unit. Our instantaneous 
reductions are of two kinds: those which involve the change of the values 
associated  to  some actuator $a$, written $\red_{a}$, and the others, written 
$\redtau$. Intuitively, reductions of the form $M \red_{a} N$ denote \emph{watchpoints}
which cannot be ignored by the physical environment 
(in Ex.~\ref{ex:reda}, and more extensively at the 
end of Sec.~\ref{sec:extensions}, we explain why it is important
to distinguish between $\redtau$ and $\red_a$).
Thus, we define the instantaneous reduction relation 
$\redi \deff  \redtau \cup \red_{a}$, for any actuator $a$.
We also define $\red \deff \redtau \cup \redtime$. 

The first seven rules in Tab.~\ref{reduction} model intra-node activities.  
Rule \rulename{sensread} represents the reading of the current data detected 
at some sensor $s$. Rule \rulename{pos} serves to compute the current position of a node.
Rules \rulename{actunchg} and \rulename{actchg} implement
the writing of some data $v$ on an actuator $a$, distinguishing 
whether the value of the actuator changes or not. 
 Rule \rulename{loccom} models intra-node communications on a local channel
$c$ ($\rng c = -1$).
Rule \rulename{timestat} models the passage of time within a stationary node. 
Notice that all untimed intra-node actions 
 are considered urgent actions as they must
 occur before the next
timed action. 
As an example,  position detection is a time-dependent
 operation 
which cannot be delayed. Similar argument applies to sensor reading,  
actuator writing and channel communication. 
Rule \rulename{timemob} models the passage of time within a mobile node.
This rule also serves to model \emph{node mobility\/}. 
Mobile nodes can nondeterministically move from one physical location $h$
 to a (possibly different) location $k$, at the end of a time interval. 
Node mobility respects the following time discipline:  
 in one time unit   a node located at $h$ can move to any location $k$
 such that $\dist h k \leq \delta$, for
some fixed  $\delta \in \mathbb{N}$ (obviously, it is possible to have $h=k$ and $\dist h k = 0$). 
For the sake of simplicity, we fix the same constant $\delta$ 
for all nodes of our systems. 
The premises of Rules~\rulename{timestat} and 
\rulename{timemob} ensure that if a node  can perform a timed 
reduction $\redtime$ then 
 the same node cannot perform an instantaneous reduction $\redtau$. Actually, 
due to the syntactic restrictions in the premises of both rules, that node
cannot perform an instantaneous reduction $\red_a$ either. This is 
formalised in Prop.~\ref{prop:maxprog}. 

Rule \rulename{glbcom} models inter-node communication 
along a global channel $c$ ($\rng c \geq 0$). Intuitively, two different nodes can communicate via a
 common channel $c$ if and only if they are within  
the transmission range of $c$. 
Rule \rulename{timepar} is for inter-node time synchronisation; the passage of time is allowed only if all instantaneous reductions 
have already fired.  \emph{Well-timedness}  (Prop.~\ref{prop:welltime}) ensures the absence of infinite 
instantaneous 
traces  which would prevent the passage of time. 
The remaining rules are 
standard. 

We write $\redi^k$ as an shorthand for $k$ consecutive reductions $\redi$;
$\redi^\ast$ is the reflexive and transitive closure of $\redi$. Similar conventions apply to the reduction relation $\red$.

Below we report a few standard time properties which hold in our
 calculus: \emph{time determinism\/}, \emph{maximal progress\/}, 
\emph{patience\/} and \emph{well-timedness\/}. 
\begin{proposition}[Local time determinism]
\label{prop:timed}
If $M\redtime M'$ and $M\redtime M''$, then $M'\equiv \prod_{i \in I}
\nodep{n_i}{\conf {\I_i} {P_i}}{\mu_i}{h_i}$ and  $M''\equiv \prod_{i \in I}
\nodep{n_i}{\conf {\I_i} {P_i}}{\mu_i}{k_i}$, with 
$\dist {h_i} {k_i} \leq 2\delta $, for all $i \in I$. 
\end{proposition}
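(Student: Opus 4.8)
The plan is to reduce $M$ to a normal form and then pin down exactly what a timed reduction does to it. First I would recall the standard fact that every well-formed network satisfies $M \equiv \res{\tilde c}\,\prod_{i \in I}\nodep{n_i}{\conf{\I_i}{P_i}}{\mu_i}{h_i}$ for suitable pairwise-distinct names $n_i$; since the statement and the reduction relation $\redtime$ are closed under $\equiv$, and since Rule~\rulename{res} only transports a timed reduction underneath a restriction, it is enough to prove the claim for restriction-free parallel compositions of nodes.

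The core is an auxiliary \emph{shape lemma}: whenever $\prod_{i \in I}\nodep{n_i}{\conf{\I_i}{P_i}}{\mu_i}{h_i} \redtime N$, each $P_i$ is structurally congruent to a \emph{time-ready} process $\prod_{l}\timeout{\pi^i_l.P^i_l}{Q^i_l}\newpar\prod_{m}\sigma.R^i_m$, the whole network admits no $\redtau$ reduction, and $N \equiv \prod_{i \in I}\nodep{n_i}{\conf{\I_i}{P_i^{\downarrow}}}{\mu_i}{h_i'}$ where the continuation $P_i^{\downarrow} \deff \prod_{l}Q^i_l\newpar\prod_{m}R^i_m$ is determined by $P_i$ up to $\equiv$, $h_i' = h_i$ if $\mu_i = \stat$, and $\dist{h_i}{h_i'}\leq\delta$ if $\mu_i = \mob$. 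I would prove this by induction on the derivation of the timed reduction: the axioms \rulename{timezero}, \rulename{timestat}, \rulename{timemob} give it directly; Rule~\rulename{timepar} follows by splitting the product and invoking the induction hypothesis on each factor, its side condition supplying the global absence of $\redtau$; and Rule~\rulename{struct} is absorbed because we work modulo $\equiv$ throughout and $P_i^{\downarrow}$ does not depend on the chosen representative.

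Granting the shape lemma, the proposition is immediate. From $M\redtime M'$ and $M\redtime M''$, writing $M\equiv\prod_{i\in I}\nodep{n_i}{\conf{\I_i}{P_i}}{\mu_i}{h_i}$, we get $M'\equiv\prod_{i\in I}\nodep{n_i}{\conf{\I_i}{P_i^{\downarrow}}}{\mu_i}{h'_i}$ and $M''\equiv\prod_{i\in I}\nodep{n_i}{\conf{\I_i}{P_i^{\downarrow}}}{\mu_i}{h''_i}$ with identical index set, names, interfaces, tags and continuations, and with $h'_i = h''_i = h_i$ for stationary nodes while $\dist{h_i}{h'_i}\leq\delta$ and $\dist{h_i}{h''_i}\leq\delta$ for mobile ones. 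Matching the $P_i$ of the statement with $P_i^{\downarrow}$, and the statement's $h_i,k_i$ with our $h'_i,h''_i$, it only remains to bound the latter: by symmetry and the triangle inequality for $\dist{\cdot}{\cdot}$ we obtain $\dist{h'_i}{h''_i}\leq\dist{h'_i}{h_i}+\dist{h_i}{h''_i}\leq 2\delta$.

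The delicate point is the well-definedness of $P_i^{\downarrow}$ modulo $\equiv$ inside the shape lemma: a process can be congruent to a time-ready process in more than one way, since parallel composition forms a commutative monoid, conditionals collapse onto a branch, and recursion unfolds --- here time-guardedness of recursion is precisely what ensures that an unfolded $\fix X P$ still exposes a time-ready outermost form. I would therefore isolate a small lemma stating that any two time-ready decompositions of the same process share, up to $\equiv$ and reordering, the same timeout-branches and the same $\sigma$-branches, hence the same continuation; this analysis of structural congruence on time-ready processes is the only non-routine ingredient, everything else being rule inspection plus the metric laws of $\dist{\cdot}{\cdot}$.
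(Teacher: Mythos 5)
Your proof is correct and follows essentially the same route as the paper's: the paper first proves a node-level time-determinism lemma (by structural induction on the process, showing the continuation is determined up to $\equiv$ and only the target location may vary by at most $\delta$) and then does rule induction on the derivation of $M\redtime M'$, obtaining the $2\delta$ bound from the triangle inequality exactly as you do. Your shape lemma merely repackages these two steps into one, and the point you flag as delicate --- well-definedness of the continuation $P_i^{\downarrow}$ modulo $\equiv$ --- is precisely the content of the paper's (tersely proved) node lemma.
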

In its standard formulation, \emph{time determinism} says that a system reaches at most one new state by executing a reduction $\redtime$. However,  by an application of Rule~\rulename{timemob}, our mobile nodes may change location when executing a reduction $\redtime$, thus we have a local variant 
 of time determinism. 

According to~\cite{HR95}, the  \emph{maximal progress}
property says that processes communicate as soon as a possibility of communication arises. 
In our calculus,  we generalise this property saying that 
instantaneous reductions cannot be delayed.
\begin{proposition}[Maximal progress]
\label{prop:maxprog}
If $M\redi M'$, then there is no $M''$ such that $M\redtime M''$.
\end{proposition}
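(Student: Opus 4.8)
The plan is to prove the contrapositive by structural analysis of the two possible timed reductions. Suppose $M\redi M'$. I want to show that no $M''$ exists with $M\redtime M''$. The only rules that can derive a timed reduction $\redtime$ are \rulename{timestat}, \rulename{timemob}, \rulename{timezero}, \rulename{timepar}, \rulename{res} and \rulename{struct}. The argument will be an induction on the derivation of a hypothetical $M\redtime M''$, showing at each step that the side condition ``$\not\redtau$'' (or ``$\not\redi$'', after the strengthening noted below) is violated, contradicting $M\redi M'$.

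First I would observe, as the paragraph before Prop.~\ref{prop:maxprog} already remarks, that whenever the premise of \rulename{timestat} or \rulename{timemob} applies — i.e.\ the node has the special form $\prod_i\timeout{\pi_i.P_i}{Q_i}\newpar\prod_j\sigma.R_j$ and cannot do a $\redtau$ — then by the syntactic shape of that process the node cannot do a $\red_a$ either: an $\red_a$ reduction at a single node requires a $\wact v a$ prefix at top level (rule \rulename{actchg}, closed under \rulename{parp}), which is absent from processes built only from timeout-guarded channel prefixes and $\sigma$-prefixes. Hence these two base rules in fact guarantee the node is $\not\redi$, not merely $\not\redtau$. Similarly \rulename{timezero} gives $\zero$, which is trivially $\not\redi$. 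So every timed reduction is built from pieces that are individually incapable of any instantaneous reduction.

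Next I would push this through the compositional rules. For \rulename{timepar}, the reduction $M_1\mid M_2\redtime M_1'\mid M_2'$ has premises $M_1\redtime M_1'$, $M_2\redtime M_2'$ and $M_1\mid M_2\not\redtau$; by the induction hypothesis $M_1\not\redi$ and $M_2\not\redi$, and since \rulename{glbcom} is the only rule creating a fresh inter-node $\redtau$ and its premise demands matching $\OUT c v$/$\LIN c x$ prefixes which (as above) a time-ready component does not have, no new $\redtau$ or $\red_a$ arises across the composition, so $M_1\mid M_2\not\redi$. For \rulename{res}: if $\res c N\redtime\res c N'$ from $N\redtime N'$, the IH gives $N\not\redi$, and $\res c N\redi$ would, by inversion of \rulename{res}/\rulename{struct}, force $N\redi$ — contradiction. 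For \rulename{struct}: congruence $\equiv$ is a bisimulation for $\redi$ (it is defined to respect the reduction rules via \rulename{struct} itself), so $M\equiv N$, $N\redtime N'$, $N'\equiv M'$ with $N\not\redi$ yields $M\not\redi$. In every case the assumed $M\redtime M''$ forces $M\not\redi$, contradicting $M\redi M'$.

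The main obstacle — really the only non-routine point — is the claim that a process which satisfies the premise of \rulename{timestat}/\rulename{timemob} (a parallel product of $\timeout{\pi.P}{Q}$ and $\sigma.R$ components that cannot do $\redtau$) additionally cannot fire $\red_a$. This needs a small inversion lemma on the process grammar: the only top-level prefix that triggers \rulename{actchg} is $\wact v a$, and such a prefix is neither a $\timeout{\pi.\_}{\_}$ with $\pi\in\{\OUT c v,\LIN c x\}$ nor a $\sigma.\_$; one must also check that structural congruence on processes (commutativity, associativity, $[b]P;Q$, $\fix X P$ unfolding) cannot rewrite such a product into one exposing a $\wact v a$ — which follows because unfolding $\fix X P$ only substitutes into time-guarded positions and the other axioms merely permute the given components. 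Once this lemma is in hand, the induction above closes cleanly.
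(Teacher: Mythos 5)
Your proof is correct, but it runs the induction in the opposite direction from the paper's. The paper performs rule induction on the derivation of the \emph{instantaneous} reduction $M\redi M'$: for each of the seven base rules it checks that the resulting shape of $M$ is incompatible with the premises of \rulename{timestat}, \rulename{timemob} or \rulename{timepar}, and the \rulename{parp} case is discharged with two auxiliary lemmas (a $\redtau$ of a product of nodes survives the addition of parallel process components, and the inability to perform $\redtime$ is likewise preserved). You instead induct on the derivation of the hypothetical $M\redtime M''$ and prove $M\not\redi$, i.e.\ the contrapositive. Your route has fewer base cases and concentrates the real content in the single inversion lemma you correctly isolate: a node whose process has the form $\prod_{i}\timeout{\pi_i.P_i}{Q_i}\newpar\prod_{j}\sigma.R_j$ exposes no top-level $\wact v a$ prefix, and structural congruence cannot create one (the top-level parallel components are neither conditionals nor recursions, so only the monoid laws apply at that level); hence the $\not\redtau$ premise of the timed rules already yields $\not\redi$. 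This makes explicit and proves the fact that the paper only asserts in prose just before the proposition. Both proofs treat the \rulename{res}/\rulename{struct} cases at the same (somewhat informal) level of detail, so nothing is lost there.

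One small slip in your \rulename{timepar} case: you justify $M_1\,|\,M_2\not\redtau$ by claiming that a time-ready component has no matching $\OUT c v$/$\LIN c x$ prefixes. That is false --- the components $\timeout{\pi_i.P_i}{Q_i}$ are exactly such prefixes, and two time-ready nodes in range of a common channel would indeed enable \rulename{glbcom}. The claim is, however, unnecessary: $M_1\,|\,M_2\not\redtau$ is itself a side condition of \rulename{timepar}, so the only thing requiring the inductive hypothesis is the absence of $\red_a$, which follows because a $\red_a$ of a parallel composition is always derived from a single component via \rulename{parn}. With that justification repaired, the argument closes.
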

On the other hand, if no instantaneous reductions are possible  then time is free to pass. 
\begin{proposition}[Patience]
\label{prop:patience}
If $M \redi M'$  for no  $M'$, then 
there is $N$ such that $M \redtime N$.
\end{proposition}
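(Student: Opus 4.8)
The plan is to argue by structural induction on the well-formed, closed network $M$. The induction goes through because the context rules \rulename{parn} and \rulename{res} turn an instantaneous reduction of a sub-network into one of the whole network (their side conditions allow $\omega\in\{\tau,a\}$, i.e.\ they hold for the relation $\redi$); hence, from the hypothesis that $M$ has no $\redi$-reduct one concludes the same for every immediate sub-network of $M$. The base case $M=\zero$ is immediate from \rulename{timezero}. For $M=M_1 | M_2$, the observation above — applied to the second component after an appeal to the structural congruence axiom $M_1 | M_2\equiv M_2 | M_1$ — gives $M_1\not\redi$ and $M_2\not\redi$, so the induction hypothesis yields $M_1\redtime M_1'$ and $M_2\redtime M_2'$; since $\redtau\subseteq\redi$ the hypothesis also gives $M_1 | M_2\not\redtau$, whence \rulename{timepar} concludes $M_1 | M_2\redtime M_1' | M_2'$. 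The case $M=\res c M_1$ is analogous: \rulename{res} applied in the $\tau/a$ direction gives $M_1\not\redi$, the induction hypothesis gives $M_1\redtime M_1'$, and \rulename{res} applied once more in the $\sigma$ direction concludes $\res c M_1\redtime\res c M_1'$.

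The heart of the argument is the node case $M=\nodep n{\conf \I P}{\mu}{h}$. First I would bring $P$ into a \emph{head normal form}: by repeated use of the structural congruence axioms $\fix X P\equiv P\subst{\fix X P}{X}$ and $[b]P;Q\equiv P$, $[b]P;Q\equiv Q$ (the latter available because guards are decidable), together with the commutative-monoid laws for $\newpar$, one obtains $P\equiv\prod_{i\in I}\timeout{\pi_i.P_i}{Q_i}\newpar\prod_{j\in J}\rho_j.R_j$ with each $\rho_j\in\{\sigma,@(x),\rsens x s,\wact v a\}$ (no bare process-variable component survives, since $P$ is closed). Next I would exploit $M\not\redi$ to rule out every non-$\sigma$ choice of $\rho_j$: a component $@(x).R_j$ would enable a $\redtau$ of $M$ via \rulename{pos} lifted through \rulename{parp} and \rulename{struct}; a component $\rsens x s.R_j$ would do the same via \rulename{sensread}, since well-formedness of $M$ guarantees $\I(s)$ is defined; and a component $\wact v a.R_j$ would enable a $\redtau$ via \rulename{actunchg} (if $\I(a)=v$) or a $\red_a$ via \rulename{actchg} (if $\I(a)\neq v$), the value $\I(a)$ being again defined by well-formedness — each of these contradicts $M\not\redi$. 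Hence $\rho_j=\sigma$ for all $j$, that is $P\equiv\prod_{i}\timeout{\pi_i.P_i}{Q_i}\newpar\prod_{j}\sigma.R_j$, which is precisely the process shape appearing in the premises of \rulename{timestat} and \rulename{timemob}. As $\redtau\subseteq\redi$ the hypothesis also gives $M\not\redtau$, a fact preserved along $\equiv$ by \rulename{struct}; so the premise of \rulename{timestat} (if $\mu=\stat$), or that of \rulename{timemob} with $k:=h$ so that $\dist h h=0\leq\delta$ (if $\mu=\mob$), is discharged, and the resulting timed reduction of the head normal form is carried back to $M$ by \rulename{struct}.

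The step I expect to be the main obstacle is the head-normal-form lemma, and specifically the termination of the congruence-driven rewriting of $P$. This is where the time-guardedness of recursion in \cname\ is used in an essential way: in $\fix X P$ the variable $X$ may occur only underneath a $\sigma.(\cdot)$ or in the else-branch of a $\timeout{\pi.(\cdot)}{(\cdot)}$, hence never in the ``spine'' of $P$ (the portion of its syntax tree lying above every prefix constructor), so that unfolding $\fix X P$ reproduces the same spine minus its topmost operator and a suitable spine-size measure strictly decreases; resolving a conditional decreases the same measure. Everything else in the node case is routine bookkeeping with \rulename{struct}, \rulename{parp}, \rulename{parn} and \rulename{res}, all of which also cover the label $\sigma$.
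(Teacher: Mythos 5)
Your proof is correct and follows essentially the same route as the paper's: a structural induction on $M$ whose node case reduces $P$ to the parallel-of-prefixes shape required by \rulename{timestat}/\rulename{timemob} and uses the absence of instantaneous reductions (together with well-formedness, to ensure $\I(s)$ and $\I(a)$ are defined) to exclude the non-$\sigma$ prefixes. The only differences are inessential: the paper argues by contraposition (from $M\not\redtime$ it derives $M\redi M'$) where you prove the implication directly, and you spell out the head-normal-form step and its termination via time-guarded recursion, which the paper leaves implicit.
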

Finally, time-guardedness in recursive processes
 allows us to prove that our networks are always well-timed. 
\begin{proposition}[Well-timedness]
\label{prop:welltime}
For any $M$  there is a $ z\in\mathbb{N}$ such that  if
$M \redi^u N$ then $u\leq z$. 
\end{proposition}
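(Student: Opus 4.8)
The plan is to exhibit a well-founded measure on networks that strictly decreases along every instantaneous reduction $\redi$, the key point being that, by time-guardedness of recursion, a single ``time slice'' contains only boundedly many such reductions. Concretely, I would define a \emph{weight} $\mathit{wt}(M)\in\mathbb{N}$ by structural recursion, arranged so that subterms occurring in \emph{time-guarded} position contribute nothing. On processes: $\mathit{wt}(\nil)=\mathit{wt}(X)=0$; $\mathit{wt}(\sigma.P)=0$; $\mathit{wt}(@(x).P)=\mathit{wt}(\rsens x s.P)=\mathit{wt}(\wact v a.P)=1+\mathit{wt}(P)$; $\mathit{wt}(P\newpar Q)=\mathit{wt}(P)+\mathit{wt}(Q)$; $\mathit{wt}(\timeout{\pi.P}{Q})=1+\mathit{wt}(P)$, discarding the time-guarded branch $Q$; $\mathit{wt}(\fix X P)=\mathit{wt}(P)$; and $\mathit{wt}(\match b P Q)$ equals $\mathit{wt}(P)$ if $\bool b=\true$, equals $\mathit{wt}(Q)$ if $\bool b=\false$, and equals $1+\max(\mathit{wt}(P),\mathit{wt}(Q))$ if $\bool b$ is undefined (i.e.\ $b$ still open). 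On networks: $\mathit{wt}(\zero)=0$, $\mathit{wt}(\nodep n{\conf \I P}{\mu}{h})=\mathit{wt}(P)$, $\mathit{wt}(M|N)=\mathit{wt}(M)+\mathit{wt}(N)$, and $\mathit{wt}(\res c M)=\mathit{wt}(M)$.

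The argument then rests on three auxiliary facts. (i) Value substitution never increases the weight: $\mathit{wt}(T\subst v x)\le\mathit{wt}(T)$, by induction on $T$; the only delicate case is the conditional, where instantiating an open guard $b$ into a decidable one can, by the definition of $\mathit{wt}(\match b P Q)$, only decrease the weight. (ii) Structural congruence preserves the weight: the axioms for parallel composition, for $\zero$, and for restriction are immediate from additivity of $\mathit{wt}$; the axiom $\fix X P\equiv P\subst{\fix X P}{X}$ preserves the weight because all occurrences of $X$ in $P$ are time-guarded, hence none lies in a weight-contributing position, so $\mathit{wt}(P\subst R X)=\mathit{wt}(P)$ for every $R$; and the two conditional axioms are covered by the definition of $\mathit{wt}$ in the case $\bool b$ defined. (iii) Every instantaneous reduction strictly decreases the weight: for the axioms \rulename{pos}, \rulename{sensread}, \rulename{actunchg}, \rulename{actchg} the weight drops by exactly $1$ (using (i) for the substitution, and noting that the physical interface is irrelevant to $\mathit{wt}$), while for \rulename{loccom} and \rulename{glbcom} it drops by at least $2$; the closure rules \rulename{parp}, \rulename{parn}, \rulename{res} propagate strict decrease by additivity of $\mathit{wt}$, and \rulename{struct} by (ii).

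Putting this together: given $M$, set $z:=\mathit{wt}(M)$. If $M\redi^u N$, then (iii) gives $\mathit{wt}(N)\le\mathit{wt}(M)-u$, and since $\mathit{wt}(N)\ge 0$ we obtain $u\le\mathit{wt}(M)=z$, which is exactly the claim. (Timed reductions $\redtime$ are not counted here; they peel off time-guards and thereby ``reload'' the weight, but each $\redi^u$-run happens between two consecutive such steps.)

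The main obstacle is calibrating $\mathit{wt}$ so that it works on all three counts at once: strictly decreasing under every $\redi$ rule (this dictates the treatment of prefixes and of communication), invariant under all structural axioms (this forces the case analysis on $\bool b$ in the conditional and the purely additive treatment of restriction), and genuinely finite on recursive processes. The last point is where the hypothesis that recursion is \emph{time-guarded} is essential: it ensures that unfolding $\fix X P$ never exposes a fresh untimed redex within the current time slice, so repeated unfoldings cannot pump the weight upwards. Without it, a process such as $\fix X(\rsens y s.X)$ would admit arbitrarily long $\redi$-runs and well-timedness would fail.
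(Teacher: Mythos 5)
Your proof is correct and follows essentially the same route as the paper: the paper likewise defines a prefix-counting measure on processes in which time-guarded subterms contribute nothing, lifts it additively to networks, shows it is finite precisely because recursion is time-guarded, and takes that value as the bound $z$. You simply make explicit the auxiliary facts (invariance under substitution and structural congruence, strict decrease under every instantaneous reduction) that the paper leaves implicit, so if anything your version is the more complete one.
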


\subsection{Behavioural equivalence}
\label{sec:barbed}

In this section we provide a standard notion of contextual equivalence for our systems. 
Our touchstone equivalence is \emph{reduction barbed congruence}~\cite{HY95,MS92}, a standard contextually defined process equivalence. 
Intuitively, two systems are reduction barbed congruent if they have the same \emph{basic observables} in all \emph{contexts\/} and under all possible \emph{computations}. 


As already pointed out in the Introduction, the notion of reduction barbed congruence relies on two crucial concepts: a \emph{reduction semantics\/} to describe system computations, and the \emph{basic observable} which denotes what the environment can directly observe of a system\footnote{See~\cite{Sangiorgi-book} for a comparison between this approach and the original barbed congruence~\cite{MS92}.}.
So, the question is: What are the “right” observables, or barbs, 
 in our calculus? 
Due to the hybrid nature of our systems we could choose to observe either channel communications -logical observation- as in standard process calculi, or 
the capability to diffuse messages via actuators -physical observation- or both things. Actually, it turns out that in \cname\  logical observations
can be expressed in terms of physical ones (see Sec.~\ref{sec:extensions}  for more details). So, we adopt  as basic observables
the capability to publish messages 
on actuators. 
\begin{definition}[Barbs]
\label{def_barb} 
We write 
$\barb{M}{ \wact v {a@h}  }$ if 
$M\equiv 
\res {\tilde g}{\big( \nodep{n}{\conf \I P}{\mu}{h} \, | \, M' \big) }$ with $\I(a)=v$. 
We write $\redbarb{M}{ \wact v {a@h}  }$ if $M\red^\ast \barb{M'}{ \wact v {a@h}  }$.
\end{definition}

The reader may wonder why our 
barb reports the location and not the node of the actuator. 
We recall that actuator names are unique, so they somehow codify the name of 
their node. The location is then necessary because  
the environment is potentially  aware of its position when observing an actuator: if on Monday at 6.00AM your 
smartphone rings to wake you up, then  you may react differently depending whether you are at home or on holidays in the Bahamas! 
%
\begin{definition}
A binary relation $\rel$ over networks is
 \emph{barb preserving\/} if $M\;\rel\;N$ and  $\barb{M}{\wact v {a@h}}$
 implies $\redbarb{N}{\wact v {a@h}}$. 
\end{definition}
\begin{definition}
A binary relation $\rel$ over networks  is 
 \emph{reduction closed\/} if whenever $M\;\rel\;N$ the following 
conditions are satisfied: 
\begin{itemize}
\item     $M\red M'$ implies $N
\red^\ast N'$ with $M'\;\rel\;N'$
\item     $M\red_{a} M'$ implies $N
\red^\ast \red_{a} \red^\ast N'$ with $M'\;\rel\;N'$. 
\end{itemize}
\end{definition}
We require  reduction closure of both $\red$ and $\red_a$, for any $a$
(for understanding this choice, please see Ex.~\ref{ex:reda}). 

In order to model sensor updates  made by the physical environment on a sensor $s$ in a given location $h$, 
we define an operator $[s@h \mapsto v]$ on networks.  
\begin{definition}
\label{def:sens-change}
Given  a location $h$, a sensor $s$, and a value $v$
in the domain of $s$, we define:\\[1pt]
\(
\begin{array}{r@{\hspace*{-0.1mm}}c@{\hspace*{-0.1mm}}l}
\nodep n {\conf \I P}{\mu}{h}[s@h \mapsto v] & {\deff} & \nodep n
{\conf {\I[s \mapsto v]} P}{\mu}{h} \mbox{, if $\I(s)$ defined}\\
\nodep n {\conf \I P}{\mu}{k}[s@h \mapsto v] & {\deff} & \nodep n
{\conf \I P}{\mu}{k} \mbox{, if $\I(s)$ undef.\ or $h\neq k$}\\
 (M|N)[s@h \mapsto v]& {\deff} & M[s@h \mapsto v] \; | N[s@h \mapsto v] \\
\big( \res c M \big)[s@h \mapsto v] & {\deff} & \res c {\big( M [s@h \mapsto v] 
\big)}\\
\zero[s@h \mapsto v] & {\deff} & \zero \enspace . 
\end{array}
\)
\end{definition}
As for barbs, the reader may wonder why when updating a sensor 
we use its location, also for node-dependent sensors. This is because 
when changing a node-dependent sensor (e.g.\ touching a touchscreen of a smartphone) 
the environment is in general  aware of its position. 
\begin{definition}
\label{def-contextual}
A binary relation $\rel$ over networks is
 \emph{contextual\/} if $M\;\rel\;N$  implies that 
\begin{itemize}
\item for all networks
$O$, $M|O \; \RRr \; N|O$
\item for all channels $c$, $\res c M \; \RRr \; \res c N$
\item  for all  sensors
$s$, locations $h$, and  values $v$ in the domain of $s$,  
$M[s@h \mapsto v]  \RRr  N[s@h\mapsto v]$.  
\end{itemize}
\end{definition}
The first two clauses requires closure under \emph{logical 
contexts} (parallel systems), while the last clause regards 
\emph{physical contexts\/}, which can \emph{nondeterministically update} sensor values. 

Finally, everything is in place to define our touchstone behavioural equality. 
\begin{definition}
\label{barbed_cong}
\emph{Reduction barbed congruence\/}, written  $\cong$, is the largest symmetric relation
over networks which is reduction closed, barb preserving  and contextual.
\end{definition}

\begin{remark}
\label{rem:nondet}
Obviously, if $M \cong N$ 
then $M$ and $N$ will be still equivalent in any setting where    
 sensor updates 
are governed 
by specific physical laws. This is because the  physical contexts that 
can affect sensor values, according to some physical law, are definitely 
fewer than those which can change sensors nondeterministically. 
\end{remark}
We recall that the reduction relation $\red$ ignores the passage of time, and therefore the reader might suspect that our reduction
barbed congruence is impervious to the precise timing of activities. We will show 
that this is not the case.
\begin{example}
Let $M$ and $N$ be two networks such that 
$M = \nodep{n}{\conf \emptyset {\sigma.\timeout{\OUT{c}{}}\nil}}{\stat}{h}$ and 
$N =\nodep{n}{\conf \emptyset {\timeout{\OUT{c}{}}.\nil}}{\stat}{h}$, with
$\rng c = \infty$. 
It is easy to see that $M \redtime N$.  
As the reduction relation $\red$  does not distinguish instantaneous reductions from  timed ones, one may think that networks $M$ and $N$ are reduction barbed congruent, and that a prompt transmission along channel $c$ is equivalent 
to the same transmission delayed of one time unit. 
However,  let us consider the test
$T = \nodep{test}{\conf {\J} {\sigma.{\wact 1 a}.\timeout{\LIN{c}{}.\wact 0 a}{\nil}}}{\stat}{l}$, 
with $\J(a)=0$, for some (fresh) actuator $a$. 
Our claim is that test $T$ can distinguish the two networks, and thus  $M\not\cong N$.
In fact, if  $M | T \red \red_a O =  \nodep{n}{\conf \emptyset {\timeout{\OUT{c}{}}\nil}}{\stat}{h} | \nodep{test}{\conf {\J'} {\timeout{\LIN{c}{}.\wact 0 a}{\nil}}}{\stat}{l}$, with $\J'(a)=1$,
then there is no $O'$ such that $N|T \redmany \red_a \redmany O'$ with 
$O \cong O'$. This is because $O$ can perform a reduction sequence $\red\red_a$ that 
cannot be matched by any $O'$. 
\end{example}
Behind this example there is the general principle that reduction barbed congruence is sensitive
to the passage of time. 
\begin{proposition}
\label{prop:time-observation} If $M \cong N$ and $M \redtime M'$ then there is $N'$ such that $N \redtau^{\ast}\redtime\redtau^{\ast}N'$ and $M'\cong N'$.
\end{proposition}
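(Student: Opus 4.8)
The plan is to prove Proposition~\ref{prop:time-observation} by exhibiting, for any given timed reduction $M \redtime M'$, a specific \emph{physical context} that forces $N$ to mimic that timed step, possibly preceded and followed by instantaneous reductions $\redtau$. The key observation is that a timed reduction $\redtime$ is, by Maximal Progress (Prop.~\ref{prop:maxprog}), only possible from a state that is ``instantaneously stuck'' (no $\redi$ available); so by passing through such a state one can detect the passage of time via a barb that is set before the tick and read only after it. I would use essentially the tester idea from the worked Example preceding the statement, but parametrised so that it works uniformly for an arbitrary $M$ rather than the specific two-node network there.

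First I would fix a fresh actuator name $a$ (fresh with respect to $M$ and $N$, which is legitimate since networks use finitely many names), a fresh node name $test$, a fresh channel $c$ with $\rng c = \infty$, and a fresh location. I would build a context node $T$ that (i) first writes a distinguishing value on $a$, (ii) then waits one time unit, (iii) then writes back the original value, but in such a way that the intermediate value can only be observed — via $\redbarb{\cdot}{\wact{\cdot}{a@\cdot}}$ together with the right $\red_a$-steps — if exactly one tick has elapsed in the other component. Concretely, something in the spirit of $T = \nodep{test}{\conf{\J}{\wact 1 a . \sigma . \wact 0 a . \nil}}{\stat}{l}$ with $\J(a)=0$, composed in parallel. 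Since $M \cong N$, contextuality gives $M \mid T \cong N \mid T$. Then I would analyse the reduction sequence $M \mid T \red_a \redtime \red_a (M' \mid \nodep{test}{\conf{\J}{\nil}}{\stat}{l})$: the first $\red_a$ sets the actuator to $1$, the $\redtime$ step is enabled precisely because $M$ (being able to tick) together with $T$-after-the-write has no instantaneous reduction (using Prop.~\ref{prop:maxprog} and the premise of rule \rulename{timepar}), and the final $\red_a$ resets it. Reduction closure of $\cong$ with respect to $\red_a$ then forces $N \mid T$ to answer with $N \mid T \red^\ast \red_a \red^\ast \red^\ast \red_a \red^\ast N'' $ reaching a state barbed-congruent to the target; tracking which of these steps can involve $T$ versus $N$, and using that $T$'s only way to perform its second $\red_a$ is to tick once, pins down that $N$ itself must do $N \redtau^\ast \redtime \redtau^\ast N'$ with $M' \cong N'$ (after stripping the now-terminated $T$, which is sound because $\nil$-nodes can be cancelled up to $\cong$, or simply carried along).

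The main obstacle I expect is controlling \emph{interference}: when $N \mid T$ answers, $N$ might by itself produce $\red_a$-steps (it has its own actuators) or instantaneous steps that are confused with $T$'s steps, and in general $N$'s reductions and $T$'s reductions interleave. To handle this cleanly I would choose $a$, $c$, $test$, $l$ all fresh for both $M$ and $N$ so that no node of $N$ can touch $a$ or synchronise with $T$ on $c$ — then every $\red_a$ attributable to the pair $N\mid T$ at actuator $a$ must come from $T$, and $T$'s two $\red_a$ transitions are separated by exactly one mandatory $\sigma$, which by rule \rulename{timepar} can only fire when $N$ (in its then-current state) also performs a tick. A delicate point is well-timedness/maximal-progress bookkeeping: I must argue that between $T$'s first and second actuator write, the $\redtime$ that $N\mid T$ performs is a genuine tick of $N$ and not, say, $N$ idling while $T$ ticks alone — but rule \rulename{timepar} is exactly what rules this out, since it requires \emph{both} components to tick synchronously. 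A second subtlety is that $N$ might interpose $\redtau$ steps before the write, after the write, or after its tick; these are absorbed into the $\redtau^\ast$ on either side of the $\redtime$ in the statement, which is why the statement is phrased with $\redtau^\ast \redtime \redtau^\ast$ rather than a bare $\redtime$.

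Finally I would close the argument by invoking the fresh-name/garbage-collection fact that $O \cong O \mid \nodep{test}{\conf{\J}{\nil}}{\stat}{l}$ for any $O$ (a routine consequence of the definition of $\cong$, as the terminated test node is inert, generates no barbs outside $a$, and $a$ is fresh), so that from $M' \mid T' \cong N'' \mid T'$ with $T'$ the residual terminated test we extract $M' \cong N'$ as required. I would remark that essentially the same context also yields the symmetric direction and hence the full claim. The routine calculations — checking structural congruence steps, checking that each reduction in the sequence for $M \mid T$ is derivable by the stated rules, and the freshness bookkeeping — I would relegate to a lemma or simply assert, as they are mechanical given the SOS rules of Tab.~\ref{reduction}.
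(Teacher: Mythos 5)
Your overall strategy --- compose with a single fresh-actuator test node, use contextuality together with reduction closure with respect to $\red_a$ to force $N$ to track the tick, and finally strip the test with a cut/garbage-collection lemma --- is exactly the strategy of the paper's proof. The gap lies in the design of the test. You propose $T=\nodep{test}{\conf{\J}{\wact 1 a.\sigma.\wact 0 a.\nil}}{\stat}{l}$ (write, tick, write); the paper uses $T=\nodep{n}{\conf{\J}{\sigma.\wact 1 a.\wact 0 a.\nil}}{\stat}{k}$ (tick, write, write), and the ordering is not a cosmetic choice.

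With your test, the matching sequence on the $N$ side has the shape $N\,|\,T \red^\ast \red_a \red^\ast \red_a \red^\ast V$ with $M'\,|\,T_f \cong V$, where $T_f$ is the test node reduced to $\nil$ with $\J(a)=0$. Your counting argument does show that the first $\red^\ast$ contains no $\redtime$ (the pending urgent write $\wact 1 a$ prevents the test node, and hence by \rulename{timepar} the whole composition, from ticking) and that exactly one $\redtime$ separates the two $\red_a$ steps. But the congruence with $M'$ is obtained only at the very end of the sequence, and in the trailing $\red^\ast$ the residual test is the inert node $\nodep{test}{\conf{\J}{\nil}}{\stat}{l}$, which lets time pass freely (rule \rulename{timestat} with empty products). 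That tail may therefore contain further $\redtime$ steps of $N$'s derivative, and all you can conclude is $M'\cong N''$ for some $N''$ with $N\,\redtau^\ast\redtime\redtau^\ast(\redtau\cup\redtime)^\ast\, N''$ --- not the required $N\,\redtau^\ast\redtime\redtau^\ast\, N'$ with $M'\cong N'$. Taking $N'$ to be the state reached just after the one guaranteed tick does not help, because the congruence is not known to hold there; and stopping the matching before the final reset does not help either, because at that point you cannot tell whether the lagging test in $N$'s match has already passed its $\sigma$. The paper's ordering avoids all of this: at the moment the distinguishing barb $\downarrow_{\wact 1 {a@k}}$ holds, the test's residual is $\wact 0 a.\nil$, a still-pending urgent write that freezes time, so the trailing $\red^\ast$ of the match is forced to be $\redtau^\ast$ and the congruent state is reached after exactly one tick. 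The fix is simply to put the $\sigma$ before the first write and read the barb while an urgent write is still pending.
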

\begin{proof}
Suppose  $M\redtime M'$.
Consider the test 
$
T = \nodep{n}{\conf{\J}{\sigma. \wact 1 a . \wact 0 a . \nil} } {\stat}{k}
$ 
such that both systems $M | T$ and $N|T$ are well-formed, and $\J(a)=0$.
By construction, the  presence of a barb $\Downarrow_{\wact 1 {a@k}}$  in a derivative
of one of those systems implies that exactly one timed reduction $\redtime$ has been inferred in the derivation.

Since $M\redtime M'$ it follows  that 
$M|T \redtime\red_a  M'|T'$, 
with $T'=\nodep{n}{\conf{{\J[a \mapsto 1]}}{\wact 0 a}.\nil}{\stat}{k}$
and $\barb{M'|T'}{\wact 1 {a@k}}$.
As $M\cong N$ and  $\cong$ is contextual, the reduction sequence above 
must be mimicked  by $N|T$, that is  $N|T\redmany\red_a \redmany \hat{N}$, with $ M'|T' \cong \hat{N}$.
As a consequence, $\redbarb{\hat{N} }{\wact 1 {a@k}}$. This implies that  
exactly one timed reduction has been inferred in the 
reduction sequence $N|T\redmany\red_a \redmany \hat{N}$. 
As $M|T$ and $N|T$ are well-formed networks, the actuator $a$ can appear 
neither in $M$ nor in $N$. So, the above reduction sequence can 
be decomposed as follows:
\[
N | T \redmany N' |T \red_a N' | T' \redmany N'' | T' = \hat{N}
\]
with $N \redtau^{\ast}\redtime \redtau^{\ast}N''$. 
From $M'|T'\cong N''|T'$ it is easy to derive  $M' \cong N''$ (for details see Lem.~\ref{lem:cut_cxt} in Sec.~\ref{full-abstraction}).
\end{proof}

Now, we provide some insights into the design decision 
of having two different  reduction relations $\redtau$ and 
 $\red_a$. 
\begin{example}
\label{ex:reda}
Let $M$ and $N$ be two networks such that 
$M =  \nodep   n {\conf \I {{\wact 1 a}| {\wact 0 a}.{\wact 1 a}}}{\mu}{h}$ 
and
$N =  \nodep n {\conf {\I} {\wact 1 a}.{\wact 0 a}.{\wact 1 a}}{\mu}{h}$, 
with  $\I(a)=0$  and undefined otherwise. 
Then, within one time unit,  $M$ may display on the actuator $a$ either the sequence of values $01$ or the 
sequence $0101$,  while $N$ can only display the sequence $0101$. 
As a consequence, from the point of view of the physical environment, 
the observable behaviours of $M$ and $N$ are 
clearly different. In the following we show how 
$\cong$ can observe that difference. 
We recall  that the relation $\cong$ is reduction closed. 
Now, if $M \redtau \red_a M'= \nodep   n {\conf \J {{\wact 1 a}}}{\mu}{h}$, with 
$\J(a)=1$, 
the only possible reply of $N$ respecting reduction closure  is $N \red^\ast \red_a N'= \nodep n {\conf {\J} {\wact 0 a}.{\wact 1 a}}{\mu}{h}$. However, it is 
evident that $M' \not \cong N'$ because  $N'$ can turn the actuator $a$ to $0$ while $M'$ cannot. 
Thus, $M \not \cong N$. 
\end{example}
Notice  that 
 if the  relation $\red_a$ was merged together with $\redtau$
then  in the previous example we would have $M \cong N$. In fact,  
if we would merge the two reduction relations then 
the capability 
to observe messages on actuators, given by the barb,  
 would not be enough to observe changes on actuators within one time unit. 
On the other hand, the decision of not including $\red_a$ as part of  $\red$ 
gives to  $\cong$ 
enough distinguishing power 
to observe strong preservation of barbs. 
\begin{proposition}
\label{prop:strong-barbs}
If $M \cong N$ and $\barb{M}{\wact v {a@h}}$ then $\barb{N}{\wact v {a@h}}$. 
\end{proposition}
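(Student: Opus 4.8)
The statement to prove is Proposition~\ref{prop:strong-barbs}: if $M \cong N$ and $\barb{M}{\wact v {a@h}}$ then $\barb{N}{\wact v {a@h}}$ --- i.e.\ that barbs are preserved \emph{strongly}, not merely up to reduction. The plan is to exploit the two-fold reduction closure requirement in Definition~\ref{barbed_cong} (closure of both $\red$ and $\red_a$) together with a suitably crafted test context that turns the \emph{weak} barb $\redbarb{N}{\wact v {a@h}}$, which congruence guarantees for free, into a \emph{strong} one. The key idea is that an $\red_a$-reduction is a watchpoint the environment cannot skip, so by placing a fresh actuator in the context and making it fire exactly when $N$ reaches the state exhibiting $\wact v {a@h}$, we can force $N$ to already be at such a state.

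First I would assume $\barb{M}{\wact v {a@h}}$, so by Definition~\ref{def_barb} we have $M\equiv\res{\tilde g}(\nodep n{\conf \I P}{\mu}{h}\,|\,M')$ with $\I(a)=v$. Next I would pick a fresh actuator $b@m$ (with $m$ a fresh node name, located anywhere convenient, say $h$) and a fresh short-range or Internet channel, and build a test $T$ that probes the value of $a$: concretely a node running a process which first tries to ``read'' the configuration of $a$ --- since only the controller of $a$ may read it, one instead uses the context's ability to interact with $a$'s node through channels, or more simply one observes that the barb itself is the observable, so the relevant move is to design $T$ so that $M\,|\,T$ can perform $\red_a$ (toggling $b$) \emph{without any preceding $\red$}, whereas if $N$ needed genuine reductions to expose $\wact v {a@h}$ then $N\,|\,T$ could not match this with the prescribed pattern $\red^\ast\red_b\red^\ast$ while keeping a suitable invariant. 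I would then invoke contextuality of $\cong$ to get $M\,|\,T \cong N\,|\,T$, apply $\red_b$-reduction closure to the immediate $\red_b$-step of $M\,|\,T$, and analyse the forced reply of $N\,|\,T$ to conclude $N$ must itself already satisfy $\barb{N}{\wact v{a@h}}$; the final bookkeeping (stripping $T$ off the resulting equivalence, using freshness of $b@m$) is routine and parallels the decomposition argument in the proof of Proposition~\ref{prop:time-observation}, appealing to the cut-context lemma (Lem.~\ref{lem:cut_cxt}).

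The main obstacle is designing the test $T$ so that it genuinely \emph{cannot} tolerate any instantaneous reduction of $N$ before the barb appears. The subtlety is that $T$ has no direct access to the interface $\I$ of $a$'s node --- only $a$'s controller reads or writes $a$ --- so $T$ cannot simply ``check'' $\I(a)=v$. The way around this is to recall that the observable we are comparing is precisely the barb on $a@h$ itself, which the environment observes directly; so the test only needs to create a companion watchpoint $\red_b$ whose \emph{timing relative to the reductions of the system} is rigid, exactly as the $\wact 1 a$ trick pins down the number of timed reductions in Proposition~\ref{prop:time-observation}. Here, instead of counting timed steps, one wants $T$ to make the $\red_b$-reduction available \emph{with zero preceding $\red$-steps}, so that reduction closure of $\red_b$ forces $N\,|\,T$ to reply with $\red^\ast\red_b\red^\ast$; combining this with maximal progress (Prop.~\ref{prop:maxprog}) and the fact that $M\,|\,T$'s $\red_b$ redex is urgent, one argues that $N\,|\,T$ must reach a barbed-on-$b$ state, and then that any $\red$-steps $N$ took cannot have been needed to produce the barb on $a@h$, because $M$ exhibited it with none. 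Getting this contradiction airtight --- in particular ruling out that $N$ reaches the barb via a sequence where the $\red_b$ fires ``too early'' relative to $a$ --- is where the real care lies, and I expect it to mirror closely the structure already used in Examples and in Prop.~\ref{prop:time-observation}.
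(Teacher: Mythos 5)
There is a genuine gap, and in fact the whole test-context machinery is unnecessary. The proposition follows almost immediately from two observations you do not use: (i) by Definition~\ref{barbed_cong}, barb preservation already gives the weak barb $\redbarb{N}{\wact v {a@h}}$, i.e.\ $N \red^{\ast} N'$ with $\barb{N'}{\wact v {a@h}}$; and (ii) the relation $\red$ is \emph{defined} as $\redtau \cup \redtime$, deliberately excluding $\red_a$, and neither $\redtau$ nor $\redtime$ can alter the value stored in an actuator (a write that changes an actuator is labelled $\red_a$ by rule \rulename{actchg}; rule \rulename{actunchg} leaves the interface untouched; timed reductions never modify interfaces). Hence the value of $a$ in $N$ is the same as in $N'$, and $\barb{N}{\wact v {a@h}}$ holds outright. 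This is the paper's proof, and it is three lines long. The strength of the barb comes from the design decision of keeping $\red_a$ out of $\red$, not from any extra distinguishing test.

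Your proposed route does not close. The mechanism you want --- a test $T$ whose fresh-actuator watchpoint ``fires with zero preceding $\red$-steps'' and thereby forces $N$ to exhibit the barb immediately --- is in tension with the very clause you invoke: reduction closure for $\red_b$ is stated as $N|T \red^{\ast}\red_b\red^{\ast} \hat N$, which by definition tolerates an arbitrary prefix of $\red$-steps before the watchpoint, so no choice of $T$ can eliminate that prefix. (This is exactly why Prop.~\ref{prop:time-observation} only pins down the \emph{number of timed steps}, not their absence.) Your closing step, ``any $\red$-steps $N$ took cannot have been needed to produce the barb on $a@h$, because $M$ exhibited it with none,'' is precisely the statement to be proven and is asserted rather than derived; the reason it is true is observation (ii) above, at which point the test $T$ plays no role. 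You also correctly note that $T$ has no way to inspect $\I(a)$ directly, but the resolution is not a cleverer context --- it is that no context is needed.
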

\begin{proof}
We recall that 
 $\red \deff \redtau \cup \redtime$.  
Let us suppose that $\barb{M}{\wact v {a@h}}$.
As $\cong$ is barb preserving it follows that $\redbarb{N}{\wact v {a@h}}$, namely,  $N\redmany \barb{N'}{\wsensa v {a@h}}$, for some $N'$. We note that
both  reduction relations  $\redtau$ and $\redtime$ do  not modify actuator values. 
As a consequence, this holds also for $\red$. Thus, $\barb{N}{\wact v {a@h}}$.
\end{proof}


\subsection{Design choices}
\label{sec:extensions}
In this section we provide some insights into the design choices adopted in
\cname. The main goal of \cname\ is to provide a simple calculus to 
deal with the programming paradigm of IoT systems. Thus, for instance, 
\cname\ is a value-passing rather than a  name-passing calculus, as the $\pi$-calculus~\cite{Sangiorgi-book}. However, the theory of \cname\ can be easily adapted to deal with the 
transmission of channel names at the price of adding the standard burden of 
scope-extrusion of names. Furthermore, as both  actuators and sensors can  only be 
managed inside their nodes, it would make little sense to transmit their names along 
channels. 

\cname\ is a timed process calculus with a discrete notion of time. The time model we adopt in \cname\ is known as the fictitious clock approach (see e.g.~\cite{HR95}): a global clock is supposed
to be updated whenever all nodes agree on this, by globally synchronising on a special timing action $\sigma$. Thus, time synchronisation relies on some clock
synchronisation protocol for mobile wireless systems~\cite{SBK05}.  
 However, our notion of time interval is 
 different from that adopted in synchronous languages~\cite{Esterel, Amadio,Boussinot} where the environment injects events at the start of instants and collect 
them at the end. In the synchronous approach, events happening during a time 
interval are not ordered while in our calculus we want to maintain the causality 
among actions, typical 
 of process calculi. 

In  \emph{cyber-physical systems\/}~\cite{van2000introduction},  
 sensor changes 
are modelled  either using continuous models
 (differential equations) or through discrete models 
(difference equations)\footnote{Difference
 equations relate to differential equations as 
discrete math relate to continuous math.}. 
However, in this paper we aim at providing a  
behavioural semantics for IoT applications from the
point of the view of the end user. And  
the end user  cannot directly observe changes on the 
sensors of an IoT application: she can only observe 
the effects of those changes via actuators and communication channels.
 Thus, in \cname\ we do not represent sensor changes 
via specific models, but  we rather abstract on them by 
supporting 
 \emph{nondeterministic sensor updates\/} (see Def.~\ref{def:sens-change}
 and~\ref{def-contextual}). Actually, as seen in Rem.~\ref{rem:nondet}, 
 behavioural equalities derived in 
our setting remains valid when adopting any specific model for
sensor updates.


Another design decision in our language regards the possibility 
to change  the value associated to sensors and actuators more than once  
within the same time interval. At first sight this choice  may appear weird  as 
   certain actuators are physical devices that may require some time to turn
   on. On the other hand, other actuators, such  as  lights or displays,
 may have very quick 
reactions. A similar argument applies to sensors.  In this respect 
our calculus does not enforce a synchronisation of physical events 
 as for logical signals in synchronous languages.  In fact, 
actuator changes are under  nodes' control: it is the process running 
inside a node that decides when changing the value exposed on an actuator of that node. Thus, if the actuator of a node models a
 slow device then it is under the responsibility of the process running at that 
node to change the actuator with a proper delay. Similarly, sensors should  be read only when this makes sense. For instance, a temperature
sensor should be read only when the temperature is supposed to be stable.

Let us now discuss on node mobility. 
The reader may  wonder why \cname\ does not provide a process to drive
node mobility, as in Mobile Ambients~\cite{Ambients}. Notice that, unlike Mobile Ambients, 
our nodes do not represent mobile computations within an Internet domain. 
Instead, they represent smart devices  which 
do not decide where to move to: an external agent  moves them. 
We also decided to allow node mobility only at the end of time intervals. 
This is because both intra-node and inter-node logical operations, 
such as channel communications, can be considered significantly faster than 
physical movements of devices.
For instance, consider a transmitter that moves at 20 m/s and
that transmits a 2000-byte frame over a channel having a 2 megabit/s bandwidth. The actual transmission would take about
0.008 s; during that time, the transmitter moves only about 16 cm away. In other words, we can assume that the nodes
are stationary when transmitting and receiving, and may change their location only while they are idle.
 However, to avoid uncontrolled movements
of nodes 
we decided to fix for all of them the same bound $\delta$, representing the  
maximum distance a node can travel within one time unit. There 
would not be problems in allowing different $\delta$ for different nodes. 
 Finally, for the sake of simplicity, 
in the last constraint of Def.~\ref{well_form} we impose that location-dependent
sensors can only occur in stationary nodes. This allows us to have a 
local, rather than a global, representation of those sensors. Notice 
that  mobile location-dependent sensors would have the same technical 
challenges of  \emph{mobile wireless sensor networks}~\cite{MWSN}.

Another issue is about a proper representation of network topology. 
A tree-structure topology, as in Mobile Ambients, 
would be desirable to impose that a device cannot be 
in two mutually exclusive places at the same time. This desirable property 
cannot be expressed
in~\cite{LBdF13}, where links between nodes can be added and 
removed nondeterministically. 
However, a tree-structured topology
would imply an higher-order bisimulation (for details see~\cite{MeZa05});  
while in the current paper we look for a simple (first-order) 
bisimulation proof-technique which could be easily mechanised. 

Finally, we would like to explain our choice about barbs. As already
said in the previous section there are other possible 
definitions of barb.  For instance, one could 
choose to observe the capability to transmit along a channel $c$, by defining 
$\barb{M}{\out c @ h}$ if 
$M\equiv \res {\tilde{g}}
{\big( \nodep{n}{\conf \I {\timeout{\OUT c v .P}P'\newpar Q}}{\mu}{k} \, | \, M' \big)}$, with $c \not \in {\tilde {g}}$ and $\dist h k \leq\rng c $. 
However, if you consider the system $S= \res{c} (M | \nodep{m}{\conf \J {\timeout{\LIN{c}{x}.{{\wact 1 a}}}{\nil}}}{\mu}{h})$, with $\J(a)=0$, for some 
appropriate $m$,  then it is easy to show that $\barb{M}{\out c @ h}$ if and only if 
$S\red \red_a S' \downarrow_{a@h!1}$. Thus, the barb on channels can always  be 
reformulated in terms of our barb. The vice versa is not 
possible. 
The reader may also wonder whether it is possible to 
turn the reduction $\red_a$ into $\redtau$ by introducing, at the same time, 
some special barb which would be capable to observe actuators changes. For instance, 
something like $\barb{M}{ \wact {v.w} {a@h}  }$ if 
$M\equiv 
\res {\tilde g}{\big( \nodep{n}{\conf \I {{\wact w a}.P|Q}}{\mu}{h} \, | \, M' \big) }$, with $\I(a)=v$ and $v \neq w$. 
It should be easy to see that this extra barb would 
not help in distinguishing the terms proposed in Ex.~\ref{ex:reda}.
Actually, here there is something deeper that needs to be spelled out. In process calculi,  the term $\beta$ of a barb $\downarrow_{\beta}$ is a 
concise encoding of a context $C_{\beta}$ expressible in the calculus and capable to observe
the barb $\downarrow_{\beta}$. However, our barb $\downarrow_{\wact v {a@h}}$ does not 
have such a corresponding \emph{physical context} in our language. For instance, 
in \cname\
we do not represent the ``eyes of a person'' looking at the values appearing to some 
display. 
Technically speaking, we don't have terms of the form $a?(x).P$ that could be 
used by the physical environment to read values on the actuator $a$. 
This is because such terms would  not be part of an IoT system. 
 The lack of this physical, together with the persistent nature of actuators' state,  explains  why our barb $\downarrow_{\wact v {a@h}}$ 
must work together with 
the reduction relation $\red_a$ to provide the desired distinguishing 
power of $\cong$. 

\section{Case study: a smart home}
\label{case_study}
\begin{table}[t]
\[
\begin{array}{@{\hspace*{2mm}}lr@{\hspace*{2mm}}c@{\hspace*{2mm}}l}
& 
Sys  & \deff  & Phone \q \big| \q Home \\
&  Phone & \deff & \nodep{n_P}{\conf {\I_P}{BoilerCtrl \newpar LightCtrl}}{\mob}{out}  \\
&  Home & \deff & LightMng1 \q \big| \q LightMng2  
\q \big|  \q BoilerMng\\
& LightMng1 & \deff & \nodep{n_{1}}{\conf {\I_{1}}{L_1}}{\stat}{loc1} \\
& LightMng2 & \deff & \nodep{n_{2}}{\conf {\I_{2}}{L_2}}{\stat}{loc4}\\
& BoilerMng & \deff &\nodep{n_{B}}{\conf {\I_{B}}{Auto}}{\stat}{loc2}
\\[5pt]
& BoilerCtrl & \deff  &\fix{X} \rsens{z}{mode}.
\timeout{\OUT {b}{z}.\sigma.X}{X} \\
&  LightCtrl  & \deff & 
\prod_{j=1}^{2}\fix{X}\timeout{\OUT{c_j}{}.\sigma.X}{X} \\
& 
 L_j & \deff & \fix{X}\timeout{\LIN{c_j}{}.\wact{\mathsf{on}}{light_j}.\sigma.X}\wact{\mathsf{off}}{light_j}.X \q \textrm{ for } j\in \{ 1, 2 \}\\
& 
Auto & \deff & \fix{X}\timeout{\LIN{b}{x}.[x=\mathsf{man}]\,\wact{\mathsf{on}}{boiler}.\sigma.Manual;TempCtrl} TempCtrl
\\
& Manual  & \deff & \fix{Y}\LIN{b}{y}.[y=\mathsf{auto}]X;\sigma.Y \\
& 
TempCtrl  & \deff & \rsens t {temp}.[t < \Theta]\,\wact{\mathsf{on}}{boiler}.\sigma.X;\wact{\mathsf{off}}{boiler}.\sigma.X
\end{array}
\]
\caption{A smart home in \cname}
\label{case_study_home}
\label{fig_home}
\end{table}

In this section we model the simple smart home discussed in the Introduction, 
and represented in Fig.~\ref{fig:smarthome}.
Our house spans over $4$
contiguous physical locations $loci$, for $i = [1..4]$,  such that  
 $\dist {loci} {locj}= | i - j|$. The entrance (also called Room1) is 
in $loc1$, the patio spans from $loc2$ to $loc3$ and the lounge (also 
called Room2) is in $loc4$.
The house can only be accessed via its entrance, i.e.\ Room1.

Our system $Sys$ consists of the 
smartphone, $Phone$, and the smart home, $Home$. 
The smartphone  is represented as a mobile node, with $\delta=1$, 
initially placed 
outside the house: $out\neq locj$, for $j \in [1..4]$. 
As the phone can only access the house from its entrance, and 
 $\delta=1$, 
 we have $\dist l {loci} \geq i$, for any $l \not \in \{ loc1, loc2, loc3 , loc4\}$ and $i \in [1..4]$.
Its interface $\I_{P}$ contains only one sensor, called $mode$, representing the touchscreen to control the boiler. 
This is a node-dependent sensor. 
The process $BoilerCtrl$ reads $mode$ and 
forwards its value to the boiler manager, $BoilerMng$,  via the
 Internet channel $b$ ($\rng b = \infty$).  
The domain of the sensor $mode$ is $\{\mathsf{man}, \mathsf{auto}\}$, where $\mathsf{man}$ stands for manual and $\mathsf{auto}$ for automatic;  
initially, $\I_P(mode)=\mathsf{auto}$. 

In  $Phone$  there is a second process, called 
 $LightCtrl$, which allows the smartphone to switch on lights \emph{only when} 
getting in touch with the light managers installed in the rooms. 
Here channels $c_1$ and $c_2$ serve to control the lights of Room 1 and 2, 
respectively; 
these are short-range channels: $\rng {c_1}=\rng{c_2}=0$. 
The light managers 
are 
 $LightMng1$,  $LightMng2$, 
 respectively.
 These are stationary nodes
running 
 the  processes  $L_1$ and $L_2$ 
 to manage the corresponding 
 lights via the actuators $light_j$, for $j\in\{1,2\}$. 
The domain of these actuators is $\{ \mathsf{on} , \mathsf{off}\}$; 
initially, 
$\I_j(light_j)=\mathsf{off}$, for $j \in \{1,2\}$.

Let us describe  the behaviour of the boiler manager $BoilerMng$ 
 in node
$n_{B}$. Here, the physical interface $\I_{B}$ contains a sensor $temp$ and
an actuator $boiler$; $temp$ is a location-dependent temperature sensor, whose domain is $\nat$, and $boiler$ is an actuator to display 
boiler functionality, whose domain is $\{ \mathsf{on},\mathsf{off} \}$. 
Processes $Auto$ and $Manual$ model the two  boiler modalities. 
In $Auto$ mode sensor $temp$ is periodically checked: if the temperature is under a threshold $\Theta$ then the boiler will be switched on, otherwise it 
 will be switched off.
Conversely,  in manual mode, the boiler is always switched on.
Initially, $\I_{B}(temp)=\Theta$ and $\I_{B}(boiler)=\mathsf{off}$.

Our system $Sys$ enjoys a number of desirable  \emph{run-time properties\/}. 
For instance, if the boiler is in manual mode or its temperature is under
the threshold $\Theta$ then the boiler will get switched on, 
within one time unit. Conversely, if the boiler is in automatic mode 
and its temperature is higher than or equal to 
the threshold $\Theta$, then the boiler will get switched off
within one time unit. These two \emph{fairness} properties 
can be easily proved because our calculus is well-timed.
In general, similar properties cannot be expressed  in untimed calculi 
like that in~\cite{LBdF13}. 
 Finally, our last property states
the  phone cannot act on the 
lights of the two rooms at the same time, manifesting a kind of 
``ubiquity''. 
Again, this undesired behaviour is
admissible  
in the calculus of~\cite{LBdF13}. 
For the sake of simplicity, in the following proposition
we omit location names both 
in barbs and in sensor updates,  writing $\downarrow_{\wact v a}$ instead of 
$\downarrow_{\wact v {a@h}}$, and $[s \mapsto v]$ instead of $[s@h \mapsto v]$.
The system $Sys'$ denotes an arbitrary (stable) derivative of $Sys$. 
\begin{proposition}
\label{caseproperty}
 Let  $Sys \; (\redi^{\ast}\redtime)^{\ast} \; Sys'$, for some $Sys'$. 
\begin{itemize}
\item  
If ${Sys'}[mode{\mapsto}\mathsf{man}]
\redi^{\ast}Sys'' \redtime$ 
then $\barb{Sys''}{\wact {\mathsf{on}} {boiler}}$ 
\item If 
$Sys'[temp \mapsto t] \redi^{\ast}Sys'' \redtime$, 
with $t < \Theta$,   
then $\barb{Sys''}{\wact {\mathsf{on}} {boiler}}$  
\item If 
$Sys'[ temp \mapsto t] {\redi^{\ast}} {Sys''}
\redtime$, with $t \geq \Theta$, 
  then 
$\barb{Sys''}{\wact {\mathsf{off}} {boiler}}$ 
\item 
If 
${Sys'}\redi^{\ast}\barb{Sys''}{\wact{\mathsf{on}}{light_1}}$ then ${Sys''}\downarrow_{\wact{\mathsf{off}}{light_2}}$, and vice versa. 
\end{itemize}
\end{proposition}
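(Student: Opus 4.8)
The plan is to work in three stages. First I would pin down the shape of the ``stable'' derivatives: by induction on the number of timed reductions in $Sys\;(\redi^{\ast}\redtime)^{\ast}\;Sys'$, one shows that, up to structural congruence, $Sys'$ is the parallel composition of an updated phone, the two light‑manager nodes, and an updated boiler‑manager node, where the boiler node $n_B$ (at $loc2$) runs an unfolding of $Auto$ with $\I_B(temp)=\Theta$ and $\I_B(boiler)=\mathsf{off}$; the $mode$ sensor of the phone still holds $\mathsf{auto}$, its $BoilerCtrl$ component is back at the prefix $\rsens z {mode}$ and each $LightCtrl$ component is back at the prefix $\OUT{c_j}{}$; each light‑manager node $n_j$ runs either an unfolding of $L_j$ or the residual $\wact{\mathsf{off}}{light_j}.L_j$; and the phone sits at a single location $l$ that is either some $loci$ or a point outside the house, so $\dist{l}{loci}\ge i$. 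The only non‑routine point is the boiler node: since reductions never alter sensor values, in any such run the phone always reads $\mathsf{auto}$ from $mode$ and $n_B$ always reads $\Theta$ from $temp$; using maximal progress (Prop.~\ref{prop:maxprog}) and well‑timedness (Prop.~\ref{prop:welltime}) to see that the handshake on $b$ between $BoilerCtrl$ and $Auto$ is forced in every interval, $n_B$ passes through $TempCtrl$, reads $\Theta$, performs the vacuous write $\wact{\mathsf{off}}{boiler}$, and is back in $Auto$ at the next boundary with $boiler$ still $\mathsf{off}$.

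For the three boiler clauses I would then fix such an $Sys'$ and observe that applying $[mode\mapsto\mathsf{man}]$ (resp.\ $[temp\mapsto t]$) changes only the named sensor and that later reductions touch no sensor. Consider any maximal $\redi$‑sequence from the updated network to a $\redi$‑normal form $Sys''$; by Prop.~\ref{prop:maxprog} such a state is exactly one from which $\redtime$ is enabled, so the hypothesis $Sys''\redtime$ merely says we have reached it. The read of $mode$ by $BoilerCtrl$ is urgent, hence occurs; afterwards $BoilerCtrl$ offers an output on $b$ and $n_B$ (in state $Auto$) the matching input, and since $\rng b=\infty$ this redex stays enabled until it fires, so by well‑timedness it fires before $Sys''$. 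In the first clause the transmitted value is $\mathsf{man}$, so $n_B$ takes the $[x = \mathsf{man}]$ branch, executes $\wact{\mathsf{on}}{boiler}$ and becomes $\sigma.Manual$; in the second and third it is $\mathsf{auto}$, so $n_B$ unfolds $TempCtrl$, reads $temp=t$, and executes $\wact{\mathsf{on}}{boiler}$ if $t<\Theta$ and $\wact{\mathsf{off}}{boiler}$ otherwise, becoming $\sigma.Auto$. In each case no further write to $boiler$ is possible before the next $\redtime$, so $Sys''$ carries the claimed barb; the $LightCtrl$/$L_j$ traffic runs on disjoint channels and cannot interfere.

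For the ubiquity clause I would strengthen the characterisation to the invariant that in \emph{every} network $S$ with $Sys\red^{\ast}S$ at most one of $light_1,light_2$ holds $\mathsf{on}$, proved by induction on the length of the derivation with a case split on the last reduction. The ingredients are: (i) $light_j$ is set to $\mathsf{on}$ only by $L_j$ after an input on $c_j$, and $\rng{c_1}=\rng{c_2}=0$ forces the phone to be exactly at $loc1$ (resp.\ $loc4$) for this to happen; (ii) once $\mathsf{on}$, $light_j$ is reset to $\mathsf{off}$ by $L_j$ within the interval following the first interval in which the phone is not at $loc1$ (resp.\ $loc4$), the transient boundary state $\wact{\mathsf{off}}{light_j}.L_j$ being the subtle case to track; (iii) the rooms lie on a line with $\dist{loc1}{loc4}=3$, the house is entered only through $loc1$, and a mobile node travels at most $\delta=1$ per timed step (rule~\rulename{timemob}). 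A short trajectory argument then shows the phone cannot have been within the relevant window of both $loc1$ and $loc4$, so the two lights are never $\mathsf{on}$ simultaneously; instantiating the invariant at $Sys''$ yields the clause, and the converse follows by the symmetry of the two rooms.

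The hard part will be the invariants of the first and third stages --- determining, via the SOS rules, exactly which boundary configurations are reachable and maintaining the link between each $light_j$ and the phone's recent positions, all while correctly accounting for the urgent‑but‑unfired actuator writes $\wact{\mathsf{off}}{light_j}.L_j$ that appear at interval boundaries. Once those invariants are available the per‑interval computations in the other stages are routine; the one recurring subtlety there is the ``redex on $b$ is forced'' step, which again rests on maximal progress and well‑timedness.
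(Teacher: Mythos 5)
Your proposal is correct and follows essentially the same route as the paper: a structural lemma (proved by induction on the number of timed steps) characterising the stable derivatives $Sys'$ with $\I_P(mode)=\mathsf{auto}$, $\I_B(temp)=\Theta$ and the boiler node back at $Auto$, then a forced per-interval computation for the three boiler clauses using maximal progress and well-timedness to show the handshake on $b$ and the subsequent actuator write must fire before the next $\redtime$, and finally a distance/timing argument ($\dist{loc1}{loc4}=3$, $\delta=1$, $\rng{c_j}=0$) for the lights. The only cosmetic difference is that you package the fourth clause as a reachability invariant proved by induction on the derivation, whereas the paper argues directly backwards from the barb about where the phone must have been; the content is the same.
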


Finally, 
we propose a
variant $\overline{Sys}$ of our system, where  lights functionality depends on the GPS coordinates of the smartphone. 
Intuitively, the smartphone sends its actual position
to a centralised light manager via an Internet channel $g$, 
$\rng {g} = \infty$. 
The centralised manager will then interact with the local 
light managers
to switch on/off  lights of  rooms, depending on the position of the smartphone.
\begin{table}[t]
\(
\begin{array}{rcl}
\overline{Sys} & \deff & \overline{Phone} \q \big| \q \overline{Home}\\
\overline{Home} & \deff & Home \q \big| \q \overline{CLightMng} \\
\overline{Phone} & \deff &\nodep{n_P}{\conf {\I_P}{BoilerCtrl \newpar
\overline{LightCtrl}}}{\mob}{out}\\
\overline{LightCtrl} & \deff & \fix{X}@(x).\timeout{\OUT{g}{x}.\sigma.X}{X} \\
\overline{CLightMng} & \deff & \nodep{n_{LM}}{\conf {\emptyset}{\overline{CLM}}}{\stat}{loc3}\\
\overline{CLM} & \deff &
\fix{X}{{\lfloor \LIN{g}{y}.[y =loc1]\timeout{\OUT{c_1}{}. \sigma.X}{X};}}\\
&& \hspace*{9.85mm} [y =loc4]\timeout{\OUT{c_2}{}.\sigma.X}{X};\sigma.X \rfloor X
\end{array}
\)
\caption{Smart home: a position based light management}
\label{tab:home2}
\end{table}
In Table~\ref{tab:home2}, new components 
have been overlined. 
Short-range channels have now different ranges and they serve to communicate with the centralised light manager $\overline{CLightMng}$. Thus, 
 $\rng{c_1}=2$ and $\rng{c_2}=1$.

Prop.~\ref{caseproperty} holds for  the new system $\overline{Sys}$ as well. Actually, the two systems are closely related. 
\begin{proposition} 
\label{prop:SYS-barb}
For $\delta=1$, 
$\res {\tilde{c}} Sys \cong \res{\tilde{c}}\res{g} \overline{Sys}$.
\end{proposition}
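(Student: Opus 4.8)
The plan is to establish the stronger statement that $\res{\tilde c}Sys$ and $\res{\tilde c}\res g \overline{Sys}$ are \emph{weakly bisimilar} in the extensional LTS (defined in Sec.~\ref{lab-sem} and~\ref{full-abstraction}), and then to conclude by soundness of weak bisimilarity with respect to $\cong$ --- the full abstraction result of Sec.~\ref{full-abstraction}, whose congruence part is Thm.~\ref{thm:congruence}. Hence it suffices to exhibit a weak bisimulation $\rel$ over networks containing the pair $(\res{\tilde c}Sys,\,\res{\tilde c}\res g \overline{Sys})$. Two structural remarks drive the construction. First, the channels $c_1,c_2,g$ are all under restriction and therefore invisible to every context: the only behaviour an observer can detect is through the actuators $light_1,light_2,boiler$ and through interactions on the unrestricted Internet channel $b$. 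Second, the fragments controlling the boiler --- $BoilerCtrl$ inside the phone node, and the whole $BoilerMng$ node --- are \emph{syntactically identical} in $Sys$ and $\overline{Sys}$ (Tab.~\ref{case_study_home} and~\ref{tab:home2}); the two systems differ only in their light-management machinery.

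The relation $\rel$ pairs reachable configurations that agree on: the current location of the (mobile, $\delta=1$) phone; the states of $BoilerCtrl$ and of the $BoilerMng$ node, together with the value of the actuator $boiler$; the values of the sensors $mode$ and $temp$; and the values of the actuators $light_1$ and $light_2$. On the light subsystems it imposes \emph{relay-equivalence}: whenever the $Sys$-side is about to perform --- or has just performed --- the single short-range communication from $LightCtrl$ to $L_j$ on $c_j$, the $\overline{Sys}$-side occupies the corresponding point of the two-step chain that first transmits the phone's position on $g$ to $\overline{CLM}$ and then relays on $c_j$ to $L_j$ (or performs the ``empty'' relay $\sigma.X$, when $\overline{CLM}$ reads a position distinct from $loc1$ and $loc4$). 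The crucial observation is that the ranges chosen in Tab.~\ref{tab:home2} are calibrated for exactly this: $\overline{CLightMng}$ sits at $loc3$ with $\rng{c_1}=2$ and $\rng{c_2}=1$, so it reaches $L_1$ at $loc1$ and $L_2$ at $loc4$ and nothing else, and hence $\overline{CLM}$ relays on $c_j$ precisely when the phone's GPS reading is $loc1$ (resp.\ $loc4$) --- which is exactly when, in $Sys$, the direct short-range links ($\rng{c_1}=\rng{c_2}=0$) allow $LightCtrl$ to reach $L_1$ (resp.\ $L_2$).

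Verifying the transfer property is then a case analysis on the extensional transitions. (i) A $\tau$-step internal to the light subsystem is matched \emph{weakly}: the single hop on the $Sys$-side is simulated by the two relay hops on the $\overline{Sys}$-side, while the always-enabled $g$-transmission on the $\overline{Sys}$-side --- which has no counterpart in $Sys$ --- is absorbed as zero steps. (ii) A $\tau$-step, or an input/output transition on $b$, coming from the boiler fragment is matched step for step, since that fragment is common to both systems. (iii) An environment-induced sensor update on $mode$ or $temp$ is applied identically on the two sides. (iv) A timed transition $\redtime$, possibly accompanied by a move of the phone, is matched by the same timed transition and the same move --- both phones being mobile with the same bound $\delta=1$ --- after which $\rel$ is re-established using the fact, guaranteed by well-timedness (Prop.~\ref{prop:welltime}), that within the next time unit both systems drive $light_1$ and $light_2$ to the same values before any further observation. (v) A watchpoint reduction $\red_a$ is matched because $\rel$ keeps actuator values aligned; barb preservation follows from the same alignment.

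I expect the main obstacle to lie in the interaction of points (i) and (iv): one has to check that no interleaving of the boiler computation with the light relay, and no admissible trajectory of the mobile phone --- recall that $\dist l {loc_i}\ge i$ for $l$ outside the house, so the phone necessarily enters through $loc1$ and traverses the four rooms one unit at a time --- ever exposes a difference between the one-hop triggering of $L_j$ in $Sys$ and the two-hop triggering in $\overline{Sys}$; in particular, that the extra internal $g$-transmission on the $\overline{Sys}$-side is never observable, and that after every move of the phone the light values can always be resynchronised strictly before the next $\sigma$. Once $\rel$ is shown to be symmetric and closed under all extensional transitions, weak bisimilarity of the two systems follows, and Thm.~\ref{thm:congruence} yields $\res{\tilde c}Sys \cong \res{\tilde c}\res g \overline{Sys}$.
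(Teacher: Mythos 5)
Your proposal is correct and follows essentially the same route as the paper: the paper proves the statement by first establishing bisimilarity of the two systems (Prop.~\ref{prop:SYS-bis}), via exactly the decomposition you describe --- factoring out the common boiler components and then exhibiting an explicit bisimulation (up to expansion, using the tau-laws of Thm.~\ref{thm:algebraic-laws}) whose pairs encode your ``relay-equivalence'' between the one-hop and two-hop light triggering --- and then concluding by soundness. One small slip: the final step should invoke the soundness theorem (Thm.~\ref{thm:sound}), of which Thm.~\ref{thm:congruence} is only an ingredient.
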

The bisimulation proof technique developed in the remainder of the paper
will be very useful to prove such kind of non-trivial system equalities.

We end this section with a comment. While reading this case study the reader 
should  have realised 
 that our reduction 
semantics does not model 
sensor updates. 
This is because sensor changes depend on the physical environment, while a 
reduction semantics models the evolution of a system in isolation. 
Interactions with the external 
environment will be
treated in our \emph{extensional semantics\/} (see Sec.~\ref{lab-sem})

\section{Labelled transition semantics}
\label{lab-sem}
\begin{table}[t]
\[
\small
\begin{array}{l@{\hspace*{1cm}}l}
\Txiom{SndP}
{-}
{\timeout{\OUT c v .P}Q \trans{\out c v} P}
&
\Txiom{RcvP}
{-}
{\timeout{\LIN c x .P}Q \trans{\inp c v} P{\subst v x} }
\\[15pt]
\Txiom{Sensor}
{-}
{ \rsens x  s .P \trans{\rsensa v s} P {\subst v x} }
&
\Txiom{Actuator}
{-}
{\wact v a .P \trans{\wact v  a} P}
\\[15pt]
\Txiom{PosP}
{-}
{@(x).P\trans{@h} P{\subst h x}}
&
\Txiom{Com}
{P \trans{\out c v} P' \Q Q \trans{\inp c v} Q' \Q  \rng c = -1}
{P \newpar Q \trans{\tau} P' \newpar Q'}
\\[15pt]
\Txiom{ParP}
{P \trans{\lambda} P'\Q \lambda \neq \sigma}
{P \newpar Q \trans{\lambda} P' \newpar Q}
&
\Txiom{Fix}
{P{\subst {\fix{X}P} X} \trans{\lambda} Q}
{\fix{X}P\trans{\lambda}Q}
\\[15pt]
\Txiom{TimeNil}
{-}
{\nil \trans{\sigma}\nil}
&
\Txiom{Delay}
{-}
{\sigma.P \trans{\sigma} P}
\\[15pt]
\Txiom{Timeout}
{-}
{\timeout{\pi.P}{Q} \trans{\sigma} Q}
&
\Txiom{TimeParP}
{P \trans{\sigma} P' \Q Q \trans{\sigma}Q' \Q P | Q \not\!\!\!\trans{\tau}}
{P \newpar Q \trans{\sigma} P' \newpar Q'}
\end{array}
\]
\caption{Intensional semantics for processes}
\label{tab:lts_processes} 
\end{table}

\begin{table}[t]
\(
\small
\begin{array}{l@{\hspace*{1.5cm}}l}
\Txiom{Pos}
{P\trans{@h}P'  }
{\nodep{n}{\conf \I P}{\mu}{h}\trans{\tau}\nodep{n}{\conf \I P'}{\mu}{h}   }
&
\Txiom{SensRead}
{\I(s)=v \Q P \trans{\rsensa v s} P' }{
\nodep n {\conf \I P}{\mu}{h}  \trans{\tau} \nodep n {\conf \I {P'}}{\mu}{h} }
\\[20pt]
\Txiom{ActUnChg}
{\I(a)=v \Q P \trans{\wact v a } P'}
{\nodep n {\conf \I P}{\mu}{h}  \trans{\tau} \nodep n {\conf {\I} {P'}}{\mu}{h} }
&
\Txiom{LocCom}{P \trans{\tau} P'}
{\nodep n {\conf \I P}{\mu}{h}
\trans{\tau}\nodep n {\conf \I {P'}}{\mu}{h}
}
\\[20pt]
\multicolumn{2}{c}{
\Txiom{ActChg}
{ \I(a)\neq v \Q P \trans{\wact v a } P' \Q \I':=\I[a \mapsto v] }
{\nodep n {\conf \I P}{\mu}{h}  \trans{a} \nodep n {\conf {\I'} {P'}}{\mu}{h}}
}
\\[20pt]
\Txiom{TimeStat}
{P \trans{\sigma} P'  \q \nodep n {\conf \I P}{\stat}{h}\!\!  \ntrans{\tau }
}
{\nodep n {\conf \I P}{\stat}{h}  \trans{\sigma} \nodep n {\conf \I {P'}}{\stat}{h}}

& 

\Txiom{TimeMob}
{P \trans{\sigma} P'  \q \nodep n {\conf \I P}{\mob}{h}\!\!  \ntrans{\tau}
\q {\scriptstyle \dist h k {\leq} \delta}
}
{\nodep n {\conf \I P}{\mob}{h}  \trans{\sigma} \nodep n {\conf \I {P'}}{\mob}{k}}

\\[20pt]
\Txiom{Snd}
{P \trans{\out c v} P' \Q \rng c \geq 0}
{\nodep{n}{\conf \I P}{\mu}{h}\trans{\send{c}{v}{h}}\nodep{n}{\conf \I P'}{\mu}{h}}
&
\Txiom{Rcv}
{P \trans{\inp c v} P'  \Q  \rng c \geq 0 }
{\nodep{n}{\conf \I P}{\mu}{h}\trans{\rec{c}{v}{h}}\nodep{n}{\conf \I P'}{\mu}{h}}
\\[20pt]

\multicolumn{2}{c}{\Txiom{GlbCom}
{M \trans{\send{c}{v}{k}} M' \Q N \trans{\rec{c}{v}{h}}N'  \Q     \dist h k\leq \rng c}
{M | N \trans{\tau} M' | N'}
}
\\[20pt]
\multicolumn{2}{l}{
\Txiom{ParN}
{M \trans{\nu} M' \Q \nu\neq\sigma}
{M | N \trans{\nu} M' | N}
\Q
\Txiom{TimePar}
{M \trans{\sigma} M' \Q N \trans{\sigma} N' \Q
 M | N \ntrans{\tau} }
{M | N \trans{\sigma} M' | N'}
}
\\[20pt]
\Txiom{TimeZero}
{-}
{\zero \trans{\sigma}\zero}
&
{\Txiom{Res}{M \trans{\nu} N \Q \nu \not\in \{ \send{c}{v}{h}, \rec{c}{v}{h}
\} }
{\res c M \trans{\nu} \res c {N}}}
\end{array}
\)
\caption{Intensional semantics for networks}
\label{tab:lts_networks}
\end{table}
In this section we provide two 
labelled semantic models, in the  SOS style of Plotkin~\cite{Plo04}: the \emph{intensional semantics} and the \emph{extensional semantics}.
The adjective intensional is used to stress the fact that the actions of that semantics correspond to those activities which can be performed 
by a system in isolation, without any interaction with the external environment.
Whereas, the extensional semantics focuses on those  activities which 
require a contribution of the environment. 
\subsection{Intensional semantics}
Since our syntax distinguishes between networks and processes, we have two different kinds of transitions: 
\begin{itemize}
\item $P \trans{\lambda} Q$, with
 $\lambda \in \{\sigma, \tau,  {\out c v}, {\inp c v}, @h, {\rsensa v s}, {\wact v a}\}$, for \emph{process transitions\/}
\item  $M \trans{\nu} N$, with 
$\nu \in \{\sigma, \tau, a, \send{c}{v}{h},  \rec{c}{v}{h} \} $, for \emph{network transitions\/}.  
\end{itemize}

In Tab.~\ref{tab:lts_processes} we report standard transition rules for processes. As in CCS, we assume $\match b P Q=P$ if $\bool{b}=\true$, and 
$\match b P Q =Q$ if $\bool{b}=\false$. Rule \rulename{Com} model 
intra-node communications along channel $c$; that's why $\rng c = -1$.
 The symmetric counterparts of Rules \rulename{ParP} and \rulename{Com} are
 omitted.

In Tab.~\ref{tab:lts_networks} we report the transition rules for networks.
Rule \rulename{Pos} extracts the position of a node.   
Rule \rulename{SensRead} models the reading of a value from a sensor
of the enclosing node. Rules~\rulename{ActUnChg} and \rulename{ActChg}  describes the writing of a value $v$ on an actuator $a$ of the node, distinguishing whether
the value of the actuator is changed or not. 
Rule \rulename{LocCom} models intra-node communications. 
Rule~\rulename{TimeStat} models the passage of time for a stationary node.
Rule~\rulename{TimeMob} models both time passing and node mobility at the end of a time interval.
Rules \rulename{Snd} and \rulename{Rcv}  represent transmission and reception along a global channel.  Rule \rulename{GlbCom} models inter-node 
communications. The remaining rules are straightforward. 
The symmetric counterparts of Rule~\rulename{ParN} and Rule~\rulename{GlobCom} are omitted. 

As expected, the  reduction semantics and the labelled intensional
semantics coincide. 
\begin{theorem}[Harmony theorem]
\label{thm:harmony}
Let $\omega \in \{ \tau, a ,\sigma\}$: 
\begin{itemize}
\item 
$M\trans{\omega}M'$ implies $M\red_{\omega} M'$ 
\item 
$M\red_{\omega} M'$ implies $M\trans{\omega}{\equiv}M'$. 
\end{itemize}
\end{theorem}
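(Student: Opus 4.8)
The plan is to prove the two implications separately, in each case by structural induction on the derivation of the transition (resp.\ the reduction). Consider first the direction $M \trans{\omega} M'$ implies $M \redi_{\omega} M'$. I would proceed by induction on the inference of $M \trans{\omega} M'$ using the rules of Tab.~\ref{tab:lts_networks}, but the subtlety is that several network rules (\rulename{Pos}, \rulename{SensRead}, \rulename{ActUnChg}, \rulename{ActChg}, \rulename{LocCom}) are driven by a \emph{process} transition $P \trans{\lambda} P'$ in the premise, so I first need an auxiliary lemma about the process LTS of Tab.~\ref{tab:lts_processes}. Concretely: if $P \trans{\out c v} P'$ then $P \equiv \timeout{\OUT c v.P_1}{R} \newpar P_2$ with $P' \equiv P_1 \newpar P_2$ (up to the obvious structural rearrangement), and symmetrically for $\inp c v$; if $P \trans{\rsensa v s} P'$ then $P \equiv \rsens x s. P_1 \newpar P_2$ with $P' \equiv P_1\subst v x \newpar P_2$; similarly for $\wact v a$ and $@h$; and if $P \trans{\tau} P'$ then $P$ has a redex of the \rulename{loccom} shape. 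This "shape lemma" is proved by a routine induction on the process-transition derivation, with the only mildly interesting cases being \rulename{Fix} (handled by the axiom $\fix X P \equiv P\subst{\fix X P}{X}$ of Tab.~\ref{struc}) and \rulename{ParP}/\rulename{Com} (handled by associativity–commutativity of $\newpar$). Given the shape lemma, each network rule maps onto exactly one reduction rule of Tab.~\ref{reduction} modulo $\equiv$: \rulename{Pos}$\mapsto$\rulename{pos}, \rulename{SensRead}$\mapsto$\rulename{sensread}, \rulename{ActUnChg}$\mapsto$\rulename{actunchg}, \rulename{ActChg}$\mapsto$\rulename{actchg}, \rulename{LocCom}$\mapsto$\rulename{loccom}, \rulename{GlbCom}$\mapsto$\rulename{glbcom}, \rulename{ParN}$\mapsto$\rulename{parn}/\rulename{parp}, \rulename{Res}$\mapsto$\rulename{res}; the inductive steps close using rule \rulename{struct}. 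For $\omega = \sigma$ the relevant rules are \rulename{TimeStat}, \rulename{TimeMob}, \rulename{TimePar}, \rulename{TimeZero}, which correspond to \rulename{timestat}, \rulename{timemob}, \rulename{timepar}, \rulename{timezero}; here I need the process-level time lemma stating that $P \trans{\sigma} P'$ forces $P \equiv \prod_i \timeout{\pi_i.P_i}{Q_i} \newpar \prod_j \sigma.R_j$ (time-guarded recursions unfolded) with $P' \equiv \prod_i Q_i \newpar \prod_j R_j$, again by induction on the $\sigma$-transition, and I must check that the negative premise $\ntrans{\tau}$ of \rulename{TimeStat}/\rulename{TimeMob} transfers to the negative premise $\not\redtau$ of \rulename{timestat}/\rulename{timemob} — which is exactly where the \emph{other} direction of the theorem, restricted to $\omega=\tau$, is used, so the two halves should be proved in the right order or simultaneously.

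For the converse, $M \redi_{\omega} M'$ implies $M \trans{\omega}{\equiv} M'$, I would again induct on the derivation of the reduction, now over the rules of Tab.~\ref{reduction}. The base cases are the seven intra-node axioms and \rulename{glbcom}: for each, the matching network rule of Tab.~\ref{tab:lts_networks} fires because the corresponding process axiom of Tab.~\ref{tab:lts_processes} (\rulename{SndP}, \rulename{RcvP}, \rulename{Sensor}, \rulename{Actuator}, \rulename{PosP}, \rulename{Com}) supplies the needed premise — possibly after using \rulename{ParP} to project onto the active conjunct and \rulename{Fix} to unfold a guarded recursion. The inductive cases \rulename{parp}, \rulename{parn}, \rulename{res} go through by \rulename{ParN} and \rulename{Res} plus the induction hypothesis. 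The case \rulename{struct} is handled by a separate lemma: $\equiv$ is a strong bisimulation for $\trans{\omega}$, proved by induction on the axioms of Tab.~\ref{struc}; the only clauses needing care are $\fix X P \equiv P\subst{\fix X P}{X}$ (absorbed by rule \rulename{Fix}) and the scope rules for $\res c$. For the time cases \rulename{timestat}, \rulename{timemob}, \rulename{timepar}, \rulename{timezero} one uses \rulename{TimeStat}, \rulename{TimeMob}, \rulename{TimePar}, \rulename{TimeZero}; the process-level $\sigma$-transitions are built from \rulename{TimeNil}, \rulename{Delay}, \rulename{Timeout}, \rulename{TimeParP}, and again the negative premises must be matched across the two semantics, using the already-established harmony for $\tau$ together with Prop.~\ref{prop:maxprog} (maximal progress) to know that $\not\redtau$ is equivalent to $\ntrans{\tau}$ and that neither permits a $\red_a$ step either.

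I expect the main obstacle to be the careful bookkeeping around the \emph{negative premises} in the timed rules. The labelled rule \rulename{TimeStat} requires $\nodep n {\conf \I P}{\stat}{h} \ntrans{\tau}$, whereas the reduction rule \rulename{timestat} has the syntactically restricted premise $\not\redtau$ over a term already in the normal form $\prod_i \timeout{\pi_i.P_i}{Q_i} \newpar \prod_j \sigma.R_j$; bridging these needs (i) the structural normal-form analysis of any $\sigma$-capable process, (ii) the $\tau$-fragment of the Harmony theorem itself, and (iii) an argument that under that normal form no $\red_a$ is possible either (the paper flags this in the text after \rulename{timemob} and in Prop.~\ref{prop:maxprog}). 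So the clean way to organise the proof is: first the process-level shape/time lemmas; then the $\tau$-and-$a$ fragment of both directions (which do not mention $\sigma$ and so have no negative-premise entanglement); then bootstrap to the $\sigma$ case using what was just proved. A second, more minor, irritation is that both directions are only "up to $\equiv$" in one orientation, so I would state and use throughout the lemma that $\equiv$ commutes with each $\trans{\omega}$ and each $\redi_{\omega}$, which keeps the structural-congruence manipulations from leaking into every case.
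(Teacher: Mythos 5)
Your proposal matches the paper's own proof in both structure and substance: the paper likewise first establishes the process-level shape lemmas (Lem.~\ref{struc_p}, including the $\sigma$ normal form), the network-level shape lemma for send/receive (Lem.~\ref{lem:struc-in-out}, which you would need implicitly for the \rulename{GlbCom}/\rulename{glbcom} case), and the lemma that $\equiv$ commutes with transitions (Lem.~\ref{struc_trans}), and then proves the six implications by rule induction, using the already-proved $\tau$-fragment to transfer the negative premises $\ntrans{\tau}$ and $\not\redtau$ in the timed rules exactly as you describe. The proposal is correct and essentially identical to the paper's argument.
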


\subsection{Extensional semantics}

\begin{table}[t!]
\(
\small
\begin{array}{l@{\hspace*{1cm}}l}
\Txiom{SndObs}
{M \trans{\send{c}{v}{h}} M' \q\; \dist h k  {\leq} \rng c }
{M \trans{\sendobs{c}{v}{k}} M'}%
&
\Txiom{RcvObs}
{M \trans{\rec{c}{v}{h}} M' \q\; \dist k h  {\leq} \rng c}
{M \trans{\recobs{c}{v}{k}} M'}%
\\[20pt]
\Txiom{SensEnv}{\textrm{$v$ in the domain of $s$}}{ M \trans{\rsensa v {s@h}} 
M[{s@h} \mapsto v]
}
&
\Txiom{ActEnv}
{\barb{M} {\wact v {a@h}}}
{M  \trans{\wact v {a@h} } M }
\end{array}
\)
\caption{Extensional semantics: additional rules}
\label{tab:extensional}
\end{table}
Here we redesign our LTS to focus on the interactions of 
our systems with the external environment. 
As the  environment has a \emph{logical part\/} (the parallel nodes) and a \emph{physical part\/} (the physical world) our extensional semantics
 distinguishes two different kinds of transitions:
\begin{itemize}
\item 
$M \trans{\alpha} N$, \emph{logical transitions\/}, 
for $\alpha \in \{ \tau, \sigma, a, \sendobs{c}{v}{k}, \recobs{c}{v}{k} \}$, 
to denote the interaction with the \emph{logical environment}; 
here, actuator changes, $\tau$- and $\sigma$-actions
 are inherited from the intensional semantics, so we don't provide inference
rules for them;
\item  $M \trans{\alpha} N$, \emph{physical transitions\/}, for 
$\alpha \in \{ \rsensa v {s@h},  \wact v {a@h} \} $, to denote the interaction 
with the \emph{physical world\/}.
\end{itemize}
In Tab.~\ref{tab:extensional} the extensional actions deriving from rules 
 \rulename{SndObs} and \rulename{RcvObs}   mention 
the location $k$ of the logical environment which can 
\emph{observe} the communication occurring at channel $c$. 
Rules  \rulename{SensEnv}  and \rulename{ActEnv} model the interaction
 of a system $M$ with the physical environment. 
In particular, the environment can \emph{nondeterministically update} the current value of a (location-dependent or node-dependent) sensor $s$  with a value $v$, and can read the value $v$ appearing on an actuator $a$ at $h$.  As already discussed 
in Sec.~\ref{sec:barbed} the environment is potentially aware of its position 
when doing these actions.

Note that our LTSs are \emph{image finite\/}. They are also 
\emph{finitely branching\/}, and hence \emph{mechanisable\/}, under the obvious assumption of finiteness of all domains of admissible values, and the set of physical locations.

\section{Full abstraction}
\label{full-abstraction}
Based on our extensional semantics,  we are ready to 
define a notion of bisimilarity which will be showed to be
 both sound and complete with respect to our contextual equivalence. 
We adopt a standard notation for weak transitions. We denote with $\ttrans{}$ the reflexive and transitive closure of $\tau$-actions, namely $(\trans{\tau})^*$, whereas  $\ttrans{\alpha}$ means $\ttranst{\alpha}$, and finally $\ttrans{\hat{\alpha}}$ denotes $\ttrans{}$ if $\alpha=\tau$ and $\ttrans{\alpha}$ otherwise.
\begin{definition}
[Bisimulation]
\label{bisimulation}
A binary symmetric relation $\rel$ over networks is a \emph{bisimulation\/} if $M \RRr N$ and 
$M \trans{\alpha} M'$ imply there exists $N'$ such that $N\Trans{\hat{\alpha}}N'$ 
and 
$M' \RRr N' $. 
We say that  $M$ and $N$ are \emph{bisimilar\/}, written $M \approx N$, if $M \RRr N$ for some  bisimulation $\RR$.
\end{definition}

Sometimes it is useful to count the number of $\tau$-actions performed by a
process. The \emph{expansion} relation~\cite{A-KHe92}, written 
$\isexpan$, is an
asymmetric variant of $\approx$ such that $P \isexpan Q$ holds if $P \approx Q$ and $Q$ has at least as many
$\tau$-moves as $P$. 

As a workbench we can use  our notion of bisimilarity to prove a number of algebraic laws
on well-formed networks.  
\begin{theorem}[Some algebraic laws]\
\label{thm:algebraic-laws}
\begin{enumerate}
\item
\label{law1}
 $ \nodep{n}{\conf \I {{\wact v a}.P} | R}{\mu}{h} 
\gtrsim \nodep{n}{\conf \I {P | R}}{\mu}{h}$, if $\I(a)=v$ and $a$ does not
occur in $R$
\item  
\label{law2} $ \nodep{n}{\conf \I {@(x).P | R}}{\mu}{h} \gtrsim \nodep{n}{\conf \I {\subst h x}P | R}{\mu}{h}$ 
 \item 
\label{law3} 
{\small \mbox{$ \nodep{n}{\conf \I {\timeout{\OUT c v.P}{S}} | \timeout{\LIN c x. Q}{T} |  R}{\mu}{h} 
\gtrsim \nodep{n}{\conf \I {P | Q{\subst v x}| R}}{\mu}{h}$}}, if $c$ is not in $R$ and $\rng c = -1$
\item 
\label{law4}
 $ \res c {( \nodep{n}{\conf {\I} {\timeout{\OUT c v.{P}}{S}} |  {R}}{\mu}{h} \, 
| \, 
\nodep{m}{\conf  {\J} {\timeout{\LIN c x. {Q}}{T}} |  {U}}{\mu'}{k})}$\\
$\gtrsim 
 \res c {( \nodep{n}{\conf {\I} {{{P}} |   {R}}}{\mu}{h} \, | \, 
\nodep{m}{\conf  {\J} {{Q{\subst v x}} |  {U}}}{\mu'}{k} ) } $  if $\rng c{=}\infty$ and  $c$ does not occur in $R$ and $U$. 

\item
 \label{law5} $\nodep{n}{\conf \I P}{\mu}{h} \approx\nodep{n}{\conf \I \nil}{\mu}{h}$ if  subterms 
 $\timeout{\pi.P_1}{P_2}$ or ${\wact v a}.P_1$
do not occur in $P$
\item 
\label{law6}
$\nodep{n}{\conf \I \nil}{\mu}{h} \approx \zero$  if $\I(a)$ is undefined for any actuator $a$
\item
\label{law8}
  $\nodep{n}{\conf \emptyset P}{\mobi}{h} \approx \nodep{m}{\conf \emptyset P}{\stat}{k}$ if $P$ does not contain terms of the form $@(x).Q$, and 
for any channel $c$ in $P$ either $\rng c{=} \infty$ or $\rng c = -1$. 
\end{enumerate}
\end{theorem}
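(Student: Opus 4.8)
The plan is to prove each law by exhibiting an explicit relation and checking it is a bisimulation (or expansion), exploiting the Harmony Theorem (Thm.~\ref{thm:harmony}) to move freely between reductions and intensional transitions, and the fact that the extensional LTS adds only the four rules of Tab.~\ref{tab:extensional} on top of the intensional one. For laws~\ref{law1}--\ref{law4}, which are expansions ($\gtrsim$), I would take the candidate relation $\rel$ to be the identity relation on networks extended with the single pair in question, closed under the network contexts that the extensional semantics can probe (parallel composition, restriction, sensor update) — or, more economically, argue that the left- and right-hand sides are related by structural congruence after one "administrative" $\tau$-step, so that the only thing to verify is that the left-hand side does one more $\tau$-move than the right-hand side. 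Concretely: in law~\ref{law1}, Rule~\rulename{ActUnChg} gives $\nodep{n}{\conf \I {\wact v a.P\mid R}}{\mu}{h}\trans{\tau}\nodep{n}{\conf \I {P\mid R}}{\mu}{h}$ since $\I(a)=v$, and the side condition that $a$ does not occur in $R$ guarantees no $\wact {v'}{a}$ barb or $\red_a$-transition of the right-hand side is lost; in law~\ref{law2} use Rule~\rulename{Pos}; in laws~\ref{law3},~\ref{law4} use Rules~\rulename{LocCom}/\rulename{Com} and \rulename{GlbCom} respectively, with the non-occurrence of $c$ in $R$ (and $U$) ensuring the discarded continuations $S,T$ and the residual choice are unreachable. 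In each case I must check that every extensional action of the larger side — including $\sigma$-actions, which by \rulename{TimeStat}/\rulename{TimeMob} require the node to be $\tau$-stable — is matched; the administrative $\tau$ does not block time because the right-hand side reaches the same stable state.

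For laws~\ref{law5} and~\ref{law6} I would use plain bisimulation $\approx$. For law~\ref{law5}, the hypothesis that no subterm $\timeout{\pi.P_1}{P_2}$ or $\wact v a.P_1$ occurs in $P$ means $P$ can only do $\tau$-moves (from \rulename{Pos}, \rulename{SensRead}, \rulename{LocCom}) and $\sigma$-moves, and crucially cannot change any actuator, transmit/receive on a global channel, or block time; an easy structural induction on $P$ shows every such $P$ is $\approx$ to $\nil$ inside the node (the sensor-read and position moves are $\tau$-steps to a term of the same shape, so the relation "$\nodep{n}{\conf \I {P}}{\mu}{h}\mathrel\rel\nodep{n}{\conf \I \nil}{\mu}{h}$ for all admissible $P$" is a bisimulation, using \rulename{TimeNil}/\rulename{Delay}/\rulename{Timeout} to match the $\sigma$-move of $\nil$). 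Law~\ref{law6} is immediate: with no actuator defined, $\nodep{n}{\conf \I \nil}{\mu}{h}$ has no barb, no $\tau$- or $a$-transition, no global send/receive, and its only move is $\sigma$ (by \rulename{TimeNil}+\rulename{TimeStat}/\rulename{TimeMob}) matched by $\zero\trans{\sigma}\zero$ — so the relation $\{(\nodep{n}{\conf \I \nil}{\mu}{h},\zero),(\zero,\zero)\}$ (symmetrised, and noting the mobile case still only reaches $\nodep{n}{\conf \I \nil}{\mu}{k}$ which is again related to $\zero$) is a bisimulation; I should double-check the sensor-update rule \rulename{SensEnv} — if $\I$ has no sensor either, updates leave the node unchanged, and if $\I$ does carry (only) sensors the relation must be enlarged to all sensor-valued interfaces, still with empty actuator part.

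Law~\ref{law8} is the substantive one and I expect it to be the main obstacle. Here I would relate $\nodep{n}{\conf \emptyset P}{\mobi}{h}$ and $\nodep{m}{\conf \emptyset P}{\stat}{k}$ for \emph{all} $P$ meeting the syntactic restriction (no $@(x).Q$, and every channel either Internet or intra-node), so that the candidate relation is closed under process transitions. The point of the restrictions: without $@(x).Q$ the process never inspects its location, so location is behaviourally inert for $P$ itself; and since every channel has range $\infty$ or $-1$, the node's send/receive observables $\sendobs{c}{v}{k'}$, $\recobs{c}{v}{k'}$ (Rules~\rulename{SndObs},~\rulename{RcvObs}) have side condition $\dist{\cdot}{\cdot}\le\rng c$ which is vacuous for $\rng c=\infty$ and never fires for $\rng c=-1$, hence are independent of the actual location — so the mobile and stationary nodes offer exactly the same extensional channel actions regardless of where they sit. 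The one genuine asymmetry is Rule~\rulename{TimeMob} versus \rulename{TimeStat}: when the mobile node performs $\sigma$ it may jump from $h$ to any $k'$ with $\dist h {k'}\le\delta$, whereas the stationary node stays at $k$. This is harmless for bisimulation \emph{because the relation I chose is insensitive to the location component}: after a $\sigma$-step the mobile node is $\nodep{n}{\conf \emptyset {P'}}{\mobi}{k'}$ and the stationary node is $\nodep{m}{\conf \emptyset {P'}}{\stat}{k}$, and $(P',k')$-vs-$(P',k)$ is again in the relation since $P'$ still satisfies the syntactic restriction (it is a residual of $P$). The names $n\ne m$ pose no problem since the empty interface exposes no actuator and there is no node-name observable at the network level. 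The care points I anticipate: (i) making the induction on $P$'s transition derivations airtight, in particular that the restriction "no $@(x).Q$" is preserved under $\trans{\lambda}$ and under the substitutions introduced by \rulename{RcvP}/\rulename{SensRead}; (ii) confirming that \rulename{SensEnv} does nothing here since $\I=\emptyset$; (iii) verifying $\tau$-stability is preserved so that both sides agree on exactly when a $\sigma$-move is enabled. I do not expect to need the expansion refinement for law~\ref{law8}; plain $\approx$ suffices, with the relation $\{\,(\nodep{n}{\conf \emptyset P}{\mobi}{h},\ \nodep{m}{\conf \emptyset P}{\stat}{k})\mid P\text{ meets the restriction},\ h,k\text{ arbitrary}\,\}$ symmetrised.
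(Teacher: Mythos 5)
Your proposal is correct and follows essentially the same route as the paper: for each law the paper exhibits exactly the relation you describe (the pair-plus-identity relation for the expansion laws~\ref{law1}--\ref{law4}, and the location-/interface-quantified singleton families for laws~\ref{law5}, \ref{law6} and \ref{law8}) and checks bisimilarity by the same case analysis on extensional actions, using the same key observations (the side conditions making discarded continuations unreachable, the empty interface suppressing actuator observables, and the range conditions making send/receive observables location-independent in law~\ref{law8}).
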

 Laws~\ref{law1}-\ref{law4} are a sort of tau-laws. 
Laws~\ref{law5} and \ref{law6} models garbage collection 
of processes and nodes, respectively. Law~\ref{law8} gives
a sufficient condition for node anonymity as well as for 
 non-observable node mobility. 

Now, it is time to show how our labelled bisimilarity can be used to deal with
 more complicated systems. In particular, if you
consider the systems of Prop.~\ref{prop:SYS-barb}, it holds the following:
\begin{proposition} 
\label{prop:SYS-bis}
If $\delta=1$  then 
$\res {\tilde{c}} Sys \approx \res{\tilde{c}}\res{g} \overline{Sys}$.
\end{proposition}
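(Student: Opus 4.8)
The plan is to exhibit an explicit bisimulation $\rel$ (up to the expansion relation $\gtrsim$ and structural congruence) relating $\res{\tilde c}Sys$ with $\res{\tilde c}\res g\,\overline{Sys}$, together with the analogous pairs obtained by letting both systems evolve in lock-step through their common control cycle. First I would normalise both sides using the algebraic laws of Thm.~\ref{thm:algebraic-laws} and the Harmony Theorem (Thm.~\ref{thm:harmony}): since the channels $c_1,c_2$ (and $g$ on the right) are restricted, the only observable activity of either system is via the actuators $light_1,light_2,boiler$ and via the Internet channel $b$ (which is \emph{not} restricted, so $b$-communications with the environment must be matched exactly) together with the sensor updates on $mode$ and $temp$ and the passage of time $\sigma$. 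The key structural fact is that the restricted short-range handshake — on the left, $LightCtrl$ directly triggering $L_j$ on $c_j$; on the right, $\overline{LightCtrl}$ sending the GPS coordinate on $g$ to $\overline{CLM}$, which then forwards on $c_j$ — produces, in each case, exactly one $\tau$-bracketed internal negotiation per time slot that is \emph{invisible} to the environment. So the two light-management mechanisms are weakly bisimilar slot-by-slot, and the $\gtrsim$ direction records that the right-hand side simply performs a few more $\tau$-steps (the extra hop through $\overline{CLM}$).

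The core of the argument is a case analysis on the position of the mobile node $n_P$. With $\delta=1$ and the geometry $\dist{loci}{locj}=|i-j|$, $out\notin\{loc1,\dots,loc4\}$, the phone reaches $loc1$ only through the entrance, and crucially $\overline{CLightMng}$ sits at $loc3$ with $\rng{c_1}=2$, $\rng{c_2}=1$. I would check that: (i) when the phone is at $loc1$, on the left $LightCtrl$ can fire $c_1$ with $LightMng1$ (co-located), and on the right the phone sends $loc1$ on the Internet channel $g$ to $\overline{CLM}$ at $loc3$, which then fires $c_1$ with $LightMng1$ since $\dist{loc1}{loc3}=2\le\rng{c_1}$ — so both turn on $light_1$; (ii) symmetrically at $loc4$, since $\dist{loc4}{loc3}=1\le\rng{c_2}$, both turn on $light_2$; (iii) at $loc2,loc3$ or outside, on the left $LightCtrl$'s offers on $c_1,c_2$ find no partner in range (the managers are at $loc1,loc4$, out of range for range-$0$ channels), and on the right $\overline{CLM}$ takes the $\sigma.X$ branch, so in all these cases neither light changes. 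Thus the actuator barbs on $light_1,light_2$ agree in every reachable configuration, and the ``ubiquity'' property of Prop.~\ref{caseproperty} is what guarantees the two lights are never simultaneously on, so there is no subtle interleaving to reconcile. The boiler subsystem $BoilerMng$/$BoilerCtrl$ is literally identical on both sides (it does not mention $g$, $c_1$, $c_2$, $\overline{CLightMng}$), so it contributes the same transitions verbatim and can be factored out using the parallel-composition clause of contextuality together with congruence of $\approx$ for parallel composition (which follows from the soundness half of the full-abstraction result, or can be verified directly for this restricted shape).

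Concretely, the bisimulation relation I would write down is the set of all pairs $\big(\res{\tilde c}R,\ \res{\tilde c}\res g\,\overline R\big)$ where $R$ ranges over the derivatives of $Sys$ reachable by $(\redi^\ast\redtime)^\ast$ and $\overline R$ is the ``corresponding'' derivative of $\overline{Sys}$ obtained by the same sequence of time steps and sensor updates, with the light-negotiation either both not-yet-done or both done within the current slot; one then closes this family under $\gtrsim$ on the right and under $\equiv$ on both sides. Checking the bisimulation clauses is then a finite, if tedious, enumeration: time steps $\sigma$ match by Prop.~\ref{prop:timed} and patience (Prop.~\ref{prop:patience}) plus maximal progress (Prop.~\ref{prop:maxprog}); $\tau$-steps internal to the restricted light handshake are matched by the extra $\tau$-hop through $\overline{CLM}$ (here the direction from right to left is the one needing a weak $\ttrans{}$); communications on $b$ and sensor/actuator extensional actions are matched identically since those subsystems coincide. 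The main obstacle I anticipate is purely bookkeeping rather than conceptual: one must be careful that the mobility of $n_P$ (which can jump by $\delta=1$ each slot) is tracked consistently on both sides so that the ``corresponding derivative'' is well-defined, and one must verify that the location at which the $g$-communication is observed does not matter — it is internal because $g$ is restricted — while the $b$-communication, being unrestricted, forces the relation to keep the phone's $BoilerCtrl$ state synchronised exactly. Once the geometry inequalities $\dist{loc1}{loc3}\le\rng{c_1}$ and $\dist{loc4}{loc3}\le\rng{c_2}$ and their negative counterparts for the other positions are pinned down, the relation is visibly a bisimulation and we conclude $\res{\tilde c}Sys\approx\res{\tilde c}\res g\,\overline{Sys}$.
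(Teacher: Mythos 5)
Your overall strategy matches the paper's: decompose away the boiler subsystem, then exhibit an explicit finite bisimulation up to expansion (and $\equiv$) on the light-management residue, driven by a case analysis on the phone's location and the range inequalities $\dist{loc1}{loc3}\le\rng{c_1}$, $\dist{loc4}{loc3}\le\rng{c_2}$, with Laws~\ref{law1}--\ref{law4} of Thm.~\ref{thm:algebraic-laws} used to normalise after each $\sigma$-step. The paper makes your ``corresponding derivatives'' concrete as seventeen explicit pairs $(M_i,N_i)$, but that is only a difference in how far the enumeration is carried out.

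There is, however, one step where your justification as written would fail. You propose to factor out the boiler subsystem ``using the parallel-composition clause of contextuality together with congruence of $\approx$ for parallel composition.'' That works for $BoilerMng$, which is a separate node and can be pulled outside the restriction by structural congruence and then removed by Thm.~\ref{thm:congruence}. It does \emph{not} work for $BoilerCtrl$, which runs \emph{inside} the node $n_P$ in process-level parallel with $LightCtrl$ (resp.\ $\overline{LightCtrl}$). Contextuality of $\approx$ (Def.~\ref{def-contextual}, Thm.~\ref{thm:congruence}) only closes under network-level contexts --- parallel networks, channel restriction and sensor updates --- and says nothing about adding a process in parallel inside an existing node. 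The paper needs a dedicated result for this (Lem.~\ref{lem:adding}), whose proof is not automatic: it relies on Prop.~\ref{prop:strong-obs} to keep the physical interfaces aligned and crucially exploits that the added process only reads sensors, transmits on fresh Internet channels (so its observability is location-independent), and lets time pass. Your parenthetical ``can be verified directly for this restricted shape'' gestures at the right fact, but without stating and proving such a lemma the reduction to the $LightCtrl$-only systems is unjustified; this is the one genuinely missing ingredient in your argument.
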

Due to the size of the systems involved, the proof of the proposition 
above is quite challenging. In this respect, the first four 
laws of Thm.~\ref{thm:algebraic-laws} are fundamentals
to apply 
 non-trivial up to expansion proof-techniques~\cite{Sangiorgi-book}. 

In the remainder of the section we provide the full abstraction result, i.e.\
we prove that our labelled bisimilarity  is a sound and complete characterisation of 
reduction barbed congruence. 

In order to prove soundness, we provide 
the following easy technical 
result relating barbs with  
extensional actions. 
\begin{proposition}
\label{prop:barb}
$\barb{M}{\wact v {a@h}}$ if and only if 
$M \trans{\wact v {a@h} }M$.
\end{proposition}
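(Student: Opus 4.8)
The plan is to prove the biconditional directly by unfolding both definitions, since Proposition~\ref{prop:barb} is essentially a restatement of Rule~\rulename{ActEnv} combined with the definition of barbs. First I would treat the ``only if'' direction: assume $\barb{M}{\wact v {a@h}}$. By Definition~\ref{def_barb} this means $M \equiv \res{\tilde g}{\big(\nodep{n}{\conf \I P}{\mu}{h} \mid M'\big)}$ with $\I(a)=v$. But this is precisely the premise of Rule~\rulename{ActEnv}, so that rule fires and gives $M \trans{\wact v {a@h}} M$, with the target being $M$ itself (the rule does not change the network). The only subtlety is that Rule~\rulename{ActEnv} mentions the barb predicate in its premise directly, so this direction is immediate once one checks the barb predicate is exactly the stated structural condition.

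For the ``if'' direction, I would argue by inspection of the intensional and extensional transition rules to see which ones can produce a label of the form $\wact v {a@h}$. In the intensional semantics (Tabs.~\ref{tab:lts_processes} and~\ref{tab:lts_networks}) there is no network-level transition carrying such a label — actuator writes at the process level carry $\wact v a$ but get absorbed into $\tau$ or $a$ transitions at the node level via Rules~\rulename{ActUnChg} and \rulename{ActChg}. Hence the only rule in the whole system (including the extensional additions of Tab.~\ref{tab:extensional}) whose conclusion has a label $\wact v {a@h}$ is Rule~\rulename{ActEnv}. Therefore any derivation of $M \trans{\wact v {a@h}} M''$ must end with an application of \rulename{ActEnv}, whose premise is exactly $\barb{M}{\wact v {a@h}}$ (and whose target $M''$ equals $M$). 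This yields the desired conclusion.

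The main obstacle — if one can call it that — is simply the bookkeeping of confirming that \rulename{ActEnv} is genuinely the unique rule concluding a transition with an actuator-observation label, i.e.\ ruling out that such a label could be propagated through the structural rules like \rulename{ParN} or \rulename{Res}. This is handled by noting that those propagation rules carry the label unchanged and require a sub-derivation with the same label, so an induction on the derivation tree bottoms out at \rulename{ActEnv}; one also checks the side condition of \rulename{Res} ($\nu \not\in \{\send c v h, \rec c v h\}$) does permit $\wact v {a@h}$ to pass, which is consistent since the barb predicate of Definition~\ref{def_barb} already allows restricted channels $\tilde g$ around the observing node. I would present this as a one-paragraph argument: state the two directions, cite \rulename{ActEnv} and Definition~\ref{def_barb}, and observe the uniqueness of the rule by a glance at the tables.
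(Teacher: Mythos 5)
Your proof is correct and follows the same route as the paper, which simply observes that both directions are immediate from the definition of Rule \rulename{ActEnv}, whose premise is literally the barb predicate and whose target is $M$ itself. (Your worry about \rulename{Res} and \rulename{ParN} is moot anyway: those rules only propagate intensional labels $\nu$, which do not include $\wact v {a@h}$, so \rulename{ActEnv} is trivially the unique source of that label.)
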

\begin{proof}
It follows from the definition of rule \rulename{ActEnv}.
\end{proof}

A crucial result is that our  bisimilarity is a congruence.
\begin{theorem}
\label{thm:congruence}
The relation $\approx$ is contextual. 
\end{theorem}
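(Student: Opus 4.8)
The plan is to show that $\approx$ is preserved by the three context-forming operations in Definition~\ref{def-contextual}: parallel composition with an arbitrary network $O$, channel restriction $\res c \cdot$, and sensor update $[s@h\mapsto v]$. For each, I would exhibit a bisimulation candidate built from $\approx$ and close it under the extensional transitions. The restriction and sensor-update cases are comparatively routine; the parallel-composition case is the heart of the theorem and, as the authors flag, requires a non-standard argument.

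\textbf{Restriction and sensor updates.} For restriction, I would take $\rel = \{(\res c M,\res c N) : M\approx N\}$ and check that every extensional action of $\res c M$ comes, via rule \rulename{Res}, from an action of $M$ whose label is not a located send/receive on $c$. The matching move of $N$ is then wrapped back under $\res c\cdot$. The only subtlety is the interplay with weak moves, e.g. a $\ttrans{\tau}$ of $N$ may include communications on $c$, which are still $\tau$-labelled after restriction, so closure is immediate. For sensor updates, note that by rule \rulename{SensEnv}, $M[s@h\mapsto v]$ is itself reachable from $M$ by a physical transition $\trans{\rsensa v {s@h}}$; hence if $M\approx N$ then $M[s@h\mapsto v]\approx N'$ for some $N'$ with $N\Trans{\rsensa v{s@h}}N'$, i.e. $N\ttrans{}N[s@h\mapsto v]$ up to some $\tau$-reductions. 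One must check this $N'$ is of the form $N''[s@h\mapsto v]$, using the definition of $[s@h\mapsto v]$ and the fact that $\tau$-moves commute with sensor updates on a fixed location/sensor; then $\{(M[s@h\mapsto v],N''[s@h\mapsto v]) : M\approx N,\ N\ttrans{}N''\}$ together with $\approx$ itself forms a bisimulation.

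\textbf{Parallel composition.} Here I would try to prove that $\rel = \{(M\,|\,O,\ N\,|\,O) : M\approx N,\ O \text{ any network}\} \cup \approx$ is a bisimulation. Consider a transition of $M\,|\,O$; by the SOS rules it arises in one of the following ways, and each must be matched: (i) a \rulename{ParN}-transition from $M$ alone, directly matched by the corresponding weak move of $N$; (ii) a \rulename{ParN}-transition from $O$ alone, matched trivially since $O$ is common; (iii) a \rulename{GlbCom} synchronisation between $M$ and $O$ on some channel, requiring a located send/receive action of $M$ that $N$ must weakly match with the \emph{same} located label — this is exactly why the extensional semantics records the location $k$ of the observer in $\sendobs c v k$ and $\recobs c v k$; (iv) a \rulename{TimePar} synchronisation, where both $M$ and $O$ must do a $\sigma$ and $M\,|\,O\ntrans{\tau}$ — the problematic case. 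The difficulty is that $M\approx N$ only tells us $N\Trans{\sigma}N'$, i.e. $N$ does some $\tau$'s, then one $\sigma$, then more $\tau$'s; but those $\tau$'s of $N$ might be \emph{internal} to $N$ and need not be matchable while $O$ is still ``waiting'' to tick, and more seriously, $N\,|\,O$ may fail the side condition $N\,|\,O\ntrans{\tau}$ needed to actually fire \rulename{TimePar}, because $N$ alone could still have a $\tau$. Maximal progress (Prop.~\ref{prop:maxprog}) and patience (Prop.~\ref{prop:patience}) are the tools to control this: one argues that any residual $\tau$-capability of $N$ must be exhausted first, and because the calculus is well-timed (Prop.~\ref{prop:welltime}) this exhaustion terminates, after which the $\sigma$ can genuinely synchronise. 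I expect the proof to proceed by first driving $N$ and $O$ to $\tau$-stable states, invoking patience to guarantee a common $\sigma$-move exists, and then carefully re-deriving the \rulename{TimePar} transition; one may need an auxiliary lemma that if $M\approx N$ and $M$ is $\tau$-stable and can do $\sigma$, then $N$ can reach a $\tau$-stable $\sigma$-capable state via $\tau$'s only, with the $\sigma$-derivatives still related. Soundness of this rests on time determinism (Prop.~\ref{prop:timed}) to pin down the $\sigma$-derivative up to the $2\delta$-location slack, which for networks combined with a common $O$ one must check is harmless for the bisimulation game (the observer $O$ cannot detect the location of nodes inside $M$ beyond what the located communication labels already reveal). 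The main obstacle, in short, is reconciling the weak $\sigma$-matching offered by $\approx$ with the maximal-progress side condition on \rulename{TimePar} when an arbitrary parallel observer is present, and this is presumably where the ``non-standard proof'' alluded to in the introduction is needed.
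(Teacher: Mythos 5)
Your proposal has the difficulty of this theorem inverted. The non-standard argument the authors allude to is not for parallel composition but for the sensor-update clause, which you call ``comparatively routine'' --- and the argument you sketch for that clause contains exactly the gap that forces the paper onto a different route. The claimed fact that ``$\tau$-moves commute with sensor updates'' is false: in a weak matching $N \ttrans{} N_1 \trans{\rsensa v {s@h}} N_1[s@h\mapsto v] \ttrans{} N'$, the trailing $\tau$-moves may \emph{read} the freshly written value $v$ via rule \rulename{SensRead} and branch on it, so they cannot be pushed before the update, and $N'$ is in general not of the form $N''[s@h\mapsto v]$ for any $\tau$-derivative $N''$ of $N$ that is still usefully related to $M$. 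Consequently neither $\{(M[s@h\mapsto v],N[s@h\mapsto v]) : M\approx N\}$ nor your enlarged candidate $\{(M[s@h\mapsto v],N''[s@h\mapsto v]) : M\approx N,\ N\ttrans{}N''\}$ closes under the bisimulation game: from such a pair you hold no relation between $M$ and $N''$ with which to match further moves. The paper abandons the direct candidate-relation approach here and argues by well-founded induction on a measure bounding the number of consecutive instantaneous reductions of $M$ plus that of $N$ (finite by well-timedness, Prop.~\ref{prop:welltime}): if the matching of $\trans{\rsensa v {s@h}}$ uses no $\tau$'s one is done; otherwise the measure of $N'$ strictly decreases, one undoes the update with a further update $[s@h\mapsto w]$ restoring the old value, applies the inductive hypothesis to the strictly smaller pairs, and chains the results by transitivity of $\approx$. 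Some such combination of a termination measure with transitivity is indispensable; as written, your sensor-update case does not go through.

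Your treatment of restriction and parallel composition follows the paper's (candidate $\{(M|O,N|O) : M\approx N\}$, case analysis on \rulename{ParN}, \rulename{GlbCom}, \rulename{TimePar}), but the step you flag as the obstacle is discharged differently from what you suggest. One does not ``exhaust residual $\tau$'s of $N$'' via patience and well-timedness: since $M\trans{\sigma}$ forces $M\ntrans{\tau}$ by maximal progress (Prop.~\ref{prop:maxprog}), the leading $\tau$'s in $N\ttrans{}N_1\trans{\sigma}N_2$ are matched by $M$ vacuously, so $M\approx N_1$; any putative $\tau$ of $N_1|O$ would then be a \rulename{GlbCom} between $N_1$ and $O$ that $M$ could mimic against $O$, contradicting the hypothesis $M|O\ntrans{\tau}$. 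This gives the side condition of \rulename{TimePar} directly; time determinism and the $2\delta$ slack play no role in the congruence proof.
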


\begin{proof}[Proof (Sketch)]
The most difficult case is when proving that  $M\approx  N$ entails $M[s@h \mapsto v]  \approx  N[s@h\mapsto v]$, for all  sensors
$s$, locations $h$, and  values $v$ in the domain of $s$. In fact, a standard approach 
to this proof consisting in  trying to show that the relation
\[
\left \{ \big( M[s@h \mapsto v]  \, , \,   N[s@h\mapsto v]\big) : M \approx N \right \}
\]
is a bisimulation, is not affordable. 

Thus, our proof is by well-founded induction.  Details can be found in the Appendix. 
\end{proof}

Now, everything is in place to prove that our labelled bisimilarity 
is sound with respect to reduction barbed congruence. Basically, 
we have to prove that the labelled bisimilarity is reduction-closed, barb preserving and contextual. 
\begin{theorem}[Soundness]
\label{thm:sound}
Let $M$ and $N$ be two networks such that $M\approx N$, then $M\cong N$.
\end{theorem}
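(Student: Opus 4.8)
The plan is to show that the labelled bisimilarity $\approx$ satisfies the three defining closure properties of reduction barbed congruence $\cong$, so that by maximality of $\cong$ we get $M \approx N$ implies $M \cong N$. Since $\approx$ is symmetric by definition, it suffices to establish (i) barb preservation, (ii) reduction closure (for both $\red$ and $\red_a$), and (iii) contextuality.

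First I would dispatch \emph{barb preservation}. Suppose $M \approx N$ and $\barb{M}{\wact v {a@h}}$. By Prop.~\ref{prop:barb} this is equivalent to $M \trans{\wact v {a@h}} M$. Since $\approx$ is a bisimulation and $\wact v {a@h} \neq \tau$, there is $N'$ with $N \Trans{\wact v {a@h}} N'$ and $M \approx N'$; unfolding the weak transition gives $N \redtau^\ast \hat N \trans{\wact v {a@h}} \hat N \redtau^\ast N'$ for some $\hat N$, where I use the Harmony Theorem (Thm.~\ref{thm:harmony}) to turn $\tau$-transitions into $\redtau$-reductions, hence into $\red$-reductions, and Prop.~\ref{prop:barb} again to read $\hat N \trans{\wact v {a@h}} \hat N$ as $\barb{\hat N}{\wact v {a@h}}$. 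Thus $\redbarb{N}{\wact v {a@h}}$, as required.

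Next, \emph{reduction closure}. For the $\red = \redtau \cup \redtime$ case: if $M \red M'$ then by the Harmony Theorem $M \trans{\tau} {\equiv} M'$ or $M \trans{\sigma} {\equiv} M'$; applying the bisimulation game (with $\hat\tau$ absorbing the $\tau$ case and $\sigma \neq \tau$ for the timed case) and translating the resulting weak $\tau$-/$\sigma$-transitions back into reductions via Harmony yields $N \red^\ast N'$ with $M' \approx N'$ — here I also need that $\equiv$ is contained in $\approx$, which is a routine check since $\equiv$-equivalent networks have identical intensional transitions. For the $\red_a$ case: $M \red_a M'$ gives, via Harmony, $M \trans{a} {\equiv} M'$; since $a \neq \tau$ the bisimulation yields $N \Trans{a} N'$, i.e.\ $N \redtau^\ast \red_a \redtau^\ast N'$, which by definition of $\red$ is of the form $N \red^\ast \red_a \red^\ast N'$, with $M' \approx N'$.

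Finally, \emph{contextuality} is exactly the content of Thm.~\ref{thm:congruence}, which I may assume: $M \approx N$ implies $M \mid O \approx N \mid O$, $\res c M \approx \res c N$, and $M[s@h \mapsto v] \approx N[s@h \mapsto v]$. Having established all three properties, $\approx$ is a symmetric, barb-preserving, reduction-closed, contextual relation, so $\approx\ \subseteq\ \cong$ by definition of $\cong$ as the largest such relation, and therefore $M \approx N$ implies $M \cong N$. The main obstacle in this argument is not any of the steps above — each is short given Harmony and the congruence theorem — but rather the reliance on Thm.~\ref{thm:congruence}, whose proof (the non-standard well-founded induction for the sensor-update clause mentioned in its proof sketch) is the genuinely hard part and is carried out separately; here I simply invoke it. A minor technical point to be careful about is the exact shape of weak transitions ($\Trans{\hat\alpha}$ versus $\Trans{\alpha}$) and ensuring the $\redtau^\ast \red_a \redtau^\ast$ shape lines up with the $\red^\ast \red_a \red^\ast$ required by the definition of reduction closure, but this is just bookkeeping.
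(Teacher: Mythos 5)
Your proposal is correct and follows essentially the same route as the paper's own proof: reduction closure (for both $\red$ and $\red_a$) via the Harmony Theorem and the bisimulation game, barb preservation via Prop.~\ref{prop:barb}, contextuality delegated to Thm.~\ref{thm:congruence}, and the conclusion by maximality of $\cong$. Your explicit remark that $\equiv\ \subseteq\ \approx$ is needed to absorb the structural congruence produced by Harmony is a point the paper glosses over in this proof (it is acknowledged only later, in the completeness argument, as the fact that $\equiv$ is a strong bisimulation), so no substantive difference remains.
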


\begin{proof}
We recall that  $\cong$ is defined (Def.~\ref{barbed_cong}) as the largest symmetric reduction which is reduction closed, barb preserving and contextual.

First, we prove that bisimilarity is reduction closed.

Suppose that $M\red M'$.
Then we have two cases: either $M\redtau M'$ or $M\redtime M'$.
In the first case Th.~\ref{thm:harmony}  implies that $M\trans{\tau}\equiv M'$.
Since by hypothesis $M\approx N$, then there exists $N'$ such that $N\Trans{}N'$ and $N'\approx M'$.
Now, by Th.~\ref{thm:harmony}  we have that each of the $\tau$-actions in the sequence $N\Trans{}N'$ can be rewritten in terms of $\redtau$.
Thus the entire sequence $N\Trans{}N'$ ca be rewritten as the sequence of instantaneous reductions $N\redtau^* N'$, which is a particular case of $N\redmany N'$, with $N'\approx M'$.
Let us conder now the second case: $M\redtime M'$.
By Th.~\ref{thm:harmony} it follows that $M\trans{\sigma} \equiv M'$.
As $M\approx N$ there exists $N'$ such that $N\ttrans{\sigma}N'$ and $N'\approx M'$.
By several applications of Th.~\ref{thm:harmony} we get $N\redtau^*\redtime\redtau^* N'$. Thus, $N\redmany N'$, with $N'\approx M'$.

 The case $M\red_{a} N$ is similar. 

From reduction closure and  Prop.~\ref{prop:barb} it follows immediately that
$\approx$ is barb preserving. 

Thm.~\ref{thm:congruence} proves that our labelled bisimilarity is contextual.

As $\cong$ is defined as the largest relation which is reduction closed, barb-preserving and contextual, it follows that $\approx \; \subseteq \;\cong$.
\end{proof}

Here, before proving completeness, we would 
like to point out some peculiarities of our bisimilarity. 
As the reader may have noticed, our bisimulation is completely 
standard, in a weak fashion. However, the real distinguishing power of 
physical transitions follows the pattern of strong bisimulation. 

\begin{proposition}[Physical environment and strong observation] \ 
\label{prop:strong-obs}
\begin{itemize}
\item If $M \approx N$ and $M \trans{\wact v {a@h}} M'$ then there is 
$N'$ such that $N \trans{\wact v {a@h}} N'$ and $M' \approx N'$
\item 
If $M \approx N$ and $M \trans{\rsensa v {s@h}} M'$ then there is 
$N'$ such that $N \trans{\rsensa v {s@h}} N'$ and $M' \approx N'$. 
\end{itemize}
\end{proposition}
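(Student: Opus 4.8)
The plan is to exploit the fact that both physical actions $\wact v {a@h}$ and $\rsensa v {s@h}$ are, by construction, "idempotent-shaped": in the extensional semantics neither action is generated by any $\tau$- or $\sigma$-rule, and each is available whenever a purely local condition on $M$ holds (a barb for \rulename{ActEnv}, or $v$ being in the domain of $s$ for \rulename{SensEnv}). The key observation driving both items is that these conditions are \emph{stable under $\tau$-reduction}: $\redtau$ and $\redtime$ never change actuator values (this is essentially the argument already used in Prop.~\ref{prop:strong-barbs}), and the mere existence of a sensor $s$ at location $h$ with its domain is likewise invariant along any reduction sequence. Hence, if $N \Trans{} N''$, then $N''$ can fire the same physical action as $N$, and firing it commutes with the preceding $\tau$-moves.

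First I would handle the actuator case. Suppose $M \approx N$ and $M \trans{\wact v {a@h}} M'$. By the shape of rule \rulename{ActEnv} we have $M' = M$ and $\barb{M}{\wact v {a@h}}$. Unfolding the weak bisimulation game for this $\alpha = \wact v {a@h}$ (which is not $\tau$, so $\hat\alpha = \alpha$), there is $N''$ with $N \Trans{} N_1 \trans{\wact v {a@h}} N_2 \Trans{} N''$ and $M = M' \approx N''$. By \rulename{ActEnv} again, $N_2 = N_1$ and $\barb{N_1}{\wact v {a@h}}$; moreover $N_1 \Trans{} N''$. Now the point: since $\redtau$ and $\redtime$ (and hence, via Thm.~\ref{thm:harmony}, the $\tau$-transitions composing $\Trans{}$) leave every actuator's value untouched, and since $\barb{N}{\cdot}$ is defined purely in terms of the interface of the node carrying $a$ at $h$, the barb $\barb{N_1}{\wact v {a@h}}$ must already have held at $N$; that is, $\barb{N}{\wact v {a@h}}$. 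Therefore by Prop.~\ref{prop:barb} (or directly by \rulename{ActEnv}) $N \trans{\wact v {a@h}} N$, and it remains to check $M \approx N$, which is just the hypothesis. This gives $N' := N$ with $N \trans{\wact v {a@h}} N'$ and $M' = M \approx N = N'$.

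The sensor case is analogous but with one extra bookkeeping step, and this is where I expect the only real friction. By \rulename{SensEnv}, $M \trans{\rsensa v {s@h}} M[s@h \mapsto v]$, and the action is enabled for \emph{any} $v$ in the domain of $s$, regardless of the current state of $M$ — so $N$ can certainly fire $\rsensa v {s@h}$ directly, yielding $N[s@h \mapsto v]$. What must be argued is that $M[s@h \mapsto v] \approx N[s@h \mapsto v]$, and for this I would \emph{not} reprove things from scratch but invoke Thm.~\ref{thm:congruence}: $\approx$ is contextual, and the third clause of Def.~\ref{def-contextual} says precisely that $M \approx N$ implies $M[s@h \mapsto v] \approx N[s@h \mapsto v]$. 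Thus $N' := N[s@h \mapsto v]$ works. The subtle part worth spelling out in the write-up is why we are entitled to \emph{strong} matching here rather than the weak matching that the bisimulation game a priori only guarantees: running the weak game on $N$ for the label $\rsensa v {s@h}$ produces $N \Trans{} N_1 \trans{\rsensa v {s@h}} N_1[s@h \mapsto v] \Trans{} N''$ with $M[s@h \mapsto v] \approx N''$; but $N \trans{\rsensa v {s@h}} N[s@h \mapsto v]$ is available outright, and by Thm.~\ref{thm:congruence} $N[s@h \mapsto v] \approx N_1[s@h \mapsto v]$ follows from $N \approx N_1$ (which holds since $N \Trans{} N_1$), and then $N_1[s@h \mapsto v] \approx N''$ gives the chain back to $M[s@h \mapsto v]$, so the direct one-step derivative already does the job. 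The main obstacle, such as it is, is therefore not a deep one: it is simply making precise that the \emph{enabledness} of these environment actions is invariant along $\tau$/$\sigma$-reductions (actuators) or unconditional (sensors), so that the leading $\tau$-moves in the weak answer can be discarded — with the actuator invariance being exactly the observation underlying Prop.~\ref{prop:strong-barbs} and the sensor congruence step being delegated to Thm.~\ref{thm:congruence}.
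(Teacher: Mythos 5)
Your proof is correct and follows essentially the same route as the paper: the actuator case comes down to the invariance of actuator values under $\tau$-moves (the paper packages this as soundness plus Prop.~\ref{prop:strong-barbs}, while you argue it directly in the extensional LTS), and the sensor case is exactly the paper's appeal to Thm.~\ref{thm:congruence} together with rule \rulename{SensEnv}. One caveat on your closing digression: the assertion that $N \approx N_1$ holds ``since $N \Trans{} N_1$'' is false in general---weak bisimilarity is not $\tau$-inert, as a $\tau$-step may discard timeout branches---but this step is not load-bearing, since the direct derivative $N[{s@h}\mapsto v]$ already witnesses the claim by congruence.
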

\begin{proof}
Let us prove the first item. By Thm.~\ref{thm:sound} we derive $M \cong N$. 
By Prop. \ref{prop:strong-barbs} we know that $\barb{M}{\wact v {a@h}}$ implies 
$\barb{N}{\wact v {a@h}}$. The result follows by inspection of the definition 
of the rule \rulename{ActEnv} to derive the transition $\trans{\wact v {a@h}}$.

Let us prove the second item. 
By  an application of Thm.~\ref{thm:congruence} we have 
$M[{s@h}\mapsto v]\approx N[{s@h}\mapsto v]$, 
for all sensors $s$ and values $v$ in the domain of $s$. 
The result follows by inspection of the rule  \rulename{SensEnv} to derive
the transition $\trans{\rsensa v {s@h}}$.
\end{proof}
This first sub-result  is perfectly in line with Proposition~\ref{prop:strong-barbs}.  Intuitively, 
the whole Prop.~\ref{prop:strong-obs} says  that changes in the 
physical environment may have immediate consequences on  IoT systems:
 waiting for a $\tau$-action  might make a difference.

  We prove now completeness. The proof relies on showing that for each extensional action 
 $\alpha$ it is possible to exhibit a test $T_{\alpha}$
  which determines whether or not a system $M$ can perform the action $\alpha$.
We need a technical lemma to cut down observing contexts. 
\begin{lemma}
\label{lem:cut_cxt}
Let $M$ and $N$ be two networks. 
Let $O=\nodep{n}{\conf{\mathcal I}{\wact v a}.\nil}{\stat}{k}$, 
for an arbitrary node name $n$, an arbitrary actuator $a$, and arbitrary values $v$ and $w$, in the domain of $a$,  such that 
$\I$ is only defined for $a$ and $\I(a)=w\neq v$. 
If both $M |O$ and $N | O$ are 
well-formed and  
 $  M | O \cong   N | O$ then $  M \cong   N $.
\end{lemma}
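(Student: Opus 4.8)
The goal is to remove the auxiliary observer $O$ from a contextual equivalence $M\mid O\cong N\mid O$ and recover $M\cong N$. The natural strategy is to exhibit a bisimulation-like relation — or rather, since $\cong$ is contextually defined, to show directly that the candidate relation
\[
\rel = \{(M,N) : M\mid O\cong N\mid O \text{ for } O \text{ as in the statement, both well-formed}\}
\]
is reduction closed, barb preserving and contextual, hence contained in $\cong$. First I would observe that $O$ is a completely inert network: $O=\nodep{n}{\conf{\I}{\wact v a}.\nil}{\stat}{k}$ can only perform the single action $\trans{a}$ (via \rulename{ActChg}, since $\I(a)=w\neq v$), after which it becomes $\nodep{n}{\conf{\I[a\mapsto v]}{\nil}}{\stat}{k}$, which by Law~\ref{law5} and~\ref{law6} of Thm.~\ref{thm:algebraic-laws} is $\approx$-equivalent (in fact $\cong$-equivalent by Thm.~\ref{thm:sound}) to $\zero$. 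Crucially, because $M\mid O$ is well-formed, the actuator $a$ (and node name $n$) cannot appear in $M$, so $O$ never communicates or synchronises with $M$ in any way; the only interaction between $M$ and $O$ in a context is through the shared physical environment, i.e. sensor updates, but $\I$ in $O$ is defined only for $a$, so sensor updates never touch $O$ either.

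The key step is to relate computations of $M\mid O$ to computations of $M$. Since $a,n$ do not occur in $M$, any reduction $M\mid O\red X$ decomposes as either $M\red M'$ with $X = M'\mid O$, or the unique actuator step of $O$ giving $X = M\mid O'$ with $O' = \nodep{n}{\conf{\I[a\mapsto v]}{\nil}}{\stat}{k}$; timed reductions $\redtime$ act componentwise. This gives a tight simulation: $M\mid O$'s behaviour is just $M$'s behaviour interleaved with the single watchpoint $\red_a$ of $O$ and then the inert residue. Then, given $M\mid O\cong N\mid O$: to check reduction closure of $\rel$, suppose $M\red M'$; then $M\mid O\red M'\mid O$, so there is $Y$ with $N\mid O\red^* Y$ (resp. $\red^*\red_a\red^*$) and $M'\mid O\cong Y$. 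Here I need the decomposition lemma again, in the reverse direction: $Y$ must have the shape $N'\mid O$ or $N'\mid O'$ for some $N'$ reachable from $N$, because the $O$-component evolves independently and deterministically (it is the only source of the actuator $a$). If $Y = N'\mid O'$, I would first note that $O' \cong \zero$ and $O\cong\zero$ (contextually — garbage collection of the inert node, using strong preservation of barbs, Prop.~\ref{prop:strong-barbs}, to handle the $\red_a$ watchpoint cleanly); then, since $\cong$ is contextual and transitive, $M'\mid O\cong M'\mid\zero$ and similarly for $N'$, so from $M'\mid O\cong N'\mid O'$ I get $M'\mid\zero\cong N'\mid\zero$, whence $M'\rel N'$. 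The barb-preservation and the two contextuality clauses for parallel composition and restriction are then routine, using that adding $O$ commutes with $\mid P$ and $\res c(-)$ up to structural congruence and that well-formedness is preserved; the sensor-update clause uses that $[s@h\mapsto v]$ commutes with $\mid O$ since $\I$ in $O$ carries no sensors.

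The main obstacle I expect is the $\red_a$-watchpoint bookkeeping. Because $\cong$ requires reduction closure of $\red_a$ separately, and because $O$ itself contributes exactly one $\red_a$ step, I must be careful that when $N\mid O$ matches an instantaneous reduction of $M$ it does not ``spend'' $O$'s actuator step prematurely, and conversely that the single $\red_a$ that $O$ can contribute is accounted for on both sides symmetrically. This is precisely why the statement pins down $\I(a)=w\neq v$: it guarantees $O$ has exactly one actuator transition, making the decomposition of matching computations unique. I would handle this by proving a small \emph{decomposition lemma}: if $M\mid O\red^* X$ with $a,n\notin M$ and $O$ as above, then either $X\equiv M^*\mid O$ with $M\red^* M^*$, or $X\equiv M^*\mid\nodep{n}{\conf{\I[a\mapsto v]}{\nil}}{\stat}{k}$ with $M\red^* M^*$ and the sequence contains exactly one $\red_a$ attributable to $O$; and symmetrically for $\red^*\red_a\red^*$ traces. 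With this lemma in hand, and the already-available facts that the inert node is $\cong$-negligible (Laws~\ref{law5},\ref{law6} via soundness) and that $\cong$ is itself contextual, the verification that $\rel\subseteq\cong$ goes through. An alternative, possibly shorter, route would be to invoke the full abstraction result and argue at the level of $\approx$ instead, exhibiting an explicit bisimulation up to $\approx$ between $M$ and $N$ from one between $M\mid O$ and $N\mid O$; but since Lem.~\ref{lem:cut_cxt} is used precisely in the completeness proof that establishes full abstraction, the self-contained contextual argument above is the safer choice.
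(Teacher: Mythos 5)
Your overall strategy coincides with the paper's: take $\rel=\{(M,N): M\mid O\cong N\mid O\}$ and show it is barb preserving, reduction closed and contextual, exploiting that well-formedness forces $a$ and $n$ not to occur in $M$ or $N$. Your treatment of barbs, of the three contextuality clauses, and of the instantaneous reductions is essentially sound. But there are two genuine problems with how you discharge the $\red_a$ bookkeeping, and they sit exactly at the point you yourself flagged as the main obstacle.

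First, $O$ and its residue $O'=\nodep{n}{\conf{\I[a\mapsto v]}{\nil}}{\stat}{k}$ are \emph{not} barbed congruent to $\zero$. Law~\ref{law6} of Thm.~\ref{thm:algebraic-laws} garbage-collects a nil node only when its interface is undefined on \emph{every} actuator; here $\I$ is defined on $a$, so both $O$ and $O'$ carry a persistent, strongly preserved barb at $a@k$ (namely $\barb{O}{\wact w {a@k}}$ versus $\barb{O'}{\wact v {a@k}}$, cf.\ Prop.~\ref{prop:strong-barbs}). Hence your step ``from $M'\mid O\cong N'\mid O'$ deduce $M'\mid\zero\cong N'\mid\zero$'' is unavailable; in fact $M'\mid O\cong N'\mid O'$ can never hold, because the two sides expose different values on $a$ forever. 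Fortunately, for instantaneous steps this case does not arise: $\red=\redtau\cup\redtime$ does not contain $\red_a$, and when matching $\red_b$ one has $b\neq a$ since $a\notin N$, so the matching computation of $N\mid O$ can never spend $O$'s actuator step and the answer is always of the form $N'\mid O$.

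Second, and more seriously, you never treat the case $M\redtime M'$, which is where the real work lies. The node $O$ has a pending urgent actuator write, so $O\not\redtime$ (the premise of rule \rulename{timestat} fails; cf.\ maximal progress), and hence $M\mid O\not\redtime$ by rule \rulename{timepar}: the context $O$ \emph{blocks the passage of time} until its watchpoint fires, so you cannot lift $M\redtime M'$ to a reduction of $M\mid O$ and play it against $N\mid O$. The paper resolves this by first firing $\red_a$ on both sides to reach $M\mid O'\cong N'\mid O'$ with $O'$ the nil node, then applying Prop.~\ref{prop:time-observation} to match $M\mid O'\redtime M'\mid O'$, and finally invoking a separate auxiliary cutting lemma (Lem.~\ref{lem:aux-cut}) for the nil-node context $O'$ --- needed precisely because $O'$ is still observable and cannot be garbage-collected. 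Your sketch needs both ingredients: a correct way to strip the barbed residue $O'$, and an explicit argument for the timed case.
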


\begin{theorem}[Completeness]
\label{thm:complete}
Let $M$ and $N$ such that $M\cong N$, then $M\approx N$. 
\end{theorem}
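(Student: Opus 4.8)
\textbf{Proof plan for Theorem~\ref{thm:complete} (Completeness).}
The plan is to show that reduction barbed congruence $\cong$ is itself a bisimulation with respect to the extensional LTS; since $\approx$ is the largest bisimulation, this yields $\cong\;\subseteq\;\approx$. Concretely, I would fix $M\cong N$ and consider each possible extensional action $M\trans{\alpha}M'$, exhibiting in each case a suitable reply $N\Trans{\hat\alpha}N'$ with $M'\cong N'$, so that the relation obtained by closing $\cong$ under the derivatives is a bisimulation. The work therefore splits according to the shape of $\alpha\in\{\tau,\sigma,a,\sendobs{c}{v}{k},\recobs{c}{v}{k},\rsensa v {s@h},\wact v {a@h}\}$.

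The easy cases come first. For $\alpha=\tau$: by the Harmony Theorem $M\trans{\tau}M'$ gives $M\redtau M'$, hence $M\red M'$, and reduction closure of $\cong$ gives $N\redmany N'$ with $M'\cong N'$; again by Harmony each reduction in $N\redmany N'$ is either $\redtau$ (rewritable as $\trans{\tau}$) or $\redtime$, but since we only need $N\Trans{}N'$ here the presence of timed reductions is fine provided we first argue (as in the proof of Prop.~\ref{prop:time-observation}) that a pure $\redtau$-sequence suffices — more cleanly, I would handle $\tau$ and $\sigma$ together using a timer test as below, or invoke that $\cong$ is reduction closed for $\red=\redtau\cup\redtime$ and split the weak transition accordingly. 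For $\alpha=a$: use the second clause of reduction closure, $M\red_a M'$ implies $N\redmany\red_a\redmany N'$ with $M'\cong N'$, and translate back via Harmony. For $\alpha=\wact v{a@h}$ (a physical read of an actuator): this is a self-loop $M'=M$, and by Prop.~\ref{prop:barb} it is equivalent to $\barb M{\wact v{a@h}}$; by Prop.~\ref{prop:strong-barbs} (strong barb preservation) $\barb N{\wact v{a@h}}$, so $N\trans{\wact v{a@h}}N$ and we close with $M\cong N$.

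The substantive cases are the sensor update $\rsensa v {s@h}$ and the channel observations $\sendobs c v k$, $\recobs c v k$, together with making the $\sigma$-case precise. For $\alpha=\rsensa v{s@h}$: here $M'=M[s@h\mapsto v]$, and since $\cong$ is contextual (third clause of Def.~\ref{def-contextual}) we directly get $M[s@h\mapsto v]\cong N[s@h\mapsto v]=:N'$, and $N\trans{\rsensa v{s@h}}N'$ by rule \rulename{SensEnv}. For $\alpha=\sendobs c v k$ and $\alpha=\recobs c v k$: the strategy is the standard one of building an observing context that forces the matching action and signals success by flipping a fresh actuator; e.g.\ for an output $\sendobs c v k$ one composes $M$ with a test node at a location within $\rng c$ of $k$ running $\timeout{\LIN c x.[x=v]\,\wact 1 b}{\nil}$ (for a fresh actuator $b$, $\I(b)=0$, wrapped under $\res c$ if needed), arguing that $M\trans{\sendobs c v k}M'$ iff $M\mid T\red\red_b (M'\mid T')$ with $\barb{\cdot}{\wact 1{b@\cdot}}$; contextuality and reduction closure of $\cong$ then transfer the reduction to $N\mid T$, and by well-formedness (the fresh $b$ occurs in neither $M$ nor $N$) one decomposes the reply to extract $N\trans{\sendobs c v k}N'$ with $M'\mid T'\cong N'\mid T'$, whence $M'\cong N'$ by Lem.~\ref{lem:cut_cxt}; the input case is symmetric, using a sender test. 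For $\alpha=\sigma$: use a timer test as in Prop.~\ref{prop:time-observation}, namely $T=\nodep{n}{\conf{\J}{\sigma.\wact 1 a.\wact 0 a.\nil}}{\stat}{k}$ with fresh $a$, so that $M\trans{\sigma}M'$ yields $M\mid T\redtime\red_a M'\mid T'$ with $\barb{M'\mid T'}{\wact 1{a@k}}$; contextuality plus reduction closure give $N\mid T\redmany\red_a\redmany\hat N$ with $M'\mid T'\cong\hat N$, and the barb-counting argument of Prop.~\ref{prop:time-observation} shows exactly one $\redtime$ was used, so (again using freshness of $a$ and maximal progress) the reply decomposes as $N\redtau^\ast\redtime\redtau^\ast N''$, i.e.\ $N\ttrans\sigma N''$, with $M'\cong N''$ after cutting $T'$ via Lem.~\ref{lem:cut_cxt}. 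Finally one checks that the relation $\{(M',N'):M\cong N,\ M\Trans{}M',\ N\Trans{}N',\ M'\cong N'\}$ (or simply $\cong$ itself, once symmetry and the above closure properties are in hand) is a bisimulation, giving $\cong\;\subseteq\;\approx$.

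The main obstacle I expect is the $\sigma$-case and, to a lesser extent, the channel-observation cases: one must design tests that (i) are expressible in \cname\ and preserve well-formedness when composed with both $M$ and $N$, (ii) react in a way that is \emph{uniquely} correlated with the target action (so that a successful barb in the derivative pins down precisely which extensional move occurred and how many timed steps were taken), and (iii) allow the observing context to be stripped off afterwards — this last point is exactly what Lem.~\ref{lem:cut_cxt} is for. Getting the bookkeeping right for $\sigma$, so that the mandatory interleaving of $\redtau$-steps around the single $\redtime$ in $N$'s reply really does assemble into a weak $\sigma$-transition $N\ttrans\sigma N''$ (and not something that also slips in an extra time step), is the delicate part and relies essentially on maximal progress (Prop.~\ref{prop:maxprog}) and the freshness of the test actuator.
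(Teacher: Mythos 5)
Your plan matches the paper's proof essentially step for step: you show that $\cong$ is a bisimulation (up to $\equiv$), dispatch the physical transitions via strong barb preservation (Prop.~\ref{prop:strong-barbs}) and the sensor updates via contextuality, and handle $\tau$, $a$, $\sigma$ and the channel observations with fresh-actuator observer nodes whose barbs certify which action occurred and that no timed step has been taken, stripping the test afterwards with Lem.~\ref{lem:cut_cxt} — exactly the paper's strategy. Two details to tighten when writing it out: the receiver test for $\sendobs{c}{v}{k}$ must sit exactly at location $k$ (so that $N$'s matching send is certified to be within range of $k$ itself), and the $a$- and $\tau$-cases also require the fresh-actuator test (the paper's $T_a$), since reduction closure alone does not exclude $\redtime$ steps in $N$'s reply.
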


\begin{proof} 
We show that relation $\rel = \{(M,N)\mid M\cong N\}$ is a bisimulation up to 
$\equiv$.
Let us consider two networks $M$ and $N$ such that $(M,N)\in\rel$. We proceed
by case analysis on the possible extensional actions of $M$. 

First, we consider  \emph{logical transitions\/}. 
\begin{itemize}

\item Let us suppose that $M\trans{a}M'$.
By Th.~\ref{thm:harmony}  we derive $M\red_a M'$.
Let us define the test $T_{a}$: 
\[
T_a \deff \nodep{n}{\conf{\J}{ \wact 1 b . \nil} } {\stat}{k}
\]
where $n$ is a fresh node name and $b$ is a fresh  actuator  such that $\J(b)=0$.  By Prop.~\ref{prop:maxprog}, no $\sigma$-move can fire if  a
reduction $\red_b$ is possible. Thus,  the presence of a barb  $\Downarrow_{\wact 0 {b@k}}$ means that no $\sigma$-actions have occurred yet.
Since $M\red_a M'$, we can apply rule \rulename{parn} to infer $M|T_a \red_a M'|T_a$, with $\barb{M'|T_a}{\wact 0 {b@k}}$.
As $M\cong N$ and the relation $\cong$ is both contextual and reduction closed, it follows that  $N|T_a \redmany \red_a \redmany \hat{N}$, for some $\hat{N}$, with $M'|T_a \cong \hat{N}$.
As a consequence, $\redbarb{\hat{N}}{\wact 0 {b@k}}$.  This implies that $\hat{N}\equiv N'|T_a$ for some $N'$, such that $N|T_a\redmany \red_a \redmany N'|T_a$, with  $N\redmany \red_a \redmany N' $, and $M'|T_a\cong N'|T_a$. As the presence 
of a barb $\Downarrow_{\wact 0 {b@k}}$ ensures that no $\sigma$-actions have occurred, it follows that  $N\redtau^{\ast} \red_a \redtau^{\ast} N' $. By 
several applications of Thm.~\ref{thm:harmony} it follows that 
$N \Trans{a}\equiv N'$ (this relies on the straightforward result that $\equiv$ is a strong bisimulation). 
By  $M'|T_a \cong N'|T_a$ and Lem.~\ref{lem:cut_cxt} we derive $M'\cong N'$. This implies that 
$(M', N') \in \; \equiv \rel \equiv $.

\item Let us suppose that $M\trans{\tau}M'$. This case is 
similar to the previous one with $T_{\tau}=T_a$.

\item Let us suppose that $M\trans{\sigma}M'$.
By Th.~\ref{thm:harmony}  we derive $M\redtime M'$.
As $M\cong N$,  by Prop.~\ref{prop:time-observation}  there exists $N'$ such that  $N \redtau^{\ast}\redtime\redtau^{\ast}N'$ and $M'\cong N'$. By several applications of Th.~\ref{thm:harmony}  we obtain  $N \ttrans{\sigma} \equiv N'$. 
As $M'\cong N'$, it follows that  $(M',N') \in \; \equiv \rel \equiv $.

\item Let us suppose that $M\trans{\sendobs{c}{v}{k}}M'$.
This transition can only be derived by an application 
of rule \rulename{SndObs} if $M\trans{\send{c}{v}{h}}M' $, for some $h$, such that  $\dist h k \leq \rng c$.
Let us build up a context that is capable to observe the action $\sendobs{c}{v}{k}$.  We define testing term $T_{\sendobs{c}{v}{k}}$. For simplicity, in the following we abbreviate it with $T$: 
\[
T \deff \nodep{m}{\conf{\J}{ \timeout{\LIN c x.[x=v]\wact 1 b. \wact 0 b . \nil;\nil}\nil} } {\stat}{k}
\]
where $m$ is a fresh node name and $b$ is a fresh  actuator name  such that $\J(b)=0$.
The intuition behind this testing process is the following:  $T$ has barb 
$\Downarrow_{\wact 1 {b@k}}$ only if the communication along $c$ has already occurred and no time actions have been fired (Prop.~\ref{prop:maxprog}).

Since $\cong$ is contextual, $M\cong N$ implies $M | T \cong N | T$.
From $M\trans{\send{c}{v}{h}}M'$ we can easily infer $
M | T \trans{\tau} \trans b  M' |T'$, with 
$T'= \nodep{m}{\conf{\J[b \mapsto 1]}{  \wact 0 b . \nil} } {\stat}{k}$.
Notice that $\barb{M'|T'}{\wact 1 {b@k}}$.
By  Th.~\ref{thm:harmony}, we derive $M|T \redtau \red_b M'|T'$.
As $M | T \cong N | T$ it follows that 
$
N | T \redmany \red_b \redmany \hat{N}$, with 
$\hat{N} \Downarrow_{\wact 1 {b@k}}$. This implies that 
 $\hat{N} \equiv  N'|T'$, for some $N'$. Furthermore, 
no timed actions have occurred in the reduction sequence, and 
hence: $
N | T \redtau^{\ast}\red_b \redtau^{\ast}  N' |T '$.  By several applications 
of Thm.~\ref{thm:harmony} we obtain $N | T \Trans{} \trans{b} \Trans{} \equiv N' |T '$.
This implies that $N\Trans{\send{c}{v}{h'}} \equiv N'$, for some $h'$ such that  $\dist {h'} k \leq \rng c$. By an application of rule \rulename{SndObs} we get $N\Trans{\sendobs{c}{v}{k}} \equiv N'$.
From $M'|T\cong N'|T$ and Lem.\ref{lem:cut_cxt} we derive $M'\cong N'$. This allows us to show that
$(M',N') \in \; \equiv \rel \equiv$.

\item The case of $M\trans{\recobs{c}{v}{k}}M'$, is similar  to the previous one. The observing term is 
\[
T_{\recobs{c}{v}{k}} \deff \nodep{m}{\conf{\J}{ \timeout{\OUT c v.\wact 1 b. \wact 0 b . \nil}\nil} } {\stat}{k}
\]
where $m$ is a fresh node name and $b$ is a fresh  actuator name such that $\J(b)=0$. 
\end{itemize}

Let us now consider  \emph{physical transitions}. Here, as already explained 
in Sec.\ref{sec:extensions}, we will not provide an observing context
as our language for IoT systems does not allow us to write physical observers. 
\begin{itemize}
\item Let  $M\trans{\wsensa v {a@h}}M'$.
Since this transition can  be only derived by an application of rule \rulename{ActEnv}, it follows that $M'=M$ and $\barb{M}{\wsensa v {a@h}}$. By Prop.~\ref{prop:strong-barbs} we obtain $\barb{N}{\wsensa v {a@h}}$. By applying again rule \rulename{ActRead} to $N$, we obtain $N\trans{\wsensa v {a@h}}N'=N$ with $(M', N') \in \rel$. 

\item Let  $M\trans{\rsensa v {s@h}}M'$.  
Since this transition can  be only derived by an application of 
 rule \rulename{SenEnv}, it follows that $M' = M[{s@h}\mapsto v]$. 
By an application of the same rule \rulename{SensEnv} we obtain $N\trans{\rsensa v {s@h}}N'=N[{s@h}\mapsto v]$. As $\cong$ is contextual 
we have $M[{s@h} \mapsto v]\cong N[{s@h}\mapsto v]$. This implies
that $(M', N') \in  \rel$.
 \end{itemize}
\end{proof}

By Thm.~\ref{thm:sound} and Thm.~\ref{thm:complete} we derive our 
full abstraction result: 
reduction barbed congruence coincides with our labelled bisimilarity.
\begin{corollary}
[Full abstraction]
\label{thm:full-abstraction}
$M\approx N$ if and only if $M\cong N$.
\end{corollary}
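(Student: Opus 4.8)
The final statement to prove is Corollary~\ref{thm:full-abstraction}, the full abstraction result stating that $M \approx N$ if and only if $M \cong N$. The plan is to derive this directly as an immediate consequence of the two main theorems that have already been established in the excerpt: Theorem~\ref{thm:sound} (Soundness) and Theorem~\ref{thm:complete} (Completeness).

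First I would recall that Theorem~\ref{thm:sound} gives us one inclusion: whenever $M \approx N$ we have $M \cong N$, i.e.\ $\approx \;\subseteq\; \cong$. This direction was proved by showing that $\approx$ is itself reduction closed, barb preserving and contextual, and hence contained in the largest such relation, which is $\cong$ by Definition~\ref{barbed_cong}. Second, I would invoke Theorem~\ref{thm:complete}, which gives the converse inclusion: whenever $M \cong N$ we have $M \approx N$, i.e.\ $\cong \;\subseteq\; \approx$. That direction was established by exhibiting, for each extensional action $\alpha$, a distinguishing test context $T_\alpha$, and showing that $\{(M,N) : M \cong N\}$ is a bisimulation up to $\equiv$.

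Putting the two inclusions together yields $\approx \;=\; \cong$, which is precisely the biconditional claimed in Corollary~\ref{thm:full-abstraction}. There is genuinely no further work: the corollary is a one-line consequence of conjoining the soundness and completeness theorems. I do not anticipate any obstacle here, since all the technical difficulty — notably the non-standard congruence argument of Theorem~\ref{thm:congruence} used inside soundness, and the careful construction of observing contexts together with the cut-context Lemma~\ref{lem:cut_cxt} used inside completeness — has already been discharged in the preceding development. The only thing worth being mildly careful about is making sure both directions are stated over the same class of (well-formed, closed) networks, which is the standing assumption throughout the paper, so the two inclusions compose without any side condition.

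In short, the proof reads: ``Immediate from Theorem~\ref{thm:sound} and Theorem~\ref{thm:complete}.'' If one wanted to be marginally more explicit, one would write: by Theorem~\ref{thm:complete}, $M \cong N$ implies $M \approx N$; by Theorem~\ref{thm:sound}, $M \approx N$ implies $M \cong N$; hence $M \approx N$ if and only if $M \cong N$.
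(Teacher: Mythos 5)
Your proposal is correct and matches the paper exactly: the corollary is obtained by conjoining Theorem~\ref{thm:sound} ($\approx\,\subseteq\,\cong$) and Theorem~\ref{thm:complete} ($\cong\,\subseteq\,\approx$), with no further argument required. The paper likewise presents it as an immediate consequence of these two theorems.
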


\begin{remark}
A consequence of Thm.~\ref{thm:full-abstraction} and Rem.~\ref{rem:nondet} is that 
 our 
 bisimulation proof-technique  remains sound in a setting with
 more restricted contexts, where 
nondeterministic sensor updates 
are replaced 
by some specific model for sensors. 
\end{remark}



\section{Conclusions, related and future work}
\label{conc}
We have proposed a process calculus, called \cname,  to investigate the 
semantic theory of systems based on the Internet of Things paradigm.
The calculus is equipped with a simple \emph{reduction semantics\/}, to model the 
dynamics of systems in isolation, and  an \emph{extensional semantics} 
to emphasize the interaction of IoT systems
with the environment. The latter semantics has been 
used to define a \emph{labelled bisimilarity} which has been proved to be 
\emph{fully abstract} with respect to a natural notion of contextual equivalence. Our bisimilarity has been used to prove non-trivial 
system equalities.

To our knowledge,  paper~\cite{LBdF13} is the first 
 process calculus for IoT systems to 
capture the interaction between sensors, actuators and computing processes. 
Smart objects are represented as point-to-point communicating nodes of 
heterogeneous networks.  
The network topology is represented as a graph whose links can 
 be nondeterministically established or destroyed. 
The paper  contains a labelled transition system with two different kinds of 
transitions. The first one takes into account interactions with the physical environment, 
similarly to our physical transitions, but includes also topology changes.
The second kind of transition models nodes activities, mainly communications, similarly to our logical transitions. 
Then the paper proposes two notions of bisimilarity: 
one using only the first kind of transitions and equating systems from the point of view of the end user, and a second one using all transitions and equating systems  from the point of view of the other devices.

 We report here the main differences
between \cname\ and the IoT-calculus.  
In \cname\, we support timed behaviours, with desirable 
time and fairness properties.  
Both sensors and actuators in \cname\ are under the control
of a single entity, i.e.\ the controller process of the node where they are deployed. This was a 
security issue.  
The nondeterministic link entailment of the IoT-calculus makes the semantics of communication simpler 
than ours; on the other hand it does not allow  to enforce that a smart device should be either in a place or in another, but never in both.
\cname\ has a finer control of inter-node communications
as they depend  on nodes' distance and transmission range of channels.
Node mobility in \cname\  is 
timed 
constrained: in one time unit at most a fixed distance $\delta$ may be covered. 
Finally, Lanese et al.'s \emph{end-user bisimilarity}
 shares the same motivations of our bisimilarity. 
In the IoT-calculus, 
end users provide values to sensors and check actuators.
They  can also directly observe node mobility, 
but they cannot observe channel communication. 
Our bisimilarity may observe node mobility in an indirect manner: 
the movements of a  mobile node can be observed  if the node  either uses an actuator or transmits along a short-range channel or communicates its physical position. 
Lanese et al.'s End-user bisimilarity is not
 preserved by parallel composition. Compositionality 
is recovered by strengthening its discriminating ability.

Our calculus takes inspiration from algebraic models for wireless
 systems~\cite{MeSa06,NaHa06,Merro07,Godskesen07,GhaFokMov11,MBS11,MerSib13,KouPhi11,LM11,
SRS06,FGHMPT12,CHM15,BHJRVPP15}. All 
these models adopt  broadcast communication on  partial topologies, while we
 consider point-to-point communication, as in~\cite{LBdF13}.
Our way  of modelling network topology is taken from~\cite{MeSa06,Merro07}.
Paper~\cite{GodNanz09} provides formal models for node mobility depending on 
the passage of time. 
Prop.~\ref{prop:time-observation} was inspired by~\cite{CHM15}. 
A fully abstract observational theory for untimed ad hoc networks 
can be found in~\cite{Merro07,Godskesen07}. 
Paper~\cite{SRS06} provides a symbolic semantics
for ad hoc networks. 

Vigo et al.~\cite{VNN13} proposed a  calculus for wireless-based 
cyber-physical (CPS) systems 
endowed with a theory that allows modelling and reasoning about cryptographic primitives, together with explicit notions of communication failure and unwanted communication. One of the main goal of the paper is a faithful representation 
denial-of-service. However, 
as pointed out in~\cite{WuZh15},  the calculus does not provide a notion of network 
topology, local broadcast and behavioural equivalence. It also lacks 
a clear distinction between physical components (sensor and actuators) and
 logical ones (processes). Compared to~\cite{VNN13}, paper~\cite{WuZh15} 
introduces a static network topology and enrich the theory with an harmony theorem.

As already said,  \cname\ has some similarities with the \emph{synchronous
 languages} of the Esterel family~\cite{Esterel,Boussinot,Castellani,Amadio}.
In this setting, computations proceed in phases called “instants”, which are
 quite similar to our time intervals.  
For instance, 
our timed reduction semantics has many points in common with that
of a recent synchronous reactive language, $CRL$~\cite{Castellani}\footnote{The $CRL$ language does not support mobility.}.
The authors define
 two bisimulation equivalences.
The first bisimulation formalises a \emph{fine-grained}
 observation of programs: the observer
is viewed as a program, which is able to interact with the observed program
at any point of its execution. 
The second reflects a \emph{coarse-grained} observation of
programs: here the observer is viewed as part of the environment, 
which interacts with the observed program only at the start and the end of
 instants. 
The fine-grained bisimilarity is more in the style of a bisimulation 
for a process calculus. 
 Finally, the paper in~\cite{Amadio} presents a version of the $\pi$-calculus where communication is synchronous in the sense of reactive synchronous languages, thus integrating the reactive synchronous paradigm into a classical process calculus. A notion of labelled bisimilarity is introduced, which is then characterised as a contextual bisimilarity (these equivalences are close to the coarse-grained bisimilarity of~\cite{Castellani}).

Finally, \cname\ is somehow reminiscent of the SCEL language~\cite{Rocco2015}. 
A framework to model behaviour, knowledge, and data aggregation of Autonomic Systems.

To end the paper, we want to stress  the great potentialities of 
smart objects,  that are not only drivers for changes in terms of content and applications. Given their ability to potentially change in function as they can be digitally enhanced and ``upgraded'', they may acquire \emph{disruptive potentiality} that could lead to 
\emph{serious repercussions\/}~\cite{sicurezza}.
So, the next step of our research will be
 to investigate \emph{security aspects} of IoT
systems from a semantic point of view. There are many
 issues in this respect, such as: 
(i) ensuring continuity and availability in the provision of IoT-based services; 
(ii) preventing unauthorised accesses to users' sensitive information (e.g. login/passwords); (iii) forbidding untrusted operations (e.g. fake requests of deactivation of alarms); (iv)
ensuring traceability and profiling of unlawful processing; 
 (v) ensuring data protection and  individuals’ privacy  (e.g. personal data and health monitoring sensors); (vi) preventing the download of malicious code
via firmware and software upgrade.

\paragraph{Acknowledgements}
We thank Ilaria Castellani and Matthew Hennessy for their precious comments on 
an early draft.

\bibliographystyle{plain}
\bibliography{IoT_bib}

\appendix
\section{Technical Proofs}

\subsection{Proofs of Sec.~\ref{the-calculus}}
\setcounter{proposition}{0}
\setcounter{theorem}{0}

We start with the proofs of Sec.~\ref{red_sem}.
\begin{lemma}[Node time determinism]
\label{lem:timed}
If $\nodep n P \mu h \redtime \nodep {n'} {P'} {\mu'} {h'}$ and 
$\nodep n P \mu h \redtime \nodep {n''} {P''} {\mu''} {h''}$ then $n=n'=n''$, 
$P' \equiv P''$, $\mu = \mu'=\mu''$ and $\dist {h'} {h''} \leq 2 \delta$. 
\end{lemma}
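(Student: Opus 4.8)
The plan is to analyze the possible derivations of a timed reduction $\nodep n P \mu h \redtime M'$ for a single node and show that the target is essentially unique. First I would observe that a single node $\nodep n {\conf \I P} \mu h$ can only perform a timed reduction $\redtime$ via Rule~\rulename{timestat} (if $\mu = \stat$) or Rule~\rulename{timemob} (if $\mu = \mob$), possibly wrapped by Rule~\rulename{struct}. Inspection of those rules shows that the resulting network is again a single node with the same name $n$ and the same tag $\mu$, so $n = n' = n''$ and $\mu = \mu' = \mu''$ are immediate. The location is unchanged in the stationary case, and in the mobile case it may be any $k$ with $\dist h k \le \delta$; hence $\dist {h'} h \le \delta$ and $\dist {h''} h \le \delta$ (with equality $h' = h'' = h$ when $\mu = \stat$), and the triangle inequality for $\dist{\cdot}{\cdot}$ gives $\dist {h'} {h''} \le 2\delta$.

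The remaining and slightly more delicate point is establishing $P' \equiv P''$, i.e.\ that the continuation process is determined up to structural congruence. Here I would use the premises of Rules~\rulename{timestat} and \rulename{timemob}: the node is required (up to $\equiv$) to be in the normal form $\prod_i \timeout{\pi_i.P_i}{Q_i} \newpar \prod_j \sigma.R_j$ with no instantaneous reduction enabled, and the timed step rewrites it to $\prod_i Q_i \newpar \prod_j R_j$. To make this rigorous I would invoke time determinism at the process level of the underlying timed CCS (the intensional LTS of Tab.~\ref{tab:lts_processes} via Rules~\rulename{TimeNil}, \rulename{Delay}, \rulename{Timeout}, \rulename{TimeParP} is deterministic: a process performs $\trans{\sigma}$ to at most one process, up to $\equiv$). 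Combined with the Harmony Theorem (Thm.~\ref{thm:harmony}), which relates $\redtime$ with $\trans{\sigma}$ at the network level, this pins down $P'$ and $P''$ up to $\equiv$. Alternatively, one proves directly by induction on the structure of the process normal form that the $\sigma$-successor is unique up to $\equiv$, handling the structural axioms of Tab.~\ref{struc} (associativity/commutativity of $\newpar$, $\nil$ as unit, unfolding of $\fix X P$, and the conditional rules) as the base cases.

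The main obstacle I anticipate is the bookkeeping needed to reconcile the use of Rule~\rulename{struct}: the same node could be presented in several $\equiv$-equivalent syntactic forms before the timed rule applies, so one must argue that all such presentations lead to $\equiv$-equivalent successors. This amounts to checking that $\equiv$ is a strong bisimulation for $\redtime$ (restricted to single nodes), which is routine but requires care with the recursion-unfolding axiom $\fix X P \equiv P\subst{\fix X P}{X}$ and the guardedness assumption that ensures $X$ occurs only time-guarded in $P$. Once that is in place, the lemma follows by assembling the four conjuncts: $n = n' = n''$ and $\mu = \mu' = \mu''$ from the shape of the rules, $P' \equiv P''$ from process-level time determinism modulo $\equiv$, and $\dist{h'}{h''} \le 2\delta$ from the distance bound in Rule~\rulename{timemob} plus the triangle inequality.
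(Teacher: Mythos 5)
Your proposal is correct, and it actually contains two routes: the "alternative" you sketch at the end (a direct induction on the structure of the process, handling the structural axioms and the time-guarded unfolding of $\fix X P$ as you describe) is precisely what the paper does --- its entire proof is ``by structural induction on $P$''. Your primary route is genuinely different: you detour through the intensional LTS of Tab.~\ref{tab:lts_processes}, prove $\sigma$-determinism of processes there, and transfer it back via the Harmony Theorem. This is sound and not circular (the Harmony Theorem's proof does not rely on this lemma), but it is heavier machinery than needed and sits awkwardly with the paper's order of presentation, since Lemma~\ref{lem:timed} is an auxiliary result for Prop.~\ref{prop:timed} in Sec.~\ref{red_sem} while the Harmony Theorem only appears in Sec.~\ref{lab-sem}. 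What your LTS route buys is a cleaner separation of concerns (process-level determinism stated once, reused at the network level); what the paper's direct induction buys is self-containedness at the point where the lemma is needed. The remaining ingredients --- preservation of $n$ and $\mu$ by \rulename{timestat}/\rulename{timemob}, the $\delta$ bound per reduction, the triangle inequality giving $2\delta$, and the need to check that $\equiv$ commutes with $\redtime$ because of \rulename{struct} --- all match what the paper does (the triangle-inequality step is spelled out in the proof of Prop.~\ref{prop:timed}).
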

\begin{proof}
By structural induction on $P$. 
\end{proof}

\paragraph{\textbf{Proof of Prop.~\ref{prop:timed}.}}
The proof is by rule induction on why $M \redtime M'$.
\begin{itemize}

\item Let $M\redtime M'$ by an application of rule \rulename{timezero}. This case
is straightforward. 

\item Let  $M\redtime M'$ by an application of rule \rulename{timemob}: 
\[
\Txiombis
{\nodep n {{\conf \I {{\prod_{i \in I}\timeout{\pi_i.P_i}{Q_i}} | \prod_{j \in J} \sigma.P_j}}} {\mob} {k} \not\redtau \Q \dist k {k'} \leq \delta}
{\nodep n {{\conf \I {{\prod_{i \in I}\timeout{\pi_i.P_i}{Q_i}} | \prod_{j \in J} \sigma.P_j}}} {\mob} {k}
\redtime 
\nodep n {\conf \I {\prod_{i \in I}Q_i} | \prod_{j \in J}P_j} {\mob} {k'}}
\]
with 
$M = \nodep n {{\conf \I {{\prod_{i \in I}\timeout{\pi_i.P_i}{Q_i}} | \prod_{j \in J} \sigma.P_j}}} {\mob} {k}$
and 
$M' = \nodep n {\conf \I {\prod_{i \in I}Q_i} | \prod_{j \in J}P_j} {\mob} {k'}$. 
Suppose there exists $M''$ such that $M \redtime M''$. Notice that, due to 
its structure, network $M$ may 
perform a timed reduction only by an application of rule \rulename{timemob}.
Thus, by  rule \rulename{timemob} we would have: 
\[
\Txiombis
{\nodep n {{\conf \I {{\prod_{i \in I}\timeout{\pi_i.P_i}{Q_i}} | \prod_{j \in J} \sigma.P_j}}} {\mob} {k} \not\redtau \Q \dist k {k''} \leq \delta}
{\nodep n {{\conf \I {{\prod_{i \in I}\timeout{\pi_i.P_i}{Q_i}} | \prod_{j \in J} \sigma.P_j}}} {\mob} {k}
\redtime 
\nodep n {\conf \I {\prod_{i \in I}{Q''}_i} | \prod_{j \in J}{P''}_j} {\mob} {k''}}
\]
with 
$M'' =  \nodep n {\conf \I {\prod_{i \in I}{Q''_i} | \prod_{j\in J}{P''_j}}}{\mob}{k''}$. By Lem.~\ref{lem:timed} it follows that 
$\prod_{i \in I}{Q_i} | \prod_{j\in J}{P_j}
\equiv \prod_{i \in I}{Q''_i} | \prod_{j\in J}{P''_j}$. 
Moreover, by triangular inequality it holds that 
$\dist{k'}{k''}\leq\dist{k}{k'}+\dist{k}{k''}\leq 2\delta$.

\item Let  $M\redtime M'$ by an application of 
  rule \rulename{timestat}. This case is similar to the previous one. 

\item Let $M\redtime M'$   by an application of \rulename{timepar}: 
\[
\Txiombis{M_1\redtime M_1' \Q M_2\redtime M_2'\Q M_1 | M_2 \not\!\redtau}{M_1 | M_2 \redtime M_1' | M_2'}
\]
with $M=M_1 | M_2$ and $M'=M_1' | M_2'$.
By inductive hypothesis we know that $M_1'$ and $M_2'$ are  unique, up to structural 
congruence, and up to 
node locations. 
So is $M_1' | M_2'$. 

 \item Let $M\redtime M'$ by an application of either  rule \rulename{res} 
or rule  \rulename{timestruct}. These cases are similar to the previous one. 
\end{itemize}
\hfill\qed

In order to prove maximal progress we need two simple lemmas. 
\begin{lemma}
\label{lem:maxprog1}
If $\prod_{i\in I}\nodep {n_i} {\conf {\I_i} {P_i}}{\mu_i}{h_i}
\redtau M$ then $\prod_{i\in I}\nodep {n_i} {\conf {\I_i} {P_i|Q_i}}{\mu_i}{h_i}
\redtau N$, for some $N$. 
\end{lemma}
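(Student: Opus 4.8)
The plan is to reduce the statement, essentially, to a single application of rule \rulename{parp}. That rule already tells us that if $\prod_{i}\nodep{n_i}{\conf{\I_i}{P_i}}{\mu_i}{h_i}\redtau\prod_{i}\nodep{n_i}{\conf{\I'_i}{P'_i}}{\mu'_i}{h'_i}$, then attaching an arbitrary process $Q_i$ in parallel inside each node $n_i$ preserves the reduction (and even yields an explicit target $N$). Hence the one thing I need to check is that a $\redtau$-step out of a pure parallel composition of nodes again lands in a parallel composition of the \emph{same} nodes --- same names, locations, mobility tags and interfaces --- with only the resident processes changed; granted that, \rulename{parp} applies verbatim.

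To establish this shape-preservation fact cleanly, and to avoid fighting the structural-congruence rule \rulename{struct} of the reduction semantics, I would move to the intensional LTS via the Harmony Theorem (Thm.~\ref{thm:harmony}): from the hypothesis I get $\prod_{i}\nodep{n_i}{\conf{\I_i}{P_i}}{\mu_i}{h_i}\trans{\tau}M'$ for some $M'$. The network LTS of Tab.~\ref{tab:lts_networks} is syntax-directed and has no $\equiv$-rule, and the only rules that can derive a $\tau$-transition of a pure product of nodes are \rulename{Pos}, \rulename{SensRead}, \rulename{ActUnChg} and \rulename{LocCom} (for $|I|=1$), \rulename{GlbCom}, and \rulename{ParN} --- \rulename{Res} cannot apply since there is no top-level restriction, and the time rules emit $\sigma$. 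I would then run a routine rule induction proving, simultaneously for every $\nu\in\{\tau,\send{c}{v}{k},\rec{c}{v}{k}\}$, that whenever $\prod_{i}\nodep{n_i}{\conf{\I_i}{P_i}}{\mu_i}{h_i}$ has a $\nu$-transition, so does $\prod_{i}\nodep{n_i}{\conf{\I_i}{P_i|Q_i}}{\mu_i}{h_i}$: in the single-node base cases the premise is a process transition $P_1\trans{\lambda}P_1'$ with $\lambda\neq\sigma$, so $P_1|Q_1\trans{\lambda}P_1'|Q_1$ by the process rule \rulename{ParP} and the same network rule re-fires (its side conditions on $\I_1$ are untouched); the \rulename{ParN} case follows from the induction hypothesis on the smaller sub-product; and the \rulename{GlbCom} case follows by applying the $\send{c}{v}{k}$- and $\rec{c}{v}{k}$-cases of the induction to its two sub-products. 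A final use of Thm.~\ref{thm:harmony} turns the $\trans{\tau}$ so obtained for the augmented network back into the required $\redtau$.

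The only mildly delicate point --- and the reason for carrying the induction simultaneously over $\tau$-, $\send{c}{v}{k}$- and $\rec{c}{v}{k}$-transitions rather than over $\tau$ alone --- is the \rulename{GlbCom} case; but each auxiliary case is itself immediate, since a $\send{c}{v}{k}$-transition of a product of nodes arises from \rulename{Snd} at a single node (lifted by \rulename{ParN}) and so is preserved by exactly the same re-attaching argument, and dually for \rulename{Rcv}. Everything else is bookkeeping. An alternative that stays inside the reduction semantics would prove the shape-preservation fact directly by rule induction on $\redtau$; there the only troublesome case is \rulename{struct}, which forces one to reason up to $\equiv$ --- which is exactly what the LTS detour above is designed to avoid.
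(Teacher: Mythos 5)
Your argument is correct. The paper itself states this as one of two ``simple lemmas'' and supplies no proof at all, so there is nothing to compare against line by line; the content you correctly isolate is that rule \rulename{parp} cannot be applied blindly, since its premise requires the reduct to already be a product of the same nodes, while the hypothesis only delivers some $M$. Your detour through the intensional LTS is sound: the simultaneous induction over $\tau$-, send- and receive-transitions is exactly the right strengthening for the \rulename{GlbCom} case, the single-node base cases go through because the process rule \rulename{ParP} applies to every non-$\sigma$ label and attaching $Q_i$ disturbs none of the side conditions on $\I_i$, and the \rulename{Res} and time rules are indeed excluded. Two remarks. First, the appeal to Thm.~\ref{thm:harmony} inverts the paper's ordering of results (the lemma feeds Prop.~\ref{prop:maxprog} in Sec.~2, Harmony is proved for Sec.~4); this is not circular, since the appendix proof of Thm.~\ref{thm:harmony} uses neither maximal progress nor this lemma, but if you wanted to keep the development linear you would instead prove the shape-preservation fact directly by rule induction on $\redtau$, observing that every rule other than \rulename{struct} visibly maps a product of nodes to a product of the same nodes, and handling \rulename{struct} by working up to $\equiv$ (followed by one application of \rulename{struct} and \rulename{parp} to conclude). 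Second, your opening reduction to \rulename{parp} is ultimately a red herring in your own proof: the LTS induction already constructs the required transition of the augmented network outright, so the final step is just the converse direction of Harmony rather than an application of \rulename{parp}. Either packaging is fine; what you have is a complete proof of a statement the authors left unproved.
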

\begin{lemma}
\label{lem:maxprog2}
If $\nodep n P \mu h \not\redtime$ then for any process $Q$ we have 
$\nodep n {P|Q} \mu h \not\redtime$. 
\end{lemma}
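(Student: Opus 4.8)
The plan is to establish the contrapositive: \emph{if $\nodep{n}{P|Q}{\mu}{h}\redtime N$ for some $N$, then $\nodep{n}{P}{\mu}{h}\redtime N'$ for some $N'$}. I would argue through the intensional labelled semantics, using the Harmony Theorem (Thm.~\ref{thm:harmony}) to translate reductions into labelled transitions and back. First, $\nodep{n}{P|Q}{\mu}{h}\redtime N$ together with Thm.~\ref{thm:harmony} gives a network transition $\nodep{n}{P|Q}{\mu}{h}\trans{\sigma}N''$ for some $N''$. Since the subject here is a single node, the only network rules that can derive this transition are \rulename{TimeStat} (if $\mu=\stat$) and \rulename{TimeMob} (if $\mu=\mob$): every other rule with a $\sigma$-labelled conclusion has a subject of the form $M|N$, $\zero$ or $\res{c}{M}$. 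Inverting the applied rule produces a process transition $P|Q\trans{\sigma}P''$, the side condition $\nodep{n}{P|Q}{\mu}{h}\ntrans{\tau}$, and --- in the mobile case --- some $k$ with $\dist{h}{k}\leq\delta$.

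I would then invert $P|Q\trans{\sigma}P''$ once more. As the subject is syntactically a parallel composition, the only process rule whose conclusion has that shape and carries a $\sigma$-label is \rulename{TimeParP}; its premises give in particular a transition $P\trans{\sigma}P_1$ for some $P_1$ (and a symmetric one for $Q$, which is not needed). What remains is to check that the node $\nodep{n}{P}{\mu}{h}$ is itself $\tau$-stuck. This is where Lemma~\ref{lem:maxprog1} enters: if $\nodep{n}{P}{\mu}{h}\trans{\tau}$, then $\nodep{n}{P}{\mu}{h}\redtau$ by Thm.~\ref{thm:harmony}, hence $\nodep{n}{P|Q}{\mu}{h}\redtau$ by the single-node instance of Lemma~\ref{lem:maxprog1}, hence $\nodep{n}{P|Q}{\mu}{h}\trans{\tau}$ again by Thm.~\ref{thm:harmony}, contradicting the side condition recorded above; therefore $\nodep{n}{P}{\mu}{h}\ntrans{\tau}$.

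Finally I would put the pieces back together: from $P\trans{\sigma}P_1$ and $\nodep{n}{P}{\mu}{h}\ntrans{\tau}$, rule \rulename{TimeStat} fires when $\mu=\stat$, and rule \rulename{TimeMob} fires when $\mu=\mob$ with target location $k:=h$ (admissible since $\dist{h}{h}=0\leq\delta$), yielding $\nodep{n}{P}{\mu}{h}\trans{\sigma}\nodep{n}{P_1}{\mu}{h}$; one last application of Thm.~\ref{thm:harmony} gives $\nodep{n}{P}{\mu}{h}\redtime\nodep{n}{P_1}{\mu}{h}$, which is exactly the contrapositive. No step is individually hard --- the argument is a short chain of rule inversions glued by the Harmony Theorem --- so the only thing to be careful about is the ``only applicable rule'' claims at each inversion.

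If one prefers an argument that stays inside the reduction semantics and avoids the forward reference to Thm.~\ref{thm:harmony}, the same skeleton works, but now rules \rulename{timestat}/\rulename{timemob} force $P|Q$ to be \emph{syntactically} of the form $\prod_{i}\timeout{\pi_i.P_i}{Q_i}\newpar\prod_{j}\sigma.R_j$, and one must first prove a structural-congruence lemma: if $P|Q$ is structurally congruent to a process of that form, then so is $P$. That lemma --- in particular its recursion case, where time-guardedness is used to unfold $\fix{X}R$ into a non-recursive head before recursing --- is the genuine obstacle of the reduction-only route; the labelled route sidesteps it because rule \rulename{TimeParP} hands the decomposition over for free.
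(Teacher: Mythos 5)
Your proof is correct. The paper itself states Lemma~\ref{lem:maxprog2} (like Lemma~\ref{lem:maxprog1}) without any proof, so there is nothing to compare against line by line; but every inversion step you perform checks out: a single node can only derive a $\sigma$-transition via \rulename{TimeStat}/\rulename{TimeMob}, a syntactic parallel composition can only derive a $\sigma$-transition via \rulename{TimeParP} (the process LTS has no structural rule, so no other rule's conclusion matches), the $\tau$-stuckness of $\nodep{n}{P}{\mu}{h}$ follows from the side condition via Lemma~\ref{lem:maxprog1} and the Harmony Theorem, and taking $k:=h$ in \rulename{TimeMob} is legitimate since the paper explicitly allows $\dist h k =0$. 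The one point you gesture at but do not actually discharge is whether routing through Thm.~\ref{thm:harmony} is circular: this lemma feeds Prop.~\ref{prop:maxprog}, which is stated long before the Harmony Theorem, so you must check that the paper's proof of Thm.~\ref{thm:harmony} does not itself invoke maximal progress or either of these lemmas. It does not --- it rests only on the structural lemmas (Lem.~\ref{struc_p}, \ref{lem:struc-in-out}, \ref{struc_trans}) and mutual references among its own sub-results --- so your argument stands, but that verification should be stated rather than left implicit. Your closing remark correctly identifies the obstacle of the reduction-only alternative (a decomposition lemma for $\equiv$ on parallel processes, with the recursion case handled by time-guardedness), which is presumably the route the authors had in mind given the ordering of results; your labelled route trades that structural lemma for the forward dependency, which is a reasonable bargain once the non-circularity is checked.
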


\paragraph{\textbf{Proof of Prop.~\ref{prop:maxprog}.}}
The proof is by rule induction on why  $M\redi M'$.
\begin{itemize}
\item 
Let $M\redi M'$ by an application of  rule \rulename{pos}. 
Then $M = \nodep{n}{\conf \I {@(x).P}}{\mu}{h}$ and the 
 only rules that would allow $M$ to perform a timed reduction $\redtime$
are \rulename{timestat} and \rulename{timemob}. 
However, the premises of both rules are not satisfied by $M$.  Thus 
$M \not \redtime$. 

\item 
Let $M\redi M'$ by an application of  rule \rulename{sensread}. 
Then $M= \nodep n {\conf \I {\rsens x s.P}}{\mu}{h}$ and this case is 
similar to the first one. 

\item Let $M\redi M'$ by an application of  rule \rulename{actunchg}. 
Then $M= \nodep n {\conf \I {\wact v a.P}}{\mu}{h} $ and this case is similar
to the first one. 

\item Let $M\redi M'$ by an application of  rule \rulename{actchg}. 
Then $M= \nodep n {\conf \I {\wact v a.P}}{\mu}{h} $ and this case is similar
to the first one. 

\item 
Let $M\redi M'$ by  an application of rule \rulename{loccom}, 
 because $M \redtau M'$.  
Then $M = \nodep{n}{\conf \I 
{\timeout{\OUT c v .P}{R}
\newpar 
{\timeout{\LIN c x .{Q}}{S}} }}
{\mu}{h}$  and the 
 only rules that would allow $M$ to perform a timed reduction $\redtime$
are \rulename{timestat} and \rulename{timemob}. As $M \redtau M'$ none of these
rules can fire. Thus $M \not \redtime$. 
\item   
Let $M\redi M'$ by  an application of rule \rulename{globcom}, because $M \redtau M'$. Then $M = \nodep{n}{\conf \I {\timeout{\OUT c v .P}{R}}}{\mu}{h} \; | \;  \nodep{m}{\conf \I {\timeout{\LIN c x .{Q}}{S}}}{\mu}{k}$ and 
the 
 only rule that would allow $M$ to perform a timed reduction $\redtime$
is \rulename{timepar}. However, since $M \redtau M'$ this rule cannot
fire and $M\not \redtime$.  

\item 
Let $M\redi M'$ by an application of rule \rulename{parp}. This means 
that 
 \begin{center} 
$M = \prod_{i\in I}\nodep {n_i} {\conf {\I_i} {P_i | Q_i}}{\mu_i}{h_i}
\redi 
\prod_{i \in I}\nodep {n_i} {\conf {\I'_i} {P'_i| Q_i}}{\mu'_i}{h'_i}=M'$
\end{center}
because 
$\prod_{i \in I}\nodep {n_i} {\conf {\I_i} {P_i}}{\mu_i}{h_i} \redi 
\prod_{i \in I}\nodep {n_i} {\conf {\I'_i} {P'_i}}{\mu'_i}{h'_i}.$
By inductive hypothesis we have that $\prod_{i\in I}\nodep {n_i} {\conf {\I_i} {P_i}}{\mu_i}{h_i} \not\redtime$. We recall that rule \rulename{timepar} is the only 
one yielding timed reductions on parallel networks. Thus, if  $\prod_{i\in I}\nodep {n_i} {\conf {\I_i} {P_i}}{\mu_i}{h_i} \not\redtime$ it means that 
rule \rulename{timepar} could not be applied. 
There are only two possibilities. 
\begin{itemize}
\item Either $\prod_{i\in I}\nodep {n_i} {\conf {\I_i} {P_i}}{\mu_i}{h_i}
\redtau N$, for some $N$.  Then by Lem.~\ref{lem:maxprog1} we obtain $M \redtau N'$, for some $N'$. As rule \rulename{timepar} is the only rule 
yielding timed reductions from parallel networks, it follows  that 
$M \not\redtime$. 
\item Or $\nodep {n_j} {\conf {\I_j} {P_j}}{\mu_j}{h_j} \not\redtime$, for
some $j \in I$.  Then by Lem.~\ref{lem:maxprog2} we have $\nodep {n_j} {\conf {\I_j} {P_j | Q_j}}{\mu_j}{h_j} \not\redtime$. As rule \rulename{timepar} is the only rule 
yielding timed reductions from parallel networks, it follows that 
$M \not\redtime$. 
\end{itemize}
\item 
Let $M\redi M'$ by an application of rule \rulename{parn}. 
Then $M = M_1 | M_2$ for some $M_1$ and $M_2$, with $M_1 \red_{\omega} M'_1$, 
$\omega \in \{ \tau , a \}$ for some actuator name $a$, and 
$M = M_1 | M_2 \red_{\omega} M'_1 | M_2 = M'$. By rule induction 
the sub-network $M_1$ cannot perform a timed reduction $\redtime$. 
As \rulename{timepar} is the only rule for deriving timed reduction 
of parallel networks it follows that also $M$ cannot perform a 
timed reduction $\redtime$. 

\item 
Let $M\redi M'$ by an application of rules \rulename{res} 
 and \rulename{struct}.  These cases are similar to the previous one.  
\end{itemize}
\hfill\qed


\paragraph{\textbf{Proof of Prop.~\ref{prop:patience}.}}
The proof is by contradiction.
We suppose there is no $N$ such that 
$M \redtime N$ and we prove that there is $M'$ such that $M \redi M'$. 
We proceed by induction on the structure of $M$. 
\begin{itemize}
\item 
 Let $M = \zero$. This case is not admissible because by an application 
or rule \rulename{timezero} we derive $M \redtime M$.
\item Let $M = \nodep n P \mu h$.  As $M \not\redtime$  and \rulename{timestat} and 
\rulename{timemob} are the 
the only rules that could be used to derive a timed reduction from $M$, 
it follows that there are two possibilities. 
\begin{itemize}
\item Either $M \redtau M'$, for some $M'$, and we are done.
\item Or $P$ has not the proper structure for applying rule \rulename{timestat}
or rule \rulename{timemob}. This means that $P \equiv P_1 | P_2$, with $P_1= \rho. P'_1$  and 
$\rho \in \{ @(x), s?(x), a!v \}$. In this case, by an application 
of one among the rules \rulename{pos}, \rulename{sensread}, 
\rulename{actunchg}, and \rulename{actchg}, followed by an application of rule 
\rulename{parp}, we can infer $M \redi M'$, for some $M'$. 
\end{itemize}
\item Let $M = M_1 | M_2$, for some $M_1$ and $M_2$. As $M \not\redtime$ 
and  \rulename{timepar}
is the only rule which could be used to derive a timed reduction from $M$, 
it follows that there are two possibilities. 
\begin{itemize}
\item Either $M \redtau M'$, for some $M'$; hence $M \redi M'$ and we are done. 
\item Or at least one among $M_1$ and $M_2$ cannot perform a timed reduction. 
Suppose $M_1 \not\redtime$; by inductive hypothesis there is $M'_1$ such 
that $M_1 \redi M'_1$. By an application of rule \rulename{parn} we derive
$M \redi M'_1 | M_2$. 
\end{itemize}
\item Let $M = \res c {M_1}$. This case requires an easy application of the 
inductive hypothesis. 
\end{itemize} 
\hfill\qed

Next step is the proof of the well-timedness property. 
We need a couple of technical lemmas. 
\begin{definition}
Let us define a function $\pfxi{}$ that given a process $P$ returns
an upper bound to the number of the \emph{untimed prefixes} that can give rise
to an instantaneous reduction when the process is plugged in a node. 
\[
\begin{array}{l@{\hspace*{10mm}}l@{\hspace*{10mm}}l}
\pfxi{\nil}\deff 0  &  \pfxi{\sigma.P} \deff 0 &\pfxi{\rho.P} \deff 1 + \pfxi{P} \Q(\textrm{if $\rho\neq \sigma$)}\\
\pfxi{X}\deff\infty  &  \pfxi{\fix{X}P}\deff\pfxi{P} &
\pfxi{\timeout{\pi.P}Q}\deff 1+\pfxi{P}\\
\multicolumn{3}{c}{
\pfxi{\match b P Q}\deff \operatorname{max}(\pfxi{P},\pfxi{Q})
\Q\Q \Q\Q \pfxi{P\newpar Q}\deff \pfxi{P} + 
\pfxi{Q} }
\end{array}
\]
\end{definition}

\begin{lemma}
\label{lem:wt1}
For any closed process $P$,  $\pfxi{P}$ is finite. 
\end{lemma}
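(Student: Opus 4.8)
The plan is to prove the lemma by structural induction, but on a statement strictly stronger than the one for closed processes: the naive induction breaks on the recursion case, since the body $P$ of $\fix X P$ need not be closed. The clause $\pfxi{X}\deff\infty$ is the only source of infiniteness, and it is harmless precisely because $\pfxi{}$ never inspects a time-guarded subterm. Accordingly I would first generalise the statement to: \emph{if $P$ is a process in which every occurrence of every free process variable lies in a time-guarded position, then $\pfxi{P}$ is finite}. Lemma~\ref{lem:wt1} is then the instance where $P$ is closed, so that the hypothesis holds vacuously.

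The induction on the structure of $P$ rests on one observation read off the defining equations of $\pfxi{}$: it recurses only into subterms that are \emph{not} time-guarded. Indeed, $\pfxi{\sigma.Q}\deff 0$ discards $Q$ entirely, $\pfxi{\timeout{\pi.Q_1}Q_2}\deff 1+\pfxi{Q_1}$ discards the time-guarded continuation $Q_2$, and every other clause ($\nil$; $\rho.Q$ with $\rho\neq\sigma$; $Q_1\newpar Q_2$; $\match b{Q_1}{Q_2}$; $\fix X Q$) is either a base case or recurses into subterms reached without crossing a $\sigma$-prefix or a timeout. Hence the hypothesis ``every free process variable occurs only time-guarded'' is inherited by each recursive call: for a prefix, a parallel composition or a conditional because the surrounding operator adds no guard; and for $\fix X Q$ because $X$ occurs only time-guarded in $Q$ by the syntactic constraint on recursion, while every other free variable of $Q$ is also free in $\fix X Q$ and hence, by hypothesis, already occurs only time-guarded. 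The arithmetic is then immediate: in the base cases one gets $\pfxi{P}=0$; in the compound cases $\pfxi{P}$ is $1+$, or a finite sum or maximum of, values that are finite by the inductive hypothesis; and the case $P=X$ cannot arise under the hypothesis, since then $X$ would be a free process variable of $P$ occurring at top level, i.e.\ not time-guarded.

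The only delicate point, and the step I expect to require the most care to state precisely, is the $\fix X P$ case: the strengthened hypothesis is exactly what lets the time-guardedness of $X$ inside $P$ (which is built into the syntax of recursion) combine with the inductive hypothesis applied to $P$ --- with the plain ``closed process'' statement no inductive hypothesis would be available for the non-closed body. Once this generalised statement is in place the remaining case analysis against the defining equations of $\pfxi{}$ is entirely routine, and specialising to closed processes, which trivially satisfy the hypothesis, yields the lemma.
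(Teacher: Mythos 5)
Your proof is correct and follows essentially the same route as the paper's: structural induction combined with the observation that $\pfxi{}$ never recurses into a time-guarded subterm, so the clause $\pfxi{X}\deff\infty$ is never reached when recursion is time-guarded. Your explicit strengthening of the induction hypothesis to processes whose free process variables occur only in time-guarded positions is exactly the rigorous form of what the paper's one-line treatment of the $\fix X P$ case implicitly relies on (the paper inducts on closed $P$ but then appeals to the open body $P_1$, which your generalised statement repairs).
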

\begin{proof} The proof is by structural induction on $P$. The only delicate
case is when $P=\fix{X}{P_1}$, as $P_1$ may contain the process variable $X$ and $\pfxi{X}=\infty$. However,  in our calculus we only admit time-guarded recursion. Thus, 
 $X$ may occur in $P_1$ only if guarded by at least one $\sigma$ prefix, and 
$\pfxi{\sigma.Q}=0$, for any $Q$. It follows that $\pfxi{\fix{X}{P_1}} \in \mathbb{N}$, for any $P_1$. 
\end{proof}

\begin{definition}
\label{def:redi}
Let us define a function $\ri{}$ that given a network $M$ returns
an upper bound to the number of  consecutive instantaneous reductions
that $M$ may perform:
\[
\begin{array}{rcl@{\hspace*{2cm}}rcl}
 \ri{\zero} & \deff & 0 &	
 \ri{\nodep{n}{\conf \I P}{\mu}{h}} & \deff &  \pfxi{P}\\[3pt]
\ri{\res c M} & \deff &  \ri{M} & 
\ri{M | N} & \deff & \ri{M} + \ri{N} \enspace .
\end{array}
\]
\end{definition}

\begin{lemma}For any network $M$, $\ri{M}$ is finite. 
\label{lem:wt2}
\end{lemma}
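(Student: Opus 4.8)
\textbf{Proof plan for Lemma~\ref{lem:wt2}.}
The plan is a straightforward structural induction on the network $M$, mirroring the four clauses in Definition~\ref{def:redi} and appealing to Lemma~\ref{lem:wt1} at the leaves. First I would dispatch the base cases. If $M = \zero$, then $\ri{M} = 0 \in \nat$, so there is nothing to prove. If $M = \nodep{n}{\conf \I P}{\mu}{h}$, then $\ri{M} = \pfxi{P}$; since we always work with closed networks, $P$ is a closed process, and Lemma~\ref{lem:wt1} gives $\pfxi{P} \in \nat$, hence $\ri{M}$ is finite.

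For the inductive step, suppose the claim holds for all strict subterms of $M$. If $M = \res c M_1$, then $\ri{M} = \ri{M_1}$, which is finite by the induction hypothesis applied to $M_1$. If $M = M_1 \mid M_2$, then $\ri{M} = \ri{M_1} + \ri{M_2}$; by the induction hypothesis both $\ri{M_1}$ and $\ri{M_2}$ are natural numbers, and the sum of two natural numbers is again a natural number, so $\ri{M}$ is finite. Since these cases exhaust the grammar of networks in Tab.~\ref{tab:syntax}, the induction is complete.

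There is no real obstacle here: the only point that needs a word of care is that the node case relies on $P$ being closed so that Lemma~\ref{lem:wt1} (whose proof in turn uses the time-guardedness of recursion to bound $\pfxi{\fix X {P_1}}$) is applicable — and this is guaranteed by our standing convention that all networks considered are closed and well-formed. \hfill\qed
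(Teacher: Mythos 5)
Your proof is correct and follows exactly the same route as the paper's: structural induction on $M$, with the only nontrivial case being the node case, which is discharged by Lemma~\ref{lem:wt1} applied to the (closed) process $P$. Your version merely spells out the routine cases that the paper leaves implicit.
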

\begin{proof}
The proof is by structural induction on $M$. The proof is straightforward. 
The only interesting case is when $M$ is a node; here the result follows from 
Lem.~\ref{lem:wt1}.
\end{proof}

\paragraph{\textbf{Proof of Prop.~\ref{prop:welltime}.}}
The proof is by induction on the structure of $M$ and follows directly
from Lem.~\ref{lem:wt2}.\hfill\qed


\subsection{Proofs of Sec.~\ref{case_study}}
Here we prove Prop.~\ref{caseproperty} which formalises the properties 
of the system defined in Table~\ref{case_study_home}.
\begin{lemma} 
\label{lem:case-study}
If $Sys \; (\redi^{\ast}\redtime)^{\ast} \; Sys'$ then  
$Sys' \equiv Phone' | Home'$ where:
\begin{itemize}
\item $Phone' =  \nodep{n_P}{\conf {\I_P}{BoilerCtrl 
\newpar LightCtrl}}{\mob}{l'}$, for some $l'$, with $\I_P(mode)=\mathsf{auto}$
\item $Home' = LR_1 | LR_2 |  BoilerMng$, for some $LR_1$ and $LR_2$,
\item 
$BoilerMng =\nodep{n_{B}}{\conf {\I_{B}}{Auto}}{\stat}{loc2}$, with 
$\I_B(temp)=\Theta$.  
\end{itemize}
\end{lemma}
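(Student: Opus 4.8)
The plan is to argue by induction on the number $n$ of iterations in $Sys\;(\redi^{\ast}\redtime)^{n}\;Sys'$, i.e.\ on the number of completed \emph{rounds}, a round being a maximal burst of instantaneous reductions followed by one timed reduction. It is convenient to strengthen the invariant slightly: besides the three claimed items, one also records that $\I_B$ is \emph{exactly} the interface of the initial $BoilerMng$ (so, in particular, $\I_B(boiler)=\mathsf{off}$) and that each $LR_j$ has the form $\nodep{n_j}{\conf{\I_j'}{R_j}}{\stat}{h_j}$, where $h_j$ is the fixed location of the $j$-th light manager and $R_j\in\{L_j,\ \wact{\mathsf{off}}{light_j}.L_j\}$ up to $\equiv$; this extra data is what makes the inductive step go through, though it is not needed in the conclusion. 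Two standing facts are used freely: first, \emph{no reduction rule modifies a sensor value}, hence $\I_P(mode)=\mathsf{auto}$ and $\I_B(temp)=\Theta$ in every derivative of $Sys$; second, reductions neither add nodes nor introduce channel restrictions, so every derivative of $Sys$ is $\equiv$ to the parallel composition of exactly the four nodes $n_P,n_1,n_2,n_B$.

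The base case $n=0$ is immediate, with $l'=out$ and $LR_j$, $BoilerMng$ as given. For the inductive step, assume $Sys''$ satisfies the strengthened invariant and $Sys''\redi^{\ast}\redtime Sys'$. The main tool is maximal progress (Prop.~\ref{prop:maxprog}): the concluding $\redtime$ can fire only when no instantaneous reduction is enabled, so every urgent intra-node prefix present in $Sys''$ — and every one that becomes enabled during the round — must be consumed before the time step. Moreover a node's location does not change during the instantaneous part of the round (no $\redi$-rule alters a location), so whether the phone lies within range of $L_j$ — equivalently, since $\rng{c_j}=0$, whether the phone's location in $Sys''$ equals $h_j$ — is fixed within the round. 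Finally, channel $b$ is used only by $BoilerCtrl$ and by $Auto/Manual$, each $c_j$ only by the $j$-th component of $LightCtrl$ and by $L_j$, and there are no intra-node local channels; hence the round decomposes into three non-interfering two-party exchanges — one on $b$ and one on each $c_j$ — which may be analysed independently.

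I then trace the forced evolution of each exchange. On $b$: $BoilerCtrl$ must perform its urgent read of $mode$, obtaining $\mathsf{auto}$; this enables the inter-node communication on the Internet channel $b$ (whose range is unbounded, so the two nodes are always within range) with the waiting input of $Auto$, and maximal progress forces it to occur before the time step; $Auto$ thus receives $\mathsf{auto}$, so its guard $[x=\mathsf{man}]$ fails and (by $\equiv$) it continues as $TempCtrl$, performs its urgent read of $temp$ obtaining $\Theta$, so its guard $[t<\Theta]$ fails and it continues as $\wact{\mathsf{off}}{boiler}.\sigma.Auto$, performs the urgent write of $\mathsf{off}$ on $boiler$ — which by rule \rulename{actunchg} leaves $\I_B$ untouched, since $\I_B(boiler)=\mathsf{off}$ — and reaches $\sigma.Auto$; meanwhile $BoilerCtrl$ reaches $\sigma.BoilerCtrl$. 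On each $c_j$: $L_j$ first flushes a possibly pending $\wact{\mathsf{off}}{light_j}$, reaching its input prefix; if the phone's location in $Sys''$ is $h_j$, the send of the $j$-th component of $LightCtrl$ synchronises with it (again forced by maximal progress), after which $L_j$ performs the urgent write of $\mathsf{on}$ on $light_j$ and reaches $\sigma.L_j$ while that component of $LightCtrl$ reaches $\sigma.L_j$'s counterpart; otherwise no communication on $c_j$ is possible ($L_j$ is out of range and no other process uses $c_j$) and both sides stay at their timeouts. Once all instantaneous reductions are exhausted the $\redtime$ fires — by rule \rulename{timemob} for the mobile phone and by rule \rulename{timestat} for the three stationary nodes, combined via rule \rulename{timepar} — turning every $\sigma.P$ into $P$ and every surviving top-level timeout into its else-branch, and possibly relocating the phone to some $l'$ with $\dist{l''}{l'}\leq\delta$. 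Reading off the result: the phone's process is $\equiv BoilerCtrl\newpar LightCtrl$ and its interface is unchanged; the $n_B$-node is back to $\nodep{n_B}{\conf{\I_B}{Auto}}{\stat}{loc2}$ with $\I_B$ unchanged, so in particular $\I_B(temp)=\Theta$; and each $n_j$-node is again of the form $\nodep{n_j}{\conf{\I_j'}{R_j}}{\stat}{h_j}$ with $R_j\in\{L_j,\wact{\mathsf{off}}{light_j}.L_j\}$. Since $Sys'$ is $\equiv$ to the parallel composition of its four nodes, taking $Phone'$ to be the $n_P$-node, $LR_j$ the $n_j$-node and $Home'=LR_1|LR_2|BoilerMng$ yields the strengthened — hence also the stated — claim.

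The main obstacle is the maximal-progress bookkeeping across arbitrary interleavings: one must check that in \emph{every} $\redi$-maximal state reachable from $Sys''$ within the round each of the three sub-exchanges has run to completion — for instance that as long as $BoilerCtrl$ is not yet at a $\sigma$-prefix some instantaneous reduction is enabled (its own sensor read, or, once that read has happened, the communication on $b$ with the still-waiting $Auto$), and similarly along $Auto$'s residual chain and along each light pair. This reduces to a finite case analysis over the few process residuals of $BoilerCtrl$, $Auto$, $LightCtrl$ and $L_j$, made tractable precisely by the channel-disjointness observed above, which keeps the three exchanges from blocking one another and lets one dispatch them one at a time.
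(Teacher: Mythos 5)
Your proof is correct and follows essentially the same route as the paper's: induction on the number of $(\redi^{\ast}\redtime)$-rounds, exploiting that reductions never change sensor values, tracing the forced (by maximal progress) chain of urgent actions in the phone--boiler exchange back to $\sigma.BoilerCtrl$ and $\sigma.Auto$ with the interface unchanged, and leaving the light-manager components generic since the lemma places no constraint on them. Your explicit strengthening of the invariant (recording $\I_B(boiler)=\mathsf{off}$ and the two possible residuals of each $L_j$) and the channel-disjointness observation are only made implicitly in the paper, but they do not change the substance of the argument.
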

\begin{proof}
The proof is by mathematical induction on the integer $j$ such that 
$Sys \; (\redi^{\ast}\redtime)^{j} \; Sys'$. 

The case $j=0$ is trivial. 

Let us move on the inductive case. Let 
$Sys \; (\redi^{\ast}\redtime)^{j} \; Sys_1$, for $j>0$.  By inductive
hypothesis we have:
$Sys_1 \equiv Phone_1 | Home_1$ where:
\begin{itemize}
\item $Phone_1 =  \nodep{n_P}{\conf {\I_P}{BoilerCtrl \newpar
 LightCtrl}}{\mob}{l_1}$, for some $l_1$, with $\I_P(mode)=\mathsf{auto}$
\item $Home_1 = LR_1 | LR_2  | BoilerMng$, for some $LR_1$ and  $LR_2$,
\item 
$BoilerMng =\nodep{n_{B}}{\conf {\I_{B}}{Auto}}{\stat}{loc2}$,  with 
$\I_B(temp)=\Theta$.  
\end{itemize}
We  recall that sensor changes 
are not modelled in the reduction semantics as they require the 
intervention of the physical environment. So, the value of these 
sensor will remain unchanged during the reduction sequence. 
Thus, we want to show that whenever $Sys_1 \: \redi^{\ast}\: \redtime Sys'$, 
then $Sys'$ has still the same structure as $Sys_1$. 
Let us consider a portion of $Sys_1$ composed by the phone 
and the boiler manager. The, we have the following sequence 
of instantaneous reductions. We recall that $\I_P(mode) = \mathsf{auto}$ and 
$\I_B(temp)=\Theta$. 
\[
\begin{array}{rl}
\multicolumn{2}{l}{Phone_1 \q \big| \q BoilerMng}\\[3pt]
\q = &  
 \nodep{n_P}{\conf {\I_P}{BoilerCtrl \newpar LightCtrl}}{\mob}{l_1} \q \big| \q \nodep{n_{B}}{\conf {\I_{B}  }{Auto}}{\stat}{loc2}  \\[3pt]
\Q \redi 
& \nodep{n_P}{\conf {\I_P}{\timeout{\OUT {b}{\mathsf{auto}} .\sigma. BoilerCtrl}{\ldots} \newpar LightCtrl}}{\mob}{l_1} \q \big| \q \nodep{n_{B}}{\conf {\I_{B} }{Auto}}{\stat}{loc2} \\[3pt]  
\redi & \nodep{n_P}{\conf {\I_P}{\sigma. BoilerCtrl \newpar LightCtrl}}
{\mob}{l_1} \q \big| \q \nodep{n_{B}}{\conf {\I_{B} }{TempCtrl }}{\stat}{loc2}   \\[3pt]
\redi
& \nodep{n_P}{\conf {\I_P}{\sigma. BoilerCtrl \newpar LightCtrl}}{\mob}{l_1} 
\q \big| \q \nodep{n_{B}}{\conf {\I_{B}}{\wact{\mathsf{off}}{boiler}.\sigma.Auto}}{\stat}{loc2}  \\[3pt]
 \redi
& \nodep{n_P}{\conf {\I_P}{\sigma. BoilerCtrl \newpar LightCtrl}}{\mob}{l_1} 
\q \big| \q \nodep{n_{B}}{\conf {\I_{B}[boiler \mapsto \mathsf{off}] }{\sigma.Auto }}{\stat}{loc2} \\[3pt]
=
& \nodep{n_P}{\conf {\I_P}{\sigma. BoilerCtrl \newpar LightCtrl}}{\mob}{l_1} 
\q \big| \q \nodep{n_{B}}{\conf {\I_{B}}{\sigma.Auto}}{\stat}{loc2}
\end{array}
\]
 Now, both
the phone and the boiler manager can only perform a timed reduction.
However, the whole system may have further instantaneous reductions depending whether the 
phone is in position to interact with the light managers of the house. 
In any case, thanks to (i) \emph{well-timedness} (Prop.~\ref{prop:welltime}), 
(ii)
\emph{patience} (Prop.~\ref{prop:patience}), (iii) 
rule \rulename{parn}, (iv) rule \rulename{struct} we will have a reduction
sequence  
of the form: 
\[
\begin{array}{l}
Phone_1 | LR_1 | LR_2   | BoilerMng\\
\Q\Q\Q\redi^{\ast} \: \redtime \; Phone' |  LR_1' | LR_2' | BoilerMng
\end{array}
\]
where $\I_P(mode)=\mathsf{auto}$ and $\I_B(temp) = \Theta$ (the reduction 
semantics cannot change sensor values) and $Phone'$ is exactly as $Phone_1$ except for the fact that is located at a 
possibly new location $l'$, with $\dist {l_1} {l'}  \leq 1$. 
\end{proof}

\paragraph{\textbf{Proof of Prop.~\ref{caseproperty}.}}
From Lem.~\ref{lem:case-study} we know that $Sys'$ preserves  
the structure of $Sys$ and also the value of its 
sensors. Let us prove the four cases of the proposition,  one by one. 
\begin{enumerate}
\item Let us consider the evolution of $Sys'[mode \mapsto \mathsf{man}]$. 
By inspection of the definitions in Table~\ref{case_study_home} it is 
easy to derive that 
\[
\begin{array}{rl}
\multicolumn{2}{l}{Sys'[mode \mapsto \mathsf{man}]}\\
\equiv & Phone'[mode \mapsto \mathsf{man}] | LR_1 | LR_2 |   \nodep{n_{B}}{\conf {\I_{B}}{Auto}}{\stat}{loc2}\\
\redi^{\ast}  \redtime  & Phone'[mode \mapsto \mathsf{man}] |  LR'_1 | LR'_2 |   \nodep{n_{B}}{\conf {\I'_{B}}{Manual}}{\stat}{loc2}
\end{array}
\]
with $\I'_B(boiler)= \mathsf{on}$. 
\item Let us consider the evolution of $Sys'[temp \mapsto t]$, with 
$t < \Theta$. We spell out this case in more detail. 
We recall that the sensor $mode$ of the phone is set to $\mathsf{auto}$. 
We also recall that by applying rules \rulename{parn} and \rulename{struct}, if a parallel component can  
execute an instantaneous reduction then the whole network can 
execute the same reduction. 
Thus, in following we concentrate on the reductions deriving from 
the  phone and 
the boiler manager when the the environment changes the value of sensor
 $temp$ to a value $t < \Theta$.
\[
\begin{array}{rl}
\multicolumn{2}{l}{Phone' \q \big| \q BoilerMng[temp \mapsto t]}\\[3pt]
\q = &  
 \nodep{n_P}{\conf {\I_P}{BoilerCtrl \newpar LightCtrl}}{\mob}{l'} \q \big| \q \nodep{n_{B}}{\conf {\I_{B}  [temp \mapsto t] }{Auto}}{\stat}{loc2}  \\[3pt]
\Q \redi 
& \nodep{n_P}{\conf {\I_P}{\timeout{\OUT {b}{\mathsf{auto}} .\sigma. BoilerCtrl}{\ldots} \newpar LightCtrl}}{\mob}{l'} \q \big| \q \nodep{n_{B}}{\conf {\I'_{B} }{Auto}}{\stat}{loc2} \\[3pt]  
\redi & \nodep{n_P}{\conf {\I_P}{\sigma. BoilerCtrl \newpar LightCtrl}}{\mob}{l'} \q \big| \q \nodep{n_{B}}{\conf {\I'_{B} }{TempCtrl }}{\stat}{loc2}   \\[3pt]
\redi
& \nodep{n_P}{\conf {\I_P}{\sigma. BoilerCtrl \newpar LightCtrl}}{\mob}{l'} 
\q \big| \q \nodep{n_{B}}{\conf {\I'_{B}}{\wact{\mathsf{on}}{boiler}.\sigma.Auto}}{\stat}{loc2}  \\[3pt]
 \redi
& \nodep{n_P}{\conf {\I_P}{\sigma. BoilerCtrl \newpar LightCtrl}}{\mob}{l'} 
\q \big| \q \nodep{n_{B}}{\conf {\I'_{B}[boiler \mapsto \mathsf{on}] }{\sigma.Auto }}{\stat}{loc2} \\[3pt]
=
& \nodep{n_P}{\conf {\I_P}{\sigma. BoilerCtrl \newpar LightCtrl}}{\mob}{l'} 
\q \big| \q \nodep{n_{B}}{\conf {\I''_{B}}{\sigma.Auto}}{\stat}{loc2}
\end{array}
\]
where $\I''_B(temp)=t < \Theta$ and $\I''_B(boiler)=\mathsf{on}$. Now, both
the phone and the boiler manager can only perform a timed reduction.
However, the whole system may have further instantaneous reductions depending whether the 
phone is in position to interact with the light managers of the house. 
In any case, thanks to \emph{well-timedness} (Prop.~\ref{prop:welltime}) and 
\emph{patience} (Prop.~\ref{prop:patience}) we will have a reduction
sequence  
of the form: 
\[
Sys'[temp \mapsto t] \redi^{\ast} \: \redtime \; Phone'' |  LR'_1 | LR'_2 |   \nodep{n_{B}}{\conf {\I''_{B}}{Auto}}{\stat}{loc2}
\]
where $\I''_B(temp)=t$, $\I''_B(boiler)= \mathsf{on}$, and the mobile phone may have moved to a new location $l''$, with $\dist {l'} {l''} =1$. 
\item Let us consider the evolution of $Sys'[temp \mapsto t]$, with 
$t \geq \Theta$. Here, similarly to the previous case, we can derive:
\[
\begin{array}{l}
Sys'[temp \mapsto t] \equiv Phone' | LR_1 | LR_2 |  \nodep{n_{B}}{\conf 
{\I_{B}[temp \mapsto t]}{Auto}}{\stat}{loc2}\\[3pt]
\Q\Q\Q\redi^{\ast} \: \redtime \; Phone'' |  LR'_1 | LR'_2 |  \nodep{n_{B}}{\conf {\I'_{B}}{Auto}}{\stat}{loc2}
\end{array}
\]
with $\I'_B(temp)=t$ and $\I'_B(boiler)= \mathsf{off}$. 

\item We prove only the implication from left to right. The other is similar. 
We know that $Sys' \redi^{\ast} Sys''\downarrow_{\wact {\mathsf{on}} {light_1}}$. 
By Lem.~\ref{lem:case-study} we know the structure of $Sys'$.
We recall that initially,  in $Sys$, the actuator $light_1$ is set to $\mathsf{off}$. Notice also that this actuator is exclusively managed by the $LightMng1$
component, via the process $L_1$, running at the stationary node $n_1$, 
located al $loc1$. More precisely, the 
actuator $light_1$ can be modified by $L_1$ only after a synchronisation
at the short-range channel $c_1$. We recall that  $\rng {c_1} = 0$. 
We also recall that mobile nodes can  change their location only 
by executing a timed reduction via rule \rulename{timemob}. We  fixed 
$\delta=1$, 
which is the maximum distance that a mobile node can afford within a time unit. Thus, 
if  $Sys' \redi^{\ast} Sys''\downarrow_{\wact {\mathsf{on}} {light_1}}$
there are two possibilities:
\begin{itemize}
\item either the mobile phone  is currently located at $loc1$; 
\item or the mobile phone  was located at $loc1$ in the previous time interval,
and in the current time interval it is in a location $l'$,
 with $\dist {l'} {loc_1}=1$, 
as $\delta=1$. Notice that the phone cannot be farther than that otherwise
the timeout in $L_1$ would have already switched off  $light_1$.
\end{itemize}

In the first case, the light manager $LightMng2$, located at $loc4$, has necessarily set the actuator
$light_2$ to $\mathsf{off}$. This is because $\rng {c_2} = 0$, 
$\dist {loc1}{loc4} =3$, and the only manner to switch on $light_2$ is
to place the mobile phone at $loc4$. However, as the mobile phone is 
currently at $loc1$, and $\delta=1$, this could have happened only 3 time instants ago. By that time, the timemout in $LightMng2$ (more precisely in $L_2$) has already 
switched off the light. 

The second case, when the mobile phone is currently located at some location 
$l'$, with $\dist {l'} {loc_1}=1$, is similar. This is because $\dist {loc1}{loc4} =3$, and by triangular inequality $\dist {l'} {loc4} \geq 2$. Thus, 
the phone is far enough to ensure that timeout of $L_2$ already fired to 
switch off $light_2$. 
\end{enumerate}
\hfill\qed

\paragraph*{\textbf{Proof of Prop.~\ref{prop:SYS-barb}.}}
By Prop.~\ref{prop:SYS-bis} and Thm.~\ref{thm:sound}. 
\hfill\qed


\subsection{Proofs of Sec.~\ref{lab-sem}}
This section is devote to the proof of the \emph{Harmony Theorem\/}, i.e.\ Thm.~\ref{thm:harmony}. We start with a  technical lemma that provides
the structure of a process depending on its possible actions.

\begin{lemma}
\label{struc_p}
Let $P$ be a process.
\begin{enumerate}

\item  
\label{struc_p_time}
If $P\trans{\sigma}P'$ then 
$P\equiv\prod_{i\in I} \timeout{\pi_i.P_i}Q_i\newpar \prod_{j\in J} \sigma.P_j$
and
$P'\equiv \prod_{i\in I} Q_i\newpar \prod_{j\in J}  P_j$, for appropriate index
sets, prefixes and processes. 

\item  
\label{struc_p_read}
If $P\trans{\rsensa v s}P'$ then there are $P_1$ and $Q$ such that $P\equiv \rsens x s .P_1\newpar Q$ and $P'\equiv P_1\subst v x \newpar Q$.

\item  
\label{struc_p_write}
If $P\trans{\wsensa v {a}}P'$ then there are $P_1$ and $Q$ such that $P\equiv \wact v {a} .P_1\newpar Q$ and $P'\equiv P_1\newpar Q$.

\item 
\label{struc_p_pos}
If $P\trans{@h}P'$ then there are $P_1$ and $Q$ such that $P\equiv @(x) .P_1\newpar Q$ and $P'\equiv P_1\subst h x \newpar Q$.

\item 
\label{struc_p_in}
If $P\trans{\out c v}P'$ then there are $P_1$, $Q_1$ and $Q$ such that $P{\equiv} \timeout{\OUT c v .P_1}Q_1 \newpar Q$ and $P'{\equiv} P_1\newpar Q$.

\item 
\label{struc_p_out} 
If $P\trans{\inp c v}P'$, then there exist $P_1,Q_1,Q$ s.t. $P\equiv \timeout{\LIN c x .P_1}Q_1\newpar Q$ and $P'\equiv P_1\subst v x \newpar Q$.

\item 
\label{struc_p_tau} 
If $P\trans{\tau}P'$ then there are $P_1$, $P_2$, $Q_1$, $Q_2$, $R$, and $c$ with $\rng c=-1$, such that 
$P\equiv \timeout{\LIN c x .P_1}Q_1\newpar \timeout{\OUT c v .P_2}Q_2 \newpar R$ and $P'\equiv P_1\subst v x \newpar P_2 \newpar R$.

\end{enumerate}
\end{lemma}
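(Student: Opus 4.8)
The plan is to prove all seven items \emph{simultaneously} by rule induction on the derivation of the transition $P\trans{\lambda}P'$, reading off, for each label $\lambda$, which rules of Table~\ref{tab:lts_processes} could have produced it. For every base rule the conclusion is immediate with singleton (or empty) index sets and trivial parallel residuals, and the whole induction is carried by the two structural rules \rulename{ParP} (together with its omitted symmetric variant) and \rulename{Fix}. Concretely, if $P\trans{\sigma}P'$ comes from \rulename{TimeNil} then $P\equiv P'\equiv\nil$ (take $I=J=\emptyset$); from \rulename{Delay} then $P=\sigma.P'$ (take $J$ a singleton); from \rulename{Timeout} then $P=\timeout{\pi.P_1}{P'}$ (take $I$ a singleton). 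Likewise \rulename{Sensor}, \rulename{Actuator}, \rulename{PosP}, \rulename{SndP} and \rulename{RcvP} give items 2--6 directly with $Q\equiv\nil$, and \rulename{Com} gives item 7 once one applies items 5 and 6 --- already available in the simultaneous induction --- to its two premises $P_1\trans{\out c v}P_1'$, $P_2\trans{\inp c v}P_2'$ (the constraint $\rng c=-1$ comes straight from the rule).

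For the structural rules: in \rulename{ParP} we have $P=P_1\newpar P_2$ with $P_1\trans{\lambda}P_1'$ and $\lambda\neq\sigma$; applying the induction hypothesis to $P_1$ yields the required decomposition of $P_1$, and since $\equiv$ is a congruence and $\newpar$ is associative--commutative we absorb $P_2$ into the residual context, obtaining the claim for $P$; the symmetric rule is treated identically. For \rulename{TimeParP} (the only $\sigma$-rule with a recursive premise; its side condition on $\tau$-transitions plays no role here), $P=P_1\newpar P_2$ with $P_1\trans{\sigma}P_1'$ and $P_2\trans{\sigma}P_2'$, so by induction each $P_i$ is, up to $\equiv$, a parallel product of $\timeout{\pi.P}{Q}$-summands and $\sigma.P$-summands, and concatenating the two families of index sets gives the decomposition of $P$. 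For \rulename{Fix}, $P=\fix{X}{R}$ with $R\subst{\fix{X}{R}}{X}\trans{\lambda}P'$; the induction hypothesis supplies the structure of the unfolding $R\subst{\fix{X}{R}}{X}$, and since the structural congruence axiom $\fix{X}{R}\equiv R\subst{\fix{X}{R}}{X}$ of Table~\ref{struc} rewrites $P$ to that unfolding, transitivity of $\equiv$ transfers the structure to $P$ itself.

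The only genuinely delicate point --- and the reason each conclusion is stated up to $\equiv$ rather than as a syntactic identity --- is precisely the \rulename{Fix} case (and, more mildly, the repackaging of parallel components in \rulename{ParP}): one must keep the monoid laws and the recursion-unfolding law of $\equiv$ available throughout, and use that $\equiv$ is closed under $\newpar$ and $\fix{X}{\cdot}$. No further obstacle arises; the remaining work is routine bookkeeping of index sets and parallel residuals.
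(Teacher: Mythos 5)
Your proposal is correct and follows essentially the same route as the paper's proof: rule induction in which the base rules give the decompositions directly, \rulename{ParP} absorbs the parallel context, \rulename{Fix} is discharged via the unfolding axiom $\fix{X}{P}\equiv P\subst{\fix{X}{P}}{X}$ of structural congruence, \rulename{TimeParP} concatenates index sets, and \rulename{Com} is reduced to the output/input characterisations. The only (cosmetic) difference is that you run a single simultaneous induction over all seven items, whereas the paper proves them item by item, establishing items 5 and 6 before invoking them in the \rulename{Com} case of item 7.
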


\begin{proof}

Let us start with item (\ref{struc_p_time}). We proceed by rule induction 
 on why $P\trans{\sigma}P'$

\begin{itemize}
\item Let $P\trans{\sigma}P'$   by an application of  rule \rulename{TimeNil};  then the thesis is immediate for $I=J=\emptyset$.

\item Let $P\trans{\sigma}P'$  by an application of  rule \rulename{Delay}:
\[
\Txiombis{-}{\sigma.P_1\trans{\sigma}P_1}
\]
with $P=\sigma.P_1$ and $P'=P_1$.
Thus,  for $I=\emptyset$ and $J=\{1\}$ we have $P = \sigma. P_1 \equiv \nil\newpar\sigma.P_1$ and $P' = P_1 \equiv \nil | P_1$.

\item Let $P\trans{\sigma}P'$  by an application of  rule \rulename{Timeout}:
\[
\Txiombis{-}{\timeout{\pi_1.P_1}{Q_1} \trans{\sigma} Q_1}
\]
with $P=\timeout{\pi_1.P_1}{Q_1}$ and $P'=Q_1$.
Thus,  for $I=\{1\}$ and $J=\emptyset$  we have $P = \timeout{\pi_1.P_1}{Q_1} 
\equiv \timeout{\pi_1.P_1}{Q_1} | \nil$ and $P' =Q_1 \equiv Q_1 | \nil $.

\item Let $P\trans{\sigma}P'$  by an application of  rule \rulename{TimeParP}:
\[
\Txiombis{R_1 \trans{\sigma} R_1' \q R_2 \trans{\sigma}R_2'}{R_1 \newpar R_2 \trans{\sigma} R_1' \newpar R_2'}
\]
with $P=R_1\newpar R_2$ and $P'=R_1'\newpar R_2'$.
By inductive hypothesis, there exist $I,J, I'$ and $J'$ such that 
$R_1\equiv  \prod_{ i\in I} \timeout{\pi_i.P_i}Q_i\newpar \prod_{ j\in J} 
\sigma.P_j$, 
$R_1'\equiv  \prod_{ i\in I }  Q_i\newpar  \prod_{ j\in J} P_j$, 
$R_2\equiv  \prod_{ {i'}\in I' } \timeout{\pi_{i'}.P_{i'}}Q_{i'}\newpar \prod_{ {j'}\in J' } \sigma.P_{j'}$
and 
$R_2'\equiv  \prod_{ {i'}\in I' } Q_{i'}\newpar  \prod_{ {j'}\in J' } P_{j'}$.
Fo concluding this case we choose as index sets $\bar{I}=I\cup I'$ and $\bar{J}=J\cup J'$.



\item Let $P\trans{\sigma}P'$  by an application of  rule \rulename{Fix}:
\[
\Txiombis{P_1{\subst {\fix{X}P_1} X} \trans{\sigma} P_2}
{\fix{X}P_1\trans{\sigma}P_2}
\]
with $P=\fix{X}P_1$ and $P'=P_2$.
By inductive hypothesis, there exist $I$ and $J$ such that 
$P_1{\subst {\fix{X}P_1} X} \equiv  \prod_{ i\in I} \timeout{\pi_i.P_i}Q_i\newpar \prod_{ j\in J} 
\sigma.P_j$ and $P_2\equiv \prod_{ i\in I}  Q_i\newpar \prod_{ j\in J} P_j$. 
By structural congruence we have  $\fix{X}P_1 \equiv P_1{\subst {\fix{X}P_1} X}$ and  therefore  $P= \fix{X}P_1 \equiv  \prod_{ i\in I} \timeout{\pi_i.P_i}Q_i\newpar \prod_{ j\in J} 
\sigma.P_j$.
\end{itemize}
Let us prove now the item (\ref{struc_p_read}) of the proposition.
We proceed by rule induction on why  $P\trans{\rsensa v s}P'$. 
\begin{itemize}
\item Let $P\trans{\rsensa v s}P'$ by an application of  rule rule \rulename{Sensor}:
\[
\Txiombis{-}{ \rsens x  s .P_1 \trans{\rsensa v s} P_1 {\subst v x} }
\]
with $P=\rsens x  s .P_1$ and $P'=P_1 {\subst v x}$.
This case is easy. 

\item Let $P\trans{\rsensa v s}P'$  by an application of  rule \rulename{Fix}:
\[
\Txiombis{P_1{\subst {\fix{X}P_1} X} \trans{\rsensa v s} P_2}
{\fix{X}P_1\trans{\rsensa v s}P_2}
\]
with $P=\fix{X}P_1$ and $P'=P_2$. By inductive hypothesis there exist $P_3$ and $Q_1$ such that $P_1{\subst {\fix{X}P_1} X} \equiv \rsens x s .P_3\newpar Q_1$ and $P_2\equiv P_3\subst v x \newpar Q_1$. 
By structural congruence $P= \fix{X}P_1 \equiv P_1{\subst {\fix{X}P_1} X}\equiv \rsens x s .P_3\newpar Q_1$.

\item Let $P\trans{\rsensa v s}P'$ by an application of  rule \rulename{ParP}:
\[
\Txiombis{P_1 \trans{\rsensa v s} P_1'}{P_1 \newpar R \trans{\rsensa v s} P_1' \newpar R}
\]
with $P=P_1\newpar R$ and $P'=P_1'\newpar R$.
By inductive hypothesis there exist $P_2$ and $Q_1$ such that $P_1\equiv \rsens x s .P_2\newpar Q_1$ and $P_1'\equiv P_2\subst v x \newpar Q_1$, thus, the thesis holds for  $Q=Q_1\newpar R$.
\end{itemize}
The  cases (\ref{struc_p_write}), (\ref{struc_p_pos}), (\ref{struc_p_in}) and (\ref{struc_p_out}) are analogous to previous items.

We prove (\ref{struc_p_tau}). Let us suppose that $P\trans{\tau}P'$. We do a case analysis.
\begin{itemize}
\item Let $P\trans{\tau}P'$ by an application of  rule \rulename{Com}:
\[
\Txiombis
{P \trans{\out c v} P' \Q Q \trans{\inp c v} Q' \Q  \rng c = -1}
{P \newpar Q \trans{\tau} P' \newpar Q'}
\]
Then our result follows by application of the items (\ref{struc_p_in}) and (\ref{struc_p_out}) of the proposition. We need to work up to structural congruence.
\item Let $P\trans{\tau}P'$ by an application of  rule \rulename{ParP} or \rulename{Fix}. This case  is analogous to that the corresponding ones in (\ref{struc_p_read}). 
\end{itemize}
\end{proof}

\begin{lemma}
\label{lem:struc-in-out} \ 
\begin{enumerate}
\item 
\label{struc_in}
If $M\trans{\send{c}{v}{h}}M'$  then there exist $n,P,P',Q,\mu,N, \tilde g$ with $c \not \in  {\tilde g}$ such that 
$
M\equiv \res {\tilde g} \big( \nodep{n}{\conf \I{\timeout{\OUT c v .P}P'\newpar Q}}{\mu}{h} \, | \,  N \big)
$
and 
$
M'\equiv  \res {\tilde g}  \big( \nodep{n}{\conf \I{P\newpar Q}}{\mu}{h} \, | \, N
\big)$. 
\item 
\label{struc_out}
If $M\trans{\rec{c}{v}{h}}M'$ then there exist $n,P,P',Q,\mu,N,{\tilde g}$ with $c \not \in  {\tilde g}$ such that 
$
M\equiv \res {\tilde g} \big( \nodep{n}{\conf \I{\timeout{\LIN c x {.P}}P'\newpar Q}}{\mu}{h} \, | N \, \big)
$
and 
$
M'\equiv \res {\tilde g} \big( \nodep{n}{\conf \I{P\subst v x \newpar Q}}{\mu}{h} \, | \, N \big)$. 
\end{enumerate}
\end{lemma}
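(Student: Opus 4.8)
The plan is to prove both items by rule induction on the derivation of the network transition in the intensional semantics of Table~\ref{tab:lts_networks}, falling back on the process-level characterisation of Lemma~\ref{struc_p} in the base case. I will detail item~(\ref{struc_in}); item~(\ref{struc_out}) is entirely analogous, with rule \rulename{Rcv} in place of \rulename{Snd} and Lemma~\ref{struc_p}(\ref{struc_p_out}) in place of Lemma~\ref{struc_p}(\ref{struc_p_in}). The first step is a syntactic inspection of Table~\ref{tab:lts_networks}: a label of the form $\send{c}{v}{h}$ can be produced only by \rulename{Snd}, by \rulename{ParN} (or its symmetric variant), or by \rulename{Res}; the rules \rulename{GlbCom} and \rulename{TimePar} yield $\tau$ and $\sigma$ respectively, and all the remaining network rules yield $\tau$, $a$ or $\sigma$, so none of them applies. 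This restricts the case analysis to three cases.

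In the base case \rulename{Snd} we have $M = \nodep{n}{\conf{\I}{P}}{\mu}{h}$ with $P\trans{\out c v}P'$ and $\rng c\geq 0$, and $M'=\nodep{n}{\conf{\I}{P'}}{\mu}{h}$. By Lemma~\ref{struc_p}(\ref{struc_p_in}) there are $P_1,Q_1,Q$ with $P\equiv\timeout{\OUT c v.P_1}{Q_1}\newpar Q$ and $P'\equiv P_1\newpar Q$; taking $\tilde g$ empty and $N=\zero$, the node-congruence axiom and the axiom $M\mid\zero\equiv M$ of Table~\ref{struc} give exactly the required shapes for $M$ and $M'$. In the inductive case \rulename{ParN} we have $M=M_1\mid M_2\trans{\send{c}{v}{h}}M_1'\mid M_2=M'$ from $M_1\trans{\send{c}{v}{h}}M_1'$ (the symmetric subcase is handled the same way). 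The induction hypothesis gives $M_1\equiv\res{\tilde g}{\big(\nodep{n}{\conf\I{\timeout{\OUT c v.P}{P'}\newpar Q}}{\mu}{h}\mid N_1\big)}$, with $c\notin\tilde g$, and the matching expression for $M_1'$. Up to $\alpha$-conversion we may assume the channels of $\tilde g$ do not occur in $M_2$; then, using commutativity and associativity of $\mid$ together with the scope-extrusion axiom $\res{c}(M\mid N)\equiv M\mid\res{c}N$ read in the scope-enlarging direction, we obtain $M_1\mid M_2\equiv\res{\tilde g}{\big(\nodep{n}{\conf\I{\timeout{\OUT c v.P}{P'}\newpar Q}}{\mu}{h}\mid N_1\mid M_2\big)}$, and likewise for $M'$. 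Setting $N=N_1\mid M_2$ closes this case.

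The remaining case is \rulename{Res}: $M=\res d M_1\trans{\send{c}{v}{h}}\res d M_1'=M'$ from $M_1\trans{\send{c}{v}{h}}M_1'$. Reading the side condition of \rulename{Res}, the label $\send{c}{v}{h}$ must not be a send or receive on the restricted channel, hence $d\neq c$. The induction hypothesis yields $M_1\equiv\res{\tilde g}{\big(\nodep{n}{\conf\I{\timeout{\OUT c v.P}{P'}\newpar Q}}{\mu}{h}\mid N\big)}$ with $c\notin\tilde g$, and the analogous expression for $M_1'$; prefixing both with $\res d$ and using $\res{c}\res{d}M\equiv\res{d}\res{c}M$ gives the claim with $\tilde g$ replaced by the sequence $d,\tilde g$, and $c\notin\{d\}\cup\tilde g$ because $d\neq c$. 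This completes the induction, and hence the proof of item~(\ref{struc_in}); item~(\ref{struc_out}) follows by the same argument.

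The only point that requires genuine care is the \rulename{ParN} case: one must rename the bound channels $\tilde g$ away from the parallel component $M_2$ before enlarging their scope, so that the side condition of the scope-extrusion axiom is met. Everything else is a routine manipulation of the structural-congruence axioms of Table~\ref{struc} together with an appeal to Lemma~\ref{struc_p}, and no mutual dependence between the two items arises since the inductive hypothesis is always invoked for a transition carrying the same kind of label.
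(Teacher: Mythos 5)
Your proof is correct and follows essentially the same route as the paper's: rule induction on the derivation of the network transition, with the three cases \rulename{Snd}, \rulename{ParN} and \rulename{Res}, appealing to Lemma~\ref{struc_p}(\ref{struc_p_in}) (resp.\ (\ref{struc_p_out})) in the base case and closing the \rulename{ParN} case with $N = N_1 \mid M_2$. Your explicit remark about $\alpha$-converting $\tilde g$ away from $M_2$ before extruding its scope is a point the paper's proof leaves implicit, but it does not change the argument.
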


\begin{proof}
We proceed by rule induction to prove (\ref{struc_in}).
\begin{itemize}
\item Let $M\trans{\send{c}{v}{k}}M'$  by an application of  rule   rule \rulename{SndN}:
\[
\Txiombis{P \trans{\out c v} P'  \Q  \rng c \geq 0}{\nodep{n}{\conf \I P}{\mu}{k}\trans{\send{c}{v}{k}}\nodep{n}{\conf \I P'}{\mu}{k}}
\]
with $M=\nodep{n}{\conf \I P}{\mu}{k}$ and $M'=\nodep{n}{\conf \I P'}{\mu}{k}$.
Lem.~\ref{struc_p}(\ref{struc_p_in}) ensures that since $P\trans{\out c v}P'$ then there exist $P_1,Q_1,Q$ such that $P\equiv \timeout{\OUT c v .P_1}Q_1 \newpar Q$ and $P'\equiv P_1\newpar Q$.
This implies 
$M\equiv  \nodep{n}{\conf \I {\timeout{\OUT c v .P_1}Q_1 \newpar Q}}{\mu}{k}$  and
$M'\equiv  \nodep{n}{\conf \I {P_1\newpar Q}}{\mu}{k}$.

\item Let $M\trans{\send{c}{v}{k}}M'$  by an application of  rule \rulename{ParN}:  
\[
\Txiombis{M_1 \trans{\send{c}{v}{k}} M_1'}{M_1 | M_2 \trans{\send{c}{v}{k}} M_1' | M_2}
\]
with $M= M_1 | M_2$ and $M'=M_1' | M_2$.
By inductive hypothesis, since  $M_1\trans{\send{c}{v}{k}}M_1'$,  there exist $n,P_1,P_1',Q_1,\mu,k,N_1,{\tilde g} $  such that $c \not \in  {\tilde g}$ and 
$M_1\equiv \res {\tilde g} \nodep{n}{\conf \I{\timeout{\OUT c v P_1}P_1'\newpar Q_1}}{\mu}{k} | N_1$
 and
$M_1'\equiv \res {\tilde g} \nodep{n}{\conf \I{P_1\newpar Q_1}}{\mu}{k} | N_1$.
Hence  
$M\equiv \res {\tilde g}  \nodep{n}{\conf \I{\timeout{\OUT c v P_1}P_1'\newpar Q_1}}{\mu}{k} | N_1 | M_2$ 
and 
$M'\equiv \res {\tilde g} \nodep{n}{\conf \I{P_1\newpar Q_1}}{\mu}{k} | N_1 | M_2$
implying the thesis  for  $N= N_1 | M_2$.

\item Let $M\trans{\send{c}{v}{k}}M'$  by an application of  rule \rulename{Res}:
\[
\Txiombis{M_1 \trans{\send{c}{v}{k}} M_1'}{\res {c'} {M_1}  \trans{\send{c}{v}{k}} \res {c'} {M_1'} }
\]
with $c\neq c'$, $M= \res{c'} {M_1}$ and $M'=\res{c'}{M_1'}$.
By inductive hypothesis,   there exist $n,P_1,P_1',Q_1,\mu,k,N_1,{\tilde g} $ such that   $c \not \in  {\tilde g}$ and
$M_1\equiv \res {\tilde g} \nodep{n}{\conf \I{\timeout{\OUT c v P_1}P_1'\newpar Q_1}}{\mu}{k} | N_1$
and 
$M_1'\equiv \res {\tilde g} \nodep{n}{\conf \I{P_1\newpar Q_1}}{\mu}{k} | N_1$.
Hence $M\equiv \res{c'}\res {\tilde g}  \nodep{n}{\conf \I{\timeout{\OUT c v P_1}P_1'\newpar Q_1}}{\mu}{k} | N_1$
and $
M'\equiv \res{c'}\res {\tilde g} \nodep{n}{\conf \I{P_1\newpar Q_1}}{\mu}{k} | N_1 $.
Thus, since $c \not \in  \res{c'} \res {\tilde g} $ the thesis holds. 
\end{itemize}
The remaining case (\ref{struc_out}) is analogous by applying Lem.~\ref{struc_p}(\ref{struc_p_out}).
\end{proof}



\begin{lemma}
\label{struc_trans}
Let $M$ be a network.
If $M\trans{\alpha}M'$ and $M\equiv N$ then there exists $N'$ such that 
$N\trans{\alpha} N'$ and $M'\equiv N'$.
\end{lemma}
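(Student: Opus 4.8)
The plan is to prove the statement by induction on the derivation of $M \equiv N$. Since $\equiv$ is defined as the least congruence satisfying the axioms of Tab.~\ref{struc}, its derivations are built from those axioms using reflexivity, symmetry, transitivity and closure under the one-hole contexts $-\newpar O$, $\res c -$ and $\nodep{n}{\conf \I -}{\mu}{h}$. To make the symmetry rule harmless I would first strengthen the statement to its symmetric form: \emph{whenever $M \equiv N$, every $M \trans{\alpha} M'$ is matched by some $N \trans{\alpha} N'$ with $M' \equiv N'$, and conversely every transition of $N$ is matched by $M$}. This symmetric property passes trivially through the reflexivity, symmetry and transitivity rules; it passes through the three context-closure rules by a routine case analysis on the last rule used to infer the transition of the composite term (for $\res c -$ and $\nodep{n}{\conf \I -}{\mu}{h}$ the transition of the composite is always inferred from a transition of the subterm via \rulename{Res} or one of \rulename{Pos}, \rulename{SensRead}, \rulename{ActUnChg}, \rulename{ActChg}, \rulename{LocCom}, \rulename{Snd}, \rulename{Rcv}, \rulename{TimeStat}, \rulename{TimeMob}, and we just replay that rule on the matched subterm transition). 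So the only real content is in the base cases: one instance per structural-congruence axiom. For the axiom $P \equiv Q \Rightarrow \nodep{n}{\conf \I P}{\mu}{h} \equiv \nodep{n}{\conf \I Q}{\mu}{h}$ I need the analogous lemma for \emph{processes}, which I would prove first, by exactly the same induction-on-$\equiv$ method; it then discharges the network axiom after noting that for \rulename{TimeStat}/\rulename{TimeMob} one also needs $\nodep{n}{\conf \I P}{\mu}{h} \ntrans{\tau}$ iff $\nodep{n}{\conf \I Q}{\mu}{h} \ntrans{\tau}$, which is an instance of the process-level statement applied to $\tau$-actions.

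The easy base cases: the process axioms $P\newpar\nil\equiv P$, commutativity and associativity of $\newpar$ are treated at the process level exactly as their network counterparts below; the axioms $[b]P;Q\equiv P/Q$ and $\fix X P \equiv P\subst{\fix X P}{X}$ are immediate, since by the conventions on $\match b P Q$ and by rule \rulename{Fix} the two sides of each such equation have literally the same transitions. For the network axioms $M\newpar\zero\equiv M$, $M\newpar N\equiv N\newpar M$ and $(M\newpar N)\newpar O\equiv M\newpar(N\newpar O)$ I would inspect the rules \rulename{ParN}, \rulename{GlbCom} and \rulename{TimePar} (plus \rulename{TimeZero} for $\zero\trans{\sigma}\zero$): any transition of one side is re-derived on the other by swapping or re-associating the components — in the \rulename{GlbCom} case the communicating pair is chosen among the re-grouped components — and the $\ntrans{\tau}$ side condition of \rulename{TimePar} is insensitive to associativity, commutativity, and to the presence of $\zero$. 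The axioms $\res c\zero\equiv\zero$ and $\res c\res d M\equiv\res d\res c M$, together with $\alpha$-conversion of bound channels (and of bound process/value variables at the process level), are handled by one or two applications of \rulename{Res}, using that the side condition $\nu\notin\{\send c v h,\rec c v h\}$ is preserved under the relevant renamings/swaps.

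The delicate base case, which I expect to be the main obstacle, is scope extrusion $\res c(M\newpar N)\equiv M\newpar\res c N$ under the proviso that $c$ does not occur in $M$. Here I would first record the auxiliary fact that our transition relation never frees a bound channel: since the calculus is value-passing and has no bound-output rule, $c\notin M$ and $M\trans{\nu}M'$ imply $c\notin M'$; hence $c$ still does not occur in any derivative of $M$, which is what keeps the axiom applicable to the residuals. Then a transition of $\res c(M\newpar N)$ is inferred by \rulename{Res} from a transition of $M\newpar N$, which in turn comes from \rulename{ParN} on $M$, \rulename{ParN} on $N$, a \rulename{GlbCom} between $M$ and $N$ (necessarily on a channel $\neq c$, because $c\notin M$), or \rulename{TimePar}; in each case I push the restriction through to obtain a matching transition of $M\newpar\res c N$, using \rulename{Res} on the $N$-component and re-applying the extrusion axiom to the residual. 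The converse direction, pulling the restriction back out, is symmetric; the only subtlety worth checking carefully is that $M\newpar N\ntrans{\tau}$ iff $M\newpar\res c N\ntrans{\tau}$, so that the \rulename{TimePar} premise transfers — again this uses $c\notin M$ to move any $c'$-communication ($c'\neq c$) across the restriction via \rulename{Res}. Apart from this case and the $\ntrans{\tau}$ bookkeeping in the timed rules, every step is mechanical.
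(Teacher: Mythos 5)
The paper states Lemma~\ref{struc_trans} without proof, so there is no official argument to compare yours against; your induction on the derivation of $M\equiv N$, strengthened to the symmetric two-way matching statement, is the standard way to discharge it and is correct. The two points you isolate are indeed the only non-mechanical ones: the scope-extrusion axiom, which needs the auxiliary observation that in this value-passing calculus (no bound-output rule) a transition never frees a restricted channel, so the proviso $c\notin M$ persists in the residual; and the transfer of the $\ntrans{\tau}$ premises of \rulename{TimePar}, \rulename{TimeStat} and \rulename{TimeMob}, which you correctly obtain by first establishing the correspondence for $\tau$- (and, at the process level, $@h$-, $s?v$-, $a!v$-) transitions within each base case before treating $\sigma$.
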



\paragraph{\textbf{Proof of Thm.~\ref{thm:harmony}.}}
We have to prove the following sub-results:
\begin{enumerate}
\item \label{tau_red} If $M\trans{\tau}M'$ then $M\redtau M'$.
\item \label{red_tau} If $M\redtau M'$ then $M\trans{\tau}\equiv M'$.
\item \label{a_red} If $M\trans{a}M'$ then $M\red_{a} M'$.
\item \label{red_a} If $M\red_{a} M'$ then $M\trans{a}{\equiv}M'$.
\item \label{sigma_red} If $M\trans{\sigma}M'$ then $M\redtime M'$.
\item \label{red_sigma} If $M\redtime M'$ then $M\trans{\sigma}{\equiv}M'$.
\end{enumerate}
Let us start with the sub-result 
(\ref{tau_red}). The proof is by rule induction on why $M \trans{\tau} M'$. 
\begin{itemize}
\item Let $M\trans{\tau}M'$  by an application of  rule \rulename{SensRead}:
\[
\Txiombis{\I(s)=v \q P\trans{\rsensa v s}P'}{\nodep{n}{\conf \I P}{\mu}{h}\trans{\tau}\nodep{n}{\conf \I P'}{\mu}{h}}
\] 
with $M=\nodep{n}{\conf \I P}{\mu}{h}$ and $M'=\nodep{n}{\conf \I P'}{\mu}{h}$.
By Lem.~\ref{struc_p}(\ref{struc_p_read})
there exist $P_1,Q$ such that  $P\equiv \rsens x s .P_1\newpar Q$ and $P'\equiv P_1\subst v x \newpar Q$.
Then we can apply the reduction rules \rulename{sensread} and \rulename{parp}  inferring $M\redtau M'$ as required.

\item Let $M\trans{\tau}M'$  by an application of  rule \rulename{Pos}. 
This case follows by an application of Lem.~\ref{struc_p}(\ref{struc_p_pos}) 
together with  reduction rules \rulename{pos} and \rulename{parp}.

\item Let $M\trans{\tau}M'$  by an application of rule \rulename{LocCom}:
\[
\Txiombis{P  \trans{\tau} P' }{\nodep{n}{\conf \I P }{\mu}{h}\trans{\tau}\nodep{n}{\conf \I P' }{\mu}{h}}
\]
with $M=\nodep{n}{\conf \I P}{\mu}{k}$ and $M'=\nodep{n}{\conf \I P' }{\mu}{k}$.
By Lem.~\ref{struc_p}(\ref{struc_p_tau}), $P  \trans{\tau} P'$ ensures that there exist $P_1,P_2,Q_1,Q_2,R,c$ with $\rng c=-1$  such that 
$P\equiv \timeout{\LIN c x .P_1}Q_1\newpar \timeout{\OUT c v .P_2}Q_2 \newpar R$ 
and 
$P'\equiv P_1\subst v x \newpar P_2 \newpar R$.

By structural congruence, 
$
M\equiv \nodep{n}{\conf \I { \timeout{\LIN c x .P_1}Q_1\newpar \timeout{\OUT c v .P_2}Q_2 \newpar R }}{\mu}{k}
$
and analogously 
$
M'\equiv \nodep{n}{\conf \I {P_1\subst v x \newpar P_2 \newpar R}}{\mu}{k}.
$
Hence, by an application of  rules \rulename{struct} and \rulename{loccom}   we get $M \redtau M'$.

\item Let $M\trans{\tau}M'$  by an application of  rule \rulename{ActUnChg}. 
This case follows by an application of  Lem.~\ref{struc_p}(\ref{struc_p_write}) together with an application of reduction rules \rulename{actunchg} and \rulename{parp}.


\item Let $M\trans{\tau}M'$  by an application of  rule \rulename{ParN}:
\[
\Txiombis{M_1 \trans{\tau} M_1'}{M_1 | M_2 \trans{\tau} M_1' | M_2}
\]
with $M=M_1 | M_2$ and $M'=M'_1 | M_2$. 
By inductive hypothesis  $M_1\redtau M_1'$.
Therefore, by an application of rule \rulename{parn} we get $M \redtau M'$.  

\item Let $M\trans{\tau}M'$  by an application of  rule \rulename{Res}:
\[
\Txiombis{M_1 \trans{\tau} M_1'}{\res   {\tilde g}  M_1 \trans{\tau} \res   {\tilde g} M_1'}
\]
with $M=\res   {\tilde g} M_1$ and $M'=\res   {\tilde g} M'_1 $. 
By inductive hypothesis  $M_1\redtau M_1'$.
Therefore, by an application of the reduction rule \rulename{res}  we derive
$M \redtau M'$.

\item Let $M\trans{\tau}M'$  by an application of rule \rulename{GlbCom}:
\[
\Txiombis{M_1 \trans{\send{c}{v}{h}} M_1' \q M_2 \trans{\rec{c}{v}{k}}M_2' \q \dist h k \leq\rng c}{M_1 | M_2 \trans{\tau} M_1' | M_2'}
\]
with $M=M_1 | M_2$ and $M'=M'_1 | M_2'$. 
Lem.~\ref{lem:struc-in-out}(\ref{struc_in}) guarantees that since $M_1\trans{\send{c}{v}{h}}M_1'$ then
 $M_1\equiv \res  {\tilde g} \nodep{n}{\conf \I{\timeout{\OUT c v P}P'\newpar R}}{\mu}{h} | N$ 
and $M_1'\equiv \res {\tilde g} \nodep{n}{\conf \I{P\newpar R}}{\mu}{h} | N$, for some $n,P,P',R,\mu,h,N,{\tilde g}$.
At the same time, by Lem.~\ref{lem:struc-in-out}(\ref{struc_out}) there exist $m,Q,Q',$ $R',\mu,k,N',{\tilde g'} $ such that 
 $M_2\equiv \res {\tilde g'} \nodep{m}{\conf \I{\timeout{\LIN c x Q}Q'\newpar R'}}{\mu}{k} | N'$ 
 and 
$M_2'\equiv \res {\tilde g'} \nodep{m}{\conf \I{Q\subst v x \newpar R'}}{\mu}{k} | N'.$
Therefore, by applying reduction rule  \rulename{struct},  \rulename{res}, \rulename{glbcom},  \rulename{parp}  and  \rulename{parn} we can infer 
$M \redtau M'$.
\end{itemize}
Let us prove the sub-result 
(\ref{red_tau})  by rule induction on why  $M\redtau M'$. 
\begin{itemize}
\item Let $M\redtau M'$  by an applicaion of rule \rulename{sensread}:
\[
\Txiombis{\I(s)=v}{\nodep n {\conf \I {\rsens x s.P\newpar Q}}{\mu}{h}  \redtau \nodep n {\conf \I {P\subst v x \newpar Q}}{\mu}{h}}
\]
with $M=\nodep n {\conf \I {\rsens x s.P\newpar Q}}{\mu}{h}$ and  $M'=\nodep n {\conf \I {P\subst v x \newpar Q}}{\mu}{h}$. 
Hence, by rule  \rulename{Sensor} we have $  \rsens x  s .P \trans{\rsensa v s} P {\subst v x} $, by rule \rulename{ParP} we have $ \rsens x  s .P\newpar Q \trans{\rsensa v s} P {\subst v x} \newpar Q$ and finally by rule\rulename{SensRead} we have 
 $\nodep n {\conf \I {\rsens x s.P\newpar Q}}{\mu}{h}  \trans{\tau} \nodep n {\conf \I {P\subst v x \newpar Q}}{\mu}{h}$.

\item Let $M\redtau M'$   by applying rule \rulename{pos}:
\[
\Txiombis{-}{\nodep{n}{\conf \I {@(x).P\newpar Q}}{\mu}{h} \redtau \nodep{n}{\conf \I {P \subst x h \newpar Q}}{\mu}{h}}
\]
with $M=\nodep{n}{\conf \I {@(x).P\newpar Q}}{\mu}{h}$ and $M'=\nodep{n}{\conf \I {P \subst x h \newpar Q}}{\mu}{h}$.
We get $M \trans{\tau} \equiv M'$ by applying rules \rulename{PosP}, \rulename{ParP} and \rulename{Pos}.

\item Let $M\redtau M'$  by an application of rule \rulename{actunchg}. 
This case is similar to the previous one, by an application 
of the  transition rule \rulename{ActUnChg}.

\item Let $M\redtau M'$  by an application of rule \rulename{parp}:
\[ 
\Txiombis
{\prod_{i}\nodep {n_i} {\conf {\I_i} {P_i}}{\mu_i}{h_i} \red_{\tau} 
\prod_{i}\nodep {n_i} {\conf {\I'_i} {P'_i}}{\mu'_i}{h'_i} }
{\prod_{i}\nodep {n_i} {\conf {\I_i} {P_i | Q_i}}{\mu_i}{h_i} \red_{\tau} 
\prod_{i}\nodep {n_i} {\conf {\I'_i} {P'_i| Q_i}}{\mu'_i}{h'_i}}
\]
By inductive hypothesis we have
${\prod_{i}\nodep {n_i} {\conf {\I_i} {P_i}}{\mu_i}{h_i}\trans{\tau}\equiv 
\prod_{i}\nodep {n_i} {\conf {\I'_i} {P'_i}}{\mu'_i}{h'_i} }$.
The $\tau$-transition can be derived using different transition rules. 
Suppose that  
${\prod_{i}\nodep {n_i} {\conf {\I_i} {P_i}}{\mu_i}{h_i}}$ $\trans{\tau}\equiv$
$\prod_{i}\nodep {n_i} {\conf {\I'_i} {P'_i}}{\mu'_i}{h'_i}$ by an application of rule  \rulename{SensRead} to node $n_j$, for some $j \in I$.  Then,  by using rule \rulename{ParP} to derive $P_j \newpar Q_j  \trans{\rsensa v s} P'_j \newpar Q_j$,  rule  
\rulename{SensRead} to derive 
$ \nodep {n_j} {\conf {\I_j} {P_j|Q_j}}{\mu_j}{h_j}\trans{\tau}_{\equiv} 
 \nodep {n_j} {\conf {\I'_j} {P'_j|Q_j}}{\mu'_j}{h'_j} $,  
 and rule \rulename{ParN}  to derive 
 $\prod_{i}\nodep {n_i} {\conf {\I_i} {P_i|Q_i}}{\mu_i}{h_i}$ $\trans{\tau}\equiv$
 $\prod_{i}\nodep {n_i} {\conf {\I'_i} {P'_i|Q_i}}{\mu'_i}{h'_i} $, we get 
$M \trans{\tau} \equiv M'$. 

The cases when the $\tau$-transition is derived by an application of the rules  \rulename{ActUnChg},  \rulename{Com}  and \rulename{Pos}  are similar.
 




\item Let $M\redtau M'$  by an application of \rulename{loccom}:
\[
\Txiombis{\rng c =-1}
{
{\nodep{n}{\conf \I {\timeout{\OUT c v .P}R\newpar \timeout{\LIN c x .Q}S }}{\mu}{h} \redtau \nodep{n}{\conf \I P\newpar Q \subst v x }{\mu}{h}}
}
\]
 with 
$M=\nodep{n}{\conf \I {\timeout{\OUT c v .P}R\newpar \timeout{\LIN c x .Q}S }}{\mu}{h}$ 
and 
$M'=\nodep{n}{\conf \I P\newpar Q \subst v x }{\mu}{h}$.
Therefore the following derivation is enabled for $ \rng c=-1$
\[
\Txiombis
{
\Txiombis
{\timeout{\OUT c v .P}R \trans{\out c v} P   \q \q \q \timeout{\LIN c v .Q}S \trans{\inp c v} Q}
{\timeout{\OUT c v .P}R\newpar \timeout{\LIN c x .Q}S  \trans \tau  P\newpar Q \subst v x }
}
{
{\nodep{n}{\conf \I {\timeout{\OUT c v .P}R\newpar \timeout{\LIN c x .Q}S }}{\mu}{h} \trans \tau \nodep{n}{\conf \I P\newpar Q \subst v x }{\mu}{h}}
}
\]
and $M\trans{\tau}\equiv M'$ is derived as required.

\item Let $M\redtau M'$   by an application of \rulename{glbcom}:
\[
\Txiombis
{\dist h k \leq \rng c}
{\nodep {n}{\conf \I {\timeout{\OUT c v .P}{R}}}{\mu}{h} | 
\nodep {m}{\conf \I {\timeout{\LIN c x Q}{S}}}{\mu'}{k}
\redtau
\nodep{n}{\conf \I P}{\mu}{h} | \nodep{m}{\conf \I Q{\subst v x}}{\mu'}{k}}
\]

with 
$M=\nodep{n}{\conf \I {\timeout{\OUT c v .P}  R}}{\mu}{h} | \nodep{m}{\conf \I {\timeout{\LIN c x .Q}S}}{\mu'}{k}$ 
and 
$M'=\nodep{n}{\conf \I P}{\mu}{h} | \nodep{m}{\conf \I {Q \subst v x }}{\mu'}{k}$.
Therefore the following derivation can be built up for $\dist h k \leq \rng c$
\[
\Txiombis
{
\Txiombis
{\timeout{\OUT c v .P}R \trans{\out c v} P }
{ \nodep{n}{\conf \I {\timeout{\OUT c v .P} R}}{\mu}{h}\trans{\send{c}{v}{h}} \nodep{n}{\conf \I {P}}{\mu}{h}} 
\Q\Q
\Txiombis
{\timeout{\LIN c v .Q}S \trans{\inp c v} Q}
{ \nodep{m}{\conf \I {\timeout{\LIN c v .Q}S}}{\mu'}{k}\trans{\rec{c}{v}{k}}\nodep{m}{\conf \I Q\subst v x }{\mu'}{k}}
}
{
\nodep{n}{\conf \I {\timeout{\OUT c v .P}  R}}{\mu}{h} | \nodep{m}{\conf \I {\timeout{\LIN c x .Q}S}}{\mu'}{k} \trans{\tau}\nodep{n}{\conf \I P }{\mu}{h} | \nodep{m}{\conf \I {Q \subst v x }}{\mu'}{k}
}
\]
and we get $M\trans{\tau} \equiv M'$.

\item Let $M\redtau M'$   by an application of rule \rulename{res}:
\[
\Txiombis{M_1 \redtau M_1'}{\res{\tilde g}  M_1 \redtau \res{\tilde g} M_1'}
\]
with $M=\res{\tilde g} M_1$ and $M'=\res{\tilde g} M'_1 $. 
By inductive hypothesis we have $M_1\trans{\tau}\equiv M_1'$.
Hence, by applying transition rules \rulename{Res}, we can derive
$M \trans{\tau} \equiv M'$.  

\item Let $M\redtau M'$  by an application of rule \rulename{struct}:
\[
\Txiombis{M \equiv N \q N\redtau N' \q N'\equiv M'}{M \redtau M'}
\]
By inductive hypothesis we have $N\trans{\tau}\equiv N'$. 
Since moreover, $M\equiv N$ and $M'\equiv N'$,  by an application of Lem.~\ref{struc_trans} 
we obtain  $M\trans{\tau}\equiv M'$.

\item Let $M\redtau M'$ by  an application of rule \rulename{parn}:
\[
\Txiombis{M_1 \redtau M_1'}{M_1 | N \redtau M_1' | N}
\]
with $M=M_1 | N$ and  $M'=M_1' | N$.
By inductive hypothesis, $M_1\redtau M_1'$ implies that $M_1\trans{\tau}\equiv M_1'$.
Hence, the thesis follows by applying transition rule \rulename{ParN}.
\end{itemize}
Let us prove the sub-result (\ref{a_red}). The proof is  by rule induction on
why  $M\trans{a}M'$.
\begin{itemize}
\item Let $M\trans{a}M'$  by an application of rule \rulename{ActChg}:
\[
\Txiombis{\I(a)=w\neq  v \Q P \trans{\wact v {a} } P' \Q{\I'}:=\I[a \mapsto v]}
{\nodep n {\conf \I P}{\mu}{h}  \trans{a} \nodep n {\conf {\I'} {P'}}{\mu}{h}}
\]
with $M=\nodep n {\conf \I P}{\mu}{h}$ and $M'=\nodep n {\conf {\I'} {P'}}{\mu}{h}$.
By Lem.~\ref{struc_p}(\ref{struc_p_write})
there exist $P_1,Q$ such that  $P\equiv \wact v {a} .P_1\newpar Q$ and $P'\equiv P_1 \newpar Q$.
Then we can apply reduction rules \rulename{actchg} and \rulename{parp} 
to infer $M\red_{a}M'$.

\item The cases when $M\trans{a}M'$ is derived by an application of either 
rule \rulename{ParN} or  rule \rulename{Res} are analogous to the corresponding cases when $M\trans{\tau}M'$.
\end{itemize}
Let us prove the sub-result 
\noindent (\ref{red_a}). The proof is  by rule induction on why
 $M\red_{a} M'$.
\begin{itemize}
\item Let $M\red_{a} M'$  by an  application of rule \rulename{actchg}:
\[
\Txiombis{\I(a)=w\neq  v \Q{\I'}:=\I[a \mapsto v]}
{\nodep n {\conf \I \wact v {a}.P}{\mu}{h}  \red_{a} \nodep n {\conf {{\I'}} {P}}{\mu}{h}}
\]
with $M=\nodep n {\conf \I \wact v {a}.P}{\mu}{h}$ and $M'=\nodep n {\conf {{\I'}} {P}}{\mu}{h}$.
By an application of rule 
\rulename{Actuator} we derive 
$\wact v a .P \trans{\wact v  a} P$. The thesis follows by  an 
application of rule \rulename{ActChg}.

\item The cases when $M\red_{a} M'$  is derived by an application of one
of the rules  among  \rulename{parp}, \rulename{parn}, \rulename{res}  or \rulename{struct} are analogous to the corresponding cases written for $M\redtau M'$.
\end{itemize}
Let us prove the sub-result  (\ref{sigma_red}). The proof is by rule induction
on why  $M\trans{\sigma}M'$.
\begin{itemize}
\item Let $M \trans{\sigma}M'$ by an application of rule \rulename{TimeZero}. This case  is immediate.

\item Let $M\trans{\sigma}M'$ by an application of rule \rulename{TimeStat}:
\[
\Txiombis{P \trans{\sigma} P'  \Q \nodep n {\conf \I P}{\stat}{h}  \ntrans{\tau} }
{\nodep n {\conf \I P}{\stat}{h}  \trans{\sigma} \nodep n {\conf \I {P'}}{\stat}{h}}
\]
with $M=\nodep n {\conf \I P}{\stat}{h}$ and $M'=\nodep n {\conf \I P'}{\stat}{h}$.
Since $P\trans{\sigma}P'$,  by Lem.~\ref{struc_p}(\ref{struc_p_time}) we derive 
$P\equiv  \prod_{ i\in I} \timeout{\pi_i.P_i}Q_i\newpar \prod_{ j\in J} \sigma.P_j$
and
$P'\equiv \prod_{ i\in I } Q_i\newpar \prod_{ j\in J} P_j$ for some $I,J,\pi_i,P_i,Q_i,P_j$. 
By an application of the sub-result 
 (\ref{red_tau}) above, from $\nodep n {\conf \I P}{\stat}{h}  \ntrans{\tau}$  we derive $\nodep n {\conf \I P}{\stat}{h}  \not\redtau$.
Then the thesis follows by applying the reduction rule \rulename{timestat}.
 
\item  Let $M\trans{\sigma}M'$ by an application of rule \rulename{TimeStat}. 
This case is similar to the previous one by applying the  reduction rule \rulename{timemob} in place of \rulename{timestat}. 

\item Let $M\trans{\sigma} M'$ by an application of rule \rulename{TimePar}:
\[
\Txiombis{M_1 \trans{\sigma} M_1' \Q M_2 \trans{\sigma} M_2' \Q M_1 | M_2 \ntrans{\tau} }{M_1 | M_2 \trans{\sigma} M_1' | M_2'}
\]
with $M=M_1 | M_2$ and $M'=M_1' | M_2'$.
By inductive hypothesis we  $M_1\redtime M_1'$ and  $M_2\trans{\sigma}M_2'$.
Moreover, By an application of the sub-result 
 (\ref{red_tau}) above $M_1 | M_2 \ntrans{\tau}$ implies $M_1 | M_2 \not\!\redtau$.
Therefore we can apply the reduction rule \rulename{timepar} to 
get $M \redtime M'$. 

\item Let $M\trans{\sigma}M'$ by an application of rule \rulename{Res}:
\[
\Txiombis{M_1 \trans{\sigma} M_1'}{\res   {\tilde g}  M_1 \trans{\sigma} \res   {\tilde g} M_1'}
\]
with $M=\res   {\tilde g} M_1$ and $M'=\res   {\tilde g} M'_1 $. 
By inductive hypothesis, $M_1\trans{\sigma}M_1'$ implies that $M_1\redtime M_1'$.
Therefore, by applying the reduction rule \rulename{res}  we get our result.  
\end{itemize}
Let us prove the sub-result (\ref{red_sigma}). The proof is  by 
rule induction on why $M\redtime M'$.
\begin{itemize}
\item Let $M \redtime M'$ by an application of the reduction rule \rulename{timezero}. This case is immediate.

\item Let $M\redtime M'$ by an application of rule \rulename{timestat}:
\[
\Txiombis
{\nodep{n}{\conf {\I}{\prod_{ i\in I } \timeout{\pi_i.P_i}Q_i\newpar \prod_{j\in J} \sigma.P_j}}{\stat}{h} \not\redtau}
{\nodep{n}{\conf {\I}{\prod_{ i\in I } \timeout{\pi_i.P_i}Q_i\newpar \prod_{ j\in J  }\sigma.P_j}}{\stat}{h} 
\redtime 
\nodep{n}{\conf \I{ \prod_{ i\in I } Q_i\newpar  \prod_{ j\in J } P_j}}{\stat}{h}}
\]
with 
$M=\nodep{n}{\conf {\I}{ \prod_{ i\in I } \timeout{\pi_i.P_i}Q_i\newpar   \prod_{ j\in J } \sigma.P_j}}{\stat}{h}$ 
and 
$M'=\nodep{n}{\conf \I{ \prod_{ i\in I } Q_i\newpar  \prod_{ j\in J } P_j}}{\stat}{h}$.

By rule \rulename{Timeout} we derive $\timeout{\pi.P_i}{Q_i} \trans{\sigma} P_i$ and by rule \rulename{Delay}  we derive $\sigma.P_j \trans{\sigma} P_j$.
Now, we can repeatedly apply rule \rulename{TimeParP} to derive 
$\prod_{ i\in I } \timeout{\pi_i.P_i}Q_i\newpar \prod_{ j\in J  }\sigma.P_j  
\trans{\sigma} 
  \prod_{ i\in I } Q_i\newpar  \prod_{ j\in J } P_j$. Indeed, by contradiction, if \rulename{TimeParP}  would not be enabled, then rule \rulename{Com} would be enabled, and  by applying rule \rulename{ParP},  there would exist  $R$ such that 
$\prod_{ i\in I } \timeout{\pi_i.P_i}Q_i\newpar \prod_{ j\in J  }\sigma.P_j  
\trans{\tau} R$. Then, by applying rule \rulename{LocCom} and  the sub-result  (\ref{tau_red}) above, we would contradict the hypothesis 
 $\nodep{n}{\conf {\I}{\prod_{ i\in I } \timeout{\pi_i.P_i}Q_i\newpar \prod_{j\in J} \sigma.P_j}}{\stat}{h} \not\redtau$.
Therefore, the thesis follows by applying the transition rule \rulename{TimeStat}.

\item Let $M\redtime M'$ by an application of rule \rulename{timemob}. 
This case is analogous to the previous one by applying the transition rule \rulename{TimeMob} in place of rule \rulename{TimeStat}.

\item Let $M\redtime M'$  by an application of rule \rulename{res}:
\[
\Txiombis{M_1 \redtime M_1'}{\res   {\tilde g}  M_1 \redtime \res   {\tilde g} M_1'}
\]
with $M=\res   {\tilde g} M_1$ and $M'=\res   {\tilde g} M'_1 $. 
By inductive hypothesis, $M_1 \redtime M_1'$ implies that $M_1\trans{\sigma}\equiv M_1'$.
Therefore, by applying the transition rule \rulename{Res}, we derive 
$M \trans{\sigma}\equiv M'$.  

\item Let $M\redtime M'$  by an application of rule  \rulename{struct}: 
\[
\Txiombis{M \equiv N \q N\redtime N' \q N'\equiv M'}{M \redtime M'}
\]
By inductive hypothesis, $N\redtime N'$ implies that $N\trans{\sigma}\equiv N'$.

Moreover, since $M\equiv N$ and $M'\equiv N'$, by an application of 
 Lem.~\ref{struc_trans} we can derive $M\trans{\sigma}\equiv M'$.

\item Let $M\redtime M'$ by an application of rule \rulename{timepar}:
\[
\Txiombis{M_1 \redtime M_1' \Q M_2 \redtime M_2' \Q M_1 | M_2 \not\!\redtau}{M_1 | M_2 \redtime M_1' | M_2'}
\]
with $M=M_1 | M_2$  and $M'=M_1' | M_2'$.
By inductive hypothesis, $M_1\redtime M_1'$ implies $M_1\trans{\sigma}_{\equiv} M_1'$ and $M_2\redtime M_2'$ implies $M_2\trans{\sigma}\equiv  M_2'$.
Finally, by  an an application of the sub-result (\ref{tau_red}) above 
 $M_1 | M_2 \not\redtau$ implies $M_1 | M_2 \ntrans{\tau}$.
Therefore we can derive $M \trans{\sigma} \equiv M'$ by an application of the 
 transition rule \rulename{TimePar}.
\end{itemize}
\hfill\qed


\subsection{Proofs of Sec.~\ref{full-abstraction}}

\paragraph*{\textbf{Proof of Thm.~\ref{thm:algebraic-laws}. }}
For each law we exhibit the proper bisimulation. 
It is easy to see that for the first four laws the left-hand-side system
evolves into the right-hand-side by performing a $tau$-actions. So, 
in order to prove those laws it is enough to show that the two terms
under considerations are bisimilar. 
Let us proceed case by case. 
\begin{enumerate}
\item 
Let use define the relation 
\[
\rel = \left \{\left(\nodep{n}{\conf \I {{\wact v a}.P} | R}{\mu}{h}, \nodep{n}{\conf \I {P | R}}{\mu}{h} \right)\,|\, \I(a)=v \mbox{ and } a \mbox{ does not occur in } R \right \} \, \cup \, Id 
\]
where $Id$ is the identity relation. 
We prove that the symmetric closure of $\rel$ is a bisimulation.
Let $M=\nodep{n}{\conf \I {{\wact v a}.P} | R}{\mu}{h}$ and $N = 
\nodep{n}{\conf \I {P | R}}{\mu}{h}$, with $(M,N)\in \rel$. 
\begin{itemize}
\item If $M \trans{\rsensa w {s@h}} \nodep{n}{\conf {\I[s \mapsto w]} {{\wact v a}.P} | R}{\mu}{h}=M'$ then there is $N'$ such that $N  \trans{\rsensa w {s@h}}\nodep{n}{\conf {\I[s \mapsto w]} {P | R}}{\mu}{h}=N' $, with $(M',N')\in \rel$.
\item If $M \trans{\rsensa w {s@k}} M$, $h\neq k$, then  $N  \trans{\rsensa w {s@k}} N $ and  $(M,N)\in \rel$.
\item If $M \trans{\wact v{a@h}} M$ then $N \trans{\wact v{a@h}} N$ and $(M,N) \in \rel$.
\item Let $M \trans{\wact w{b@k}} M$. As $M$ and $N$ have the same physical interface it follows that $N \trans{\wact w{b@k}} N$ and $(M,N) \in \rel$.
\item If $M= 
\nodep{n}{\conf \I {{\wact v a}.P} | R}{\mu}{h} \trans{\alpha}\nodep{n}{\conf {\I'} {{\wact v a}.P} | R'}{\mu}{h}=M'$ then there is $N'$ such that
$N =
\nodep{n}{\conf \I {P} | R}{\mu}{h} \trans{\alpha}\nodep{n}{\conf {\I'} {P} | R'}{\mu}{h}=N'$. 
Since $a$ does not occur in $R$, then $ {\I'}(a)=v$ and therefore $(M',N') \in 
\rel$.

\item If $M =
\nodep{n}{\conf \I {{\wact v a}.P} | R}{\mu}{h} \trans \tau   \nodep{n}{\conf \I {P} | R}{\mu}{h}=M'$  then 
$N \Trans{} \nodep{n}{\conf \I P | R}{\mu}{h} = N'$ and $(M', N') \in Id \subset \rel$. 
\end{itemize}

As $\rel$ is not symmetric, let us show how $M$ can simulate the transitions of $N$. 
\begin{itemize}
\item If $N =\nodep{n}{\conf \I {P} | R}{\mu}{h} \trans{\alpha}   \nodep{n}{\conf {\I'} {P'} | R'}{\mu}{h'}=N'$ then there is $M'$ such that 
$M = 
\nodep{n}{\conf \I {{\wact v a}.P} | R}{\mu}{h} \trans \tau  \trans \alpha   \nodep{n}{\conf {\I'} {P'} | R'}{\mu}{h'}=M'$, with $(M', N') \in Id \subseteq \rel$. 
\end{itemize}

\item 
Let us define the relation 
\[
\rel = \left \{  \left(   \nodep{n}{\conf \I {@(x).P | R}}{\mu}{h} , \nodep{n}{\conf \I P{\subst h x} | R}{\mu}{h} \right)  \right \} \, \cup \, Id
\]
where $Id$ is the identity relation. We show that the symmetric closure
of $\rel$ is a bisimulation. The proof is similar to that of case  \ref{law1} where 
$\nodep{n}{\conf \I @(x).P | R}{\mu}{h}\trans{\tau}\nodep{n}{\conf \I P\subst{h}{x} | R}{\mu}{h}$. 

\item 
Let us define the relation 
$$
\rel = \left \{ \big( \nodep{n}{\conf \I {\timeout{\OUT c v.P}S} | \timeout{\LIN c x. Q}T |  R}{\mu}{h} \: , \: \nodep{n}{\conf \I {P | Q{\subst v x}| R}}{\mu}{h}
\big) \right \} \, \cup \, Id $$
such that $c$  is not in $ R$ and $ \rng{c}=-1$. We show that the 
symmetric closure of $\rel$ is a bisimulation. Let $M=\nodep{n}{\conf \I {\timeout{\OUT c v.P}S} | \timeout{\LIN c x. Q}T |  R}{\mu}{h} $ and 
$N = \nodep{n}{\conf \I {P | Q{\subst v x}| R}}{\mu}{h}$, with $(M,N) \in
 \rel$. 

\begin{itemize}
\item Let $M \trans{\alpha} M'$, with $\alpha \in \{\rsensa v {s@h}, \rsensa v {s@k},\wact v{a@h}  \}$. These cases are easy and very similar to the 
corresponding cases of law \ref{law1}

\item Let $M\trans{\alpha}\nodep{n}{\conf {\I'} {\timeout{\OUT c v.P}{S}} | \timeout{\LIN c x. Q}{T} |  R'}{\mu}{h} = M'$. As $M \trans{\tau}$ it follows that
$\alpha\neq \sigma$ and the node cannot change location.  Then 
there is $N'$ such that 
$N= \nodep{n}{\conf \I {P | Q{\subst v x}| R}}{\mu}{h}\trans{\alpha}\nodep{n}{\conf {\I'} {P | Q{\subst v x}| R'}}{\mu}{h}= N'$, with $(M',N') \in \rel$. 

\item If $M = 
\nodep{n}{\conf \I {\timeout{\OUT c v.P}{S}} | \timeout{\LIN c x. Q}{T} |  R}{\mu}{h} \trans{\tau}\nodep{n}{\conf \I P | Q\subst{v}{x} |  R}{\mu}{h} =M'$ then 
there is $N'$ such that 
$N = 
\nodep{n}{\conf \I P | Q\subst{v}{x} |  R}{\mu}{h} \Trans{} 
\nodep{n}{\conf \I P | Q\subst{v}{x} |  R}{\mu}{h} = N'$
and $(M',N') \in Id \subseteq \rel$.
\end{itemize}

As $c$ cannot occur in $R$, this process cannot interfere with the 
communication at $M$ along channel $c$. 
As $\rel$ is not symmetric, let us show how $M$ can simulate the transitions of $N$. 
\begin{itemize}
\item If $N = 
\nodep{n}{\conf \I {P | Q{\subst v x}| R}}{\mu}{h}\trans{\alpha}\nodep{n}{\conf {\I'} {P' | Q'| R'}}{\mu}{h'} = N'$ then there is $M'$ such that 
$M = \nodep{n}{\conf \I {\timeout{\OUT c v.P}{S}} | \timeout{\LIN c x. Q}{T} |  R}{\mu}{h} \trans{\tau}\trans{\alpha}\nodep{n}{\conf {\I'} {P' | Q' | R'}}{\mu}{h'} = M'$ 
and $(M', N') \in Id \subseteq \rel$.
\end{itemize}

\item The proof of Law~\ref{law4} is similar to that of Law~\ref{law3}.
\item 
Let us define the relation 
$$\rel = \left \{ \left(\nodep{n}{\conf \I P}{\mu}{h},\nodep{n}{\conf \I \nil}{\mu}{h}\right) \right \}$$
where $ P$ does not contains terms of the form $ \timeout{\pi.P_1}{P_2} $ or
$ {\wact v a}.P_1 $, for any $a$. We prove that the symmetric closure of $\rel$ is a bisimulation. Let $M =\nodep{n}{\conf \I P}{\mu}{h}$ and $N =\nodep{n}{\conf \I \nil}{\mu}{h}$, with $(M, N) \in \rel$. 
\begin{itemize}
\item Let $M \trans{\alpha} M'$, with $\alpha \in \{\rsensa v {s@h}, \rsensa v {s@k},\wact v{a@h}  \}$. These cases are easy and very similar to the 
corresponding cases of law \ref{law1}. 
\item Let $M \trans{b} M'$.  This case is not admissible as $P$ does 
not contain terms of the form  ${\wact v b}.P_1$. 
\item Let $M \trans{\alpha} M'$, with $\alpha \in \{ \sendobs{c}{v}{k} , 
\recobs{c}{v}{k} \}$ This case is not admissible as $P$ does 
not contain terms of the form  $\timeout{\pi.P_1}{P_2}$. 
\item 
If $M = \nodep{n}{\conf \I P}{\mu}{h}\trans{\tau}\nodep{n}{\conf \I P'}{\mu}{h}=M'$ 
then there is $N'$ such that $N = \nodep{n}{\conf \I \nil}{\mu}{h}
\Trans{} \nodep{n}{\conf \I \nil}{\mu}{h} = N'$, with $(M'N') \in \rel$. 
Notice that the physical interface $\I$ cannot change, via an application of 
rule \rulename{ActChg},  as $P$ cannot 
write on actuators.  

\item 
If $M = \nodep{n}{\conf \I P}{\mu}{h}\trans{\sigma}\nodep{n}{\conf \I P'}
{\mu}{h'}=M'$ 
there there is $N'$ such that $N = \nodep{n}{\conf \I \nil}{\mu}{h}
\Trans{} \nodep{n}{\conf \I \nil}{\mu}{h'} = N'$, with $(M'N') \in \rel$.
\end{itemize}
The case when $N$ moves first is easier as $N$ can only perform $\sigma$-actions. 
\item 

Let us consider the relation 
\[
\rel=\left\{\left(\nodep{n}{\conf \I \nil}{\mu}{h},\zero\right)\mid \I(a) \text{ is undefined for any actuator } a \right\} \enspace . 
\]
We prove that the symmetric closure of $\rel$ is a bisimulation. 
Let $M =\nodep{n}{\conf \I \nil}{\mu}{h}$ and $N=\zero $, with $(M,N) \in \rel$. 
\begin{itemize}
\item 
If 
$
M= \nodep{n}{\conf \I \nil}{\mu}{h}\trans{\rsensa v {s@h}}\nodep{n}{\conf {\I[s \mapsto v]} \nil}{\mu}{h}=M'$
then there is $N'$ such that 
 $N=\zero\trans{\rsensa v {s@h}}\zero=N'$, with $(M', N') \in \rel$. 

\item 
If 
$
M= \nodep{n}{\conf \I \nil}{\mu}{h}\trans{\rsensa v {s@k}}\nodep{n}{\conf {\I} \nil}{\mu}{h}=M'$, $h \neq k$, 
then there is $N'$ such that 
 $N=\zero\trans{\rsensa v {s@h}}\zero=N'$, with $(M', N') \in \rel$. 

\item Let $M \trans{\wact w {b@h}} M'$.  This case is not admissible as $\I$ is 
undefined for any actuator $b$. 

\item If $M = 
\nodep{n}{\conf \I \nil}{\mu}{h}\trans{\sigma}\nodep{n}{\conf \I \nil}{\mu}{h'}
= M'$ then there is $N'$ such that $N = \zero \trans{\sigma} \zero = N'$, 
with $(M', N') \in \rel$. 

\end{itemize}

The case when $N$ moves first is easier as $N$ can only perform $\sigma$-actions.

\item 
Let us define the relation 
$$ \rel =  \left \{ \left(\nodep{n}{\conf \emptyset P}{\mob}{h},\nodep{m}{\conf \emptyset P}{\stat}{k}\right) \right \}$$
such that $ P$  does not contain terms $ @(x).Q $ and for any channel $ c $  in $ P$ either $ \rng c =  \infty $ or $ \rng c = -1$. We prove that the symmetric closure of $\rel$ is a bisimulation. Let $M = \nodep{n}{\conf \emptyset P}{\mob}{h}$ and $N =\nodep{m}{\conf \emptyset P}{\stat}{k}$, with $(M, N) \in \rel$. 

\begin{itemize}
 \item Let $M \trans{\alpha} M'$, with $\alpha \in \{\rsensa v {s@h}, \rsensa v {s@k},\wact v{a@h}  \}$. These cases are trivial or not admissible 
as the physical interfaces of both nodes is empty.

\item Let $M \trans{b} M'$. This case is not admissible because we deal 
with well-formed networks and an actuator must be defined in its 
physical interface before being used. 

\item If $M = \nodep{n}{\conf \emptyset P}{\mob}{h}\trans{\sigma}
\nodep{n}{\conf \emptyset P'}{\mob}{h'}= M'$ then there is $N'$ 
such that
$N = \nodep{m}{\conf \emptyset P}{\stat}{k}\trans{\sigma}\nodep{m}{\conf \emptyset P'}{\stat}{k} = N'$, with $(M', N') \in \rel$.

\item If $M = \nodep{n}{\conf \emptyset P}{\mob}{h}\trans{\tau}
\nodep{n}{\conf \emptyset P'}{\mob}{h}= M'$ then there is $N'$ 
such that
$N = \nodep{m}{\conf \emptyset P}{\stat}{k}\trans{\tau}\nodep{m}{\conf \emptyset P'}{\stat}{k} = N'$, with $(M', N') \in \rel$. Note that as 
 $ P$  does not contain terms of the form $ @(x).Q $ we can be 
sure that in node $n$ there will be the same process contained in  $m$,
i.e.\ $P'$.

\item If $M = \nodep{n}{\conf \emptyset P}{\mob}{h}\trans{\alpha}
\nodep{n}{\conf \emptyset P'}{\mob}{h}= M'$, $\alpha \in 
\{ \sendobs{c}{v}{h}, \recobs{c}{v}{h} \}$ with $\rng c = \infty$,  then there is $N'$ 
such that
$N = \nodep{m}{\conf \emptyset P}{\stat}{k}\trans{\alpha}\nodep{m}{\conf \emptyset P'}{\stat}{k} = N'$, with $(M', N') \in \rel$.
\end{itemize}

The cases when $N$ moves first are similar. 

\end{enumerate}
\hfill\qed


The next goal is the proof of Prop.~\ref{prop:SYS-bis}. For that 
we need a technical lemma:
\begin{lemma}
\label{lem:adding}
If $\res { \tilde{c} }  \left( \nodep{n}{\conf \I  {P_1}  }{\mu}{h}| O_1 \right)\approx 
\res { \tilde{d} } \left(\nodep{n}{\conf \I  {P_2}  }{\mu}{k}|O_2 \right)$ then
$\res { \tilde{c} }\left( \nodep{n}{\conf \I  {P_1 | R}  }{\mu}{h}|O_1 \right) \approx 
\res { \tilde{d} } \left(  \nodep{n}{\conf \I  {P_2 | R}  }{\mu}{k}|O_2\right)$
for any process $R$ which can only read sensors, transmit along some fresh Internet channel, and let time passes. 
\end{lemma}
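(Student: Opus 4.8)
The plan is to prove the statement by producing an explicit bisimulation. The key preliminary observation is that $R$ is almost inert: it has no input prefix, writes on no actuator, and the single channel it transmits on, call it $g$, is fresh, hence does not occur in $P_1,P_2,O_1,O_2$ or among $\tilde c,\tilde d$. Thus, inside a node $\nodep{n}{\conf{\I}{P_i|R}}{\mu}{\cdot}$ the component $R$ never synchronises with $P_i$ on a local channel, never communicates with any node of $O_i$, and never touches the physical interface (a sensor read leaves $\I$ unchanged and $R$ writes no actuator). Using the structural characterisation of process transitions (Lem.~\ref{struc_p}), I would first establish that every transition of $\res{\tilde c}(\nodep{n}{\conf{\I_1}{P_1'|R'}}{\mu}{h'}|O_1')$ has exactly one of three disjoint shapes: (i) an internal move of $R'$ alone --- a $\tau$ from a sensor read, or $\send{g}{v}{h'}$ or $\sendobs{g}{v}{\ell}$ from a transmission on $g$ --- leaving $P_1'$, $\I_1$, $O_1'$ and the location fixed; (ii) a move of the \emph{core} system $\res{\tilde c}(\nodep{n}{\conf{\I_1}{P_1'}}{\mu}{h'}|O_1')$ with $R'$ carried along unchanged (this covers $\tau$, $a$-actions, $\sendobs$, $\recobs$ and all physical actions, since $R'$ neither enables nor disables them); and (iii) a $\sigma$-move, which by rules \rulename{TimeStat}/\rulename{TimeMob}/\rulename{TimeParP} forces both $P_1'$ and $R'$ to perform a $\sigma$-step, and is possible exactly when the core can $\sigma$-step and $R'$ has no pending sensor read.

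Given this, I would take as candidate relation the symmetric closure of
\[
\mathcal{S}=\left\{\big(\res{\tilde c}(\nodep{n}{\conf{\I_1}{P_1'|R'}}{\mu}{h'}|O_1'),\ \res{\tilde d}(\nodep{n}{\conf{\I_2}{P_2'|R'}}{\mu}{k'}|O_2')\big)\ :\ \res{\tilde c}(\nodep{n}{\conf{\I_1}{P_1'}}{\mu}{h'}|O_1')\approx\res{\tilde d}(\nodep{n}{\conf{\I_2}{P_2'}}{\mu}{k'}|O_2'),\ R'\text{ a derivative of }R\right\},
\]
which contains the original pair (take $P_i'=P_i$, $\I_i=\I$, $h'=h$, $k'=k$, $O_i'=O_i$, $R'=R$). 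To check it is a bisimulation one takes a pair in it and a move of the left component and splits by the case analysis above. A core move (ii) is answered by the matching weak transition granted by the hypothesis $\approx$ --- using Thm.~\ref{thm:congruence} (contextuality) for the environment sensor-update action and Prop.~\ref{prop:strong-obs} for the actuator-observation action --- and re-attaching the unchanged $R'$ to both targets lands back in $\mathcal{S}$. An internal $R'$-move (i) is answered by replaying the identical move of the copy of $R'$ on the other side, with the core components untouched. A $\sigma$-move (iii) is answered by matching the induced core $\sigma$-move via the hypothesis and replaying the forced $\sigma$-step of $R'$ on the other side; the two resulting residuals of $R$ coincide because the timed transition relation on processes is deterministic, so $R'$ has a unique $\sigma$-derivative, independent of $\I_i$. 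The symmetric direction is identical.

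The one genuinely delicate point --- the step I expect to be the main obstacle --- is keeping the residual of $R$ syntactically identical on the two sides, which case (i) has already flagged: time steps and transmissions on $g$ cause no drift, but a sensor read by $R$ substitutes $\I_i(s)$ into its continuation, and along the bisimulation game the sensor part of $\I_1$ and $\I_2$ may drift apart when $h\neq k$ (an environment update $\rsensa v{s@h'}$ affects one node's interface but not the other's). The way out is to exploit that $g$ is fresh: a value produced by a sensor read of $R$ can reach the outside only through a transmission $\sendobs{g}{v}{\ell}$, and since $g$ occurs nowhere else it may be assumed restricted, so such transmissions carry no observable content. One therefore enlarges $\mathcal{S}$ so that it relates pairs $\big(\res{\tilde c}(\nodep{n}{\conf{\I_1}{P_1'|R_1'}}{\mu}{h'}|O_1'),\ \res{\tilde d}(\nodep{n}{\conf{\I_2}{P_2'|R_2'}}{\mu}{k'}|O_2')\big)$ in which $R_1'$ and $R_2'$ are derivatives of $R$ differing only in substituted sensor values, and re-runs the three-shapes analysis to verify that these extra pairs are still bisimilar precisely because $g$ is private. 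Carrying out this strengthening rigorously is the only part that requires care; the remaining checks are a routine induction over the SOS rules.
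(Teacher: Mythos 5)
Your overall strategy --- pair the two systems with the residual of $R$ attached whenever the underlying cores are bisimilar, and split each move into ``internal to $R$'', ``core move with $R$ carried along'', and ``$\sigma$-step'' --- is exactly the structure of the paper's proof, and that part of your case analysis goes through as you describe. The problem is the device you introduce to close what you yourself single out as the delicate point. You propose to let the two residuals $R_1'$ and $R_2'$ differ in the sensor values substituted into them, and to argue this is harmless because the fresh channel $g$ ``may be assumed restricted, so such transmissions carry no observable content''. That step is unsound in this calculus. By rule \rulename{SndObs}, a transmission on any unrestricted channel yields an observable extensional action $\sendobs{g}{v}{\ell}$ whose label contains the transmitted value $v$, whether or not a receiver for $g$ occurs anywhere in the system; \emph{fresh} is not \emph{restricted}. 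Nor can you quietly add $\res{g}{}$: that changes the systems being compared, and in the one place the lemma is actually used (peeling $BoilerCtrl$ off $Phone$ in the proof of Prop.~\ref{prop:SYS-bis}) the ``fresh Internet channel'' is the free channel $b$ on which $R$ must later talk to $BoilerMng$, which has already been moved outside the restriction by congruence. If the substituted sensor values in $R_1'$ and $R_2'$ really could differ, the two sides would eventually emit $\sendobs{b}{v}{\ell}$ and $\sendobs{b}{w}{\ell}$ with $v\neq w$, and your enlarged relation would fail to be a bisimulation.

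The paper closes this point the opposite way: its relation keeps literally the same interface $\I$ and the same copy of $R$ on both sides, and argues (invoking Prop.~\ref{prop:strong-obs}) that the interfaces of the two copies of node $n$ cannot diverge along the game; consequently a sensor read by $R$ returns the same value on both sides, produces syntactically the same continuation, and the subsequent output on the Internet channel is matched exactly (and is observable from every location, since the channel has range $\infty$, so the possible difference between the two node locations is irrelevant). In short, the correct move is to show that the drift you worry about does not occur, not to weaken the relation so as to tolerate it; as written, your proposal replaces the one step that genuinely needs an argument with a claim that is false for this semantics.
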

\begin{proof}
Let use define below the relation $\rel$:
\begin{center}
{\small $
\left \{ \big(\res { \tilde{c} } (\nodep{n}{\conf \I  {P_1 {|} R}  }{\mu}{h}|O_1 ) ,  
\res { \tilde{d} } ( \nodep{n}{\conf \I  {P_2 {|} R}  }{\mu}{k}|O_2 )\big) :  \res { \tilde{c} } (\nodep{n}{\conf \I  {P_1}  }{\mu}{h}|O_1)\approx 
\res { \tilde{d} } (\nodep{n}{\conf \I  {P_2}  }{\mu}{k}|O_2)  \right \} 
$
}
\end{center}
where $R$ respects the hypotheses above. That means that $R$ can only
(i)  read 
the sensors  of $\I$; (ii) transmit along some fresh Internet channel; 
(iii) let time passes. 
We prove that the symmetric closure of $\rel$ is a bisimulation.
Let $(M,N) \in \rel$, we proceed by case analysis  on why 
$M \trans{\alpha} M'$. 

\begin{itemize}
\item 
Let $M = \res {\tilde{c}} ( \nodep{n}{\conf \I  {P_1 | R}  }{\mu}{h}|O_1) 
\trans{\alpha}\res { \tilde{c}} ( \nodep{n}{\conf {\I'}  {P'_1 | R} }{\mu}{h'}
|O'_1 ) = M'$, with $\alpha \neq \sigma$, be a transitions which does not involve 
(and affect) $R$ at all. This means that 
$\res {\tilde{c}} ( \nodep{n}{\conf \I  {P_1}  }{\mu}{h}|O_1) 
\trans{\alpha}\res { \tilde{c}} ( \nodep{n}{\conf {\I'}  {P'_1} }{\mu}{h'}
|O'_1 )$. 
By hypothesis there are  $\I''$, $P'_2$, $O'_2$ and $k'$ such that 
$\res {\tilde{c}} ( \nodep{n}{\conf \I  {P_2}  }{\mu}{k}|O_2) 
\Trans{\alpha}\res { \tilde{c}} ( \nodep{n}{\conf {\I''}  {P'_2} }{\mu}{k'}
|O'_2 )$ and  
 \[ \res { \tilde{c} } (\nodep{n}{\conf {\I'}  {P'_1}  }{\mu}{h'}|O'_1)\approx 
\res { \tilde{d} } (\nodep{n}{\conf {\I''}  {P'_2}  }{\mu}{k'}|O'_2) 
\enspace . 
\]
By Prop.~\ref{prop:strong-obs} it follows that $\I' = \I''$. Furthermore as $\alpha \neq 
\sigma$ we have $h=h'$ and $k=k'$. 
Thus, $N = \res {\tilde{c}} ( \nodep{n}{\conf \I  {P_2 | R}  }{\mu}{k}|O_2) 
\Trans{\hat{\alpha}}\res { \tilde{c}} ( \nodep{n}{\conf {\I'}  {P'_2 |R} }{\mu}{k'}
|O'_2 )= N'$, with $(M' , N') \in \rel$. 

\item 
Let $M = \res {\tilde{c}} ( \nodep{n}{\conf \I  {P_1 | R}  }{\mu}{h}|O_1) 
\trans{\sigma}\res { \tilde{c}} ( \nodep{n}{\conf {\I}  {P'_1 | R'} }{\mu}{h'}
|O'_1 ) = M'$. We know that timed actions do not change the physical interface $\I$.  This implies that: 
\begin{enumerate}
\item $R \trans{\sigma} R'$
\item  $\res {\tilde{c}} ( \nodep{n}{\conf \I  {P_1 | R}  }{\mu}{h}|O_1) 
\ntrans{\tau}$
\item $\res {\tilde{c}} ( \nodep{n}{\conf \I  {P_1}  }{\mu}{h}|O_1) 
\trans{\sigma}\res { \tilde{c}} ( \nodep{n}{\conf {\I}  {P'_1} }{\mu}{h'}
|O'_1 )$. 
\end{enumerate}
In particular, the second item means that $R$ does not have any interaction with the network. It even does not read some sensor of $\I$.
By hypothesis we have that $\res {\tilde{c}} ( \nodep{n}{\conf \I  {P_2}  }{\mu}{k}|O_2) 
\Trans{}\trans{\sigma}\Trans{}\res { \tilde{c}} ( \nodep{n}{\conf {\I'}  {P'_2} }{\mu}{k'}
|O'_2 )$ with 
\[ \res { \tilde{c} } (\nodep{n}{\conf {\I}  {P'_1}  }{\mu}{h'}|O'_1)\approx 
\res { \tilde{d} } (\nodep{n}{\conf {\I'}  {P'_2}  }{\mu}{k'}|O'_2) 
\enspace . 
\]
By Prop.~\ref{prop:strong-obs} we know that it must be  $\I = \I'$. As $R$ 
cannot have any interaction with the rest of the network apart from 
time synchronisation it follows that $N = \res {\tilde{c}} ( \nodep{n}{\conf \I  {P_2 | R}  }{\mu}{k}|O_2) 
\Trans{} \trans{\sigma}\Trans{}
\res { \tilde{c}} ( \nodep{n}{\conf {\I}  {P'_2 |R'} }{\mu}{k'}
|O'_2 )= N'$, with $(M' , N') \in \rel$. 

\item 
Let $M = \res {\tilde{c}} ( \nodep{n}{\conf \I  {P_1 | R}  }{\mu}{h}|O_1) 
\trans{\alpha}\res { \tilde{c}} ( \nodep{n}{\conf {\I}  {P_1 | R'} }{\mu}{h}
|O'_1 ) = M'$, with $\alpha \neq \sigma$, be a transitions which is due to $R$. 
This can be a sensor reading or a transmission along some channel $b$, with $\rng b = \infty$. In that case, it is easy to see that, as $\rng b = \infty$,  then 
$N = \res {\tilde{c}} ( \nodep{n}{\conf \I  {P_2 | R}  }{\mu}{k}|O_2) 
\trans{\alpha}\res { \tilde{c}} ( \nodep{n}{\conf {\I}  {P_2 | R'} }{\mu}{k}
|O_2 ) = N'$, with $(M', N') \in \rel$.
\end{itemize}

\end{proof}


\paragraph{\textbf{Proof of Prop.~\ref{prop:SYS-bis}.}}
First of all,  we notice that we can focus on smaller systems. This is because: 
\[
\begin{array}{rcl}
\res {\tilde{c}} Sys & = & \res {\tilde{c}} \big(
 Phone | LightMng1 | LightMng2 | BoilerMng \big)\\[3pt]
 & \equiv &
 \res {\tilde{c}} \big(
 Phone | LightMng1 | LightMng2 \big)  | BoilerMng 
\end{array}
\]
and 
\[
\begin{array}{rcl}
\res{\tilde{c},g}
 \overline{Sys}  & = &
\res  { \tilde{c},g} 
 \big( \overline{Phone}  | LightMng1 | LightMng2  |  \overline{{C}LightMng} | BoilerMng \big)\\[3pt]
 & \equiv &
 \res  { \tilde{c},g} 
 \big( \overline{Phone}  | LightMng1 | LightMng2  |  \overline{{C}LightMng} \big) | BoilerMng . 
\end{array}
\]
By Thm.~\ref{thm:congruence} the relation $\approx$ is preserved by parallel composition. Thus, in order to prove our result it is enough to show that: 
 \[
\begin{array}{c}
\res {\tilde{c}} \big(  Phone | LightMng1 | LightMng2  \big)  \approx \\[3pt]
\res  { \tilde{c}}\res{g} \big( \overline{Phone}  | LightMng1 | LightMng2 |
 \overline{{C}LightMng}  \big) \enspace .
\end{array} \] 
Actually we  can consider even smaller systems. As  $Phone = \nodep{n_P}{\conf {\I_P}{BoilerCtrl \newpar LightCtrl}}{\mob}{out}$ and $\overline{Phone} = \nodep{n_P}{\conf {\I_P}{BoilerCtrl \newpar 
\overline{LightCtrl}}}{\mob}{out}$  by Lem.~\ref{lem:adding} 
it is enough to show that 
\[
\begin{array}{c}
\res {\tilde{c}} \big( \nodep{n_P}{\conf {\I_P}{LightCtrl}}{\mob}{out}  | LightMng1 | LightMng2  \big)  \; \approx \; \\[3pt]
\res  { \tilde{c}}\res{g} \big(  \nodep{n_P}{\conf {\I_P}{\overline{LightCtrl}}}{\mob}{out} | LightMng1 | LightMng2  |  \overline{{C}LightMng}  \big) \enspace . \end{array}
\]
Let us call $S_{\mathrm{L}}$ the system on the left side, and $S_{\mathrm{R}}$ the 
system on the right side. 
Let us define the relation
\[
\rel \deff \bigcup_{i=1}^{17}
 \Big( \res {\tilde{c}} {M_i} \, , \, \res{\tilde{c}}\res g {N_i} \Big)
\]
 where the pairs $(M_i, N_i)$ are enumerated below. We will prove that the symmetric closure of $\rel$ is a bisimulation up to expansion~\cite{Sangiorgi-book}. Then we will show
that $S_{\mathrm{L}} = \res{\tilde{c}}M_1 \RRr \res{\tilde{c}}\res{g} N_1  \isexpan S_{\mathrm{R}}$. As
the up to expansion technique is sound, and the bisimilarity is a transitive relation,  it
follows that $S_{\mathrm{L}} \approx S_{\mathrm{R}}$. 

Let us provide the list  of pairs $(M_i, N_i)$, for $1 \leq i \leq 17$: 
{\small
\begin{itemize}
\item $M_1 = \nodep{n_P}{\conf {\I_P}{LightCtrl}}{\mob}{k}  \; \big| 
\;  \nodep{n_{1}}{\conf {\I_{1}}{L_1}}{\stat}{loc1} \; \big| \; \nodep{n_{2}}{\conf {\I_{2}}{L_2}}{\stat}{loc4} $  \\[2pt]
$N_1 = \nodep{n_P}{\conf {\I_P}{\sigma.\overline{LightCtrl}}}{\mob}{k} 
 \; \big| 
\;  \nodep{n_{1}}{\conf {\I_{1}}{L_1}}{\stat}{loc1} \; \big| \; \nodep{n_{2}}{\conf {\I_{2}}{L_2}}{\stat}{loc4} \; \big| \; 
 \nodep{n_{LM}}{\conf {\emptyset}{\sigma.\overline{CLM}}}{\stat}{loc3}$,\\[2pt]
with $k  \not\in \{ loc1, loc2, loc3, loc4 \}$, $\I_1(light_1)=\mathsf{off}$
and $\I_2(light_2)=\mathsf{off}$
\item $M_2= \nodep{n_P}{\conf {\I_P}{\sigma.LightCtrl}}{\mob}{loc1}   \; \big| \;
\nodep{n_{1}}{\conf {\I_{1}}{\wact{\mathsf{on}}{light_1}.\sigma.L_1}}{\stat}{loc1} \; \big| \; \nodep{n_{2}}{\conf {\I_{2}}{L_2}}{\stat}{loc4} $  \\[2pt]
$N_2 = \nodep{n_P}{\conf {\I_P}{\sigma.\overline{LightCtrl}}}{\mob}{loc1} \q \big| \q 
\nodep{n_1}{\conf {\I_1}{\wact{\mathsf{on}}{light_1}.\sigma.L_1}}{\stat}{loc1} \q \big| 
\q \nodep{n_2}{\conf {\I_2}{L_2}}{\stat}{loc4} \q \big| \q
\nodep{n_{LM}}{\conf {\emptyset}{\sigma.\overline{CLM}}}{\stat}{loc3}$
\item  $M_3= \nodep{n_P}{\conf {\I_P}{\sigma.LightCtrl}}{\mob}{loc1}   \; \big| \;
\nodep{n_{1}}{\conf {\I'_{1}}{\sigma.L_1}}{\stat}{loc1} \; \big| \; \nodep{n_{2}}{\conf {\I_{2}}{L_2}}{\stat}{loc4} $  \\[2pt]
$N_3 = \nodep{n_P}{\conf {\I_P}{\sigma.\overline{LightCtrl}}}{\mob}{loc1} \q \big| \q 
\nodep{n_1}{\conf {\I'_1}{\sigma.L_1}}{\stat}{loc1} \q \big| 
\q \nodep{n_2}{\conf {\I_2}{L_2}}{\stat}{loc4} \q \big| \q
\nodep{n_{LM}}{\conf {\emptyset}{\sigma.\overline{CLM}}}{\stat}{loc3}$, \\[2pt]
with $\I'_1(light_1) = \mathsf{on}$.

\item 
 $M_4 = \nodep{n_P}{\conf {\I_P}{LightCtrl}}{\mob}{k}  \; \big| 
\;  \nodep{n_{1}}{\conf {\I'_{1}}{L_1}}{\stat}{loc1} \; \big| \; \nodep{n_{2}}{\conf {\I_{2}}{L_2}}{\stat}{loc4} $  \\[2pt]
$N_4 = \nodep{n_P}{\conf {\I_P}{\sigma.\overline{LightCtrl}}}{\mob}{k} 
 \; \big| 
\;  \nodep{n_{1}}{\conf {\I'_{1}}{L_1}}{\stat}{loc1} \; \big| \; \nodep{n_{2}}{\conf {\I_{2}}{L_2}}{\stat}{loc4} \; \big| \; 
 \nodep{n_{LM}}{\conf {\emptyset}{\sigma.\overline{CLM}}}{\stat}{loc3}$,\\[2pt]
with $k  \not\in \{ loc1, loc2, loc3, loc4 \}$

\item 
$M_5 = \nodep{n_P}{\conf {\I_P}{LightCtrl}}{\mob}{k} 
  \; \big| \;
\nodep{n_{1}}{\conf {\I'_{1}}{\wact{\mathsf{off}}{light_1}.L_1}}{\stat}{loc1} \; \big| \; \nodep{n_{2}}{\conf {\I_{2}}{L_2}}{\stat}{loc4} $\\[2pt]
$N_5 = \nodep{n_P}{\conf {\I_P}{\sigma.\overline{LightCtrl}}}{\mob}{k} \q \big| \q 
\nodep{n_1}{\conf {\I'_1}{\wact{\mathsf{off}}{light_1}.L_1}}{\stat}{loc1} \q \big| 
\q \nodep{n_2}{\conf {\I_2}{L_2}}{\stat}{loc4} \q \big| \q
\nodep{n_{LM}}{\conf {\emptyset}{\sigma.\overline{CLM}}}{\stat}{loc3}$\\[2pt]
with $k  \not\in \{ loc1, loc2, loc3, loc4 \}$

\item 
$ M_6 = \nodep{n_P}{\conf {\I_P}{LightCtrl}}{\mob}{loc1} 
  \; \big| \;
\nodep{n_{1}}{\conf {\I'_{1}}{\wact{\mathsf{off}}{light_1}.L_1}}{\stat}{loc1} \; \big| \; \nodep{n_{2}}{\conf {\I_{2}}{L_2}}{\stat}{loc4}$\\[2pt]
\noindent
\hspace*{-2mm}$
\begin{array}{rcl}
N_6  & =  & \nodep{n_P}{\conf {\I_P}{\sigma.\overline{LightCtrl}}}{\mob}{loc1} \q \big| \q 
\nodep{n_1}{\conf {\I'_1}{\wact{\mathsf{off}}{light_1}.L_1}}{\stat}{loc1} \q \big| \\[2pt]
&& \nodep{n_2}{\conf {\I_2}{L_2}}{\stat}{loc4} \q \big| \q
\nodep{n_{LM}}{\conf {\emptyset}{\timeout{\OUT{c_1}{}.\sigma.\overline{CLM}}}{CLM}}{\stat}{loc3}
\end{array}
$

\item 
$M_7 = \nodep{n_P}{\conf {\I_P}{LightCtrl}}{\mob}{loc2}   \; \big| \;
\nodep{n_{1}}{\conf {\I'_{1}}{L_1}}{\stat}{loc1} \; \big| \; \nodep{n_{2}}{\conf {\I_{2}}{L_2}}{\stat}{loc4}$\\[2pt]
$N_7 = \nodep{n_P}{\conf {\I_P}{\sigma.\overline{LightCtrl}}}{\mob}{loc2} 
 \; \big| 
\;  \nodep{n_{1}}{\conf {\I'_{1}}{L_1}}{\stat}{loc1} \; \big| \; \nodep{n_{2}}{\conf {\I_{2}}{L_2}}{\stat}{loc4} \; \big| \; 
 \nodep{n_{LM}}{\conf {\emptyset}{\sigma.\overline{CLM}}}{\stat}{loc3}$

\item 
$M_8 = \nodep{n_P}{\conf {\I_P}{LightCtrl}}{\mob}{loc3} 
  \; \big| \;
\nodep{n_{1}}{\conf {\I'_{1}}{\wact{\mathsf{off}}{light_1}.L_1}}{\stat}{loc1} \; \big| \; \nodep{n_{2}}{\conf {\I_{2}}{L_2}}{\stat}{loc4}$\\[2pt] 
$N_8 = \nodep{n_P}{\conf {\I_P}{\sigma.\overline{LightCtrl}}}{\mob}{loc3} 
 \; \big| 
\;  \nodep{n_{1}}{\conf {\I'_{1}}{\wact{\mathsf{off}}{light_1}.L_1}}{\stat}{loc1} \; \big| \; \nodep{n_{2}}{\conf {\I_{2}}{L_2}}{\stat}{loc4} \; \big| \; 
 \nodep{n_{LM}}{\conf {\emptyset}{\sigma.\overline{CLM}}}{\stat}{loc3}$

\item 

$M_9 =  \nodep{n_P}{\conf {\I_P}{LightCtrl}}{\mob}{loc2} 
  \; \big| \;
\nodep{n_{1}}{\conf {\I'_{1}}{\wact{\mathsf{off}}{light_1}.L_1}}{\stat}{loc1} \; \big| \; \nodep{n_{2}}{\conf {\I_{2}}{L_2}}{\stat}{loc4}$\\[2pt]
$N_9 = \nodep{n_P}{\conf {\I_P}{\sigma.\overline{LightCtrl}}}{\mob}{loc2} 
 \; \big| 
\;  \nodep{n_{1}}{\conf {\I'_{1}}{\wact{\mathsf{off}}{light_1}.L_1}}{\stat}{loc1} \; \big| \; \nodep{n_{2}}{\conf {\I_{2}}{L_2}}{\stat}{loc4} \; \big| \; 
 \nodep{n_{LM}}{\conf {\emptyset}{\sigma.\overline{CLM}}}{\stat}{loc3}$

\item
$M_{10} =  \nodep{n_P}{\conf {\I_P}{LightCtrl}}{\mob}{loc3} 
  \; \big| \;
\nodep{n_{1}}{\conf {\I_{1}}{L_1}}{\stat}{loc1} \; \big| \; \nodep{n_{2}}{\conf {\I_{2}}{L_2}}{\stat}{loc4}$\\[2pt]
$N_{10} = \nodep{n_P}{\conf {\I_P}{\sigma.\overline{LightCtrl}}}{\mob}{loc3} 
 \; \big| 
\;  \nodep{n_{1}}{\conf {\I_{1}}{L_1}}{\stat}{loc1} \; \big| \; \nodep{n_{2}}{\conf {\I_{2}}{L_2}}{\stat}{loc4} \; \big| \; 
 \nodep{n_{LM}}{\conf {\emptyset}{\sigma.\overline{CLM}}}{\stat}{loc3}$

\item

$M_{11} =  \nodep{n_P}{\conf {\I_P}{LightCtrl}}{\mob}{loc2} 
  \; \big| \;
\nodep{n_{1}}{\conf {\I_{1}}{L_1}}{\stat}{loc1} \; \big| \; \nodep{n_{2}}{\conf {\I_{2}}{L_2}}{\stat}{loc4}$\\[2pt]
$N_{11} = \nodep{n_P}{\conf {\I_P}{\sigma.\overline{LightCtrl}}}{\mob}{loc2} 
 \; \big| 
\;  \nodep{n_{1}}{\conf {\I_{1}}{L_1}}{\stat}{loc1} \; \big| \; \nodep{n_{2}}{\conf {\I_{2}}{L_2}}{\stat}{loc4} \; \big| \; 
 \nodep{n_{LM}}{\conf {\emptyset}{\sigma.\overline{CLM}}}{\stat}{loc3}$

\item 
$M_{12}= \nodep{n_P}{\conf {\I_P}{\sigma.LightCtrl}}{\mob}{loc4} 
  \; \big| \;
\nodep{n_{1}}{\conf {\I_{1}}{L_1}}{\stat}{loc1} \; \big| \; 
\nodep{n_{2}}{\conf {\I_{2}}{\wact{\mathsf{on}}{light_2}.\sigma.L_2}}{\stat}{loc4} $\\[2pt]
$
N_{12} = \nodep{n_P}{\conf {\I_P}{\sigma.\overline{LightCtrl}}}{\mob}{loc4} 
 \; \big| 
\;  \nodep{n_{1}}{\conf {\I_{1}}{L_1}}{\stat}{loc1} \; \big| \; \nodep{n_{2}}{\conf {\I_{2}}{\wact{\mathsf{on}}{light_2}.\sigma.L_2}}{\stat}{loc4} \; \big| \; 
 \nodep{n_{LM}}{\conf {\emptyset}{\sigma.\overline{CLM}}}{\stat}{loc3}$

\item 
$M_{13}= \nodep{n_P}{\conf {\I_P}{\sigma.LightCtrl}}{\mob}{loc4} 
  \; \big| \;
\nodep{n_{1}}{\conf {\I_{1}}{L_1}}{\stat}{loc1} \; \big| \; 
\nodep{n_{2}}{\conf {\I'_{2}}{\sigma.L_2}}{\stat}{loc4} $\\[2pt]
$
N_{13} = \nodep{n_P}{\conf {\I_P}{\sigma.\overline{LightCtrl}}}{\mob}{loc4} 
 \; \big| 
\;  \nodep{n_{1}}{\conf {\I_{1}}{L_1}}{\stat}{loc1} \; \big| \; \nodep{n_{2}}{\conf {\I'_{2}}{\sigma.L_2}}{\stat}{loc4} \; \big| \; 
 \nodep{n_{LM}}{\conf {\emptyset}{\sigma.\overline{CLM}}}{\stat}{loc3}$\\[2pt]
where $\I'_2(light_2)=\mathsf{on}$

\item 

$M_{14} = \nodep{n_P}{\conf {\I_P}{LightCtrl}}{\mob}{loc3} 
  \; \big| \;
\nodep{n_{1}}{\conf {\I_{1}}{L_1}}{\stat}{loc1} \; \big| \; \nodep{n_{2}}{\conf {\I'_{2}}{L_2}}{\stat}{loc4}$\\[2pt]
$N_{14} = \nodep{n_P}{\conf {\I_P}{\sigma.\overline{LightCtrl}}}{\mob}{loc3} 
 \; \big| 
\;  \nodep{n_{1}}{\conf {\I_{1}}{L_1}}{\stat}{loc1} \; \big| \; \nodep{n_{2}}{\conf {\I'_{2}}{L_2}}{\stat}{loc4} \; \big| \; 
 \nodep{n_{LM}}{\conf {\emptyset}{\sigma.\overline{CLM}}}{\stat}{loc3}
$

\item

$M_{15} =\nodep{n_P}{\conf {\I_P}{LightCtrl}}{\mob}{loc4} 
  \; \big| \;
\nodep{n_{1}}{\conf {\I_{1}}{L_1}}{\stat}{loc1} \; \big| \; 
\nodep{n_{2}}{\conf {\I'_{2}}{\wact{\mathsf{off}}{light_2}.L_2}}{\stat}{loc4}$\\[2pt]
\noindent \hspace*{-3mm}
$
\begin{array}{rcl}
N_{15} &  = &  \nodep{n_P}{\conf {\I_P}{\sigma.\overline{LightCtrl}}}{\mob}{loc4} 
 \; \big| 
\;  \nodep{n_{1}}{\conf {\I_{1}}{L_1}}{\stat}{loc1} \; \big| \; \nodep{n_{2}}{\conf {\I'_{2}}{\wact{\mathsf{off}}{light_2}.L_2}}{\stat}{loc4} \; \big| \; \\[2pt]
&& \nodep{n_{LM}}{\conf {\emptyset}{\timeout{\OUT{c_2}{}.\sigma.\overline{CLM}}{\overline{CLM}}}}{\stat}{loc3}
\end{array}$

\item
$M_{16} = \nodep{n_P}{\conf {\I_P}{LightCtrl}}{\mob}{loc2} 
  \; \big| \;
\nodep{n_{1}}{\conf {\I_{1}}{L_1}}{\stat}{loc1} \; \big| \; \nodep{n_{2}}{\conf {\I'_{2}}{\wact{\mathsf{off}}{light_2}.L_2}}{\stat}{loc4}$\\[2pt]
$N_{16}  =  \nodep{n_P}{\conf {\I_P}{\sigma.\overline{LightCtrl}}}{\mob}{loc2} 
 \; \big| 
\;  \nodep{n_{1}}{\conf {\I_{1}}{L_1}}{\stat}{loc1} \; \big| \; \nodep{n_{2}}{\conf {\I'_{2}}{\wact{\mathsf{off}}{light_2}.L_2}}{\stat}{loc4} \; \big| \; 
 \nodep{n_{LM}}{\conf {\emptyset}{\sigma.\overline{CLM}}}{\stat}{loc3}$

\item 
$M_{17} = \nodep{n_P}{\conf {\I_P}{LightCtrl}}{\mob}{loc3} 
  \; \big| \;
\nodep{n_{1}}{\conf {\I_{1}}{L_1}}{\stat}{loc1} \; \big| \; \nodep{n_{2}}{\conf {\I'_{2}}{\wact{\mathsf{off}}{light_2}.L_2}}{\stat}{loc4}$\\[2pt]
$N_{17}  =  \nodep{n_P}{\conf {\I_P}{\sigma.\overline{LightCtrl}}}{\mob}{loc3} 
 \; \big| 
\;  \nodep{n_{1}}{\conf {\I_{1}}{L_1}}{\stat}{loc1} \; \big| \; \nodep{n_{2}}{\conf {\I'_{2}}{\wact{\mathsf{off}}{light_2}.L_2}}{\stat}{loc4} \; \big| \; 
 \nodep{n_{LM}}{\conf {\emptyset}{\sigma.\overline{CLM}}}{\stat}{loc3}$
\end{itemize}
}

We show that the symmetric closure of $\rel$ is a bisimulation up to expansion.
For each pair $\big( \res{{\tilde{c}}}M_i \, , \, \res{{\tilde{c}},g}N_i \big) \in \rel$ we proceed by case analysis on why $ \res{{\tilde{c}}}M_i 
\trans{\alpha} \hat{M}$. Then, we do the same for $ \res{{\tilde{c}},g}N_i
\trans{\alpha} \hat{N}$. Before starting the case analysis 
we notice that in all pairs 
of $\rel$ the physical interfaces of the corresponding nodes are
the same. For that reason  we can safely omit 
the extensional actions of the form   $\wact v {a@h}$. 
Moreover, our processes never read sensors (we removed from the initial system both $BoilerCtrl$
and $BoilerMng$), thus we can safely omit
actions of the form $\rsensa v {s@h}$ as well. 
\\

- Let us consider the pair $\big( \res{\tilde{c}} M_1 \, , \, \res{\tilde{c}}\res {g} N_1 \big)$.
\begin{itemize}
\item Let $\res {\tilde{c}} M_1 \trans{\alpha} \hat{M}$, for $\alpha \neq \sigma$. This case is not admissible as the phone is too far to interact with the 
some light manager. 
\item Let  $\res {\tilde{c}} M_1 \trans{\sigma} \res {\tilde{c}} M'_1$, with 
\[
M'_1 =  \nodep{n_P}{\conf {\I_P}{LightCtrl}}{\mob}{k}  | 
\nodep{n_{1}}{\conf {\I_{1}}{\wact{\mathsf{off}}{light_1}.L_1}}{\stat}{loc1} | 
 \nodep{n_{2}}{\conf {\I_{2}}{\wact{\mathsf{off}}{light_2}.L_2}}{\stat}{loc4}
\]
with $k \not \in \{ loc1, loc2, loc3, loc4 \}$, that is  the phone didn't
get inside the smarthome. By the sake of simplicity we will call $k$ all locations outside the smarthome. 
By two applications of Law~\ref{law1}  of Thm.~\ref{thm:algebraic-laws}
we have:
\[ 
\begin{array}{rcl}
\res {\tilde{c}}M'_1  & \gtrsim  & \res {\tilde{c}} \big( 
\nodep{n_P}{\conf {\I_P}{LightCtrl}}{\mob}{k}   \; \big| \;
\nodep{n_{1}}{\conf {\I_{1}}{L_1}}{\stat}{loc1} \; \big| \; \nodep{n_{2}}{\conf {\I_{2}}{L_2}}{\stat}{loc4}
\big)\\[3pt]
& = & \res {\tilde{c}} M_1 \enspace . 
\end{array}
\]
Then,  $\res {\tilde{c},g} N_1 
\trans{\sigma} \Trans{} \res {\tilde{c},g} N_1$, 
and $\big(  \res {\tilde{c}} M_1 \, , \,   \res {\tilde{c},g} N_1 \big) \in \rel$. 

\item Let  $\res {\tilde{c}} M_1 \trans{\sigma} \res {\tilde{c}} M'_1$ with 
\[ M'_1 = \nodep{n_P}{\conf {\I_P}{LightCtrl}}{\mob}{loc1}  | 
\nodep{n_{1}}{\conf {\I_{1}}{\wact{\mathsf{off}}{light_1}.L_1}}{\stat}{loc1} | 
 \nodep{n_{2}}{\conf {\I_{2}}{\wact{\mathsf{off}}{light_2}.L_2}}{\stat}{loc4}
\enspace . \]
In this case the smartphone just entered the smarthome from its entrance 
room (Room1) located at $loc1$. 
By two applications of Law~\ref{law1} and one application of 
Law~\ref{law4} of Thm.~\ref{thm:algebraic-laws}
we have:
\[ 
\begin{array}{rcl}
\res {\tilde{c}}M'_1  & \gtrsim  & \res {\tilde{c}} \big( 
\nodep{n_P}{\conf {\I_P}{\sigma.LightCtrl}}{\mob}{loc1}   \; \big| \;
\nodep{n_{1}}{\conf {\I_{1}}{\wact{\mathsf{on}}{light_1}.\sigma.L_1}}{\stat}{loc1} \; \big| \; \nodep{n_{2}}{\conf {\I_{2}}{L_2}}{\stat}{loc4}
\big)\\[3pt]
& = & \res {\tilde{c}} M_2 \enspace . 
\end{array}
\]
Then there is $N_2$ such that  $\res {\tilde{c},g} N_1 
\trans{\sigma} \Trans{} \res {\tilde{c},g} N_2$ with 
\[
\begin{array}{rcl}
 \res {\tilde{c},g} N_2 & = &\res {\tilde{c},g} 
\big( 
\nodep{n_P}{\conf {\I_P}{\sigma.\overline{LightCtrl}}}{\mob}{loc1} \q \big| \q 
\nodep{n_1}{\conf {\I_1}{\wact{\mathsf{on}}{light_1}.\sigma.L_1}}{\stat}{loc1} \q \big| \\[3pt]
&&\nodep{n_2}{\conf {\I_2}{L_2}}{\stat}{loc4} \q \big| \q
\nodep{n_{LM}}{\conf {\emptyset}{\sigma.\overline{CLM}}}{\stat}{loc3}
\big) 
\end{array}
\]
and $\big(  \res {\tilde{c}} M_2 \, , \,   \res {\tilde{c},g} N_2 \big) \in \rel$. 
\end{itemize}
Now, we proceed by case analysis on why $\res {\tilde{c},g}N_1 \trans{\alpha}
\hat{N}$. 
\begin{itemize}
\item Let $\res {\tilde{c},g}N_1 \trans{\alpha}
\hat{N}$, with $\alpha \neq \sigma$. This case is not admissible. 
\item  Let $\res {\tilde{c},g}N_1 \trans{\sigma} \res {\tilde{c},g}N'_1$, 
 where 
the phone didn't enter the house, as its location is different from $loc1$. 
This case is similar to the previous one. 
\item $\res {\tilde{c},g}N_1 \trans{\sigma} \res {\tilde{c},g}N'_1$, with 
\[
\begin{array}{rcl}
N'_1 & = & 
\nodep{n_P}{\conf {\I_P}{\overline{LightCtrl}}}{\mob}{loc1}  \q \big| \q 
\nodep{n_{1}}{\conf {\I_{1}}{\wact{\mathsf{off}}{light_1}.L_1}}{\stat}{loc1} 
\q \big| \\[3pt] 
&& \nodep{n_{2}}{\conf {\I_{2}}{\wact{\mathsf{off}}{light_2}.L_2}}{\stat}{loc4}
\q \big| \q  \nodep{n_{LM}}{\conf {\emptyset}{\overline{CLM}}}{\stat}{loc3}
\enspace .
\end{array}
\]
Because the phone just moved to location $loc1$.
By two applications of Law~\ref{law1}, one applications of Law~\ref{law2}, 
and two applications of Law~\ref{law4} of Thm.~\ref{thm:algebraic-laws}
we have:
\[
\begin{array}{rcl}
 \res {\tilde{c},g} N'_1 & \gtrsim &\res {\tilde{c},g} 
\big( 
\nodep{n_P}{\conf {\I_P}{\sigma.\overline{LightCtrl}}}{\mob}{loc1} \q \big| \q 
\nodep{n_1}{\conf {\I_1}{\wact{\mathsf{on}}{light_1}.\sigma.L_1}}{\stat}{loc1} \q \big| \\[3pt]
&&\nodep{n_2}{\conf {\I_2}{L_2}}{\stat}{loc4} \q \big| \q
\nodep{n_{LM}}{\conf {\emptyset}{\sigma.\overline{CLM}}}{\stat}{loc3}
\big) \\[3pt]
& = &  \res {\tilde{c},g} N_2 \enspace . 
\end{array}
\]
Then there is $M_2$ such that $\res {\tilde{c}}M_1 \trans{\sigma} 
\Trans{} \res{\tilde{c}} M_2$, 
with 
\[
M_2 = \nodep{n_P}{\conf {\I_P}{\sigma.LightCtrl}}{\mob}{loc1}   \; \big| \;
\nodep{n_{1}}{\conf {\I_{1}}{\wact{\mathsf{on}}{light_1}.\sigma.L_1}}{\stat}{loc1} \; \big| \; \nodep{n_{2}}{\conf {\I_{2}}{L_2}}{\stat}{loc4}
\]
and $\big( \res{\tilde{c}}M_2 \, , \,   \res {{\tilde{c}}, g}N_2 \big) \in \rel$. 
\end{itemize}

- Let us consider the pair $\big( \res {\tilde{c}} M_2 \, , \, \res{\tilde{c}}\res g N_2 \big)$. The only possible transition in both networks is a strong 
transition $\trans{light_1}$ which leads to the pair $\big( \res {\tilde{c}} 
M_3 \, , \, \res{\tilde{c}}\res g N_3 \big) \in \rel$.\\

- Let us consider the pair $\big( \res {\tilde{c}} M_3 \, , \, \res{\tilde{c}}\res g N_3 \big)$. 
\begin{itemize}
\item Let $\res {\tilde{c}} M_3 \trans{\alpha} \hat{M}$, for $\alpha \neq \sigma$. This case is not admissible. 
\item Let  $\res {\tilde{c}} M_3 \trans{\sigma} \res {\tilde{c}} M'_3$, where 
 \[
M'_3 = \nodep{n_P}{\conf {\I_P}{LightCtrl}}{\mob}{loc1}   \; \big| \;
\nodep{n_{1}}{\conf {\I'_{1}}{L_1}}{\stat}{loc1} \; \big| \; \nodep{n_{2}}{\conf {\I_{2}}{\wact{\mathsf{off}}{light_2}.L_2}}{\stat}{loc4} 
\]
because the phone remained in Room1, at location $loc1$. By two applications 
of Law~\ref{law1} and one application of Law~\ref{law4}  of Thm.~\ref{thm:algebraic-laws} 
we get:
\[
\begin{array}{rcl}
\res {\tilde{c}} M'_3 & \gtrsim & 
\res{\tilde{c}} \big( \nodep{n_P}{\conf {\I_P}{\sigma.LightCtrl}}{\mob}{loc1}   \; \big| \;
\nodep{n_{1}}{\conf {\I'_{1}}{\sigma.L_1}}{\stat}{loc1} \; \big| \; \nodep{n_{2}}{\conf {\I_{2}}{L_2}}{\stat}{loc4} \big)
\\[3pt]
& = & \res{\tilde{c}} M_3 \enspace . 
\end{array}
\]
Then, 
$\res{\tilde{c},g} N_3 \trans{\sigma} \Trans{}\res{\tilde{c},g} N_3 $, 
and obviously $\big( \res{\tilde{c}} M_3 \, , \, \res{\tilde{c},g} N_3\big) \in \rel$.  
\item Let  $\res {\tilde{c}} M_3 \trans{\sigma} \res {\tilde{c}} M'_3$, where 
 \[
M'_3 = \nodep{n_P}{\conf {\I_P}{LightCtrl}}{\mob}{k}   \; \big| \;
\nodep{n_{1}}{\conf {\I'_{1}}{L_1}}{\stat}{loc1} \; \big| \; \nodep{n_{2}}{\conf {\I_{2}}{\wact{\mathsf{off}}{light_2}.L_2}}{\stat}{loc4}
\]
with $k \not \in \{ loc1, loc2, loc3, loc4 \}$, i.e.\ the phone 
moved out of the house.  
By applying Law~\ref{law1} of Thm.~\ref{thm:algebraic-laws} we get
\[
\begin{array}{rcl}
\res{{\tilde{c}}}M'_3 & \gtrsim & \res {\tilde{c}}\nodep{n_P}{\conf {\I_P}{LightCtrl}}{\mob}{k}   \; \big| \;
\nodep{n_{1}}{\conf {\I'_{1}}{L_1}}{\stat}{loc1} \; \big| \; \nodep{n_{2}}{\conf {\I_{2}}{L_2}}{\stat}{loc4}
\\[3pt]
& = & \res{{\tilde{c}}}M_4 \enspace . 
\end{array}
\]
Then there is $N_4$ such that 
$\res{{\tilde{c}},g} N_3 \trans{\sigma} \Trans{} \res{{\tilde{c}},g} N_4$ with: 
\[
N_4 = \nodep{n_P}{\conf {\I_P}{\sigma.\overline{LightCtrl}}}{\mob}{k} \q \big| \q 
\nodep{n_1}{\conf {\I'_1}{L_1}}{\stat}{loc1} \q \big| 
\q \nodep{n_2}{\conf {\I_2}{L_2}}{\stat}{loc4} \q \big| \q
\nodep{n_{LM}}{\conf {\emptyset}{\sigma.\overline{CLM}}}{\stat}{loc3} 
\]
and  $\big( \res{{\tilde{c}}}M_4 \, , \, \res{{\tilde{c}},g}N_4 \big) \in \rel$.

\item  Let  $\res {\tilde{c}} M_3 \trans{\sigma} \res {\tilde{c}} M'_3$, where 
 \[
M'_3 = \nodep{n_P}{\conf {\I_P}{LightCtrl}}{\mob}{loc2}   \; \big| \;
\nodep{n_{1}}{\conf {\I'_{1}}{L_1}}{\stat}{loc1} \; \big| \; \nodep{n_{2}}{\conf {\I_{2}}{\wact{\mathsf{off}}{light_2}.L_2}}{\stat}{loc4}
\]
because the phone moved from $loc1$ to $loc2$. In this case, 
by applying Law~\ref{law1} of Thm.~\ref{thm:algebraic-laws} 
we have: 
\[
\begin{array}{rcl}
\res {\tilde{c}} M'_3 & \gtrsim & 
\res{\tilde{c}}
\big( \nodep{n_P}{\conf {\I_P}{LightCtrl}}{\mob}{loc2}   \; \big| \;
\nodep{n_{1}}{\conf {\I'_{1}}{L_1}}{\stat}{loc1} \; \big| \; \nodep{n_{2}}{\conf {\I_{2}}{L_2}}{\stat}{loc4} \big)
\\[3pt]
& = & \res{\tilde{c}} M_7 \enspace . 
\end{array}
\]
Then, we have that 
$\res{\tilde{c},g} N_3 \trans{\sigma} \Trans{}\res{\tilde{c},g} N_7 $, 
where 
\[
N_7 = \nodep{n_P}{\conf {\I_P}{\sigma.\overline{LightCtrl}}}{\mob}{loc2} 
 \; \big| 
\;  \nodep{n_{1}}{\conf {\I'_{1}}{L_1}}{\stat}{loc1} \; \big| \; \nodep{n_{2}}{\conf {\I_{2}}{L_2}}{\stat}{loc4} \; \big| \; 
 \nodep{n_{LM}}{\conf {\emptyset}{\sigma.\overline{CLM}}}{\stat}{loc3}
\]
and  $\big( \res{\tilde{c}} M_7 \, , \, \res{\tilde{c},g} N_7\big) \in \rel$.

\end{itemize}
The case analysis  when $\res{{\tilde{c}},g} N_3 \trans{\alpha} \hat{N}$ is similar. \\

- Let us consider the pair $\big( \res {\tilde{c}} M_4 \, , \, \res{\tilde{c}}\res g N_4 \big)$. We proceed by case analysis on why $\res {\tilde{c}} M_4
\trans{\alpha} \hat{M}$. 

\begin{itemize}
\item Let $\res {\tilde{c}} M_4
\trans{\alpha} \hat{M}$, with $\alpha \neq \sigma$. This case is not 
admissible. 
\item Let  $\res {\tilde{c}} M_4 \trans{\sigma} \res {\tilde{c}} M'_4$, where 
 \[
M'_4 = \nodep{n_P}{\conf {\I_P}{LightCtrl}}{\mob}{k}   \; \big| \;
\nodep{n_{1}}{\conf {\I'_{1}}{\wact{\mathsf{off}}{light_1}.L_1}}{\stat}{loc1} \; \big| \; \nodep{n_{2}}{\conf {\I_{2}}{\wact{\mathsf{off}}{light_2}.L_2}}{\stat}{loc4}
\]
with $k \not \in \{ loc1, loc2, loc3, loc4 \}$, because the phone remains 
outside.  Then by an application 
of Law~\ref{law1} of Thm.~\ref{thm:algebraic-laws} we get 
\[
\begin{array}{rcl}
\res {\tilde{c}} M'_4 & \gtrsim & 
\res {\tilde{c}} \big( \nodep{n_P}{\conf {\I_P}{LightCtrl}}{\mob}{k} 
  \; \big| \;
\nodep{n_{1}}{\conf {\I'_{1}}{\wact{\mathsf{off}}{light_1}.L_1}}{\stat}{loc1} \; \big| \; \nodep{n_{2}}{\conf {\I_{2}}{L_2}}{\stat}{loc4} \big) \\[3pt]
& = & \res {\tilde{c}}M_5 \enspace . 
\end{array}
\]
Then there is $N_5$ such that $\res {{\tilde{c}},g}N_4 \trans{\sigma} \Trans{}
\res {{\tilde{c}},g}N_5$, where 
\[
N_5 {=} \nodep{n_P}{\conf {\I_P}{\sigma.\overline{LightCtrl}}}{\mob}{k} \; 
\big| \; 
\nodep{n_1}{\conf {\I'_1}{\wact{\mathsf{off}}{light_1}.L_1}}{\stat}{loc1} \; \big| 
\; \nodep{n_2}{\conf {\I_2}{L_2}}{\stat}{loc4} \; \big| \;
\nodep{n_{LM}}{\conf {\emptyset}{\sigma.\overline{CLM}}}{\stat}{loc3}
\]
with $\big( \res {{\tilde{c}}}M_5 \, , \, \res {{\tilde{c}},g}N_5\big) \in \rel $. 

\item  Let  $\res {\tilde{c}} M_4 \trans{\sigma} \res {\tilde{c}} M'_4$, where 
 \[
M'_4 = \nodep{n_P}{\conf {\I_P}{LightCtrl}}{\mob}{loc1}   \; \big| \;
\nodep{n_{1}}{\conf {\I'_{1}}{\wact{\mathsf{off}}{light_1}.L_1}}{\stat}{loc1} \; \big| \; \nodep{n_{2}}{\conf {\I_{2}}{\wact{\mathsf{off}}{light_2}.L_2}}{\stat}{loc4} 
\]
because the phone re-enter the smarthome at location $loc1$.
Then, by an application
of Law~\ref{law1} of Thm.~\ref{thm:algebraic-laws} we get 
\[
\begin{array}{rcl}
\res {\tilde{c}} M'_4 & \gtrsim & 
\res {\tilde{c}} \big( \nodep{n_P}{\conf {\I_P}{LightCtrl}}{\mob}{loc1} 
  \; \big| \;
\nodep{n_{1}}{\conf {\I'_{1}}{\wact{\mathsf{off}}{light_1}.L_1}}{\stat}{loc1} \; \big| \; \nodep{n_{2}}{\conf {\I_{2}}{L_2}}{\stat}{loc4} \big) \\[3pt]
& = & \res {\tilde{c}}M_6 \enspace . 
\end{array}
\]
Then there is $N_6$ such that $\res {{\tilde{c}},g}N_4 \trans{\sigma} \Trans{}
\res {{\tilde{c}},g}N_6$ where 
\[
\begin{array}{rcl}
N_6  & =  & \nodep{n_P}{\conf {\I_P}{\sigma.\overline{LightCtrl}}}{\mob}{loc1} \q \big| \q 
\nodep{n_1}{\conf {\I'_1}{\wact{\mathsf{off}}{light_1}.L_1}}{\stat}{loc1} \q \big| \\[3pt]
&& \nodep{n_2}{\conf {\I_2}{L_2}}{\stat}{loc4} \q \big| \q
\nodep{n_{LM}}{\conf {\emptyset}{\timeout{\OUT{c_1}{}.\sigma.\overline{CLM}}}{CLM}}{\stat}{loc3}
\end{array}
\]
with $\big( \res {{\tilde{c}}}M_6 \, , \, \res {{\tilde{c}},g}N_6\big) \in \rel $. 
\end{itemize}
The case analysis when $\res{{\tilde{c}},g} N_4 \trans{\alpha} \hat{N}$ is similar. \\

- Let us consider the pair $\big( \res {\tilde{c}} M_5 \, , \, \res{\tilde{c}}\res g N_5 \big)$. The only possible transition in both networks is a strong 
transition $\trans{light_1}$ which leads, up to expansion,  to the pair $\big( \res {\tilde{c}} 
M_1 \, , \, \res{\tilde{c}}\res g N_1 \big) \in \rel$.\\

- Let us consider the pair $\big( \res {\tilde{c}} M_6 \, , \, \res{\tilde{c}}\res g N_6 \big)$.
\begin{itemize}
\item Let  $\res {\tilde{c}} M_6
\trans{\alpha} \hat{M}$, with $\alpha \neq light_1$. This case is not
admissible. 
\item Let  $\res {\tilde{c}} M_6
\trans{light_1} \res{\tilde{c}}M'_6$, where
\[
M'_6 = \nodep{n_P}{\conf {\I_P}{LightCtrl}}{\mob}{loc1} 
  \; \big| \;
\nodep{n_{1}}{\conf {\I_{1}}{L_1}}{\stat}{loc1} \; \big| \; \nodep{n_{2}}{\conf {\I_{2}}{L_2}}{\stat}{loc4}
\]
Then, by one application of Law~\ref{law4} of Thm.~\ref{thm:algebraic-laws} we get:
\[
\begin{array}{rcl}
\res{\tilde{c}}M'_6 & \gtrsim & 
\nodep{n_P}{\conf {\I_P}{\sigma.LightCtrl}}{\mob}{loc1} 
  \; \big| \;
\nodep{n_{1}}{\conf {\I_{1}}{\wact{\mathsf{on}}{light_1}.\sigma.L_1}}{\stat}{loc1} \; \big| \; \nodep{n_{2}}{\conf {\I_{2}}{L_2}}{\stat}{loc4}\\[3pt]
& = & \res{\tilde{c}} M_2 \enspace . 
\end{array}
\]
Then it holds that $\res{{\tilde{c}},g}N_6 \trans{light_1} \Trans{} 
\res{{\tilde{c}},g}N_2$, with  $\big( \res {{\tilde{c}}}M_2 \, , \, \res {{\tilde{c}},g}N_2\big) \in \rel $. 
\end{itemize}
As the pair $(M_6, N_6)$ is a bit different from the others let us do 
a case analysis on  when $\res{{\tilde{c}},g} N_6 \trans{\alpha} \hat{N}$. 
\begin{itemize}
\item Let  $\res {{\tilde{c}},g} N_6
\trans{\alpha} \hat{N}$, with $\alpha \neq light_1$. This case is not
admissible. 

\item Let $\res {{\tilde{c}},g} N_6 \trans{light_1} \res {{\tilde{c}},g} N'_6$, 
where
\[
\begin{array}{rcl}
N'_6 & = & \nodep{n_P}{\conf {\I_P}{\sigma.\overline{LightCtrl}}}{\mob}{loc1} 
\; \big| \; 
\nodep{n_1}{\conf {\I_1}{L_1}}{\stat}{loc1} \; \big| \; 
 \nodep{n_2}{\conf {\I_2}{L_2}}{\stat}{loc4} \; \big| \;\\[3pt]
&&\nodep{n_{LM}}{\conf {\emptyset}{\timeout{\OUT{c_1}{}.\sigma.\overline{CLM}}}{CLM}}{\stat}{loc3}
\end{array}
\]
Then, by an application of Law~\ref{law4} of Thm.~\ref{thm:algebraic-laws}
we have:
\[
\begin{array}{rcl}
\res {{\tilde{c}},g} N'_6 & = & 
\res {{\tilde{c}},g} \big( \nodep{n_P}{\conf {\I_P}{\sigma.\overline{LightCtrl}}}{\mob}{loc1} 
\q \big| \q 
\nodep{n_1}{\conf {\I_1}{L_1}}{\stat}{loc1} \q \big| \q
 \nodep{n_2}{\conf {\I_2}{L_2}}{\stat}{loc4} \q \big| \q\\[3pt]
&&\nodep{n_{LM}}{\conf {\emptyset}{\timeout{\OUT{c_1}{}.\sigma.\overline{CLM}}}{CLM}}{\stat}{loc3} \big)\\[3pt]
& \gtrsim & \res {{\tilde{c}},g} \big( \nodep{n_P}{\conf {\I_P}{\sigma.\overline{LightCtrl}}}{\mob}{loc1} 
\q \big| \q
\nodep{n_1}{\conf {\I_1}{\wact{\mathsf{on}}{light_1}.\sigma.L_1}}{\stat}{loc1} \q \big| \q \\[3pt]
&& \nodep{n_2}{\conf {\I_2}{L_2}}{\stat}{loc4} \q \big| \q
\nodep{n_{LM}}{\conf {\emptyset}{\sigma.\overline{CLM}}}{\stat}{loc3} \big)\\[3pt]
& = & \res {{\tilde{c}},g} N_2. 
\end{array}
\]
Then it is easy to see that $\res{\tilde{c}}M_6 \trans{light_1}\Trans{} \res{\tilde{c}}M_2$, with $(M_2, N_2) \in \rel$.
\end{itemize}

- Let us consider the pair $\big( \res {\tilde{c}} M_7 \, , \, \res{\tilde{c}}\res g N_7 \big)$. 
\begin{itemize}
\item Let  $\res {\tilde{c}} M_7
\trans{\sigma} \hat{M}$, with $\alpha \neq \sigma$. This case is not
admissible.

\item Let $\res{\tilde{c}}M_7 \trans{\sigma} \res{\tilde{c}} M'_7$ where
\[
M'_7 = \nodep{n_P}{\conf {\I_P}{LightCtrl}}{\mob}{loc3} 
  \; \big| \;
\nodep{n_{1}}{\conf {\I'_{1}}{\wact{\mathsf{off}}{light_1}.L_1}}{\stat}{loc1} \; \big| \; \nodep{n_{2}}{\conf {\I_{2}}{\wact{\mathsf{off}}{light_1}.L_2}}{\stat}{loc4}
\]
because the smartphone moved from $loc2$ to $loc3$. Then, by an application 
of Law~\ref{law1} of Thm.~\ref{thm:algebraic-laws} we have 
\[
\begin{array}{rcl}
\res {\tilde{c}} M'_7 &\gtrsim & 
\res{\tilde{c}}\big( \nodep{n_P}{\conf {\I_P}{LightCtrl}}{\mob}{loc3} 
  \; \big| \;
\nodep{n_{1}}{\conf {\I'_{1}}{\wact{\mathsf{off}}{light_1}.L_1}}{\stat}{loc1} \; \big| \; \nodep{n_{2}}{\conf {\I_{2}}{L_2}}{\stat}{loc4} \big)
\\[3pt]
& = & \res{\tilde{c}} M_8 \enspace . 
\end{array}
\]
Then, we can derive $N_8$ such that $\res{{\tilde{c}},g}N_7 \trans{\sigma}
\Trans{}\res{{\tilde{c}},g}N_8 $ where 
\[
\begin{array}{rcl}
N_8 & = & \nodep{n_P}{\conf {\I_P}{\sigma.\overline{LightCtrl}}}{\mob}{loc3} 
 \; \big| 
\;  \nodep{n_{1}}{\conf {\I'_{1}}{\wact{\mathsf{off}}{light_1}.L_1}}{\stat}{loc1} \; \big| \; \nodep{n_{2}}{\conf {\I_{2}}{L_2}}{\stat}{loc4} \; \big| \; \\[3pt]
&& 
 \nodep{n_{LM}}{\conf {\emptyset}{\sigma.\overline{CLM}}}{\stat}{loc3}
\end{array}
\]
and $\big( \res {\tilde{c}} M_8 \, , \, \res{\tilde{c}}\res g N_8 \big) \in \rel$.

\item Let $\res{\tilde{c}}M_7 \trans{\sigma} \res{\tilde{c}} M'_7$ where
\[
M'_7 = \nodep{n_P}{\conf {\I_P}{LightCtrl}}{\mob}{loc1} 
  \; \big| \;
\nodep{n_{1}}{\conf {\I'_{1}}{\wact{\mathsf{off}}{light_1}.L_1}}{\stat}{loc1} \; \big| \; \nodep{n_{2}}{\conf {\I_{2}}{\wact{\mathsf{off}}{light_1}.L_2}}{\stat}{loc4}
\]
because the smartphone moved back from $loc2$ to  $loc1$. Then, by an  application  
of Law~\ref{law1} of Thm.~\ref{thm:algebraic-laws} we have 
\[
\begin{array}{rcl}
\res {\tilde{c}} M'_7 &\gtrsim &
\res{\tilde{c}} \big(  \nodep{n_P}{\conf {\I_P}{LightCtrl}}{\mob}{loc1} 
  \; \big| \;
\nodep{n_{1}}{\conf {\I'_{1}}{\wact{\mathsf{off}}{light_1}.L_1}}{\stat}{loc1} \; \big| \; \nodep{n_{2}}{\conf {\I_{2}}{L_2}}{\stat}{loc4} \big)\\[3pt]
& = & \res{\tilde{c}} M_6 \enspace . 
\end{array}
\]
Then, it holds that  $\res{{\tilde{c}},g}N_7 \trans{\sigma}
\Trans{}\res{{\tilde{c}},g}N_6 $, 
and $\big( \res {\tilde{c}} M_6 \, , \, \res{\tilde{c}}\res g N_6 \big) \in \rel$.

\item Let $\res{\tilde{c}}M_7 \trans{\sigma} \res{\tilde{c}} M'_7$ where
\[
M'_7 = \nodep{n_P}{\conf {\I_P}{LightCtrl}}{\mob}{loc2} 
  \; \big| \;
\nodep{n_{1}}{\conf {\I'_{1}}{\wact{\mathsf{off}}{light_1}.L_1}}{\stat}{loc2} \; \big| \; \nodep{n_{2}}{\conf {\I_{2}}{\wact{\mathsf{off}}{light_1}.L_2}}{\stat}{loc4}
\]
because the smartphone remained at location $loc2$. Then, by an application 
of Law~\ref{law1} of Thm.~\ref{thm:algebraic-laws} we have 
\[
\begin{array}{rcl}
\res {\tilde{c}} M'_7 &\gtrsim & 
\res{\tilde{c}} \big( \nodep{n_P}{\conf {\I_P}{LightCtrl}}{\mob}{loc2} 
  \; \big| \;
\nodep{n_{1}}{\conf {\I'_{1}}{\wact{\mathsf{off}}{light_1}.L_1}}{\stat}{loc1} \; \big| \; \nodep{n_{2}}{\conf {\I_{2}}{L_2}}{\stat}{loc4} \big)\\[3pt]
& = & \res{\tilde{c}} M_9 \enspace . 
\end{array}
\]
Then, we can derive $N_9$ such that $\res{{\tilde{c}},g}N_7 \trans{\sigma}
\Trans{}\res{{\tilde{c}},g}N_9 $ where 
\[
\begin{array}{rcl}
N_9 & = & \nodep{n_P}{\conf {\I_P}{\sigma.\overline{LightCtrl}}}{\mob}{loc2} 
 \; \big| 
\;  \nodep{n_{1}}{\conf {\I'_{1}}{\wact{\mathsf{off}}{light_1}.L_1}}{\stat}{loc1} \; \big| \; \nodep{n_{2}}{\conf {\I_{2}}{L_2}}{\stat}{loc4} \; \big| \; \\[3pt]
&& \nodep{n_{LM}}{\conf {\emptyset}{\sigma.\overline{CLM}}}{\stat}{loc3}
\end{array}
\]
and $\big( \res {\tilde{c}} M_9 \, , \, \res{\tilde{c}}\res g N_9 \big) \in \rel$.
\end{itemize}
The case analysis when $\res{{\tilde{c}},g} N_7 \trans{\alpha} \hat{N}$ is similar. \\

- Let us consider the pair $\big( \res {\tilde{c}} M_8 \, , \, \res{\tilde{c}}\res g N_8 \big)$. 
\begin{itemize}
\item Let $\res {\tilde{c}} M_8
\trans{\alpha} \hat{M}$, $\alpha \neq {light_1}$. This case is not admissible. 

\item $\res {\tilde{c}} M_8 \trans{light_1} \res{\tilde{c}}M_{10}$, where
\[
M_{10} =  \nodep{n_P}{\conf {\I_P}{LightCtrl}}{\mob}{loc3} 
  \; \big| \;
\nodep{n_{1}}{\conf {\I_{1}}{L_1}}{\stat}{loc1} \; \big| \; \nodep{n_{2}}{\conf {\I_{2}}{L_2}}{\stat}{loc4} \enspace .
\]
Then there is $N_{10}$ such that $\res{{\tilde{c}},g} N_8 \trans{light_1}
\res{{\tilde{c}},g} N_{10}$, where 
\[
N_{10} = \nodep{n_P}{\conf {\I_P}{\sigma.\overline{LightCtrl}}}{\mob}{loc3} 
 \; \big| 
\;  \nodep{n_{1}}{\conf {\I_{1}}{L_1}}{\stat}{loc1} \; \big| \; \nodep{n_{2}}{\conf {\I_{2}}{L_2}}{\stat}{loc4} \; \big| \; 
 \nodep{n_{LM}}{\conf {\emptyset}{\sigma.\overline{CLM}}}{\stat}{loc3}
\]
and $\big( \res {\tilde{c}} M_{10} \, , \, \res{{\tilde{c}},g} N_{10} \big) \in \rel$.
\end{itemize}
The case analysis when $\res{{\tilde{c}},g} N_8 \trans{\alpha} \hat{N}$ is similar. \\

- Let us consider the pair $\big( \res {\tilde{c}} M_9 \, , \, \res{\tilde{c}}\res g N_9 \big)$. 
\begin{itemize}
\item Let $\res {\tilde{c}} M_9
\trans{\alpha} \hat{M}$, $\alpha \neq {light_1}$. This case is not admissible. 

\item $\res {\tilde{c}} M_9 \trans{light_1} \res{\tilde{c}}M_{11}$, where
\[
M_{11} =  \nodep{n_P}{\conf {\I_P}{LightCtrl}}{\mob}{loc2} 
  \; \big| \;
\nodep{n_{1}}{\conf {\I_{1}}{L_1}}{\stat}{loc1} \; \big| \; \nodep{n_{2}}{\conf {\I_{2}}{L_2}}{\stat}{loc4} \enspace .
\]
Then there is $N_{11}$ such that $\res{{\tilde{c}},g} N_9 \trans{light_1}
\res{{\tilde{c}},g} N_{11}$, where 
\[
N_{11} = \nodep{n_P}{\conf {\I_P}{\sigma.\overline{LightCtrl}}}{\mob}{loc2} 
 \; \big| 
\;  \nodep{n_{1}}{\conf {\I_{1}}{L_1}}{\stat}{loc1} \; \big| \; \nodep{n_{2}}{\conf {\I_{2}}{L_2}}{\stat}{loc4} \; \big| \; 
 \nodep{n_{LM}}{\conf {\emptyset}{\sigma.\overline{CLM}}}{\stat}{loc3}
\]
and $\big( \res {\tilde{c}} M_{11} \, , \, \res{{\tilde{c}},g} N_{11} \big) \in \rel$.
\end{itemize}
The case analysis when $\res{{\tilde{c}},g} N_9 \trans{\alpha} \hat{N}$ is similar. \\

- Let us consider the pair $\big( \res {\tilde{c}} M_{10} \, , \, \res{\tilde{c}}\res g N_{10} \big)$. 
\begin{itemize}
\item Let  $\res {\tilde{c}} M_{10}
\trans{\sigma} \hat{M}$, with $\alpha \neq \sigma$. This case is not
admissible.

\item Let $\res{\tilde{c}}M_{10} \trans{\sigma} \res{\tilde{c}} M'_{10}$ where
\[
M'_{10} = \nodep{n_P}{\conf {\I_P}{LightCtrl}}{\mob}{loc4} 
  \; \big| \;
\nodep{n_{1}}{\conf {\I_{1}}{\wact{\mathsf{off}}{light_1}.L_1}}{\stat}{loc1} \; \big| \; \nodep{n_{2}}{\conf {\I_{2}}{\wact{\mathsf{off}}{light_2}.L_2}}{\stat}{loc4}
\]
because the smartphone moved from $loc3$ to $loc4$. Then, by two applications 
of Law~\ref{law1} and one application of Law~\ref{law4} of  
  Thm.~\ref{thm:algebraic-laws} we have:
\[
\begin{array}{rcl}
\res {\tilde{c}} M'_{10} & {\gtrsim} & 
\res{\tilde{c}} ( \nodep{n_P}{\conf {\I_P}{\sigma.LightCtrl}}{\mob}{loc4} 
  \; \big| \;
\nodep{n_{1}}{\conf {\I_{1}}{L_1}}{\stat}{loc1} \; \big| \; 
\nodep{n_{2}}{\conf {\I_{2}}{\wact{\mathsf{on}}{light_2}.\sigma.L_2}}{\stat}{loc4} )
\\[3pt]
& {=} & \res{\tilde{c}} M_{12} \enspace . 
\end{array}
\]
Then, we can derive $N_{12}$ such that $\res{{\tilde{c}},g}N_{10} \trans{\sigma}
\Trans{}\res{{\tilde{c}},g}N_{12} $ where 
\[
\begin{array}{rcl}
N_{12} & = & \nodep{n_P}{\conf {\I_P}{\sigma.\overline{LightCtrl}}}{\mob}{loc4} 
 \; \big| 
\;  \nodep{n_{1}}{\conf {\I_{1}}{L_1}}{\stat}{loc1} \; \big| \; \nodep{n_{2}}{\conf {\I_{2}}{\wact{\mathsf{on}}{light_2}.\sigma.L_2}}{\stat}{loc4} \; \big| \; 
\\[3pt]
&& \nodep{n_{LM}}{\conf {\emptyset}{\sigma.\overline{CLM}}}{\stat}{loc3}
\end{array}
\]
and $\big( \res {\tilde{c}} M_{12} \, , \, \res{\tilde{c}}\res g N_{12} \big) \in \rel$.

\item Let $\res{\tilde{c}}M_{10} \trans{\sigma} \res{\tilde{c}} M'_{10}$ where
\[
M'_{10} = \nodep{n_P}{\conf {\I_P}{LightCtrl}}{\mob}{loc2} 
  \; \big| \;
\nodep{n_{1}}{\conf {\I_{1}}{\wact{\mathsf{off}}{light_1}.L_1}}{\stat}{loc1} \; \big| \; \nodep{n_{2}}{\conf {\I_{2}}{\wact{\mathsf{off}}{light_2}.L_2}}{\stat}{loc4}
\]
because the smartphone moved back from $loc3$ to $loc2$. Then, by two 
 applications  
of Law~\ref{law1} of Thm.~\ref{thm:algebraic-laws} we have 
\[
\begin{array}{rcl}
\res {\tilde{c}} M'_{10} &\gtrsim &
\res{\tilde{c}} \big(  \nodep{n_P}{\conf {\I_P}{LightCtrl}}{\mob}{loc2} 
  \; \big| \;
\nodep{n_{1}}{\conf {\I_{1}}{L_1}}{\stat}{loc1} \; \big| \; \nodep{n_{2}}{\conf {\I_{2}}{L_2}}{\stat}{loc4} \big)\\[3pt]
& = & \res{\tilde{c}} M_{11} \enspace . 
\end{array}
\]
Then,  $\res{{\tilde{c}},g}N_{10} \trans{\sigma}
\Trans{}\res{{\tilde{c}},g}N_{11}$, 
and $\big( \res {\tilde{c}} M_{11} \, , \, \res{\tilde{c}}\res g N_{11} \big) \in \rel$.

\item Let $\res{\tilde{c}}M_{10} \trans{\sigma} \res{\tilde{c}} M'_{10}$ where
\[
M'_{10} = \nodep{n_P}{\conf {\I_P}{LightCtrl}}{\mob}{loc3} 
  \; \big| \;
\nodep{n_{1}}{\conf {\I_{1}}{\wact{\mathsf{off}}{light_1}.L_1}}{\stat}{loc1} \; \big| \; \nodep{n_{2}}{\conf {\I_{2}}{\wact{\mathsf{off}}{light_2}.L_2}}{\stat}{loc4}
\]
because the smartphone remained at location $loc3$. Then, by two 
 applications  
of Law~\ref{law1} of Thm.~\ref{thm:algebraic-laws} we have 
\[
\begin{array}{rcl}
\res {\tilde{c}} M'_{10} &\gtrsim &
\res{\tilde{c}} \big(  \nodep{n_P}{\conf {\I_P}{LightCtrl}}{\mob}{loc3} 
  \; \big| \;
\nodep{n_{1}}{\conf {\I_{1}}{L_1}}{\stat}{loc1} \; \big| \; \nodep{n_{2}}{\conf {\I_{2}}{L_2}}{\stat}{loc4} \big)\\[3pt]
& = & \res{\tilde{c}} M_{10} \enspace . 
\end{array}
\]
Then,  $\res{{\tilde{c}},g}N_{10} \trans{\sigma}
\Trans{}\res{{\tilde{c}},g}N_{10}$, 
and $\big( \res {\tilde{c}} M_{10} \, , \, \res{\tilde{c}}\res g N_{10} \big) \in \rel$.

\end{itemize}
The case analysis when $\res{{\tilde{c}},g} N_{10} \trans{\alpha} \hat{N}$ is similar. \\

- Let us consider the pair $\big( \res {\tilde{c}} M_{11} \, , \, \res{\tilde{c}}\res g N_{11} \big)$. 
\begin{itemize}
\item Let  $\res {\tilde{c}} M_{11}
\trans{\sigma} \hat{M}$, with $\alpha \neq \sigma$. This case is not
admissible. 

\item Let $\res{\tilde{c}}M_{11} \trans{\sigma} \res{\tilde{c}} M'_{11}$ where
\[
M'_{11} = \nodep{n_P}{\conf {\I_P}{LightCtrl}}{\mob}{loc3} 
  \; \big| \;
\nodep{n_{1}}{\conf {\I_{1}}{\wact{\mathsf{off}}{light_1}.L_1}}{\stat}{loc1} \; \big| \; \nodep{n_{2}}{\conf {\I_{2}}{\wact{\mathsf{off}}{light_2}.L_2}}{\stat}{loc4}
\]
because the smartphone moved from $loc2$ to $loc3$. Then, by two applications 
of Law~\ref{law1}  of  
  Thm.~\ref{thm:algebraic-laws} we have:
\[
\begin{array}{rcl}
\res {\tilde{c}} M'_{11} &\gtrsim & 
\res{\tilde{c}} ( \nodep{n_P}{\conf {\I_P}{LightCtrl}}{\mob}{loc3} 
  \; \big| \;
\nodep{n_{1}}{\conf {\I_{1}}{L_1}}{\stat}{loc1} \; \big| \; 
\nodep{n_{2}}{\conf {\I_{2}}{L_2}}{\stat}{loc4} )
\\[3pt]
& = & \res{\tilde{c}} M_{10} \enspace . 
\end{array}
\]
Then,  $\res{{\tilde{c}},g}N_{11} \trans{\sigma}
\Trans{}\res{{\tilde{c}},g}N_{10} $, 
and $\big( \res {\tilde{c}} M_{10} \, , \, \res{\tilde{c}}\res g N_{10} \big) \in \rel$.

\item Let $\res{\tilde{c}}M_{11} \trans{\sigma} \res{\tilde{c}} M'_{11}$ where
\[
M'_{11} = \nodep{n_P}{\conf {\I_P}{LightCtrl}}{\mob}{loc1} 
  \; \big| \;
\nodep{n_{1}}{\conf {\I_{1}}{\wact{\mathsf{off}}{light_1}.L_1}}{\stat}{loc1} \; \big| \; \nodep{n_{2}}{\conf {\I_{2}}{\wact{\mathsf{off}}{light_2}.L_2}}{\stat}{loc4}
\]
because the smartphone moved back from $loc2$ to $loc1$. Then, by two 
 applications  
of Law~\ref{law1}  and one application of Law~\ref{law4} of Thm.~\ref{thm:algebraic-laws} we have 
\[
\begin{array}{rcl}
\res {\tilde{c}} M'_{11} &\gtrsim &
\res{\tilde{c}} \big(  \nodep{n_P}{\conf {\I_P}{\sigma.LightCtrl}}{\mob}{loc1} 
  \; \big| \;
\nodep{n_{1}}{\conf {\I_{1}}{\wact{\mathsf{on}}{light_1}.\sigma.L_1}}{\stat}{loc1} \; \big| \; \nodep{n_{2}}{\conf {\I_{2}}{L_2}}{\stat}{loc4} \big)\\[3pt]
& = & \res{\tilde{c}} M_{2} \enspace . 
\end{array}
\]
Then,  $\res{{\tilde{c}},g}N_{11} \trans{\sigma}
\Trans{}\res{{\tilde{c}},g}N_{2}$, 
and $\big( \res {\tilde{c}} M_{2} \, , \, \res{\tilde{c}}\res g N_{2} \big) \in \rel$.

\item Let $\res{\tilde{c}}M_{11} \trans{\sigma} \res{\tilde{c}} M'_{11}$ where
\[
M'_{11} = \nodep{n_P}{\conf {\I_P}{LightCtrl}}{\mob}{loc2} 
  \; \big| \;
\nodep{n_{1}}{\conf {\I_{1}}{\wact{\mathsf{off}}{light_1}.L_1}}{\stat}{loc1} \; \big| \; \nodep{n_{2}}{\conf {\I_{2}}{\wact{\mathsf{off}}{light_1}.L_2}}{\stat}{loc4}
\]
because the smartphone remained at location $loc2$. Then, by two 
 applications  
of Law~\ref{law1} of Thm.~\ref{thm:algebraic-laws} we have 
\[
\begin{array}{rcl}
\res {\tilde{c}} M'_{11} &\gtrsim &
\res{\tilde{c}} \big(  \nodep{n_P}{\conf {\I_P}{LightCtrl}}{\mob}{loc2} 
  \; \big| \;
\nodep{n_{1}}{\conf {\I_{1}}{L_1}}{\stat}{loc1} \; \big| \; \nodep{n_{2}}{\conf {\I_{2}}{L_2}}{\stat}{loc4} \big)\\[3pt]
& = & \res{\tilde{c}} M_{11} \enspace . 
\end{array}
\]
Then,  $\res{{\tilde{c}},g}N_{11} \trans{\sigma}
\Trans{}\res{{\tilde{c}},g}N_{11}$, 
and $\big( \res {\tilde{c}} M_{11} \, , \, \res{\tilde{c}}\res g N_{11} \big) \in \rel$.

\end{itemize}
The case analysis when $\res{{\tilde{c}},g} N_{11} \trans{\alpha} \hat{N}$ is similar. \\

- Let us consider the pair $\big( \res {\tilde{c}} M_{12} \, , \, \res{\tilde{c}}\res g N_{12} \big)$. The only possible transition in both networks is a strong 
transition $\trans{light_2}$ which leads to the pair $\big( \res {\tilde{c}} 
M_{13} \, , \, \res{\tilde{c}}\res g N_{13} \big) \in \rel$.\\

- Let us consider the pair $\big( \res {\tilde{c}} M_{13} \, , \, \res{\tilde{c}}\res g N_{13} \big)$. 
\begin{itemize}
\item Let  $\res {\tilde{c}} M_{13}
\trans{\sigma} \hat{M}$, with $\alpha \neq \sigma$. This case is not
admissible.

\item Let $\res{\tilde{c}}M_{13} \trans{\sigma} \res{\tilde{c}} M'_{13}$ where
\[
M'_{13} = \nodep{n_P}{\conf {\I_P}{LightCtrl}}{\mob}{loc3} 
  \; \big| \;
\nodep{n_{1}}{\conf {\I_{1}}{\wact{\mathsf{off}}{light_1}.L_1}}{\stat}{loc1} \; \big| \; \nodep{n_{2}}{\conf {\I'_{2}}{L_2}}{\stat}{loc4}
\]
because the smartphone moved back from $loc4$ to $loc3$. Then, by an 
 application  
of Law~\ref{law1} of Thm.~\ref{thm:algebraic-laws} we have 
\[
\begin{array}{rcl}
\res {\tilde{c}} M'_{13} &\gtrsim &
\res{\tilde{c}} \big(  \nodep{n_P}{\conf {\I_P}{LightCtrl}}{\mob}{loc3} 
  \; \big| \;
\nodep{n_{1}}{\conf {\I_{1}}{L_1}}{\stat}{loc1} \; \big| \; \nodep{n_{2}}{\conf {\I'_{2}}{L_2}}{\stat}{loc4} \big)\\[3pt]
& = & \res{\tilde{c}} M_{14} \enspace . 
\end{array}
\]
Then, there is $N_{14}$ such that $\res{{\tilde{c}},g}N_{13} \trans{\sigma}
\Trans{}\res{{\tilde{c}},g}N_{14}$, 
where
\[
N_{14} = \nodep{n_P}{\conf {\I_P}{\sigma.\overline{LightCtrl}}}{\mob}{loc3} 
 \; \big| 
\;  \nodep{n_{1}}{\conf {\I_{1}}{L_1}}{\stat}{loc1} \; \big| \; \nodep{n_{2}}{\conf {\I'_{2}}{L_2}}{\stat}{loc4} \; \big| \; 
 \nodep{n_{LM}}{\conf {\emptyset}{\sigma.\overline{CLM}}}{\stat}{loc3}
\]
and $\big( \res {\tilde{c}} M_{14} \, , \, \res{\tilde{c}}\res g N_{14} \big) \in \rel$.

\item Let $\res{\tilde{c}}M_{13} \trans{\sigma} \res{\tilde{c}} M'_{13}$ where
\[
M'_{13} = \nodep{n_P}{\conf {\I_P}{LightCtrl}}{\mob}{loc4} 
  \; \big| \;
\nodep{n_{1}}{\conf {\I_{1}}{\wact{\mathsf{off}}{light_1}.L_1}}{\stat}{loc1} \; \big| \; \nodep{n_{2}}{\conf {\I'_{2}}{L_2}}{\stat}{loc4}
\]
because the smartphone remained at location $loc4$. Then, by two
 applications  
of Law~\ref{law1} and one application of Law~\ref{law4} of Thm.~\ref{thm:algebraic-laws} we have:
\[
\begin{array}{rcl}
\res {\tilde{c}} M'_{13} &\gtrsim &
\res{\tilde{c}} \big(  \nodep{n_P}{\conf {\I_P}{\sigma.LightCtrl}}{\mob}{loc4} 
  \; \big| \;
\nodep{n_{1}}{\conf {\I_{1}}{L_1}}{\stat}{loc1} \; \big| \; \nodep{n_{2}}{\conf {\I'_{2}}{\sigma.L_2}}{\stat}{loc4} \big)\\[3pt]
& = & \res{\tilde{c}} M_{13} \enspace . 
\end{array}
\]
Then,  $\res{{\tilde{c}},g}N_{13} \trans{\sigma}
\Trans{}\res{{\tilde{c}},g}N_{13}$, 
and $\big( \res {\tilde{c}} M_{13} \, , \, \res{\tilde{c}}\res g N_{13} \big) \in \rel$.

\end{itemize}
The case analysis when $\res{{\tilde{c}},g} N_{13} \trans{\alpha} \hat{N}$ is similar. \\

- Let us consider the pair $\big( \res {\tilde{c}} M_{14} \, , \, \res{\tilde{c}}\res g N_{14} \big)$.
\begin{itemize}
\item Let  $\res {\tilde{c}} M_{14}
\trans{\sigma} \hat{M}$, with $\alpha \neq \sigma$. This case is not
admissible.

\item Let $\res{\tilde{c}}M_{14} \trans{\sigma} \res{\tilde{c}} M'_{14}$ where
\[
M'_{14} = \nodep{n_P}{\conf {\I_P}{LightCtrl}}{\mob}{loc4} 
  \; \big| \;
\nodep{n_{1}}{\conf {\I_{1}}{\wact{\mathsf{off}}{light_1}.L_1}}{\stat}{loc1} \; \big| \; \nodep{n_{2}}{\conf {\I'_{2}}{\wact{\mathsf{off}}{light_2}.L_2}}{\stat}{loc4}
\]
because the smartphone moved from $loc3$ to $loc4$. Then, by an application 
of Law~\ref{law1}  of  
  Thm.~\ref{thm:algebraic-laws} we have:
\[
\begin{array}{rcl}
\res {\tilde{c}} M'_{14} &\gtrsim & 
\res{\tilde{c}}\big( \nodep{n_P}{\conf {\I_P}{LightCtrl}}{\mob}{loc4} 
  \; \big| \;
\nodep{n_{1}}{\conf {\I_{1}}{L_1}}{\stat}{loc1} \; \big| \; 
\nodep{n_{2}}{\conf {\I'_{2}}{\wact{\mathsf{off}}{light_2}.L_2}}{\stat}{loc4} \big)
\\[3pt]
& = & \res{\tilde{c}} M_{15} \enspace . 
\end{array}
\]
Then, we can derive $N_{15}$ such that $\res{{\tilde{c}},g}N_{14} \trans{\sigma}
\Trans{}\res{{\tilde{c}},g}N_{15} $ where 
\[
\begin{array}{rcl}
N_{15} & = & \nodep{n_P}{\conf {\I_P}{\sigma.\overline{LightCtrl}}}{\mob}{loc4} 
 \; \big| 
\;  \nodep{n_{1}}{\conf {\I_{1}}{L_1}}{\stat}{loc1} \; \big| \; \nodep{n_{2}}{\conf {\I'_{2}}{\wact{\mathsf{off}}{light_2}.L_2}}{\stat}{loc4} \; \big| \; \\[3pt]
&& \nodep{n_{LM}}{\conf {\emptyset}{\timeout{\OUT{c_2}{}.\sigma.\overline{CLM}}{\overline{CLM}}}}{\stat}{loc3}
\end{array}
\]
and $\big( \res {\tilde{c}} M_{15} \, , \, \res{\tilde{c}}\res g N_{15} \big) \in \rel$.

\item Let $\res{\tilde{c}}M_{14} \trans{\sigma} \res{\tilde{c}} M'_{14}$ where
\[
M'_{14} = \nodep{n_P}{\conf {\I_P}{LightCtrl}}{\mob}{loc2} 
  \; \big| \;
\nodep{n_{1}}{\conf {\I_{1}}{\wact{\mathsf{off}}{light_1}.L_1}}{\stat}{loc1} \; \big| \; \nodep{n_{2}}{\conf {\I'_{2}}{\wact{\mathsf{off}}{light_2}.L_2}}{\stat}{loc4}
\]
because the smartphone moved from $loc3$ to $loc2$. Then, by an  
 application 
of Law~\ref{law1} of Thm.~\ref{thm:algebraic-laws} we have 
\[
\begin{array}{rcl}
\res {\tilde{c}} M'_{14} &\gtrsim &
\res{\tilde{c}} \big(  \nodep{n_P}{\conf {\I_P}{LightCtrl}}{\mob}{loc2} 
  \; \big| \;
\nodep{n_{1}}{\conf {\I_{1}}{L_1}}{\stat}{loc1} \; \big| \; \nodep{n_{2}}{\conf {\I'_{2}}{\wact{\mathsf{off}}{light_2}.L_2}}{\stat}{loc4} \big)\\[3pt]
& = & \res{\tilde{c}} M_{16} \enspace . 
\end{array}
\]
Then, there is $N_{16}$ such that $\res{{\tilde{c}},g}N_{14} \trans{\sigma}
\Trans{}\res{{\tilde{c}},g}N_{16}$, where
\[
N_{16}  =  \nodep{n_P}{\conf {\I_P}{\sigma.\overline{LightCtrl}}}{\mob}{loc2} 
 \; \big| 
\;  \nodep{n_{1}}{\conf {\I_{1}}{L_1}}{\stat}{loc1} \; \big| \; \nodep{n_{2}}{\conf {\I'_{2}}{\wact{\mathsf{off}}{light_2}.L_2}}{\stat}{loc4} \; \big| \; 
 \nodep{n_{LM}}{\conf {\emptyset}{\sigma.\overline{CLM}}}{\stat}{loc3}
\]
and $\big( \res {\tilde{c}} M_{16} \, , \, \res{\tilde{c}}\res g N_{16} \big) \in \rel$.

\item Let $\res{\tilde{c}}M_{14} \trans{\sigma} \res{\tilde{c}} M'_{14}$ where
\[
M'_{14} = \nodep{n_P}{\conf {\I_P}{LightCtrl}}{\mob}{loc3} 
  \; \big| \;
\nodep{n_{1}}{\conf {\I_{1}}{\wact{\mathsf{off}}{light_1}.L_1}}{\stat}{loc1} \; \big| \; \nodep{n_{2}}{\conf {\I'_{2}}{\wact{\mathsf{off}}{light_2}.L_2}}{\stat}{loc4}
\]
because the smartphone remained at location $loc3$. Then, by an 
 application
of Law~\ref{law1} of Thm.~\ref{thm:algebraic-laws} we have 
\[
\begin{array}{rcl}
\res {\tilde{c}} M'_{14} &\gtrsim &
\res{\tilde{c}} \big(  \nodep{n_P}{\conf {\I_P}{LightCtrl}}{\mob}{loc3} 
  \; \big| \;
\nodep{n_{1}}{\conf {\I_{1}}{L_1}}{\stat}{loc1} \; \big| \; \nodep{n_{2}}{\conf {\I'_{2}}{\wact{\mathsf{off}}{light_2}.L_2}}{\stat}{loc4} \big)\\[3pt]
& = & \res{\tilde{c}} M_{17} \enspace . 
\end{array}
\]
Then,  
 there is $N_{17}$ such that $\res{{\tilde{c}},g}N_{14} \trans{\sigma}
\Trans{}\res{{\tilde{c}},g}N_{17}$, where
\[
N_{17}  =  \nodep{n_P}{\conf {\I_P}{\sigma.\overline{LightCtrl}}}{\mob}{loc3} 
 \; \big| 
\;  \nodep{n_{1}}{\conf {\I_{1}}{L_1}}{\stat}{loc1} \; \big| \; \nodep{n_{2}}{\conf {\I'_{2}}{\wact{\mathsf{off}}{light_2}.L_2}}{\stat}{loc4} \; \big| \; 
 \nodep{n_{LM}}{\conf {\emptyset}{\sigma.\overline{CLM}}}{\stat}{loc3}
\]
and $\big( \res {\tilde{c}} M_{17} \, , \, \res{\tilde{c}}\res g N_{17} \big) \in \rel$.
\end{itemize}
The case analysis when $\res{{\tilde{c}},g} N_{14} \trans{\alpha} \hat{N}$ is similar. \\

- Let us consider the pair $\big( \res {\tilde{c}} M_{15} \, , \, \res{\tilde{c}}\res g N_{15} \big)$. The only possible transition in both networks is a strong 
transition $\trans{light_2}$ which leads, up to expansion,  to the pair $\big( \res {\tilde{c}} 
M_{12} \, , \, \res{\tilde{c}}\res g N_{12} \big) \in \rel$.\\

- Let us consider the pair $\big( \res {\tilde{c}} M_{16} \, , \, \res{\tilde{c}}\res g N_{16} \big)$. The only possible transition in both networks is a strong 
transition $\trans{light_2}$ which leads, up to expansion,  to the pair $\big( \res {\tilde{c}} 
M_{11} \, , \, \res{\tilde{c}}\res g N_{11} \big) \in \rel$.\\

- Let us consider the pair $\big( \res {\tilde{c}} M_{17} \, , \, \res{\tilde{c}}\res g N_{17} \big)$. The only possible transition in both networks is a strong 
transition $\trans{light_2}$ which leads, up to expansion,  to the pair $\big( \res {\tilde{c}} 
M_{10} \, , \, \res{\tilde{c}}\res g N_{10} \big) \in \rel$.\\
\hfill\qed



Next goal is to prove Thm.~\ref{thm:congruence}. For that we
 need a simple technical 
lemma on the operator defined in Def.~\ref{def:redi} of this Appendix. 
\begin{lemma}
\label{lem:update-sens}
For any network $M$,  sensor $s$, location $h$ and value $v$ in the 
domain of $s$, it follows that $\ri{M[{s@h} \mapsto v]} = \ri{M}$.
\end{lemma}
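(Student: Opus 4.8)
The statement is proved by a straightforward structural induction on the network $M$, exploiting the fact that the measure $\pfxi{}$ is a function of the \emph{process} component of a node only, and is completely insensitive to the physical interface $\I$. Since the operator $[s@h\mapsto v]$ of Def.~\ref{def:sens-change} modifies, at most, the interface of some nodes (replacing $\I$ by $\I[s\mapsto v]$), and never touches the processes running at those nodes, nor the structure of the network, the value of $\ri{\cdot}$ is left unchanged.

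More precisely, first I would treat the base cases. For $M=\zero$ we have $\zero[s@h\mapsto v]=\zero$ by Def.~\ref{def:sens-change}, and $\ri{\zero}=0$ by Def.~\ref{def:redi}, so the equality is immediate. For $M=\nodep n {\conf \I P}{\mu}{k}$ there are two sub-cases according to Def.~\ref{def:sens-change}: if $\I(s)$ is defined and $k=h$, then $M[s@h\mapsto v]=\nodep n {\conf {\I[s\mapsto v]}P}{\mu}{k}$, and by the clause for nodes in Def.~\ref{def:redi} we get $\ri{M[s@h\mapsto v]}=\pfxi{P}=\ri{M}$; otherwise $M[s@h\mapsto v]=M$ and there is nothing to prove. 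Here the only point worth making explicit is that $\pfxi{P}$ depends solely on the syntax of $P$ (see the defining clauses of $\pfxi{}$), so replacing $\I$ by $\I[s\mapsto v]$ cannot affect it.

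For the inductive step, if $M=M_1\mid M_2$ then by Def.~\ref{def:sens-change} we have $(M_1\mid M_2)[s@h\mapsto v]=M_1[s@h\mapsto v]\mid M_2[s@h\mapsto v]$, hence by Def.~\ref{def:redi} and the induction hypothesis $\ri{(M_1\mid M_2)[s@h\mapsto v]}=\ri{M_1[s@h\mapsto v]}+\ri{M_2[s@h\mapsto v]}=\ri{M_1}+\ri{M_2}=\ri{M_1\mid M_2}$. The case $M=\res c M_1$ is analogous, using $(\res c M_1)[s@h\mapsto v]=\res c (M_1[s@h\mapsto v])$ together with $\ri{\res c N}=\ri{N}$ and the induction hypothesis.

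There is no real obstacle in this lemma; it is a routine induction. The only subtlety, which should be stated clearly, is the invariance of $\pfxi{}$ under changes to the physical interface — everything else follows by unfolding Def.~\ref{def:sens-change} and Def.~\ref{def:redi} and applying the induction hypothesis. \hfill\qed
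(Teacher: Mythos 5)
Your proposal is correct and follows exactly the route the paper takes: the paper's own proof is simply ``by straightforward induction on the structure of $M$'', and your case analysis spells out precisely that induction, with the key (and correct) observation that $\pfxi{}$ depends only on the process component of a node and is unaffected by the interface update performed by $[s@h\mapsto v]$.
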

\begin{proof} By straightforward induction on the structure of $M$. 
\end{proof}

\paragraph{\textbf{Proof of Thm.~\ref{thm:congruence}.}}
We prove that the bisimilarity relation, $\approx$,  is preserved by network 
contexts, i.e.\ parallel composition, channel restriction and 
sensor updating.


\emph{Let us prove that $\approx$ is preserved by parallel composition.}
We show that the relation 
\[
\rel=\left \{(M | O,\; N | O)  :  \textrm{ s.t.  } M | O \textrm{ and } N | O \textrm{ are well-formed and }  M \approx N \right \}
\]
is a bisimulation.
We proceed by case analysis on why $M | O \trans{\alpha} \hat{M}$.

\begin{itemize}
\item Let  $M | O \trans{\tau} \hat{M}$. 
We can distinguish two cases.
\begin{itemize}
\item The transition is derived by applying rule \rulename{GlbCom}:
\[
\Txiombis{M\trans{\send{c}{v}{h}} M' \q O\trans{\rec{c}{v}{k}}O' \q \dist h k \leq \rng c}{M | O \trans{\tau} M' | O'}
\]
with $\hat{M} = M' |O'$. 
Since $M\trans{\send{c}{v}{h}} M'$  and $\dist h k \leq \rng c$, by an application of rule \rulename{SndObs}   we derive  $M\trans{\sendobs{c}{v}{k}}M'$.
As $M\approx N$ there are $N_1$, $N_2$ and $N'$ such that   $N\ttrans{}N_1\trans{\sendobs{c}{v}{k}}N_2\ttrans{} N'$ with $M'\approx N'$.
Thus, there exists a location $h'$ such that  $\dist {h'} k \leq \rng c$ and $N_1\trans{\send{c}{v}{h'}} N_2$.
Therefore,  by several  applications of rule \rulename{ParN} and one application of rule \rulename{GlbCom} we can derive $N | O \Trans{} \hat{N}= N' | O'$, 
with $(\hat{M} , \hat{N} ) \in \rel$. 
The symmetric case is analogous.

\item  The transition is derived by applying rule \rulename{ParN} to $M$:
\[
\Txiombis{M\trans{\tau} M'}{M | O \trans{\tau} M' | O}
\]
As $M \approx N$ it follows that $N\ttrans{} N'$ with $M'\approx N'$.
By several  applications of rule \rulename{ParN} it follows that 
$N | O \Trans{} \hat{N}= N' | O'$, 
with $(\hat{M} , \hat{N} ) \in \rel$. The symmetric case is easier. 
\end{itemize}

\item Let  $M | O \trans{\sigma} \hat{M}= M' | O'$.
This is only possible by an application of  of rule \rulename{TimePar} where: 
\[
\Txiombis{M \trans{\sigma} M' \Q O\trans{\sigma} O' \Q M | O \ntrans{\tau}}{M | O \trans{\sigma} M' | O'}
\]
Since $M \approx N$ and $M \trans{\sigma} M'$ there are $N_1$, $N_2$ and $N'$ such that 
$N \Trans{} N_1 \trans{\sigma}N_2 \Trans{} N'$,  with $M'\approx N'$. By an appropriate number of applications of rule \rulename{ParN} we have that $N | O \Trans{} N_1 | O$. Next step is to show that we can use rule \rulename{TimePar}
to derive $N_1 | O \trans{\sigma} N_2 | O'$. For that we only need to prove
that $N_1 | O \ntrans{\tau}$. In fact, if $N_1 | O \trans{\tau}$ then we would 
reach a contradiction. This because, $M \approx N$ and $N \Trans{} N_1$ 
implies there is $M_1$ such that $M \Trans{} M_1$ with $M_1 \approx N_1$. 
As $M \ntrans{\tau}$ it follows that $M=M_1 \approx N_1$. By Prop.~\ref{prop:maxprog}, $N_1 \trans{\sigma} N_2$ and $O \trans{\sigma} O'$ imply 
$N_1 \ntrans{\tau}$ and $O \ntrans{\tau}$. Thus  $N_1 | O \trans{\tau}$
could be derived only by an application of rule \rulename{GlobCom} where 
$N_1$ interact with $O$ via some channel $c$, with $\rng c \geq 0$.
However, as $N_1 \approx M$ the network $M$ could mimick the same interaction with $O$ giving rise to a reduction of the form $M | O \Trans{}\trans{\tau}$. 
This is in contradiction 
with initial hypothesis that $M | O \ntrans{\tau}$. Thus, $N_1 | O \ntrans{\tau}$ and by an application of rule \rulename{TimePar} we derive $N_1 | O \trans{\sigma} N_2 | O'$. By an appropriate number of applications of rule 
\rulename{ParN} we get  $N_2 | O' \Trans{} N' | O'$. Thus, 
$N | O \Trans{\sigma} \hat{N}=N' | O'$, with $(\hat{M} , \hat{N}) \in \rel$. 


\item Let $M | O \trans{a} \hat{M}$. Then we distinguish two cases. 
\begin{itemize}
\item Either $O \trans{a} O'$ and by an application of rule \rulename{ParN}
we derive $M | O \trans{a} M | O'$. This case is easy. 
\item Or $M \trans{a} M'$ and by an application of rule \rulename{ParN}
we derive $M | O \trans{a} M' | O$. As $M \approx N$ there is $N'$ such that 
$N \Trans{a} N'$ and $M' \approx N'$. Thus, by several applications  
of rule \rulename{ParN} we derive $N | O \Trans{a} \hat{N} = N' | O$, 
with $(\hat{M} , \hat{N}) \in \rel$. 
\end{itemize}

\item Let $M | O \trans{\sendobs{c}{v}{k}} \hat{M}$.
By definition of rule \rulename{SndObs} this is only possible if 
$M | O \trans{\send{c}{v}{h}} \hat{M}$, with $\dist h k \leq \rng c$.  Then we distinguish two cases. 
\begin{itemize}
\item Either $O \trans{\send{c}{v}{h}} O'$ and $\hat{M} = M | O'$ by an application of rule \rulename{ParN}. Then, by an application of the same rule 
we derive $N | O \trans{\send{c}{v}{h}} N | O'$. By an application 
of rule \rulename{SndObs} we get $N | O \trans{\sendobs{c}{v}{k}} \hat{N}= N | O'$, with $(\hat{M}, \hat{N} ) \in \rel$. 
\item Or $M \trans{\send{c}{v}{h}} M'$ and $\hat{M} = M' | O$ by an application of rule \rulename{ParN}. By an application of rule \rulename{SndObs} we 
have $M \trans{\sendobs{c}{v}{k}} M'$. As $M \approx N$ there is $N'$ such 
that $N \Trans{\sendobs{c}{v}{k}} N'$, with $M' \approx N'$. As the transition
$\trans{\sendobs{c}{v}{k}}$ can only be derived by an application of rule 
\rulename{SndObs}, it follows that $N \Trans{\send{c}{v}{h'}} N'$, for some $h'$
such that $\dist {h'} k \leq \rng c$. By several applications of rule 
\rulename{ParN} it follows that $N | O \Trans{\send{c}{v}{h'}} N' | O$. 
By an application of rule \rulename{SndObs} we finally obtain $N | O \Trans{\sendobs{c}{v}{k}} \hat{N} = N' | O$, with $(\hat{M} , \hat{N}) \in \rel$. 
\end{itemize}
\item Let $M | O \trans{\recobs{c}{v}{k}} \hat{M}$. This case is similar to the previous one. 
\end{itemize}

\emph{Let us prove that $\approx$ is preserved by channel restriction.}
We show that the relation 
\[
\rel=\{ \big(\res c M ,\; \res c N \big) :  M \approx N\}
\]
is a bisimulation. We proceed by case analysis on 
why $\res c M \trans{\alpha} \hat{M}$. 

\begin{itemize}
\item 
Let $\res c M \trans{\alpha} \hat{M}$, for $\alpha \in \{ \tau , \sigma , a \}$. 
In this case, this transition has been derived by an application of rule 
\rulename{Res} because $M \trans{\alpha} M'$, with $\hat{M} = \res c {M'}$. 
As $M \approx N$ there is $N'$ such that $N \Trans{\alpha} N'$ and $M' \approx N'$. By several applications of rule \rulename{Res} we can derive 
$\res c N \Trans{\alpha} \hat{N} = \res c {N'}$, with $(\hat{M} , \hat{N}) \in \rel$. 
\item 
Let $\res c M \trans{\alpha} \hat{M}$, for $\alpha \in \{ \sendobs{d}{v}{k}, 
\recobs{d}{v}{k} \}$, with  $d \neq c$. This case is similar
to the previous one  except for the fact that we need to pass through 
the definitions of rules \rulename{SndObs} and \rulename{RcvObs} as the
rule \rulename{Res} is only defined for intensional actions. 

\item Let $\res c M \trans{\alpha} \hat{M}$, 
for $\alpha \in \{ \sendobs{c}{v}{k}, 
\recobs{c}{v}{k} \}$. This case is not admissible as rule \rulename{Res}
block intensional actions of the form $\send{c}{v}{h}$ and $\rec{c}{v}{h}$. 

\end{itemize}

\emph{Let us prove that $\approx$ is preserved by sensor updating.}
We prove that, if $M\approx  N$ then, for all  sensors
$s$, locations $h$, and  values $v$ in the domain of $s$,  
$M[s@h \mapsto v]  \approx  N[s@h\mapsto v]$. 

Let $\succ$   be the  well-founded relation  over pairs of  networks  
such that $(M,N) \succ (M',N')$ if and only  if (i) $M \approx N$; 
(ii) $M' \approx N'$; (iii) 
$\ri{M} + \ri{N} > \ri{M'} + \ri{N'}$ (see Def.~\ref{def:redi} in the Appendix).
Note that $\succ$ is trivially irreflexive. Moreover, by Lem.~\ref{lem:wt2}, 
 $\ri{M}$ always return a  finite and positive integer, for any $M$. Thus, the relation  $\succ$ does not have 
infinite descending chains and it is a well-founded relation. 
The  proof is by well-founded induction on the relation $\succ$.

\textbf{Base case.} 
Let $M$ and $N$ be such that $M\approx N$  and 
$\ri{M} + \ri{ N} =0$. By Def.~\ref{def:redi} and by inspection of 
the reduction semantics in Tab.~\ref{reduction}, from $\ri{N}=0$ we derive 
$N \not \redi $. In particular, $N \not \redtau $.  By Thm.~\ref{thm:harmony}
it follwos that $N \ntrans{\tau}$. 
 By an application of rule \rulename{SensEnv}, we have   
$M\trans{\rsensa{v}{s@h}}M[{s@h}\mapsto v]$. As $M\approx N$  there are $N_1$, $N_2$ and
$N'$ 
such that  $N \Trans{} N_1 \trans{\rsensa{v}{s@h}} N_1[{s@h}\mapsto v] =N_2 \Trans{} N'$ with $M[{s@h}\mapsto v] \approx N'$. However, 
as $N \ntrans{\tau}$ it follows that $N = N_1$. By Lem.~\ref{lem:update-sens} 
if $N_1 \ntrans{\tau}$ then $N_1[{s@h}\mapsto v] \ntrans{\tau}$. This is 
enough to derive that $N' = N [{s@h}\mapsto v]$. And hence
$M [{s@h}\mapsto v] \approx N [{s@h}\mapsto v]$.

\textbf{Inductive Case.} 
Let $M \approx N$. Without loss of generality 
we can assume $\ri{ M} \geq \ri{N}$.
Let $M\trans{\rsensa{v}{s@h}}M[{s@h}\mapsto v]$.
Since $M\approx N$ there is $N'$ such that $N\Trans{\rsensa v {s@h}}N'$ and  $M[{s@h}\mapsto v]\approx N'$. 

If the number of $\tau$-actions inside the weak transition  $N \Trans{\rsensa v {s@h}} N'$ is $0$, then  $N\trans{\rsensa v {s@h}}N[{s@h}\mapsto v]$ with $M[{s@h}\mapsto v] \approx N[{s@h}\mapsto v]$, and there is nothing else to prove.

Otherwise, we have  $ \ri{N'} < \ri{N}$. 
By Lem.~\ref{lem:update-sens} we have $\ri{M}=\ri{M[{s@h}\mapsto v] }$. Thus, it follows that 
$\ri{M[{s@h}\mapsto v] } + \ri{N'} < \ri{M} + \ri{N}$. 
Hence $ (M[{s@h}\mapsto v], N') \prec (M, N) $. 
By inductive hypothesis we know that $M[{s@h}\mapsto v][{r@k} \mapsto w] \approx  N'[{r@k} \mapsto w]$ for any sensor $r$, location $k$ and value $w$ in the domain of $r$. Thus, if we choose $r=s$, $k=h$ and  $w$ the value such that
$M[{s@h}\mapsto v][{s@h} \mapsto w]= M [{s@h} \mapsto w] =M$\footnote{By Def.~\ref{def:sens-change} the value $w$ must be the value of the sensor $s$ located at $h$ in $M$ if  defined. Otherwise it can be any admissible value for $s$.}, then  
 we get  $M \approx 
 N'[{s@h} \mapsto w]$. 

From this we derive the two following facts:
\begin{itemize}
\item  

By Lem.~\ref{lem:update-sens} we have $\ri{N'[{s@h}\mapsto w] } = \ri{N'}$. 
Since $\ri{N'} < \ri{N}$ it follows that  $( M, N'[{s@h}\mapsto w])
\prec (M, N)$. By inductive hypothesis we can close under the operator $[{s@h} \mapsto v]$, getting 
 $M[{s@h}\mapsto v] \approx  N'[{s@h}\mapsto w] [{s@h}\mapsto v]$. 

\item
Since $\approx$ is a transitive relation, $M \approx 
 N'[{s@h} \mapsto w]$    and $M\approx N$,   we derive that $ N \approx  N'[{s@h}\mapsto w]$.
Since, $\ri{ N} \leq \ri{M}$ (this was an initial assumption) and $\ri{N'[{s@h}\mapsto w] } = \ri{N'} < \ri{N}$ 
it follows that 
$  (N, N'[{s@h}\mapsto w]) \prec (M,N)$. By inductive hypothesis we
can derive 
$  N[{s@h}\mapsto v] \approx N'[{s@h}\mapsto w] [{s@h}\mapsto v] $.
\end{itemize}
From this two facts, by transitivity of $\approx$ we get 
$M[{s@h}\mapsto v] \approx N[{s@h}\mapsto v]$.
\hfill\qed



\begin{lemma}
\label{lem:aux-cut}
Let  $O=\nodep{n}{\conf \I \nil}{\stat}{k}$  
for an arbitrary node name $n$, an arbitrary actuator $a$, and an arbitrary 
value $v$ in the domain of $a$,  such that 
$\I$ is only defined for $a$ and $\I(a)=v$. If $M | O \cong N | O$ then 
$M \cong N$. 
\end{lemma}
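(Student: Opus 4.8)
The plan is to derive $M \cong N$ from $M|O \cong N|O$ by passing through labelled bisimilarity. By full abstraction (Cor.~\ref{thm:full-abstraction}), $M|O \cong N|O$ gives $M|O \approx N|O$, so it suffices to prove $M \approx N$ and then invoke soundness (Thm.~\ref{thm:sound}). To this end, keep the observer $O=\nodep{n}{\conf \I \nil}{\stat}{k}$ fixed and consider the relation $\rel = \{(M',N') : M'|O \approx N'|O\}$, all networks being well-formed as usual. It is symmetric (because $\approx$ is), and $(M,N)\in\rel$, so the whole claim reduces to showing that $\rel$ is a bisimulation.

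Everything hinges on $O$ being \emph{inert}: its process is $\nil$, so $O$ can perform neither a $\tau$-action nor a channel communication, and its interface is defined only on the actuator $a$, hence contains no sensors and satisfies $O[s@h'\mapsto v']=O$ for every sensor update. Thus $O$ only contributes the constant barb $\barb{O}{\wact v {a@k}}$ (a self-loop via rule \rulename{ActEnv}) and the passive time step $\nil\trans{\sigma}\nil$. I would record two facts, for every network $X$. \emph{Lifting}: if $X\trans{\alpha}X'$ then $X|O\trans{\alpha}X'|O$ — via rule \rulename{ParN} when $\alpha\neq\sigma$, and via rule \rulename{TimePar} when $\alpha=\sigma$, the latter applicable since $X\trans{\sigma}$ forces $X\ntrans{\tau}$ (Prop.~\ref{prop:maxprog} together with the Harmony Theorem~\ref{thm:harmony}), $\nil\trans{\sigma}\nil$ holds, and $X$ cannot communicate with $O$; in the physical case $\alpha=\rsensa{v'}{s@h'}$ one has $X'=X[s@h'\mapsto v']$ and $X'|O=(X|O)[s@h'\mapsto v']$. \emph{Decomposition}: every transition $X|O\trans{\alpha}K$ is either the self-loop $X|O\trans{\wact v {a@k}}X|O$ coming from $O$'s barb, or has the form $K\equiv X'|O$ with $X\trans{\alpha}X'$; indeed $O$ has no free channels, so any restrictions floating around $X$ commute with $|O$, and a label-by-label analysis rules out every other contribution of $O$. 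The decomposition fact lifts to weak transitions: the self-loop carries a non-$\tau$ label, so it cannot appear inside a $\Trans{\hat\alpha}$ with $\alpha\neq\wact v {a@k}$; moreover $M$ never emits the label $\wact v {a@k}$, since the actuator $a$ does not occur in $M$ by well-formedness of $M|O$. Hence for any label $\alpha$ that $M'$ (or $N'$) can perform, $X|O\Trans{\hat\alpha}K$ yields $K\equiv X''|O$ with $X\Trans{\hat\alpha}X''$.

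The bisimulation game is then immediate. Let $(M',N')\in\rel$ and $M'\trans{\alpha}M''$. By Lifting, $M'|O\trans{\alpha}M''|O$; since $M'|O\approx N'|O$, there is $K$ with $N'|O\Trans{\hat\alpha}K$ and $M''|O\approx K$; by Decomposition, $K\equiv N''|O$ with $N'\Trans{\hat\alpha}N''$, and since $\equiv\ \subseteq\ \approx$ and $\approx$ is transitive, $M''|O\approx N''|O$, i.e.\ $(M'',N'')\in\rel$. So $\rel$ is a bisimulation, whence $M\approx N$, and finally $M\cong N$ by soundness.

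The only real work is the Decomposition fact: one must go through each kind of extensional label and check that the dead node $O$ and the possible re-association of channel restrictions around $X$ never disturb the decomposition — most delicately, the interaction between the maximal-progress side condition of the time rules and the presence of $O$, and the observation that sensor updates leave $O$ literally unchanged. A direct proof verifying that $\rel$ meets the three defining clauses of $\cong$ is also possible, but closure under an arbitrary parallel context $M|P$ would then require $P$ to avoid the node name and actuator name of $O$, which is awkward; routing through $\approx$ as above avoids this complication entirely.
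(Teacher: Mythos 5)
Your argument has a genuine circularity. The very first step invokes Cor.~\ref{thm:full-abstraction} to pass from $M|O \cong N|O$ to $M|O \approx N|O$; that direction is the completeness theorem (Thm.~\ref{thm:complete}), whose proof repeatedly appeals to Lem.~\ref{lem:cut_cxt} to cancel observers, and the proof of Lem.~\ref{lem:cut_cxt} is precisely where the present lemma is needed. So at the point in the development where this lemma must be established, the inclusion $\cong\ \subseteq\ \approx$ is not yet available, and your proof cannot get off the ground. The bisimulation game itself (the Lifting and Decomposition facts exploiting the inertness of $O$) is a perfectly sensible cancellation property of $\approx$, but it proves the wrong statement unless completeness is already in hand; only the soundness half of full abstraction (Thm.~\ref{thm:sound}) is safe to use here.

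The paper avoids this by staying entirely at the level of reduction barbed congruence: it shows that the relation $\rel=\{(M,N) : M|O \cong N|O \mbox{ for some admissible } O\}$ is barb preserving, reduction closed and contextual, and concludes $\rel\subseteq\ \cong$ from the coinductive definition of $\cong$ as the largest such relation. The closure under parallel contexts that you feared would be awkward is defused by the existential quantification over $O$: the node name and actuator name of $O$ are arbitrary, so for any context $P$ one may re-choose them fresh so that $M|P|O$ and $N|P|O$ remain well-formed. To repair your proof you must either carry out this direct argument on $\cong$ (essentially transporting your Lifting/Decomposition analysis to reductions and barbs via the Harmony Theorem), or first establish completeness by some route that does not pass through Lem.~\ref{lem:cut_cxt}; as written, the proposal assumes what it is ultimately needed to prove.
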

\begin{proof}
We recall that we always work with well-formed systems. The proofs 
consists in showing tha the relation 
\[ 
\rel=  \{( M,  N)\, : \, M|O \cong N|O  
\textrm{, for some $O$ defined as above} \}.
\]
  is  barb preserving, reduction closed and 
contextual. Since $\cong$ is the largest relation satisfying these properties, then $  \rel \subseteq \, \cong$ and therefore $M\cong N$. The scheme 
of the proof is very similar to that of the following proof.
\end{proof}

\paragraph{\textbf{Proof of Lem.~\ref{lem:cut_cxt}.}}
Let $O=\nodep{n}{\conf{\mathcal I}{\wact v a}.\nil}{\stat}{k}$, 
for an arbitrary node name $n$, an arbitrary actuator $a$, and arbitrary values $v$ and $w$, in the domain of $a$,  such that 
$\I$ is only defined for $a$ and $\I(a)=w\neq v$. 
Let use define the relation 
\[ 
\rel=  \{( M,  N)\, : \, M|O \cong N|O  
\textrm{, for some $O$ defined as above} \} \enspace . 
\]
 We show that the relation $\rel \, \cup  \cong$ is  barb preserving, reduction closed and 
contextual. Since $\cong$ is the largest relation satisfying these properties, then $  \rel \subseteq \, \cong$ and therefore $M\cong N$. 

 We recall that in this paper we only consider  well-formed networks. 
So, in the definition of $\rel$ we assume that all systems of the form 
$M|O$ and $N|O$ are well-formed. In particular, 
in order to decide whether $(M,N) \in \rel$ it is enough 
to find an $O$  of the indicated shape,  which respects the requirements of $\rel$, and which preserves well-formedness.

\emph{Let us prove that $\rel \, \cup \cong$ is barb-preserving\/}.  
We concentrate on the relation $\rel$.
As  $O$ has neither channels or sensors it is  basically 
isolated from the rest of the world, except for signals emitted on the 
 actuator $a$.  So, it is very easy to show
 that $ \rel$ is   barb preserving from $M | O \cong N|O$.

\emph{Let us  prove that $\rel \, \cup \cong$ is reduction closed\/}. We focus on $\rel$. Recall that $\red {\deff} \redtau \cup \redtime$.

Let $(M,N)\in\rel$ and  $M\redtau M'$, for some $M'$. 
We have to show that $N\redmany N'$, for some $N'$ such that 
 $(M',N')\in\rel \, \cup \cong$.  
Let us fix an $O$ which respects the requirements of $\rel$. 
By an application of  rule \rulename{parn} we infer $M | O \redtau M' | O$.
As $M | O \cong N | O$ there is $\overline{N}$ such that  $N | O \redmany \overline{N}$ and $M' | O \cong \overline{N}$.
Since $O$ cannot communicate and since the only enabled reduction for  $O$  is $\red_a$,  none of the reductions in the reduction sequence $N | O \redmany \overline{N}$ involves $O$ and none of these reductions is a timed one. 
Therefore,  $\overline{N} =  N'|O$,  $N \redtau^{\ast}  N'$, and $M' |O \cong N'|O$. 
This  implies $(M',N')\in\rel$.

Let $(M,N)\in\rel$ and  $M\red_b M'$, for some $M'$. 
As both systems $M|O$ and $N|O$ are well-formed,  the actuator $a$ cannot appear neither in 
$M$ or in $N$.  Thus, $a\neq b$. Starting from  $M | O \cong N | O$ 
we reason as in the previous case. 

Let $(M,N)\in\rel$ and  $M\redtime M'$, for some $M'$. 
We have to show that $N\redmany N''$, for some $N''$ such that 
 $(M',N'')\in\rel \, \cup \cong$. By definition of $\rel$ we have $M | O \cong N | O$. 
Let $M | O \red_a M | 
\nodep{n}{\conf {\I[a \mapsto v]} \nil}{\stat}{k}$, by an application 
of rules \rulename{actchg} and \rulename{parn}. As $\cong$ is 
reduction closed it follows that there is $\overline{N}$ such 
that $N | O \redmany \red_a \redmany \overline{N}$, with 
$M | \nodep{n}{\conf {\I[a \mapsto v]} \nil}{\stat}{k} \cong \overline{N}$. 
Due to the structure of $O$ the last reduction sequence can 
be decomposed as follows: $N | O \redmany \red_a \redmany \overline{N} =
N' | \nodep{n}{\conf {\I[a \mapsto v]} \nil}{\stat}{k}$, for some $N'$ such 
that $N \redmany N'$. Thus, for $O'= \nodep{n}{\conf {\I[a \mapsto v]} \nil}{\stat}{k}$, we have $M | O' \cong N' | O'$. 
 Since $M \redtime M'$,  by Prop.~\ref{prop:maxprog},   there is 
no $M''$ such that $M \redtau M''$. More generally, by looking at the definition 
of $O'$ it is 
easy to see that there is no $U$ such that $M | O' \redtau U$. Thus, 
by an application of  rules \rulename{timestat} and  \rulename{timepar} we can infer $M | O'
 \redtime M' | O'$.
As $M | O' \cong N' | O'$, by Prop.~\ref{prop:time-observation} 
there is $\hat{N}$ such that  $N' | O' \redtau^{\ast}\redtime \redtau^{\ast} \hat{N}$ and $M' | O' \cong \hat{N}$. By looking at the definition of 
$O'$ the only possibility is that $\hat{N} = N'' | O'$, with $N' \redtau^{\ast}\redtime \redtau^{\ast} N''$ and $M' |O' \cong N''|O'$. By Lem.\ref{lem:aux-cut}
this  implies $M' \cong N''$. Recapitulating we have that 
for $M \redtime M'$ there is $N''$ such that $N \redmany N''$, with $(M', N'') \in \rel \, \cup \cong$.

\emph{Let us prove that $\rel \, \cup \cong$ is contextual\/}. Again, it 
is enough to focus on $\rel$. 
Let us
consider the three different network contexts:
\begin{itemize}
\item Let $(M,N)\in \rel$. Let $O'$ be  an arbitrary network such that both $M | O'$ and $N| O'$
are well formed. We want to show that $(M|O' \, , \, N|O') \in \rel$. 
As $(M,N)\in \rel$, we can always find an $O=\nodep{n}{\conf{\mathcal I}{\wact v a}.\nil}{\stat}{k}$ which respects the requirements of $\rel$ such that  $M | O \cong 
N | O$ and both systems  $M | O | O' $ and $N| O |O'$ are well-formed. 
As $\cong$ is contextual and structural congruence is a monoid with respect to 
parallel composition, it follows that $(M | O') | O \equiv (M | O) | O' \cong (N | O) | O'
\equiv (N | O') | O$. As $\equiv \, \subset \, \cong$ and $\cong$ is trivially 
transitive,  this is enough to derive that $(M|O' \, , \, N|O') \in \rel$.

\item Let $(M,N)\in \rel$. Let $c$ be  an arbitrary channel name.
Let $O=\nodep{n}{\conf{\mathcal I}{\wact v a}.\nil}{\stat}{k}$ which respects the requirements of $\rel$. As $\cong$ is contextual if follows that 
$\res c (M | O) \cong \res c (N | O)$. 
Since  $O$ does not contain channels it holds that $
(\res c M )|O \equiv \res c (M | O) \cong \res c (N | O) \equiv
(\res c N) | O$. As $\equiv \, \subset \, \cong$ and $\cong$ is trivially 
transitive, this is enough to derive that $(\res c M , \res c N) \in \rel$. 

\item Let $(M,N)\in \rel$. Let $O=\nodep{n}{\conf{\mathcal I}{\wact v a}.\nil}{\stat}{k}$ which respects the requirements of $\rel$.
Since  $O$ does not contain sensors, 
by Def.~\ref{def:sens-change} we have: 
$ M[{s@h} \mapsto v]  |O  =  (M | O) [{s@h} \mapsto v]  \cong 
  (N | O) [{s@h} \mapsto v] =  N[{s@h} \mapsto v]  |O $. This 
is enough to derive that $(M[{s@h} \mapsto v], N[{s@h} \mapsto v]) \in \rel$.
\end{itemize}
\hfill\qed

\end{document}